\definecolor{superlightred}{HTML}{F5F5F5}
\titleformat{\section}[block]{\large\scshape\centering}{§\,\thesection}{1em}{}[\HRule{3pt}] 
\titleformat{\subsection}[block]{\normalsize\bfseries}{\thesubsection}{1em}{}
\titleformat{\subsubsection}[block]{\normalsize\bfseries}{\thesubsubsection}{1em}{}
\newcommand{\beq}{\begin{equation}}
	\newcommand{\eeq}{\end{equation}}
\newcommand{\beqa}{\begin{eqnarray}}
	\newcommand{\eeqa}{\end{eqnarray}}
\newcommand{\HRule}[1]{\rule{\linewidth}{#1}} 	
\newcommand{\comp}{\circ}
\newcommand{\tens}{\otimes}
\newcommand{\cals}[1]{\mathcal{#1}}
\newcommand{\fraks}[1]{\mathfrak{#1}}
\newcommand{\normord}[1]{:\mathrel{#1}:}
\newcommand{\bb}[1]{\mathbb{#1}}
\numberwithin{equation}{section}
\newtheorem{thm}{Theorem}[section]
\newtheorem{prop}[thm]{Proposition}
\newtheorem{lem}[thm]{Lemma}
\newtheorem{cor}[thm]{Corollary}
\newtheorem{dfn}[thm]{Definition}
\newtheorem{dfn-prp}[thm]{Definition-Proposition}
\newtheorem{nota}[thm]{Notation}
\newmdtheoremenv[backgroundcolor=superlightred]{assump}[thm]{Assumption}
\newcommand{\supermac}{\overline{SP}}
\newcommand{\quantumcorner}{\mathfrak{QC}}
\newcommand{\quantumcornerpolynomial}{\quantumcorner_\lambda(x_1,\dots,x_{N};y_1,\dots,y_{M};w_1,\dots,w_{L};q,t)}
\newcommand{\dualmap}{
\widetilde{\Psi}^{(q,\xi)}_{\lambda}
}
\theoremstyle{definition}
\newtheorem{rem}[thm]{Remark}
\xpatchcmd{\proof}{\itshape}{\normalfont\proofnamefont}{}{}
\newcommand{\proofnamefont}{}
\renewcommand{\proofnamefont}{\bfseries}
\begin{document}
	
	\baselineskip = 18pt 
	
	\begin{titlepage}
		
		\bigskip
		\hfill\vbox{\baselineskip12pt
			\hbox{}
		}
		\bigskip
		\begin{center}
			\Large{ \scshape
				\HRule{3pt}
				Quantum Corner Polynomials: A Generalization of Super Macdonald Polynomials and Their 
				VOA Correspondence
				\HRule{3pt}
			}
		\end{center}
		\bigskip
		\bigskip

		\begin{center}
			\large 
			Panupong Cheewaphutthisakun$^{a}$\footnote{panupong.cheewaphutthisakun@gmail.com},
			Jun'ichi Shiraishi$^{b}$\footnote{shiraish@ms.u-tokyo.ac.jp},
			Keng Wiboonton$^{a}$\footnote{keng.w@chula.ac.th}
			\\
			\bigskip
			\bigskip
			$^a${\small {\it Department of Mathematics and Computer Science, Faculty of Science, Chulalongkorn University, 
					\\
					Bangkok, 10330, Thailand}}\\
			$^b${\small {\it Graduate School of Mathematical Sciences, University of Tokyo, Komaba, Tokyo 153-8914, Japan}} \\
		\end{center}
		\bigskip
		\bigskip
		
		%
		{\vskip 6cm}
		{\small
			\begin{quote}
				\noindent {\textbf{\textit{Abstract}}.}
				In this paper, we introduce a family of partially symmetric polynomials, which we call quantum corner polynomials, as a generalization of the Sergeev-Veselov super Macdonald polynomials. We show that these quantum corner polynomials are precisely the partially symmetric polynomials corresponding to the quantum corner VOAs. Furthermore, we provide a detailed proof of the partial symmetricity of these polynomials. 
		\end{quote}}

	\end{titlepage}
	
	
\section{Introduction}
\label{intro}

Two-dimensional conformal field theory (CFT) is a fundamental theory in theoretical physics, playing a crucial role in the study of string theory. Beyond its significance in theoretical physics, two-dimensional CFT is also profoundly important in mathematics, connecting various fields such as the representation theory of infinite-dimensional algebras, combinatorics, and algebraic geometry. In this paper, we explore the relation between quantum deformation of generalizations of $W_N$ algebra \cite{BPZ}  \cite{BS1993} \cite{Zamo}, which are symmetry algebras of conformal field theories with higher spin fields, and (partially) symmetric polynomials, which is an important object in combinatorics.

Motivated by the quantization program of universal enveloping algebras \cite{DR} \cite{Jim-original}, it is natural to ask whether the $W_N$ algebra can be quantized. This question was addressed in \cite{AKOS-950} \cite{SKAO95}, where the quantum $W_N$ algebra was constructed based on the principle that the singular vectors of the quantum $W_N$ algebra should be described by Macdonald polynomials $P_\lambda(x_1,\dots,x_N ; q,t)$. The underlying idea is that it is well-known that the singular vectors of the $W_N$ algebra are described by a class of symmetric polynomials called Jack polynomials \cite{awata-excited} \cite{mimachi-yamada-1995-singular}, and Jack polynomials have a good candidate for their quantum deformation version, namely Macdonald polynomials \cite{macbook-1998} (see also \cite{noumi-mac}). From the results in \cite{AKOS-950} \cite{SKAO95}, it is known that the quantum $W_N$ algebra is generated by currents $\widetilde{T}_1(z), \dots,\widetilde{T}_N(z)$ which satisfy relations called quadratic relations, which can be regarded as a quantum deformation version of the OPEs of the $W_N$ algebra. 

The quantum $W_N$ algebra plays a significant role in the five-dimensional version \cite{AY} \cite{AY2} \cite{Taki} of AGT correspondence \cite{AGT} \cite{Wyl}, which establishes a correspondence between the instanton partition function \cite{nekrasov} of five-dimensional $SU(N)$ gauge theory and the conformal blocks of the quantum $W_N$ algebra. 

As described above, we see the relation between quantum $W_N$ algebra and Macdonald polynomials through the study of singular vectors. In fact, we can also demonstrate the relation between quantum $W_N$ algebra and Macdonald polynomials through the correlation function of the currents that generate the quantum $W_N$ algebra. More precisely, it was shown in \cite{FHSSY-ker} that 

\footnotesize
\begin{align}
	&
	\lim_{\xi \rightarrow t^{-1}}\,\,
	(
	\dualmap
	\comp 
	\bigg|_{
		\substack{
			q_1 = q, \\
			q_2 = q^{-1}t,\\
			q_3 = t^{-1} \\
		}
	}
	)
	\left(
	\cals{N}_{\lambda}(z_1,\dots,z_k )
	\times
	\prod_{1 \leq i < j \leq k}f_{11}\left(\frac{z_j}{z_i} \right)
	\times
	\langle 0 |\widetilde{T}^{\vec{u}}_{1}(z_1 )\cdots \widetilde{T}^{\vec{u}}_{1}(z_k )|0\rangle
	\right)
	\label{eqn11-1058-7aug}
	\\
	&= 
	P_\lambda\left(
	u_1,\dots,u_N ; q,t
	\right)
	\notag 
\end{align}
\normalsize
where $P_\lambda(u_1,\dots,u_N;q,t)$ is the Macdonald polynomial, and $u_1,\dots,u_N$ are nonzero complex numbers appearing in the definition of vertex operator (see equation \eqref{eqn238-0000-7aug}). For the definition of the map $\dualmap$ and the factor $\cals{N}_{\lambda}(z_1,\dots,z_k )$, the reader is referred to Section \ref{sec4-0012}.

In \cite{GR17}, Gaiotto and Rap\v{c}\'{a}k constructed a family of vertex operator algebras (VOAs) by studying a system of D5, NS5, and $(-1,-1)$ 5-branes filled with orthogonal D3-branes. This family of VOAs is referred to as corner VOAs and denoted by $\widetilde{Y}_{M,L,N}$, where $M,L,N$ represent the numbers of orthogonal D3-branes. Soon after, alternative but equivalent definitions for these corner VOAs were discovered in \cite{PR17} \cite{PR18}.

It is known that the corner VOA $\widetilde{Y}_{M,L,N}$ is a generalization of the $W_N$ algebra in the sense that $\widetilde{Y}_{0,0,N} = W_N$. Therefore, it is natural to extend the quantization procedure of the $W_N$ algebra to the quantization of the corner VOA. This task was successfully accomplished in \cite{HMNW} (see also \cite{Misha}), and the resulting algebra from this quantization process is called the quantum corner VOA, denoted by $q\widetilde{Y}_{M,L,N}$. From its construction, we find that the quantum $W_N$ algebra mentioned earlier is none other than $q\widetilde{Y}_{0,0,N}$. Similar to the quantum $W_N$ algebra, the currents that generate the quantum corner VOA satisfy quadratic relations that are a generalization of the quadratic relations of the quantum $W_N$ algebra (see \textbf{Proposition \ref{prp29=1336-6aug}}).

Since we know that the correlation function of the currents of the quantum $W_N$ algebra is related to Macdonald polynomials, it is natural to ask whether we can generalize the relationship in equation (1.11) to the case of the quantum corner VOA $q\widetilde{Y}_{M,L,N}$. The answer to this question has been partially addressed in the paper by \cite{CSW2025}, which showed that for $q\widetilde{Y}_{M,0,N}$, the corresponding polynomial is a class of partially symmetric polynomials called super Macdonald polynomials $\supermac_\lambda(x_1,\dots,x_N;y_1,\dots,y_M;q,t)  $, constructed by Sergeev and Veselov \cite{sv2007-supermac}. More precisely, we have:

\footnotesize
\begin{align}
	&
	\lim_{\xi \rightarrow t^{-1}}\,\,
	(
	\dualmap
	\comp 
	\bigg|_{
		\substack{
			q_1 = q, \\
			q_2 = q^{-1}t,\\
			q_3 = t^{-1} \\
		}
	}
	)
	\left(
	\cals{N}_{\lambda}(z_1,\dots,z_k )
	\times
	\prod_{1 \leq i < j \leq k}f^{(3^N1^M)}_{11}\left(\frac{z_j}{z_i} \right)
	\times
	\langle 0 |\widetilde{T}^{(3^N1^M),\vec{u}}_{1}(z_1 )\cdots \widetilde{T}^{(3^N1^M),\vec{u}}_{1}(z_k )|0\rangle
	\right)
	\label{eqn12-1057-7aug}
	\\
	&= 
	\overline{SP}_\lambda\left(
	u_1,\dots,u_N, q^{-\frac{1}{2}}t^{-\frac{1}{2}}u_{N+1}, \dots, q^{-\frac{1}{2}}t^{-\frac{1}{2}}u_{N+M} ; q,t
	\right).
	\notag 
\end{align}
\normalsize
Note that equation \eqref{eqn11-1058-7aug} is a particular case of equation \eqref{eqn12-1057-7aug} when $M = 0$. 

In this paper, we extend this relation to the most general case, $q\widetilde{Y}_{M,L,N}$. We construct a class of partially symmetric polynomials that are a generalization of super Macdonald polynomials, which we call quantum corner polynomials. We will show that these quantum corner polynomials correspond to the quantum corner VOA $q\widetilde{Y}_{M,L,N}$ (\textbf{Theorem \ref{thm43-1458-25jul}}) and demonstrate that they are partially symmetric polynomials (\textbf{Theorem \ref{thm42-main-1526}}).

\subsubsection*{Organization of material}

This paper is organized as follows. In Section \ref{sec2-0010}, we review the definition of the quantum corner VOA. We begin by recalling the definition of the quantum toroidal $\fraks{gl}_1$ algebra and the horizontal Fock representation of the quantum toroidal $\fraks{gl}_1$ algebra. Subsequently, we define vertex operators through the tensor product of horizontal Fock representations. We then define the quantum corner VOA by its generating currents, which can be expressed in terms of the previously defined vertex operators.

In Section \ref{sec3-0010}, we begin by introducing the definition of a tritableau. We then define the quantum corner polynomial using a combinatorial formula expressed as a sum over tritableaus. From this definition, it can be immediately shown that in the case where $L = 0$, the quantum corner polynomial reduces to super Macdonald polynomials.

In Section \ref{sec4-0012}, we state and prove one of the main theorems of this paper, which asserts that the quantum corner polynomials constructed in Section \ref{sec3-0010} correspond to the quantum corner VOA $q\widetilde{Y}_{M,L,N}$ (\textbf{Theorem \ref{thm43-1458-25jul}}). For the proof of \textbf{Theorem \ref{thm43-1458-25jul}}, one of the crucial lemmas is \textbf{Lemma \ref{lemm42-1252-25jul}}, which states that only contributions from reverse semistandard Young tritableaus yield nonzero contributions. We prove \textbf{Lemma \ref{lemm42-1252-25jul}} in Appendix \ref{appA-1254}.

In Section \ref{sec5-0016-7aug}, we demonstrate that the quantum corner polynomials defined in Section \ref{sec3-0010} are partially symmetric polynomials (\textbf{Theorem \ref{thm42-main-1526}}). This is the other main theorem of this paper. 

\subsubsection*{Conventions}

Throughout this paper, we use the convention that if $a > b$
\begin{align}
	\displaystyle \sum_{i = a}^{b}(\text{any expression}) &= 0, 
	\\
	\displaystyle \prod_{i = a}^{b}(\text{any expression}) &= 1. 
	\label{eqn14-1159-sun17aug}
\end{align}

\subsubsection*{Acknowledgement}
This research project is supported by the Second Century Fund (C2F), Chulalongkorn University. 
Our work is supported in part by Grants-in-Aid for Scientific Research (Kakenhi); 24K06753 (J.S.), 21K03180 (J.S.).
	
\section{Quantum Corner VOA}
\label{sec2-0010}

In this section, we review the definition of the quantum toroidal $\fraks{gl}_1$ algebra and one of its representations, known as the horizontal Fock representation. This representation is essential for defining the vertex operator, which in turn plays an essential role in the definition of the quantum corner VOA. For a more detailed discussion of the quantum toroidal $\fraks{gl}_1$ algebra and its representation theory, we refer the reader to \cite{Awata-note}  \cite{BS}  \cite{DI}   \cite{FHHSY-comm} \cite{feigin-toroidal1, feigin-toroidal2,feigin-toroidal3}
\cite{Miki} \cite{Sch} 
(see also \cite{	AFS, AKMM17 , AKMM17-2 , AKMM18 , BFM17, BFM17-2 ,BJ, CK,CK2,Ma,Zen}). 

\subsection{Quantum Toroidal $\fraks{gl}_1$ Algebra}

Throughout this paper, we define $q$ and $t$ as nonzero complex numbers satisfying a generic condition: if there exist integers $a,b \in \bb{Z}$ such that $q^at^b = 1$, then $a = b = 0$.

\begin{dfn}
The \textbf{quantum toroidal $\fraks{gl}_1$ algebra}, denoted by $U_{q,t}(\widehat{\widehat{\fraks{gl}}}_1)$, is the unital associative algebra over $\bb{C}$ generated by 
\begin{align}
	E_k, F_k, K^{\pm}_0, H_{\pm r}, C  \hspace{0.3cm} (k \in \bb{Z}, r \in \bb{Z}^{\geq 1}), 
\end{align}
subject to the following defining relations:
\begin{gather}
	C \text{ is a central element}, 
	\\
	K^{\pm}(z)K^{\pm}(w) = 		K^{\pm}(w)K^{\pm}(z),
	\label{2.3main}
	\\
	K^{+}(z)K^-(w) = \frac{\cals{G}(w/C z)}{\cals{G}(C w/z)} K^-(w)K^{+}(z),
	\\
	K^+(z)E(w) = \cals{G}(w/z)E(w)K^+(z),
	\\
	K^-(C z)E(w) = \cals{G}(w/z)E(w)K^-(C z),
	\\
	K^+(C z)F(w) = \cals{G}(w/z)^{-1}F(w)K^+(C z),
	\\
	K^-(z)F(w) = \cals{G}(w/z)^{-1}F(w)K^-(z),
	\\
	\label{EEexchage}
	E(z)E(w) = \cals{G}(w/z)E(w)E(z),
	\\
	F(z)F(w) = \cals{G}(w/z)^{-1}F(w)F(z),
	\\
	[E(z),F(w)] = \frac{1}{(q_1 - 1)(q_2 - 1)(q_3 - 1)}
	\bigg(
	\delta\Big(\frac{Cw}{z}\Big)K^+(z) - \delta\Big(\frac{Cz}{w}\Big)K^-(w) 
	\bigg),
	\label{2.11main}
\end{gather}
where in the relations above, $q_1 = q, q_2 = q^{-1}t, q_3 = t^{-1}$, 
\begin{gather}
	E(z) := \sum_{k\in\mathbb{Z}} E_k z^{-k}, \quad
	F(z) := \sum_{k\in\mathbb{Z}} F_k z^{-k}, \quad
	K^{\pm}(z) := K_0^{\pm}\exp\left(
	\pm \sum_{r=1}^{\infty} H_{\pm r} z^{\mp r}
	\right), 
	\label{2.12}
	\\
	\cals{G}(z) := 
	\frac{(1 - q_1^{-1}z)(1 - q_2^{-1}z)(1 - q_3^{-1}z)}{
		(1 - q_1z)(1 - q_2z)(1 - q_3z)
	}, 
	\quad 
	\delta(z) := \sum_{k \in \bb{Z}}
	z^k. 
\end{gather}
\end{dfn}

The quantum toroidal $\fraks{gl}_1$ algebra is known to have a Hopf algebra structure, which means it is equipped with a coproduct, a counit, and an antipode. However, for our purposes, only the coproduct formula will be essential. Therefore, in the following proposition, we will only state the coproduct formula for the quantum toroidal $\fraks{gl}_1$ algebra. 

\begin{prop}
The map $\Delta : U_{q,t}(\widehat{\widehat{\fraks{gl}}}_1)
\rightarrow 
U_{q,t}(\widehat{\widehat{\fraks{gl}}}_1) \tens U_{q,t}(\widehat{\widehat{\fraks{gl}}}_1)$ defined by the formula below is an algebra homomorphism:
\begin{align}
	\Delta\big(E(z)\big) &= E(z) \tens 1 + K^-(C_1z) \tens E(C_1z),
	\label{copro1}
	\\
	\Delta\big(F(z)\big) &= F(C_2z) \tens K^+(C_2z) + 1 \tens F(z),
	\\
	\Delta\big(K^+(z)\big) &= K^+(z) \tens K^+(C_1^{-1}z),
	\\
	\Delta\big(K^-(z)\big) &= K^-(C_2^{-1}z) \tens K^-(z),
	\label{copro4}
	\\
	\Delta\big(		C		\big) &= C \tens C, 
\end{align}
where in the formulas above, $C_1 := C \tens 1, C_2 := 1 \tens C$. This map is called the \textbf{coproduct} of $U_{q,t}(\widehat{\widehat{\fraks{gl}}}_1)$. 
\end{prop}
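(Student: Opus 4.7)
The plan is to verify that the map $\Delta$, defined on the generating currents by \eqref{copro1}--\eqref{copro4}, respects every defining relation of $U_{q,t}(\widehat{\widehat{\fraks{gl}}}_1)$. Since the algebra is presented by those currents modulo the relations \eqref{2.3main}--\eqref{2.11main}, it suffices to substitute the coproduct formulas into each such relation and check that, after simplification inside $U_{q,t}(\widehat{\widehat{\fraks{gl}}}_1)^{\tens 2}[[z^{\pm 1},w^{\pm 1}]]$, the relation is recovered. Throughout, I would use that $C$ is central, so that $C_1 = C\tens 1$ and $C_2 = 1\tens C$ commute with every element and may be moved freely through either slot; in particular $\Delta(C) = C_1 C_2$ plays the role of the central element for the image algebra.

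I would first dispose of the easier items. Centrality of $\Delta(C) = C\tens C$ is immediate. The commutation $K^\pm(z)K^\pm(w) = K^\pm(w)K^\pm(z)$ reduces, after applying $\Delta$, to the same commutation in each slot. The mixed $K^+K^-$ relation is the first genuinely nontrivial check: expanding $\Delta(K^+(z))\Delta(K^-(w))$ via \eqref{copro1}--\eqref{copro4}, commuting the two $K$-factors past each other in each slot using \eqref{2.3main} with the appropriate $C_1$, $C_2$ shifts, and collecting factors, the product of the two $\cals{G}$-ratios collapses to $\cals{G}(w/(C_1 C_2)z)/\cals{G}((C_1 C_2)w/z)$, which is exactly $\cals{G}(w/\Delta(C)z)/\cals{G}(\Delta(C)w/z)$. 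The $K^\pm E$ and $K^\pm F$ relations are handled in the same style: after $\Delta$, each side splits into two tensor terms, and the required $\cals{G}$-factor is assembled term by term from commutations in each slot, with the $C_i$-shifts built into the coproduct formulas matching the shifted arguments of $\cals{G}$ exactly.

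Next come the $EE$ and $FF$ relations. For $EE$, $\Delta(E(z))\Delta(E(w))$ expands into four tensor terms. The diagonal terms are matched against their counterparts on the right of $\cals{G}(w/z)\,\Delta(E(w))\Delta(E(z))$ by applying \eqref{EEexchage} in a single slot, while the off-diagonal cross terms involving $K^-(C_1 z)\tens E(C_1 z)$ acting against $K^-(C_1 w)\tens E(C_1 w)$ are rearranged using the $K^-E$ relation to produce the missing $\cals{G}(w/z)$ factor, closing the check once one invokes the elementary identity $\cals{G}(w/z)\cals{G}(z/w) = 1$. The $FF$ check is dual.

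The main obstacle is the mixed relation \eqref{2.11main}. I would expand $[\Delta(E(z)),\Delta(F(w))]$ via \eqref{copro1} and its $F$-analogue, obtaining four tensor terms. The two ``diagonal'' terms, in which $E$ and $F$ live in the same slot, contribute via the original $[E,F]$ bracket and thus produce $\delta$-factors with $K^\pm$ in that slot; the substitution property $\delta(\alpha w/z)\,f(z) = \delta(\alpha w/z)\,f(\alpha w)$ then lets me replace the free variable in the other slot by its $C$-shifted value, and the coproduct formulas for $K^\pm$ reassemble the two contributions into $\delta(\Delta(C) w/z)\,\Delta(K^+(z)) - \delta(\Delta(C) z/w)\,\Delta(K^-(w))$, which is $\Delta$ applied to the right-hand side of \eqref{2.11main}. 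The remaining two ``off-diagonal'' terms, in which $E$ and $F$ sit in different slots, must cancel: swapping $E$ past $K^+$ in one slot and $F$ past $K^-$ in the other produces two $\cals{G}$-factors whose product is $1$ by $\cals{G}(w/z)\cals{G}(z/w) = 1$, and the resulting combinations appear with opposite signs. Tracking the prefactor $1/((q_1-1)(q_2-1)(q_3-1))$, the shifts by $C_1$ and $C_2$, and the precise argument of every $\delta$-factor through this reassembly is the longest and most error-prone piece of the verification, but it is ultimately a direct manipulation with no further algebraic input required.
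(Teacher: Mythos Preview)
The paper does not supply a proof of this proposition at all: it is stated without argument, as a standard structural fact about $U_{q,t}(\widehat{\widehat{\fraks{gl}}}_1)$ drawn from the background literature cited at the top of Section~\ref{sec2-0010} (Ding--Iohara, Miki, Feigin et al.). So there is nothing to compare your argument against on the paper's side.

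Your plan is the canonical direct verification and is correct in outline. One small imprecision: in the $[E,F]$ check there are not two nontrivial off-diagonal cross terms but only one. Writing $\Delta(E(z)) = A+B$ and $\Delta(F(w)) = C+D$ with $A = E(z)\tens 1$, $B = K^-(C_1 z)\tens E(C_1 z)$, $C = F(C_2 w)\tens K^+(C_2 w)$, $D = 1\tens F(w)$, the commutator $[A,D]$ vanishes trivially since the two factors live in disjoint slots; only $[B,C]$ requires the identity $\cals{G}(x)\cals{G}(x^{-1}) = 1$ to vanish. The two ``diagonal'' commutators $[A,C]$ and $[B,D]$ then assemble into $\Delta$ of the right-hand side of \eqref{2.11main} exactly as you describe, via the delta-function substitution rule and the coproduct formulas for $K^\pm$. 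With that minor correction your sketch goes through.
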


In addition, we define the algebra homomorphism
\begin{align}
\Delta^{(n)} : U_{q,t}(\widehat{\widehat{\fraks{gl}}}_1)
\rightarrow 
\underbrace{	U_{q,t}(\widehat{\widehat{\fraks{gl}}}_1) \tens \cdots \tens U_{q,t}(\widehat{\widehat{\fraks{gl}}}_1)		}_{n+1}
\label{eqn219-1142-6aug}
\end{align}
by setting $\Delta^{(1)} = \Delta$ and for $n \in \bb{Z}^{\geq 2}$, 
\begin{align}
	\Delta^{(n)} := (\Delta \tens \underbrace{		1 \tens \cdots \tens 1		}_{n-1} ) \comp \Delta^{(n-1)}. 
\label{eqn220-1142-6aug}
\end{align}

Next, we will discuss a representation of the quantum toroidal $\fraks{gl}_1$ algebra, which is called the horizontal Fock representation. In order to define the horizontal Fock representation of the quantum toroidal $\fraks{gl}_1$ algebra, we have to first introduce the Heisenberg algebra. 

\begin{dfn}
For each $i \in \{1,2,3\}$, the \textbf{Heisenberg algebra} $\cals{B}^{(i)}$ is defined as the algebra over $\bb{C}$ generated by $\left\{a^{(i)}_{\pm n}, a^{(i)}_0~|~ n \in \bb{Z}^{\geq 1}\right\}$ which satisfies the following defining relations: 
\begin{align}
	[a_n^{(i)},a_m^{(i)}] &= \frac{n}{\kappa_n}(q_i^{n/2} - q_i^{-n/2})
	\delta_{n+m,0}a^{(i)}_0. 
\label{eqn221-1222-6aug}
\end{align}
where $q_1 = q, q_2 = q^{-1}t, q_3 = t^{-1}$
and $\kappa_n := (q^n_1 - 1)(q^n_2 - 1)(q^n_3 - 1)$. 
\end{dfn}

If we define $|0\rangle$ to satisfy the condition $a_n|0\rangle = 0$ for any $n \in \bb{Z}^{\geq 1}$, then the vector space $\cals{H}^{(i)}$ defined by 
\begin{align}
	\cals{H}^{(i)} := \operatorname{span}
	\left\{
	a^{(i)}_{-\lambda_1}\cdots a^{(i)}_{-\lambda_m}|0\rangle
	\;\middle\vert\;
	\begin{array}{@{}l@{}}
		(1) \,\,  m \in \bb{Z}^{\geq 0}\\
		(2) \,\, \lambda_1 \geq \cdots \geq \lambda_m \geq 1
	\end{array}
	\right\}.
\end{align}
has the structure of left $\cals{B}^{(i)}$-module. Note that in this module, $a^{(i)}_0$ acts as scalar multiplication by $1$. In the following proposition, we will show that we can construct a representation of quantum toroidal $\fraks{gl}_1$ algebra with $\cals{H}^{(i)}$ as its representation space. 

\begin{prop}
For each $i \in \{1,2,3\}$ and $u \in \bb{C}\backslash\{0\}$, the map $\rho_{H,u}^{(i)} : U_{q,t}(\widehat{\widehat{\fraks{gl}}}_1)
\rightarrow \operatorname{End}(\cals{H}^{(i)})$ defined by the equations below is an algebra homomorphism. We call this the \textbf{horizontal Fock representation} of the quantum toroidal $\fraks{gl}_1$ algebra:
\begin{align}
	\rho_{H,u}^{(i)}\left(
	K^+(z)
	\right)
	&= 
	\exp\left(
	\sum_{n = 1}^{\infty}\frac{\kappa_n}{n}q^{n/4}_ia^{(i)}_{n}z^{-n}
	\right),
	\\
	\rho_{H,u}^{(i)}
	\left(
	K^-(z)
	\right)
	&= 
	\exp\left(
	- \sum_{n = 1}^{\infty}\frac{\kappa_n}{n}q^{-n/4}_{i}
	a^{(i)}_{-n}z^n
	\right),
	\\
	\rho_{H,u}^{(i)}
	\left(
	E(z)
	\right)
	&= u\widetilde{d}_1
	\exp\left(
	\sum_{n = 1}^{\infty}
	\frac{\kappa_n}{n}
	\frac{q^{-n/4}_i}{q^{n/2}_{i} - q^{-n/2}_{i}}
	a^{(i)}_{-n}z^n
	\right)
	\exp\left(
	-\sum_{n = 1}^{\infty}
	\frac{\kappa_n}{n}
	\frac{q^{n/4}_i}{q^{n}_{i}  - 1}
	a^{(i)}_{n}z^{-n}
	\right),
	\\
	\rho_{H,u}^{(i)}
	\left(
	F(z)
	\right)
	&= 
	u^{-1}\widetilde{d}_2
	\exp\left(
	-\sum_{n = 1}^{\infty}
	\frac{\kappa_n}{n}
	\frac{q^{n/4}_i}{q^{n/2}_{i} - q^{-n/2}_{i}}
	a^{(i)}_{-n}z^{n}
	\right)
	\exp\left(
	\sum_{n = 1}^{\infty}
	\frac{\kappa_n}{n}
	\frac{q^{n/4}_i}{q^{n/2}_{i} - q^{-n/2}_{i}}
	a^{(i)}_{n}z^{-n}
	\right),
	\\
	\rho_{H,u}^{(i)}
	\left(
	C
	\right)
	&= q^{1/2}_{i},
\end{align}
where in the equation above, 
\begin{align}
	\widetilde{d}_1 &= 
	\frac{
		1
	}{
		(1 - q_{i^\prime})
		(1 - q_{i^{\prime\prime}})
	}, 
	\label{eqn315-2135}
	\\
	\widetilde{d}_2 &=
	\frac{
		(1 - q^{-1}_i)
	}{
		(q_1 - 1)(q_2 - 1)(q_3 - 1)
	},
\end{align}
with $i^\prime$ and $i^{\prime\prime}$ are chosen such that $i, i^\prime, i^{\prime\prime}$ are distinct elements of the set $\{1,2,3\}$. 
\end{prop}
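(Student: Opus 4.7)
The plan is to verify each defining relation (2.3)--(2.11) after applying $\rho_{H,u}^{(i)}$ using the standard OPE machinery for vertex operators built from Heisenberg bosons. The essential tool is the identity $e^{A}e^{B}=e^{[A,B]}e^{B}e^{A}$ whenever $[A,B]$ is central, combined with the Heisenberg commutation \eqref{eqn221-1222-6aug}. Each of $\rho_{H,u}^{(i)}(K^{\pm}(z))$, $\rho_{H,u}^{(i)}(E(z))$, $\rho_{H,u}^{(i)}(F(z))$ is a product of a creation-mode exponential and an annihilation-mode exponential, so reordering any two generating series reduces to computing a single scalar contraction between the annihilation half of the left factor and the creation half of the right factor.

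I would first dispose of the trivially commuting pairs: relation \eqref{2.3main} holds because the positive (resp.\ negative) modes commute among themselves. Next I would compute the contraction controlling each remaining pair. The key analytic identity is that, using $q_1q_2q_3 = 1$ together with $\kappa_n=(q_1^n-1)(q_2^n-1)(q_3^n-1) = \sum_{i=1}^3(q_i^n - q_i^{-n})$, one has
\begin{equation*}
\log \mathcal{G}(z) \;=\; \sum_{n\geq 1} \frac{\kappa_n}{n}\, z^n .
\end{equation*}
The $K^{\pm}K^{\mp}$, $K^{\pm}E$, $K^{\pm}F$, $EE$, and $FF$ relations then each reduce to this single identity after collecting the coefficients produced by the contraction; for example, in $\rho(E(z))\rho(E(w))$ the contraction gives exactly $\sum_{n \geq 1}\frac{\kappa_n}{n}(w/z)^n$, yielding the prefactor $\mathcal{G}(w/z)$ demanded by \eqref{EEexchage}. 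Throughout, one has to verify that the factors of $q_i^{\pm n/4}$, of $1/(q_i^{n/2}-q_i^{-n/2})$, and of $\widetilde{d}_1$, $\widetilde{d}_2$ cancel as they should; this is tedious but mechanical.

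The main obstacle is the $[E,F]$ relation \eqref{2.11main}. Here the two orderings $\rho(E(z))\rho(F(w))$ and $\rho(F(w))\rho(E(z))$ produce, after the contraction, rational prefactors of the form $\prod_j (1 - q_j^{1/2} w/z)^{-1}$, regarded as formal Laurent series in opposite regions ($|z| \gg |w|$ and $|w| \gg |z|$ respectively). Their difference collapses onto the delta supports via the formal identity
\begin{equation*}
\frac{1}{1 - x} - \frac{1}{1 - x^{-1}} \;=\; \delta(x),
\end{equation*}
so that the commutator localizes on the shells $w = q_i^{1/2} z$ and $z = q_i^{1/2} w$, i.e.\ $w = Cz$ and $z = Cw$ under $\rho_{H,u}^{(i)}(C) = q_i^{1/2}$. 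On each of these shells the normal-ordered product $\normord{E(z) F(w)}$ degenerates to $\rho_{H,u}^{(i)}(K^+(z))$ or $\rho_{H,u}^{(i)}(K^-(w))$, and the overall constant produced by $\widetilde{d}_1 \widetilde{d}_2$ and the residual factors is precisely $\frac{1}{(q_1-1)(q_2-1)(q_3-1)}$. I expect the bulk of the work, and the place most likely to require care, to be this bookkeeping of constants together with ensuring the correct expansion regions so that the two geometric series truly combine into the delta series on the right-hand side of \eqref{2.11main}; once the identification of the two residues with $\rho(K^\pm)$ is in place, the relation follows.
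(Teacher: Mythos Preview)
Your approach is correct and is the standard way one verifies that a collection of vertex operators built from a Heisenberg algebra furnishes a representation of the quantum toroidal $\mathfrak{gl}_1$ algebra: normal-order every pair using $e^Ae^B=e^{[A,B]}e^Be^A$, identify the resulting scalar prefactors with $\mathcal{G}(w/z)^{\pm 1}$ via the identity $\log\mathcal{G}(z)=\sum_{n\geq 1}\frac{\kappa_n}{n}z^n$, and handle the $[E,F]$ relation by subtracting the two expansion regions to produce delta functions supported at $w=q_i^{\pm 1/2}z$, on which the normal-ordered product collapses to $K^{\pm}$.

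However, the paper does not actually prove this proposition. It is stated in Section~\ref{sec2-0010} as part of the review material on the quantum toroidal $\mathfrak{gl}_1$ algebra, with the reader referred to the surrounding literature (e.g.\ \cite{Awata-note}, \cite{feigin-toroidal1,feigin-toroidal2,feigin-toroidal3}, \cite{Miki}) for details. So there is nothing to compare against: your outline \emph{is} the proof one would find in those references, and the only caution is the one you already flagged, namely the careful tracking of the constants $\widetilde d_1\widetilde d_2$ and the $q_i^{\pm n/4}$ shifts so that the residues in the $[E,F]$ computation come out with exactly the prefactor $\frac{1}{(q_1-1)(q_2-1)(q_3-1)}$.
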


For later convenience, we introduce the following shorthand notations:
\begin{align}
	\varphi^{(i)}(z;p) &= 
	\exp\left(
	- \sum_{n = 1}^{\infty}
	\frac{\kappa_n}{n}q^{-n/4}_{i}
	a^{(i)}_{-n}z^n
	\right),
	\\
	\eta^{(i)}(z;p) &= 
	\exp\left(
	\sum_{n = 1}^{\infty}
	\frac{\kappa_n}{n}
	\frac{q^{-n/4}_{i}}{q^{n/2}_{i} - q^{-n/2}_{i}}
	a^{(i)}_{-n}z^n
	\right)
	\exp\left(
	-\sum_{n = 1}^{\infty}
	\frac{\kappa_n}{n}
	\frac{q^{n/4}_{i}}{q^{n}_{i}  - 1}
	a^{(i)}_{n}z^{-n}
	\right). 
\end{align}

Given that quantum toroidal $\fraks{gl}_1$ algebra possesses a Hopf algebra structure, we can construct representations of the quantum toroidal $\fraks{gl}_1$ algebra by taking tensor products of representations. The tensor product of horizontal Fock representations, which we will discuss below, plays a crucial role in defining the quantum corner VOA, the subject of the next subsection. 

For each $\vec{c} = (c_1,\dots,c_n) \in \{1,2,3\}^n$ and $\vec{u} = (u_1,\dots,u_n) \in (\bb{C}\backslash\{0\})^n$, we define 
\begin{align}
	\rho^{\vec{c}}_{H,\vec{u}}
	: 
	U_{q,t}(\widehat{\widehat{\fraks{gl}}}_1)
	\rightarrow \operatorname{End}(\cals{H}^{(c_1)} \tens \cdots \tens \cals{H}^{(c_n)})
\end{align}
as the algebra homomorphism given by 
\begin{align}
	\rho^{\vec{c}}_{H,\vec{u}}
	:= 
	\left(
	\rho_{H,u_1}^{(c_1)} \tens \rho_{H,u_2}^{(c_2)} \tens \cdots \tens \rho_{H,u_n}^{(c_n)}	
	\right) \comp 
	\Delta^{(n-1)}. 
\label{eqn231-1209-6aug}
\end{align}
Here $\Delta^{(n-1)}$ is the map defined in \eqref{eqn219-1142-6aug} and \eqref{eqn220-1142-6aug}.

\subsection{Quantum Corner VOA}

In this subsection, we will define the quantum corner VOA using the ingredients discussed in the previous subsection. First, we define 
\begin{align}
	\alpha(z) &:= 
	\exp\left(
	-\sum_{r = 1}^{\infty}\frac{1}{C^r - C^{-r}}\widetilde{b}_{-r}z^r
	\right),
	\\
	\beta(z) &:=
	\exp\left(
	\sum_{r = 1}^{\infty}\frac{1}{C^r - C^{-r}}\widetilde{b}_rz^{-r}
	\right),
\end{align}
where $\widetilde{b}_{\pm r}$ appearing in the above equations are defined by:
\begin{align}
	K^+(z) &= 
	\exp\left(
	\sum_{r = 1}^{\infty}\widetilde{b}_rC^{r}z^{-r}
	\right),
	\hspace{1cm}
	K^-(z) =
	\exp\left(
	-\sum_{r = 1}^{\infty}\widetilde{b}_{-r}z^r
	\right). 
\end{align}

\begin{dfn}
For each $\vec{c} = (c_1,\dots,c_n) \in \{1,2,3\}^n$ and $\vec{u} = (u_1,\dots,u_n) \in (\bb{C} \backslash \{0\})^n$, we define the vertex operator $\widetilde{\Lambda}^{\vec{c},\vec{u}}_j(z) \,\, (j = 1,\dots,n)$ as 
\begin{align}
	\widetilde{\Lambda}^{\vec{c},\vec{u}}_j(z)
	:= 
	\rho^{\vec{c}}_{H,\vec{u}}\left(\alpha(z)\right)
	\Lambda^{\vec{c},\vec{u}}_j(z)
	\rho^{\vec{c}}_{H,\vec{u}}\left(\beta(z)\right),
\end{align}
where 
\begin{align}
	\Lambda^{\vec{c},\vec{u}}_j(z) := 
	u_j
	\varphi^{(c_1)}(q_{c_1}^{1/2}z;p) \tens \cdots \tens \varphi^{(c_{j-1})}(q_{c_1}^{1/2} \cdots q_{c_{j-1}}^{1/2}z;p) 	
	\tens \eta^{(c_j)}(q_{c_1}^{1/2} \cdots q_{c_{j-1}}^{1/2}z;p) \tens 
	\underbrace{	1 \tens \cdots \tens 1				}_{n - j}, 
\label{eqn238-0000-7aug}
\end{align}
and $\rho^{\vec{c}}_{H,\vec{u}}$ is the tensor product representation defined in \eqref{eqn231-1209-6aug}. 
\end{dfn}

\begin{dfn}
Let $\vec{c} = (c_1,\dots,c_n) \in \{1,2,3\}^n$ satisfy the condition $q_{\vec{c}} := \prod_{k = 1}^{n}q_{c_k} \neq 1$. In this situation, for $r, m \in \bb{Z}^{\geq 0}$ such that $r \leq m$, we define:
\begin{align}
	f^{\vec{c}}_{r,m}(z)
	&:=
	\exp\bigg[
	\sum_{k = 1}^{\infty}\frac{z^k}{k}
	(q_3^{\frac{r}{2}k} - q_3^{-\frac{r}{2}k})(q_{\vec{c}}^{\frac{k}{2}}q^{-\frac{m}{2}k}_{3}
	- q_{\vec{c}}^{-\frac{k}{2}}q^{\frac{m}{2}k}_{3})
	\frac{
		(q^{\frac{k}{2}}_1 - q^{-\frac{k}{2}}_1)(q^{\frac{k}{2}}_2 - q^{-\frac{k}{2}}_2)
	}{
		(q^{\frac{k}{2}}_{\vec{c}} - q^{-\frac{k}{2}}_{\vec{c}})(q^{\frac{k}{2}}_3 - q^{-\frac{k}{2}}_3)
	}
	\bigg],
	\\
	f^{\vec{c}}_{m,r}(z)
	&:=
	f^{\vec{c}}_{r,m}(z). 
\end{align}
\end{dfn}

By using the relation in equation \eqref{eqn221-1222-6aug}, we can easily show the following proposition. 

\begin{prop}
\label{prp27-1335-6aug-wed}
\begin{align}
	\label{eqn339-2150-12jan}
	\widetilde{\Lambda}^{\vec{c},\vec{u}}_i(z)			\widetilde{\Lambda}^{\vec{c},\vec{u}}_j(w)		
	= 
	\begin{cases}
		\displaystyle
		f^{\vec{c}}_{11}\left(\frac{w}{z}	  	\right)^{-1}
		\Delta\left(	q_3^{\frac{1}{2}}\frac{w}{z}			\right)
		\normord{		\widetilde{\Lambda}^{\vec{c},\vec{u}}_i(z)	 	\widetilde{\Lambda}^{\vec{c},\vec{u}}_j(w)						}
		\text{ for } i < j 
		\\
		\displaystyle
		f^{\vec{c}}_{11}\left(\frac{w}{z}  \right)^{-1}
		\gamma_{c_i}\left(\frac{w}{z} \right)
		\normord{		\widetilde{\Lambda}^{\vec{c},\vec{u}}_i(z)		\widetilde{\Lambda}^{\vec{c},\vec{u}}_i(w)						}
		\text{ for } i = j
		\\
		\displaystyle
		f^{\vec{c}}_{11}\left(\frac{w}{z} \right)^{-1}
		\Delta\left(	q_3^{-\frac{1}{2}}\frac{w}{z} 			\right)
		\normord{		\widetilde{\Lambda}^{\vec{c},\vec{u}}_i(z)		\widetilde{\Lambda}^{\vec{c},\vec{u}}_j(w)				}
		\text{ for } i > j 
	\end{cases}
\end{align}
where
\begin{align}
	\Delta(z) 
	&:= 
	\frac{
		(1 - q_1q_3^{\frac{1}{2}}z)
		(1 - q^{-1}_1q_3^{-\frac{1}{2}}z)
	}{
		(1 - q_3^{\frac{1}{2}}z)
		(1 - q_3^{-\frac{1}{2}}z)
	},
	\\
	\gamma_{c_i}(z) 
	\label{eqn340-1147-15apr}
	&:= 
	\frac{
		(1 - q_{c_i}z)
		(1 - q^{-1}_{c_i}z)
	}{
		(1 - q_{3}z)
		(1 - q^{-1}_{3}z)
	}. 
\end{align} 
\end{prop}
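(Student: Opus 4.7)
The plan is to compute $\widetilde{\Lambda}^{\vec{c},\vec{u}}_i(z)\widetilde{\Lambda}^{\vec{c},\vec{u}}_j(w)$ by Wick-type normal ordering. Every exponential appearing in the definitions of $\varphi^{(k)}$, $\eta^{(k)}$, $\alpha$, and $\beta$ is of the form $\exp(\text{linear combination of } a^{(k)}_{\pm n})$, whose commutators as given in \eqref{eqn221-1222-6aug} are central scalars. I would split each vertex operator into its creation and annihilation halves and move the annihilation half of $\widetilde{\Lambda}^{\vec{c},\vec{u}}_i(z)$ past the creation half of $\widetilde{\Lambda}^{\vec{c},\vec{u}}_j(w)$ by the Baker--Campbell--Hausdorff identity. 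Since $\cals{B}^{(1)}$, $\cals{B}^{(2)}$, $\cals{B}^{(3)}$ mutually commute and each $\cals{B}^{(c_k)}$ acts only on the $k$-th tensor factor of $\cals{H}^{(c_1)}\tens\cdots\tens\cals{H}^{(c_n)}$, the resulting scalar contraction factorizes over the tensor slots.

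First I would evaluate the contribution from the outer dressing $\rho^{\vec{c}}_{H,\vec{u}}(\alpha(z))$ and $\rho^{\vec{c}}_{H,\vec{u}}(\beta(w))$. Using the defining expansion of $\widetilde{b}_{\pm r}$ in terms of $K^\pm$ together with the horizontal Fock formulas and the iterated coproduct \eqref{eqn220-1142-6aug}, these become explicit exponentials of $\sum_k a^{(c_k)}_{\pm n}$ with appropriate $C$-shifts from the tensor structure. Next I would expand the inner factors $\Lambda^{\vec{c},\vec{u}}_i(z)$ and $\Lambda^{\vec{c},\vec{u}}_j(w)$ slot by slot. Summing over all slots, the exponent of the resulting scalar factor is a geometric-type series in $w/z$ that, after summation, organizes itself into the $(i,j)$-independent factor $f^{\vec{c}}_{11}(w/z)^{-1}$ multiplied by a residual slotwise contribution depending on $i$ and $j$.

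To isolate the residual factor, I would focus on the unique slot in which the two operators carry \emph{different} building blocks. For $i<j$ this is slot $i$, where $\widetilde{\Lambda}^{\vec{c},\vec{u}}_i$ carries $\eta^{(c_i)}$ while $\widetilde{\Lambda}^{\vec{c},\vec{u}}_j$ carries $\varphi^{(c_i)}$; for $i>j$ it is slot $j$, with the roles of $\eta$ and $\varphi$ reversed; for $i=j$ it is slot $i=j$, carrying $\eta^{(c_i)}$ on both sides. Computing the corresponding slotwise exponent via the standard logarithmic identity $\sum_{n\geq 1}\frac{z^n}{n}(x^n - x^{-n}) = \log\frac{1-x^{-1}z}{1-xz}$ and matching against the definitions \eqref{eqn340-1147-15apr} produces $\Delta(q_3^{1/2}w/z)$, $\Delta(q_3^{-1/2}w/z)$, and $\gamma_{c_i}(w/z)$ in the three cases respectively, once the argument shift $q_{c_1}^{1/2}\cdots q_{c_{k-1}}^{1/2}$ in slot $k$ is correctly tracked.

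The main obstacle is the bookkeeping. The shifts $q_{c_1}^{1/2}\cdots q_{c_k}^{1/2}z$ in successive tensor factors, the coefficients $q_i^{\pm n/4}/(q_i^{n/2}-q_i^{-n/2})$ and $q_i^{\pm n/4}\kappa_n/n$ in the exponents of $\varphi^{(i)}$ and $\eta^{(i)}$, and the coproduct-induced $C$-shifts in the $\alpha$, $\beta$ contribution must combine so that (i) all slots carrying identical $\varphi$ factors, together with the $\alpha$--$\beta$ contractions, assemble into the single $(i,j)$-independent factor $f^{\vec{c}}_{11}(w/z)^{-1}$, and (ii) exactly one distinguished slot survives to contribute the $\Delta$ or $\gamma$ factor. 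Once this bookkeeping is complete, matching the resulting closed-form expressions with the explicit definitions of $f^{\vec{c}}_{11}$, $\Delta$, and $\gamma_{c_i}$ is a direct verification.
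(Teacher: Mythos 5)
Your plan — BCH normal ordering using the Heisenberg commutators \eqref{eqn221-1222-6aug}, exploiting that the three Heisenberg algebras commute and that the tensor slots factorize — is exactly the paper's intended proof, which the paper only gestures at ("By using the relation in equation \eqref{eqn221-1222-6aug}, we can easily show..."). The decomposition into a universal $f_{11}^{\vec c}(w/z)^{-1}$ and an $(i,j)$-dependent residual is the right way to organize the computation, and your identification of the distinguished slot and the log-identity $\sum_{n\geq 1}\frac{z^n}{n}(x^n - x^{-n}) = \log\frac{1-x^{-1}z}{1-xz}$ are the correct ingredients.

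One point in your description of the mechanism is imprecise and would need care in the actual bookkeeping. In the case $i>j$, you say the distinguished slot $j$ contributes because $\widetilde{\Lambda}_i^{\vec c,\vec u}(z)$ carries $\varphi^{(c_j)}$ and $\widetilde{\Lambda}_j^{\vec c,\vec u}(w)$ carries $\eta^{(c_j)}$ there, "with the roles of $\eta$ and $\varphi$ reversed." But $\varphi^{(c_j)}$ contains only negative-mode (creation) operators, so the inner factors $\Lambda_i(z)$ and $\Lambda_j(w)$ produce no contraction in slot $j$ in that case: the annihilation half of $\Lambda_i(z)$ lives entirely in slot $i$, where $\Lambda_j(w)$ is the identity. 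The $\Delta(q_3^{-1/2}w/z)$ residual for $i>j$ therefore does not arise from a $\varphi\times\eta$ contraction within $\Lambda_i\Lambda_j$. It arises from the dressing contractions: $\rho^{\vec c}_{H,\vec u}(\beta(z))$, which acts in every slot via the iterated coproduct, contracting against the creation half of $\eta^{(c_j)}(w)$ in slot $j$; and $\eta^{(c_i)}(z)$'s annihilation half contracting against $\rho^{\vec c}_{H,\vec u}(\alpha(w))$. Similarly, for $i<j$ and $i=j$, the dressing contributes cross-terms in addition to the pure $\Lambda$-contractions you list, and all of these must combine — together with the $\beta(z)\times\alpha(w)$ piece — to produce exactly $f_{11}^{\vec c}(w/z)^{-1}$ times $\Delta(q_3^{\pm 1/2}w/z)$ or $\gamma_{c_i}(w/z)$. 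You do acknowledge the bookkeeping as the main obstacle, so this is a gap in the sketch rather than in the strategy, but it is worth flagging since a direct $\varphi(z)\times\eta(w)$ contraction in the $i>j$ slot would give the wrong answer (namely nothing).
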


\begin{dfn}[\cite{HMNW}]
Let $N,M,L \in \bb{Z}^{\geq 0}$, let 
\begin{align}
	\vec{c} = (c_1,\dots,c_n) = (
	\underbrace{	3,\dots,3		}_{N},
	\underbrace{		1,\dots,1			}_{M},\underbrace{	2,\dots,2			}_{L}
	)
	= 
	(3^N1^M2^L),
\end{align}
and let $\vec{u} = (u_1,\dots,u_n) \in (\bb{C} \backslash \{0\})^n$. The \textbf{quantum corner VOA}, denoted by $q\widetilde{Y}_{M,L,N}$, is defined as the algebra generated by the set of currents $\left\{\widetilde{T}^{\vec{c},\vec{u}}_m(z)\right\}_{m \in  \bb{Z}^{\geq 0}}$ where:
\begin{align}
	\widetilde{T}^{\vec{c},\vec{u}}_m(z)
	&=
	\underbrace{			\sum_{k_1, \dots,k_n \in \bb{Z}^{\geq 0}}			}_{k_1 + \cdots + k_n = m}
	\left[
	\prod_{i = 1}^{n}
	\prod_{j_i = 1}^{k_i}
	\left(
	-(q^{1/2}_{c_i}q^{1/2}_{3})
	\frac{(1 - q_3^{j_i-1}q_{c_i}^{-1})}{(1 - q_3^{j_i})}
	\right)
	\right]
	\normord{
		\prod_{i = 1}^{n}\prod_{j_i = 1}^{k_i}
		\widetilde{		\Lambda		}^{\vec{c},\vec{u}}_i(q_3^{- \sum_{\ell = 1}^{i-1}k_\ell -j_i + 1}z)
	}. 
	\label{eqn21-1326-27jul}
\end{align}
\end{dfn}

By using \textbf{Proposition \ref{prp27-1335-6aug-wed}}, we can show that the currents $\left\{\widetilde{T}^{\vec{c},\vec{u}}_m(z)\right\}_{m \in  \bb{Z}^{\geq 0}}$ of the quantum corner VOA satisfy the quadratic relations stated in \textbf{Proposition \ref{prp29=1336-6aug}} below. 

\begin{prop}[\cite{CSw2024}]
\label{prp29=1336-6aug}
Let $\vec{c} = (c_1,\dots,c_n) \in \{1,2,3\}^n$ satisfy the condition $q_{\vec{c}} := \prod_{k = 1}^{n}q_{c_k} \neq 1$. Then, for each $\vec{u} = (u_1,\dots,u_n) \in (\bb{C} \backslash \{0\})^n$, 
and for each $r, m \in \bb{Z}^{\geq 1}$ such that $r \leq m$, we get that 
\begin{align}
	&f^{\vec{c}}_{r,m}\left(
	q^{\frac{r - m}{2}}_{3}\frac{w}{z} 
	\right)
	\widetilde{	T	}^{\vec{c},\vec{u}}_r(z )\widetilde{	T		}^{\vec{c},\vec{u}}_m(w ) 
	- f^{\vec{c}}_{m,r}\left(
	q^{\frac{m - r}{2}}_{3}\frac{z}{w} 
	\right)
	\widetilde{T}^{\vec{c},\vec{u}}_m(w )\widetilde{T}^{\vec{c},\vec{u}}_r(z )
	\notag \\
	&= 
	\frac{
		(1 - q_1)(1 - q_2)
	}{
		(1 - q_3^{-1})
	}
	\sum_{k = 1}^{r}
	\left(
	\prod_{\ell = 1}^{k-1}
	\frac{(1 -  q_1q_3^{-\ell})( 1 - q_2q_3^{-\ell})}{( 1 - q_3^{-\ell - 1})( 1 - q_3^{-\ell})}
	\right)
	\biggl\{
	\delta\left(q_3^k\frac{w}{z}\right)f^{\vec{c}}_{r-k,m+k}(q_3^{\frac{r-m}{2}} )\widetilde{T}^{\vec{c},\vec{u}}_{r-k}(q_3^{-k}z)\widetilde{T}^{\vec{c},\vec{u}}_{m+k}(q_3^kw)
	\notag \\ 
	&\hspace{0.3cm}- 
	\delta\left(q_3^{r-m-k}\frac{w}{z}\right)f^{\vec{c}}_{r-k,m+k}(q_3^{\frac{m-r}{2}})\widetilde{T}^{\vec{c},\vec{u}}_{r-k}(z )\widetilde{T}^{\vec{c},\vec{u}}_{m+k}(w)
	\biggr\}. 
\end{align}
\end{prop}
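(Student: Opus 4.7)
The plan is to derive both sides by computing the products $\widetilde{T}^{\vec{c},\vec{u}}_r(z)\widetilde{T}^{\vec{c},\vec{u}}_m(w)$ and $\widetilde{T}^{\vec{c},\vec{u}}_m(w)\widetilde{T}^{\vec{c},\vec{u}}_r(z)$ directly from the defining formula \eqref{eqn21-1326-27jul}, bringing each product of vertex operators into a single normal-ordered expression through repeated application of \textbf{Proposition \ref{prp27-1335-6aug-wed}}. Each pairwise contraction between a $\widetilde{\Lambda}^{\vec{c},\vec{u}}_i(z')$ coming from the first current and a $\widetilde{\Lambda}^{\vec{c},\vec{u}}_j(w')$ coming from the second contributes a scalar equal to $f^{\vec{c}}_{11}(w'/z')^{-1}$ multiplied by one of $\Delta(q_3^{\pm 1/2}w'/z')$ or $\gamma_{c_i}(w'/z')$ according to whether $i<j$, $i=j$, or $i>j$.

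The next step is to collect the accumulated $f^{\vec{c}}_{11}^{-1}$ factors over all pairs. After accounting for the shift pattern $q_3^{-\sum_{\ell<i}k_\ell - j_i + 1}$ built into the definition of the currents, this product should telescope, by the multiplicative identity implicit in the definition of $f^{\vec{c}}_{r,m}$, into the single scalar $f^{\vec{c}}_{r,m}(q_3^{(r-m)/2}w/z)^{-1}$. Multiplying the left-hand side by $f^{\vec{c}}_{r,m}(q_3^{(r-m)/2}w/z)$ then clears these contractions, and the analogous normalization for the opposite-order product works the same way. What remains on each side is a normal-ordered product of $r+m$ vertex operators multiplied by a rational function of $w/z$ whose only singularities come from the poles of $\Delta(q_3^{\pm 1/2}w'/z')$ that appear whenever two vertex operators with adjacent shift indices collide.

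Subtracting the two normalized products, the regular parts cancel term-by-term because the underlying normal-ordered factors are symmetric in $z$ and $w$. Only simple poles survive, located at values of $w/z$ of the form $q_3^{-k}$ and $q_3^{m-r+k}$ for $k=1,\dots,r$, and their contributions are extracted via the standard formal identity that produces $\delta(q_3^k w/z)$ from the difference of the two regional expansions of $(1 - q_3^k w/z)^{-1}$. At each such pole, $k$ vertex operators from the first current fuse with $k$ from the second; the residue of the surrounding product of $\Delta$'s telescopes to $(1-q_1)(1-q_2)/(1-q_3^{-1})$ times $\prod_{\ell = 1}^{k-1}(1-q_1q_3^{-\ell})(1-q_2q_3^{-\ell})/((1-q_3^{-\ell-1})(1-q_3^{-\ell}))$, and the remaining normal-ordered factors reassemble into $\widetilde{T}^{\vec{c},\vec{u}}_{r-k}\widetilde{T}^{\vec{c},\vec{u}}_{m+k}$ evaluated at the shifted arguments, multiplied by the scalar $f^{\vec{c}}_{r-k,m+k}(q_3^{\pm(r-m)/2})$ coming from the surviving uncontracted $f^{\vec{c}}_{11}$ pieces.

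The main obstacle I expect is the combinatorial step of verifying that, after the residue is taken, the remaining $r+m-2k$ vertex operators, originally organized by the ordered partition $(k_1,\dots,k_n)$ of $r$ and the analogous partition of $m$, actually reassemble into the single expression defining $\widetilde{T}^{\vec{c},\vec{u}}_{r-k}\widetilde{T}^{\vec{c},\vec{u}}_{m+k}$ with exactly the prescribed normal-ordering coefficients in \eqref{eqn21-1326-27jul}. A cleaner route may be to first reduce the general statement to the simplest case by exploiting the coproduct structure \eqref{eqn219-1142-6aug}--\eqref{eqn220-1142-6aug}, which should factor the dependence on $\vec{c}$ and $\vec{u}$, and then propagate the quadratic relation through the tensor product by induction on $n$, so that only one base case requires the explicit residue analysis.
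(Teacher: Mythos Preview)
Your proposal is essentially correct and follows the same route the paper indicates: the paper does not spell out a proof here but simply states that the quadratic relations follow by repeatedly applying \textbf{Proposition \ref{prp27-1335-6aug-wed}} (the pairwise contraction formula for the $\widetilde{\Lambda}^{\vec{c},\vec{u}}_i$), deferring the full computation to the cited reference \cite{CSw2024}. Your outline of normal-ordering via these contractions, telescoping the $f^{\vec{c}}_{11}$ factors into $f^{\vec{c}}_{r,m}$, and extracting the $\delta$-function contributions from the simple poles of $\Delta$ is exactly the mechanism behind that derivation, and the combinatorial reassembly step you flag as the main difficulty is precisely what the detailed argument in \cite{CSw2024} handles.
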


\section{Quantum Corner Polynomial}
\label{sec3-0010}

The main objective of this section is to define quantum corner polynomials. To do this, the concept of reverse semi-standard Young tritableau is essential. We therefore begin by providing a precise definition of the reverse semi-standard Young tritableau.

\subsection{Reverse Semi-standard Young Tritableau and Quantum Corner Polynomial}

\begin{dfn}[Partition]
A \textbf{partition} is a sequence $(\lambda_i)_{i = 1}^{\infty}$ of non-negative integers satisfying the following conditions:
\begin{enumerate}[(1)]
\item $\lambda_1 \geq \lambda_2 \geq \cdots$
\item There exist only finitely many $i \in \bb{Z}^{\geq 1}$ such that $\lambda_i \neq 0$. 
\end{enumerate}
The set of all partitions is denoted by $\operatorname{Par}$. 
\end{dfn}

\begin{dfn}
For $\lambda := (\lambda_i)_{i = 1}^{\infty} \in \operatorname{Par}$, we define its length, denoted by $\ell(\lambda)$, to be the number of nonzero elements in $(\lambda_i)_{i = 1}^{\infty}$.  
\end{dfn}

\begin{dfn}[Subset of Partitions]
Let $\lambda := (\lambda_i)_{i = 1}^{\infty} \in \operatorname{Par}$ and $\mu := (\mu_i)_{i = 1}^{\infty} \in \operatorname{Par}$. We say that $\mu \subseteq \lambda$ if for any $i \in \bb{Z}^{\geq 1}$, we have $\mu_i \leq \lambda_i$. 
\end{dfn}

\begin{dfn}
For each $k \in \bb{Z}^{\geq 0}$, we define 
\begin{align}
\operatorname{Par}(k) := 
\left\{
(\lambda_i)_{i = 1}^{\infty} \in \operatorname{Par}
\;\middle\vert\;
\begin{array}{@{}l@{}}
\sum_{i = 1}^{\infty}\lambda_i = k
\end{array}
\right\}
\end{align}
Sometimes, we will use the notation $|\lambda| = k$ to denote that $\lambda \in \operatorname{Par}(k)$. 
\end{dfn}

Any partition can be represented by a Young diagram, where there are $\lambda_i$ boxes in the $i$-th row. For example, the Young diagram of the partition $(3,1,1,0,0,\dots)$ is 
\begin{align}
\ydiagram{3,1,1}
\end{align}

\begin{dfn}[Transpose of a Partition]
Let $\lambda := (\lambda_i)_{i = 1}^{\infty} \in \operatorname{Par}$. We define the \textbf{transpose} of $\lambda$, denoted by $\lambda^\prime$, to be $\lambda^\prime := (\lambda^\prime_i)_{i = 1}^{\infty}$, where $\lambda^\prime_i$ is the number of boxes in the $i$-th column of the Young diagram of $\lambda$. 
\end{dfn}

\begin{dfn}[Arm Length and Leg Length]
Let $\lambda := (\lambda_i)_{i = 1}^{\infty} \in \operatorname{Par}$, and let $s$ be the box located at row $i$ and column $j$ of the Young diagram of $\lambda$. We define the \textbf{arm length} of the box $s$ to be $a_\lambda(s) := \lambda_i - j$ and the \textbf{leg length} of the box $s$ to be $\ell_\lambda(s) := \lambda^\prime_j - i$. 
\end{dfn}

\begin{dfn}[Young Tableau]
A Young tableau is a Young diagram in which each box is filled with a positive integer. 
\end{dfn}

\begin{dfn}
	Let $(i_1,\dots,i_k)$ be a finite sequence of positive integers, and let $\lambda \in \operatorname{Par}(k)$. Define 
	$T(i_1,\dots,i_k;\lambda)$ to be the Young tableau obtained by arranging the elements of $(i_1,\dots,i_k)$ onto Young diagram of $\lambda$. The elements are placed by filling the boxes of $\lambda$ from left to right, row by row, starting from the first row and proceeding to the last row. 
\end{dfn}

\begin{dfn}[Reverse Semi-Standard Young Tritableau]
\label{def35-1058-25jul}
Let $\lambda \in \operatorname{Par}$ and $N,M,L \in \bb{Z}^{\geq 1}$. A \textbf{reverse semi-standard Young tritableau} (reverse SSYTT) of type $(N,M,L)$ with shape $\lambda$ is a Young diagram of shape $\lambda$ that satisfies the following conditions:
\begin{enumerate}[(1)]
\item Each box in the Young diagram is assigned an element of the set $\{1,\dots,N+M+L\}$. 
The elements in $\{1,\dots,N\}$, $\{N+1,\dots,N+M\}$, $\{N+M+1,\dots,N+M+L\}$ are referred to as 
ordinary numbers, super numbers, and hyper numbers, respectively. 
\item The assigend numbers in each row of the Young diagram are weakly decreasing from left to right. 
\item The assigend numbers in each column of the Young diagram are weakly decreasing from top to bottom.  
\item For each column, the ordinary numbers are strictly decreasing from top to bottom.
\item For each row, the super numbers are strictly decreasing from left to right. 
\end{enumerate}
The set of all reverse SSYTTs of type $(N,M,L)$ with shape $\lambda$ is denoted by $\operatorname{RSSYTT}(N,M,L;\lambda)$. 
\end{dfn}

\begin{rem}
The concept of a reverse SSYTT can be extended to a skew shape $\lambda\backslash\mu$, where $\mu \subseteq \lambda$. The ruls for assigning ordinary numbers, super numbers, and hyper numbers remain the same as in \textbf{Definition \ref{def35-1058-25jul}}, with the only difference being that the shape of the diagram is now $\lambda\backslash\mu$. The set of all reverse SSYTTs of type $(N,M,L)$ with shape $\lambda\backslash\mu$ is denoted by $\operatorname{RSSYTT}(N,M,L;\lambda\backslash\mu)$. 
\end{rem}

\begin{rem}
The elements of the set $\operatorname{RSSYTT}(N,M,0;\lambda\backslash\mu)$ are called reverse semi-standard Young bitableaus (reverse SSYBTs). 
\end{rem}

\begin{nota}
For $T \in \operatorname{RSSYTT}(N,M,L;\lambda)$, we define 
$T_0,T_1,$ and $T_2$ to be the sub-diagrams of $T$ consisting of boxes assigned with ordinary numbers, super numbers, and hyper numbers, respectively. 
\end{nota}

\begin{dfn}
Let $T \in \operatorname{RSSYTT}(N,M,L;\lambda)$. For each $\alpha \in \{1,\dots,\ell(\lambda)\}$ and $\beta \in \{1,\dots,N+M+L\}$, we define $\theta_{\alpha,\beta}(T)$ to be the number of boxes in row $\alpha$ that are assigned the number $\beta$. 
\end{dfn}

\begin{dfn}
Let $\lambda, \mu \in \operatorname{Par}$ where $\mu \subseteq \lambda$. Define 
\begin{align}
	\fraks{C}_{\lambda/\mu}(q,t)
	=
	\prod_{
		1 \leq i \leq j \leq \ell(\mu)
	}
	\frac{
		f_{q,t}\left(	t^{j - i} q^{ \mu_{i} 	- \mu_{j}	 		}						\right)
		f_{q,t}\left(		
		t^{j - i} q^{ \lambda_{i} 	- \mu_{j + 1} 		}					\right)
	}{
		f_{q,t}\left(	t^{j - i} q^{ \lambda_{i} 	- \mu_{j}		}						\right)
		f_{q,t}\left(		
		t^{j- i} q^{ \mu_{i} 	- \mu_{j + 1}			}					\right)
	}
\end{align}
where $\displaystyle f_{q,t}(u) := \frac{(tu;q)_\infty}{(qu;q)_\infty}$. Here $(x;q)_\infty := \prod_{i = 0}^{\infty}(1 - xq^i)$. 
\end{dfn}

\begin{dfn}
\label{dfn213-1508-1aug}
Let $(N,M,L) \in (\bb{Z}^{\geq 0} \times \bb{Z}^{\geq 0} \times \bb{Z}^{\geq 0})\backslash\{(0,0,0)\}$. For each $T \in \operatorname{RSSYTT}(N,M,L;\lambda)$, we define 

\footnotesize
\begin{align}
	\cals{A}_2(T;q,t)
	=& 
	\prod_{d = N+M+1}^{N+M+L}\fraks{C}_{T^{(d)}/T^{(d+1)}}(q,t)
	\times 
	\prod_{\zeta = 1}^{\ell(T_2)}
	\prod_{d = N+M+1}^{N+M+L}
	\prod_{j = 1}^{\theta_{\zeta,d} - 1}
	\frac{
		(1 - qt^{-1}q^{ j})(1 - t)
	}{
		(1 - qt^{-1})(1 - tq^{ j})
	}
	\\
	&\times
	\prod_{d = N+M+1}^{N+M+L} 
	\prod_{\zeta = 1}^{\ell(T_2)}
	\prod_{\tau = \zeta + 1}^{\ell(T_2)} 
	\prod_{\alpha = 0}^{\theta_{\zeta,d} - 1}
	\frac{
		\left(		1 - t^{\tau - \zeta + 1} q^{ T^{(d+1)}_{\zeta} + \alpha	 -	T^{(d)}_{\tau}			}					\right)
		\left(
		1 - t^{\tau - \zeta} q^{ T^{(d+1)}_{\zeta} + \alpha	-	T^{(d+1)}_{\tau}			}	
		\right)
	}{
		\left(
		1 - t^{\tau - \zeta} q^{ T^{(d+1)}_{\zeta} + \alpha	 -	T^{(d)}_{\tau} 		}	
		\right)
		\left(		1 - t^{\tau - \zeta + 1} q^{	T^{(d+1)}_{\zeta} + \alpha	 - T^{(d+1)}_{\tau}			}					\right)
	}
	\notag 
\end{align}
\normalsize
where $T^{(d)}$ for $d = N+M+1,\dots,N+M+L$ denotes the Young sub-diagram of $T$ consisting of boxes assigned with numbers $d,\dots,N+M+L$. As a convention, we set $T^{(N+M+L+1)} = (0,0,0,\dots)$. 
\end{dfn}

\begin{rem}
It is clear that $\cals{A}_2(T;q,t)$ depends only on the sub-diagram $T_2$ of $T$. 
\end{rem}

The next lemma plays a crucial role in the proof of \textbf{Lemma \ref{lemm36-1211-27jul}}. To maintain the flow of the main body of this section, we relegate its proof to \textbf{Appendix \ref{app-A-2246-16aug}}. 

\begin{lem}
\label{prp317-2237-16aug}
Let $(N,M,L) \in (\bb{Z}^{\geq 0} \times \bb{Z}^{\geq 0} \times \bb{Z}^{\geq 0})\backslash\{(0,0,0)\}$. For each $T \in \operatorname{RSSYTT}(N,M,L;\lambda)$, we have 
\footnotesize
\begin{align}
	&\cals{A}_2(T;q,t)
	\label{eqn35-1146-17aug}
	\\
	&=\prod_{\zeta = 1}^{\ell(T_2)}
	\prod_{d = N+M+1}^{N+M+L}
	\prod_{j = 1}^{\theta_{\zeta,d} - 1}
	\frac{
		(1 - qt^{-1}q^{\theta_{\zeta,d} - j})(1 - t)
	}{
		(1 - t^{-1}q)(1 - tq^{\theta_{\zeta,d} - j})
	}
	\notag 
	\\
	&\times 
	\prod_{\zeta = 1}^{\ell(T_2)}
	\prod_{d = N+M+1}^{N+M+L}
	\prod_{\alpha = N+M+1}^{d - 1} 
	\prod_{\omega = 1}^{\theta_{\zeta,\alpha}}
	\frac{
		\left(	1 - q^{	\sum_{\gamma = \alpha + 1}^{N + M + L} \theta_{\zeta,\gamma}	+ \omega - \sum_{\gamma = d+1}^{N + M + L}\theta_{\zeta,\gamma} 		}			\right)
		\left(	1 - tq^{	\sum_{\gamma = \alpha + 1}^{N + M + L} \theta_{\zeta,\gamma}	+ \omega - (\sum_{\gamma = d}^{N + M + L}\theta_{\zeta,\gamma} 	+ 1 )	}		\right)
	}{
		\left(	1 - q^{	\sum_{\gamma = \alpha + 1}^{N + M + L} \theta_{\zeta,\gamma}	+ \omega - \sum_{\gamma = d}^{N + M + L}\theta_{\zeta,\gamma} 		}			\right)
		\left(	1 - 	tq^{	\sum_{\gamma = \alpha + 1}^{N + M + L} \theta_{\zeta,\gamma}	+ \omega - (\sum_{\gamma = d + 1}^{N + M + L}\theta_{\zeta,\gamma} 	+ 1 )	}			\right)
	}
	\notag 
	\\
	&\times
	\prod_{\zeta = 1}^{\ell(T_2)}
	\prod_{d = N+M+1}^{N+M+L} 
	\prod_{\tau = \zeta + 1}^{\ell(T_2)} 
	\prod_{\omega = 1}^{\theta_{\tau,d}}
	\frac{
		\left(		1 - t^{-1} (t^{-1})^{\tau - \zeta}	q^{	\sum_{\gamma = d+1}^{N + M + L}\theta_{\tau,\gamma}	+ \omega - 	\sum_{\gamma = d+ 1}^{N + M + L} \theta_{\zeta,\gamma}				}			\right)
		\left(
		1 - (t^{-1})^{\tau - \zeta}	q^{	\sum_{\gamma = d+1}^{N + M + L}\theta_{\tau,\gamma}	+ \omega - 	\sum_{\gamma = d}^{N + M + L} \theta_{\zeta,\gamma}				}		
		\right)
	}{
		\left(		1 - t^{-1} (t^{-1})^{\tau - \zeta}	q^{	\sum_{\gamma = d+1}^{N + M + L}\theta_{\tau,\gamma}	+ \omega - 	\sum_{\gamma = d}^{N + M + L} \theta_{\zeta,\gamma}				}			\right)
		\left(
		1 - (t^{-1})^{\tau - \zeta}	q^{	\sum_{\gamma = d+1}^{N + M + L}\theta_{\tau,\gamma}	+ \omega - 	\sum_{\gamma = d+1}^{N + M + L} \theta_{\zeta,\gamma}				}		
		\right)
	}
	\notag 
	\\
	&\times 
	\prod_{\zeta = 1}^{\ell(T_2)}
	\prod_{d = N+M+1}^{N+M+L}
	\prod_{\tau = \zeta + 1}^{\ell(T_2)}
	\prod_{p = d+1}^{N+M+L}
	\prod_{\Xi = 1}^{\theta_{\zeta,d}}
	\prod_{\omega = 1}^{\theta_{\tau,p}}
	\Biggl\{
	\notag 
	\\
	&\frac{
		\left(	1 - q^{-1}					
		(t^{-1})^{\tau - \zeta} q^{	\sum_{\gamma = p + 1}^{N + M + L} \theta_{\tau,\gamma}	+ \omega	- (\sum_{\gamma = d + 1}^{N + M + L} \theta_{\zeta,\gamma} + \Xi)		}	
		\right)
		\left(	1 - qt^{-1}			
		(t^{-1})^{\tau - \zeta} q^{	\sum_{\gamma = p + 1}^{N + M + L} \theta_{\tau,\gamma}	+ \omega	- (\sum_{\gamma = d + 1}^{N + M + L} \theta_{\zeta,\gamma} + \Xi)		}
		\right)
	}{
		\left(	1 - q			
		(t^{-1})^{\tau - \zeta} q^{	\sum_{\gamma = p + 1}^{N + M + L} \theta_{\tau,\gamma}	+ \omega	- (\sum_{\gamma = d + 1}^{N + M + L} \theta_{\zeta,\gamma} + \Xi)		}
		\right)
		\left(	1 - q^{-1}t			
		(t^{-1})^{\tau - \zeta} q^{	\sum_{\gamma = p + 1}^{N + M + L} \theta_{\tau,\gamma}	+ \omega	- (\sum_{\gamma = d + 1}^{N + M + L} \theta_{\zeta,\gamma} + \Xi)		}
		\right)
	}
	\notag 
	\\
	&\times
	\frac{
		\left(	1 - t		
		(t^{-1})^{\tau - \zeta} q^{	\sum_{\gamma = p + 1}^{N + M + L} \theta_{\tau,\gamma}	+ \omega	- (\sum_{\gamma = d + 1}^{N + M + L} \theta_{\zeta,\gamma} + \Xi)		}
		\right)
	}{
		\left(	1 - t^{-1}			
		(t^{-1})^{\tau - \zeta} q^{	\sum_{\gamma = p + 1}^{N + M + L} \theta_{\tau,\gamma}	+ \omega	- (\sum_{\gamma = d + 1}^{N + M + L} \theta_{\zeta,\gamma} + \Xi)		}
		\right)
	}
	\Biggr\}
	\notag 
\end{align}
\normalsize
\end{lem}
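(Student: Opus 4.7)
The identity is purely algebraic: both sides are finite products of factors of the form $(1 - q^{a}t^{b})$, so the task is to show that the two factorizations coincide. The middle factor of \textbf{Definition \ref{dfn213-1508-1aug}} already matches factor (i) of \eqref{eqn35-1146-17aug} via the bijective reindexing $j \mapsto \theta_{\zeta,d} - j$ on $\{1,\ldots,\theta_{\zeta,d}-1\}$, so the real content is
\[
\prod_{d=N+M+1}^{N+M+L}\fraks{C}_{T^{(d)}/T^{(d+1)}}(q,t) \;\times\; \mathcal{B} \;\;=\;\; (\text{ii})\cdot(\text{iii})\cdot(\text{iv}),
\]
where $\mathcal{B}$ denotes the third factor of \textbf{Definition \ref{dfn213-1508-1aug}} and (ii),(iii),(iv) denote the last three factors of the right-hand side of \eqref{eqn35-1146-17aug}.

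\textbf{Step 1: Finite-product form of $\fraks{C}$.} The essential tool is the elementary identity
\[
\frac{f_{q,t}(u)}{f_{q,t}(uq^{\theta})}=\prod_{k=0}^{\theta-1}\frac{1-tuq^{k}}{1-q^{k+1}u},\qquad \theta \in \bb{Z}^{\geq 0},
\]
which follows at once from $f_{q,t}(u)=(tu;q)_\infty/(qu;q)_\infty$. Since $T^{(d+1)}\subseteq T^{(d)}$ with $T^{(d)}_i - T^{(d+1)}_i = \theta_{i,d}$, the identity above converts each factor of $\fraks{C}_{T^{(d)}/T^{(d+1)}}(q,t)$ indexed by $(i,j)$ with $i\leq j$ into a finite product over $k\in\{0,\ldots,\theta_{i,d}-1\}$. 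I then split the product $\prod_{i\leq j}$ into the diagonal $i=j$ contribution (which depends only on row $i$) and the strictly off-diagonal $i<j$ contribution.

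\textbf{Step 2: Matching the three RHS factors.} The diagonal piece of $\fraks{C}$ depends only on $T^{(d+1)}_i - T^{(d+1)}_{i+1} = \sum_{\gamma=d+1}^{N+M+L}(\theta_{i,\gamma}-\theta_{i+1,\gamma})$. Renaming $i\mapsto\zeta$ and expanding this difference as a telescoped sum indexed by levels $\alpha\leq d-1$ and box-positions $\omega\in\{1,\ldots,\theta_{\zeta,\alpha}\}$ reorganizes the diagonal piece precisely into factor (ii). For the off-diagonal piece, relabel $(i,j)\mapsto(\zeta,\tau)$; the factor $\mathcal{B}$ is already indexed by the same data $(\zeta,\tau,d,\alpha)$ and must be combined with it. Using $T^{(d)}_\tau = T^{(d+1)}_\tau + \theta_{\tau,d}$ and expanding $T^{(d+1)}_\tau = \sum_{p=d+1}^{N+M+L}\theta_{\tau,p}$, one indexes boxes in row $\tau$ by the level $p$ and a position $\omega\in\{1,\ldots,\theta_{\tau,p}\}$. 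Sorting the resulting factors by whether the box in row $\tau$ is at the same level ($p=d$, producing factor (iii)) or at a deeper level ($p>d$, producing factor (iv)), and then verifying that the telescope over $d$ leaves no residue, completes the match; the off-diagonal piece of $\fraks{C}$ contributes exactly the additional numerator/denominator pair needed to assemble the four rational factors inside (iv).

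\textbf{Main obstacle.} The substantive difficulty is entirely in Step 2. Each of the four rational factors inside (iv) appears individually in neither $\fraks{C}$ nor $\mathcal{B}$, and only emerges after telescoping the $d$-product and re-indexing box positions $(\tau,p,\omega)$ and $(\zeta,d,\Xi)$ on both sides. The careful bookkeeping—matching the exponents $T^{(d+1)}_\zeta + \alpha$ with $\sum_{\gamma=d+1}^{N+M+L}\theta_{\zeta,\gamma}+\Xi$ and the same for row $\tau$—is tedious but mechanical once the splitting principle in Step 2 is in place. The actual computations are routine $q$-product manipulations and are thus appropriately deferred to \textbf{Appendix \ref{app-A-2246-16aug}}.
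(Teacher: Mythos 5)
Your Step 1 and your identification $\mathcal{B}=(\text{iii})$ (after the $\alpha\mapsto\alpha+1$ reindex) are both correct, and your overall strategy — reduce to showing $(\text{ii})\cdot(\text{iv})=\prod_{d}\fraks{C}_{T^{(d)}/T^{(d+1)}}(q,t)$ — agrees with what the paper actually does (Propositions \ref{prpa1-17aug-sun-1329}, \ref{prpa2-17aug-sun-1329}, \ref{prpa3-17aug-sun-1329} and Corollary \ref{cor4-17aug-1339-sun}).

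However, there is a genuine gap in Step 2. You claim that the diagonal $i=j$ piece of $\fraks{C}_{T^{(d)}/T^{(d+1)}}$ "depends only on $T^{(d+1)}_i-T^{(d+1)}_{i+1}$" and reorganizes into factor (ii). This is not right. Taking $\lambda=T^{(d)}$, $\mu=T^{(d+1)}$, the $i=j$ factor of $\fraks{C}_{\lambda/\mu}$ is
\[
\frac{f_{q,t}(1)\,f_{q,t}\!\left(q^{\lambda_i-\mu_{i+1}}\right)}{f_{q,t}\!\left(q^{\lambda_i-\mu_i}\right)\,f_{q,t}\!\left(q^{\mu_i-\mu_{i+1}}\right)},
\]
which involves the next row through $\mu_{i+1}=T^{(d+1)}_{i+1}$. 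Factor (ii) of the right-hand side of \eqref{eqn35-1146-17aug}, by contrast, is a purely intra-row quantity: its exponents are $T^{(\alpha+1)}_\zeta$, $T^{(d)}_\zeta$, $T^{(d+1)}_\zeta$ only, never involving row $\zeta+1$. Indeed, the paper's Proposition \ref{prpa1-17aug-sun-1329} shows that factor (ii) telescopes to $\prod_\zeta f_{q,t}(q^{T^{(N+M+1)}_\zeta})/f_{q,t}(1)\times\prod_d\prod_\zeta f_{q,t}(1)/f_{q,t}(q^{\theta_{\zeta,d}})$, which is not the diagonal of $\fraks{C}$. The cross-row parts of the diagonal cancel against pieces coming from the strictly off-diagonal terms (this is precisely the final telescoping manipulation in Corollary \ref{cor4-17aug-1339-sun}, which passes between $\prod_{\tau>\zeta}$ and $\prod_{\tau\geq\zeta}$ and uses the vanishing of the $\tau>\ell(T^{(d+1)})$ contributions). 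Your proposed "diagonal $\leftrightarrow$ (ii), off-diagonal $\leftrightarrow$ (iv)" correspondence does not hold factorwise, and the argument as written would stall at exactly this point; the identity is only true after the more global telescoping that the paper carries out.
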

\begin{proof}
See \textbf{Appendix \ref{app-A-2246-16aug}}. 
\end{proof}

\begin{rem}
Examining the quantity on the RHS of equation \eqref{eqn35-1146-17aug}, a natural question arises as to whether the denominator can be zero. We address this concern here. First, we note that the only factors in the denominator that could potentially evaluate to zero are 
\begin{align}
\left(	1 - q^{	\sum_{\gamma = \alpha + 1}^{N + M + L} \theta_{\zeta,\gamma}	+ \omega - \sum_{\gamma = d}^{N + M + L}\theta_{\zeta,\gamma} 		}			\right)
\label{eqn36-1149-sun17aug}
\end{align}
and
\begin{align}
\left(	1 - q^{-1}t			
(t^{-1})^{\tau - \zeta} q^{	\sum_{\gamma = p + 1}^{N + M + L} \theta_{\tau,\gamma}	+ \omega	- (\sum_{\gamma = d + 1}^{N + M + L} \theta_{\zeta,\gamma} + \Xi)		}
\right)
\label{eqn37-1200-sun17aug}
\end{align}

Regarding the factor \eqref{eqn36-1149-sun17aug}, since $\sum_{\gamma = \alpha + 1}^{N + M + L} \theta_{\zeta,\gamma} \geq \sum_{\gamma = d}^{N + M + L}\theta_{\zeta,\gamma}$, the only situation in which the factor \eqref{eqn36-1149-sun17aug} could be zero is if both of the following conditions are satisfied:
\begin{enumerate}[(1)]
	\item $\sum_{\gamma = \alpha + 1}^{N + M + L} \theta_{\zeta,\gamma} = \sum_{\gamma = d}^{N + M + L}\theta_{\zeta,\gamma}$,
	\item $\omega = 0$. 
\end{enumerate}
However, we assert that $\omega$ can not be zero. The only way for $\omega = 0$ would be if $\theta_{\zeta,\alpha} = 0$. But if $\theta_{\zeta,\alpha} = 0$, then by our convention \eqref{eqn14-1159-sun17aug}, the product $\prod_{\omega = 1}^{\theta_{\zeta,\alpha}}$ is an empty product, which evaluates to $1$.

Regarding the factor \eqref{eqn37-1200-sun17aug}, it will be zero if and only if the following conditions are satisfied:
\begin{enumerate}[(1)]
\item $\tau - \zeta = 1$
\item $\sum_{\gamma = p + 1}^{N + M + L} \theta_{\tau,\gamma}	+ \omega	- (\sum_{\gamma = d + 1}^{N + M + L} \theta_{\zeta,\gamma} + \Xi)	 = 1$
\end{enumerate}
We will now show that these conditions can never be satisfied. First, note that if either $\theta_{\zeta,d} = 0$ or
$\theta_{\tau,p} = 0$, our convention \eqref{eqn14-1159-sun17aug} implies that the product 
$\prod_{\Xi = 1}^{\theta_{\zeta,d}}\prod_{\omega = 1}^{\theta_{\tau,p}}$ becomes an empty product, which yields a contribution of $1$, and thus no singularity arises. For this reason, we only need to consider the case where $\theta_{\zeta,d} \geq 1$ and $\theta_{\tau,p} \geq 1$. In this situation, we observe that the expression 
$\sum_{\gamma = p + 1}^{N + M + L} \theta_{\zeta + 1,\gamma}	+ \omega	- (\sum_{\gamma = d + 1}^{N + M + L} \theta_{\zeta,\gamma} + \Xi)$
attains its maximum value when $p = d+1$, $\omega = \theta_{\tau,p}$ and $\Xi = 1$. In this case, we find its value to be $-1$. Therefore, we conclude that for any $\Xi \in \{1,\dots,\theta_{\zeta,d}\}$ and $\omega \in \{1,\dots,\theta_{\tau,p}\}$, we have 
$\sum_{\gamma = p + 1}^{N + M + L} \theta_{\zeta + 1,\gamma}	+ \omega	- (\sum_{\gamma = d + 1}^{N + M + L} \theta_{\zeta,\gamma} + \Xi) \neq 1$. 
\end{rem}

\begin{dfn}\mbox{}
\label{dfn214-2113-26jul}
\begin{enumerate}[(1)]
\item Let $\lambda, \mu \in \operatorname{Par}$ where $\mu \subseteq \lambda$. Define 
\begin{align}
	\psi_{\lambda/\mu}(q,t) := \prod_{1 \leq i \leq j \leq \ell(\mu)}
	\frac{
		f_{q,t}(q^{\mu_i - \mu_j}t^{j - i})
		f_{q,t}(q^{\lambda_i - \lambda_{j+1}}t^{j - i})
	}{
		f_{q,t}(q^{\lambda_i - \mu_j}t^{j - i})
		f_{q,t}(q^{\mu_i - \lambda_{j+1}}t^{j - i})
	}, 
	\label{eqn27-2052}
\end{align}
where $\displaystyle f_{q,t}(u) := \frac{(tu;q)_\infty}{(qu;q)_\infty}$. Here $(x;q)_\infty := \prod_{i = 0}^{\infty}(1 - xq^i)$. 
\vspace{0.2cm}
\item Let $\lambda, \mu \in \operatorname{Par}$ where $\mu \subseteq \lambda$.  For each $T \in \operatorname{RSSYTT}(N,0,0;\lambda\backslash \mu)$, we define 
\begin{align}
	\psi_{T}(q,t) := \prod_{k = 1}^{N}\psi_{T^{(k)}/T^{(k+1)}}(q,t), 
\end{align}
where $T^{(k)}$ for $k = 1,\dots,N$ denotes the Young sub-diagram of $T$ consisting of boxes assigned with numbers $k,\dots,N$. As a convention, we set $T^{(N+1)} = \mu$. 
\end{enumerate}
\end{dfn}

\begin{dfn}
Let $(N,M,L) \in (\bb{Z}^{\geq 0} \times \bb{Z}^{\geq 0} \times \bb{Z}^{\geq 0})\backslash\{(0,0,0)\}$, and $T \in \operatorname{RSSYTT}(N,M,L;\lambda)$. Then, we define 

\footnotesize
\begin{align}
	R_T(q,t)
	:=&  
	\cals{A}_2(T;q,t) \times 
	\psi_{T^{\prime}_{1}}(t,q) \times \psi_{T_0}(q,t)
	\times
	\frac{
		H(\operatorname{sh}(T_1 \sqcup T_2),q,t^{-1})
	}{
		H(\operatorname{sh}(T_1 \sqcup T_2)^\prime,t^{-1},q)
	}
	\times 
	\prod_{\square \in \operatorname{sh}(T_2)}
	\frac{
		(t^{-1})^{\ell_{\operatorname{sh}(T_2)}(\square) + 1} - q^{a_{\operatorname{sh}(T_2)}(\square)}
	}{
		q^{a_{\operatorname{sh}(T_2)}(\square) + 1} - (t^{-1})^{\ell_{\operatorname{sh}(T_2)}(\square)}
	}
	\notag 
	\\
	&\times 
	\left(	\frac{q^{-1}t - 1}{t^{-1} - 1}				\right)^{|T_2|}
\end{align}
\normalsize
where $\operatorname{sh}(T)$ and 
$|T|$ are the shape and the number of boxes in the Young tableau $T$, respectively. Here for each $\mu \in \operatorname{Par}$, 
$H(\mu,q,t) := 
\prod_{s \in \mu}\left(
q^{a_{\mu}(s) + 1} - t^{\ell_\mu(s)}
\right)$. 
\end{dfn}

\begin{dfn}[Quantum Corner Polynomial]
\label{defn314-1234-25jul}
Let $\lambda \in \operatorname{Par}$, $(N,M,L) \in (\bb{Z}^{\geq 0} \times \bb{Z}^{\geq 0} \times \bb{Z}^{\geq 0})\backslash\{(0,0,0)\}$. We define \textbf{quantum corner polynomial} to be 
\begin{align}
	\quantumcornerpolynomial
	:= 
	\sum_{T \in \operatorname{RSSYTT}(N,M,L;\lambda)}
	R_T(q,t)x_T
\end{align}
where $x_T := x_{i_1}\cdots x_{i_k}$. For each $i \in \{1,\dots,M\}$ and $j \in \{1,\dots,L\}$, we write $x_{N+i} = y_i$ and $x_{N+M+j} = w_j$. 
\end{dfn}

\begin{prop}
Let $\supermac_\lambda(x_1,\dots,x_N;y_1,\dots,y_M;q,t)$ be the super Macdonald polynomial defined in \cite{CSW2025}. Then, 
\begin{align}
\quantumcorner_\lambda(x_1,\dots,x_N;y_1,\dots,y_M;q,t) = 
\supermac_\lambda(x_1,\dots,x_N;y_1,\dots,y_M;q,t)  
\end{align}
\end{prop}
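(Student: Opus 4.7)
The plan is a direct specialization argument: I will substitute $L=0$ into Definition \ref{defn314-1234-25jul} and show, factor-by-factor, that the coefficient $R_T(q,t)$ reduces to the coefficient appearing in the combinatorial (reverse SSYBT) formula for the super Macdonald polynomial $\overline{SP}_\lambda$ established in \cite{CSW2025}.

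First I would observe that when $L=0$, every tritableau $T \in \operatorname{RSSYTT}(N,M,0;\lambda)$ carries no hyper numbers, so the sub-diagram $T_2$ is empty and $T = T_0 \sqcup T_1$ with $T_0 \sqcup T_1 \in \operatorname{RSSYTT}(N,M,0;\lambda)$. Consequently, the index sets $\zeta \in \{1,\dots,\ell(T_2)\}$ and $d \in \{N+M+1,\dots,N+M\}$ appearing in $\cals{A}_2(T;q,t)$ are all empty, so by the convention \eqref{eqn14-1159-sun17aug} on empty products we get $\cals{A}_2(T;q,t) = 1$. For the same reason, the product $\prod_{\square \in \operatorname{sh}(T_2)}$ in the definition of $R_T(q,t)$ is an empty product and contributes $1$, and $|T_2|=0$ makes the factor $\bigl((q^{-1}t-1)/(t^{-1}-1)\bigr)^{|T_2|}$ equal to $1$. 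Finally, $T_1 \sqcup T_2 = T_1$, so
\begin{align*}
R_T(q,t) \;=\; \psi_{T_1'}(t,q)\cdot\psi_{T_0}(q,t)\cdot\frac{H(\operatorname{sh}(T_1),q,t^{-1})}{H(\operatorname{sh}(T_1)',t^{-1},q)}.
\end{align*}

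Next I would compare this specialized coefficient with the explicit combinatorial formula for $\overline{SP}_\lambda(x_1,\dots,x_N;y_1,\dots,y_M;q,t)$ from \cite{CSW2025}, which is also a sum over reverse SSYBTs of shape $\lambda$ of type $(N,M)$, with the $x$-monomial read off from the ordinary part $T_0$ and the $y$-monomial from the super part $T_1$. The three factors of the reduced $R_T(q,t)$ correspond exactly to the three ingredients in that formula: $\psi_{T_0}(q,t)$ is the Macdonald branching coefficient on the ordinary part, $\psi_{T_1'}(t,q)$ is its $(q\leftrightarrow t)$-dual appearing on the super part (reflecting the duality under transposition that is characteristic of super Macdonald theory), and the ratio of hook products $H(\operatorname{sh}(T_1),q,t^{-1})/H(\operatorname{sh}(T_1)',t^{-1},q)$ is the normalization factor that interpolates between the $x$- and $y$-blocks. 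Since \textbf{Definition \ref{defn314-1234-25jul}} assigns the variable $x_{N+i} = y_i$ to the super-numbered boxes, the monomial $x_T$ agrees with the monomial attached to $T$ in \cite{CSW2025}.

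The main obstacle I anticipate is bookkeeping rather than conceptual: the formula in \cite{CSW2025} may present the coefficient in a different but equivalent form (for instance, via a product over boxes rather than via $\psi$-functions and hook polynomials), so a short algebraic manipulation — using the identity $\psi_{\lambda/\mu}(q,t)$ as a ratio of $(q,t)$-Pochhammer products and the standard hook-product expression $H(\mu,q,t)=\prod_{s\in\mu}(q^{a_\mu(s)+1}-t^{\ell_\mu(s)})$ — may be required to rewrite the two expressions into a common form. Once this identification is made term by term, the equality of the two sums is immediate since both are indexed over the same combinatorial set $\operatorname{RSSYTT}(N,M,0;\lambda) = \operatorname{RSSYBT}(N,M;\lambda)$.
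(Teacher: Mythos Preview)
Your proposal is correct and follows essentially the same approach as the paper: set $L=0$, observe that $T_2=\emptyset$ so all hyper-number-dependent factors in $R_T(q,t)$ trivialize, and match the resulting coefficient $\psi_{T_1'}(t,q)\,\psi_{T_0}(q,t)\,H(\operatorname{sh}(T_1),q,t^{-1})/H(\operatorname{sh}(T_1)',t^{-1},q)$ against the combinatorial formula from \cite{CSW2025}. The bookkeeping obstacle you anticipate does not in fact arise --- the paper quotes \cite{CSW2025} as presenting $\overline{SP}_\lambda$ in exactly this form, so no rewriting is needed.
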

\begin{proof}
When $L = 0$, we get that $T_2 = \emptyset$. Then, from \textbf{Definition \ref{defn314-1234-25jul}}, we obtain that 
\begin{align}
&\quantumcorner_\lambda(x_1,\dots,x_N;y_1,\dots,y_M;q,t) 
\\
&=
\sum_{T \in \operatorname{RSSYTT}(N,M,0;\lambda)}
R_T(q,t)x_T,
\notag 
\\
&= 
\sum_{T \in \operatorname{RSSYTT}(N,M,0;\lambda)}
\left(
\psi_{T^{\prime}_{1}}(t,q) \times \psi_{T_0}(q,t)
\times
\frac{
	H(\operatorname{sh}(T_1),q,t^{-1})
}{
	H(\operatorname{sh}(T_1)^\prime,t^{-1},q)
}
\right)x_T.
\notag  
\end{align}
On the other hand, we know from the combinatorial formula of super Macdonald polynomial \cite{CSW2025} that 
\footnotesize
\begin{align}
\supermac_\lambda(x_1,\dots,x_N;y_1,\dots,y_M;q,t)  
= 
\sum_{T \in \operatorname{RSSYTT}(N,M,0;\lambda)}
\left(
\psi_{T^{\prime}_{1}}(t,q) \times \psi_{T_0}(q,t)
\times
\frac{
	H(\operatorname{sh}(T_1),q,t^{-1})
}{
	H(\operatorname{sh}(T_1)^\prime,t^{-1},q)
}
\right)x_T. 
\end{align}
\normalsize
Thus, we get that $\quantumcorner_\lambda(x_1,\dots,x_N;y_1,\dots,y_M;q,t) = 
\supermac_\lambda(x_1,\dots,x_N;y_1,\dots,y_M;q,t)$. 
\end{proof}

\section{Quantum Corner VOA/Quantum Corner Polynomial Correspondence}
\label{sec4-0012}

In this section, we state and prove one of the main result of this paper (\textbf{Theorem \ref{thm43-1458-25jul}}). 

\begin{dfn}
Let $\lambda \in \operatorname{Par}$. We define the map $\dualmap$ to be the map that sends $f(z_1,\dots,z_k) \in \bb{C}(z_1,\dots,z_k)$ to 
\begin{align}
	&f(y,qy,\dots,q^{\lambda_1 - 1}y
	\notag	\\
	&\hspace{0.4cm} \xi y,q\xi y,\dots,q^{\lambda_2 - 1}\xi y
	\notag \\
	&\hspace{2.8cm}\vdots
	\notag \\
	&\hspace{0.4cm} \xi^{\ell(\lambda) - 1} y,q\xi^{\ell(\lambda) - 1} y,\dots,q^{\lambda_{\ell(\lambda)} - 1}\xi^{\ell(\lambda) - 1} y). 
\end{align}
which is an element of $\bb{C}(\xi, y)$. 
\end{dfn}

\begin{rem}
In general, the map $\dualmap$ is not well-defined for every element $f(z_1,\dots,z_k)$ in $\bb{C}(z_1,\dots,z_k)$. However, as we will see, for the element $f(z_1,\dots,z_k) \in \bb{C}(z_1,\dots,z_k)$ considered in this paper, there will be a well-defined image under the map $\dualmap$. 
\end{rem}

\begin{thm}
\label{thm43-1458-25jul}
Let $\vec{c} = (3^N1^M2^L)$, $\vec{u} = (u_1,\dots,u_{N+M+L})$, and let $\lambda \in \operatorname{Par}(k)$. Then, 

\footnotesize
\begin{align}
	&
	\lim_{\xi \rightarrow t^{-1}}\,\,
	(
	\dualmap
	\comp 
	\bigg|_{
		\substack{
			q_1 = q, \\
			q_2 = q^{-1}t,\\
			q_3 = t^{-1} \\
		}
	}
	)
	\left(
	\cals{N}_{\lambda}(z_1,\dots,z_k )
	\times
	\prod_{1 \leq i < j \leq k}f^{\vec{c}}_{11}\left(\frac{z_j}{z_i} \right)
	\times
	\langle 0 |\widetilde{T}^{\vec{c},\vec{u}}_{1}(z_1 )\cdots \widetilde{T}^{\vec{c},\vec{u}}_{1}(z_k )|0\rangle
	\right)
	\\
	&= 
	\quantumcorner_\lambda\left(
	u_1,\dots,u_N, q^{-\frac{1}{2}}t^{-\frac{1}{2}}u_{N+1}, \dots, q^{-\frac{1}{2}}t^{-\frac{1}{2}}u_{N+M}, q^{\frac{1}{2}}t^{-1}u_{N+M+1}, \dots,q^{\frac{1}{2}}t^{-1}u_{N+M+L} ; q,t
	\right)
	\notag 
\end{align}
\normalsize
Here 
\begin{align}
	\cals{N}_{\lambda}(z_1,\dots,z_k )
	&:= 
	\prod_{1 \leq c < d \leq \ell(\lambda)}
	\prod_{
		\substack{
			i \in I^{(c)}
			\\
			j \in I^{(d)}
		}
	}
	\Delta\left(	q_3^{-\frac{1}{2}}\frac{z_j}{z_i}				\right)^{-1}, 
\end{align}
where $\displaystyle \Delta(z) := 
\frac{
	(1 - q_1q_3^{\frac{1}{2}}z)
	(1 - q^{-1}_1q_3^{-\frac{1}{2}}z)
}{
	(1 - q_3^{\frac{1}{2}}z)
	(1 - q_3^{-\frac{1}{2}}z)
}$ and 
\begin{align*}
	I^{(1)} &= \{1,\dots,\lambda_1\},
	\\
	I^{(2)} &= \{\lambda_1 + 1,\dots,\lambda_1 + \lambda_2\},
	\\
	&\vdots
	\\
	I^{(\ell(\lambda))} &= \{\sum_{j = 1}^{\ell(\lambda) - 1}\lambda_j+ 1, \cdots, \sum_{j = 1}^{\ell(\lambda)}\lambda_j\}. 
\end{align*}
\end{thm}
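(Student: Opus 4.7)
The plan is to unfold the correlator by inserting the definition of $\widetilde{T}^{\vec{c},\vec{u}}_1(z)$ from equation \eqref{eqn21-1326-27jul}. Since $m=1$, each current reduces to a single sum $\sum_{i=1}^{n} A_i \, \widetilde{\Lambda}^{\vec{c},\vec{u}}_i(z)$ with explicit scalar coefficients $A_i = -(q_{c_i}^{1/2}q_3^{1/2})(1-q_{c_i}^{-1})/(1-q_3)$, so the $k$-fold product expands as a sum over sequences $\vec{I} = (i_1,\dots,i_k) \in \{1,\dots,n\}^k$ of vertex-operator monomials. I evaluate $\langle 0 | \widetilde{\Lambda}^{\vec{c},\vec{u}}_{i_1}(z_1) \cdots \widetilde{\Lambda}^{\vec{c},\vec{u}}_{i_k}(z_k) | 0 \rangle$ by iterating \textbf{Proposition \ref{prp27-1335-6aug-wed}}, which produces a product of the factors $f^{\vec{c}}_{11}(z_j/z_i)^{-1}$ together with one of $\Delta(q_3^{1/2} z_j/z_i)$, $\gamma_{c_{i_i}}(z_j/z_i)$, or $\Delta(q_3^{-1/2} z_j/z_i)$ according to whether $i_i < i_j$, $i_i = i_j$, or $i_i > i_j$; the vacuum matrix element of the normal-ordered product contributes the monomial $\prod_{\alpha} u_{i_\alpha}$.

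The prefactor $\prod_{i<j} f^{\vec{c}}_{11}(z_j/z_i)$ is tailored to cancel the inverse $f^{\vec{c}}_{11}$ contractions, leaving only products of $\Delta$ and $\gamma$ factors multiplied by the monomial in the $u_i$. Next I apply the specialization $\dualmap$, which arranges the variables $z_\alpha$ into a grid: within row $c$ of shape $\lambda$ the quotient $z_j/z_i$ reduces to $q^{j-i}$, while for boxes in rows $c < d$ it equals $\xi^{d-c}q^{j_2-j_1}$. The role of $\cals{N}_\lambda(z_1,\dots,z_k)$ is precisely to cancel those $\Delta(q_3^{-1/2}z_j/z_i)^{-1}$ factors coming from pairs that would otherwise become singular in the limit $\xi \to t^{-1} = q_3$. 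At this point I invoke \textbf{Lemma \ref{lemm42-1252-25jul}}: interpreting the sequence $\vec{I}$ as the filling $T(i_1,\dots,i_k;\lambda)$ of the Young diagram of $\lambda$, only those $\vec{I}$ that form reverse semi-standard Young tritableaus survive, because the column-strictness of ordinary numbers, the row-strictness of super numbers, and the analogous pattern for hyper numbers are precisely what is enforced by zeros of the specialized $\Delta$ and $\gamma$ numerators when these conditions are violated.

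For each surviving $T \in \operatorname{RSSYTT}(N,M,L;\lambda)$, the remaining task is to identify the resulting coefficient with $R_T(q,t)$ after absorbing the rescaling $u_{N+i} \mapsto q^{-1/2}t^{-1/2}u_{N+i}$ and $u_{N+M+j} \mapsto q^{1/2}t^{-1}u_{N+M+j}$ into the monomial $\prod_\alpha u_{i_\alpha}$ to produce $x_T$. I decompose the contribution according to which color sectors $c_i \in \{3,1,2\}$ are being contracted: pairs involving only ordinary numbers yield, via a Macdonald-type telescoping, the factor $\psi_{T_0}(q,t)$; pairs involving only super numbers yield $\psi_{T^{\prime}_1}(t,q)$; and the mixed ordinary/super contributions together with the transition to the hyper sector reproduce the hook ratio $H(\operatorname{sh}(T_1 \sqcup T_2),q,t^{-1})/H(\operatorname{sh}(T_1 \sqcup T_2)^{\prime},t^{-1},q)$. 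The main obstacle is the hyper sector: the $\Delta$ and $\gamma$ factors associated with $c_i = 2$ must be shown to reassemble into the intricate expression $\cals{A}_2(T;q,t)$ together with the $((q^{-1}t - 1)/(t^{-1} - 1))^{|T_2|}$ and $\prod_{\square}$ factors. Here the reformulation of $\cals{A}_2$ provided by \textbf{Lemma \ref{prp317-2237-16aug}} is essential, since it rewrites $\cals{A}_2$ as a product whose factors are in direct bijection with the pairs of hyper-number boxes arising in the specialized contractions, making the final matching a bookkeeping of row indices $\zeta,\tau$ and column indices $\omega, \Xi$ against the grid coordinates. Once each sector is matched and the scalar prefactors $A_{i_\alpha}$ are combined with the $u_j$-rescalings, the identification with $\quantumcornerpolynomial$ follows term by term.
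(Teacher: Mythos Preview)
Your overall architecture matches the paper exactly: expand each $\widetilde{T}^{\vec{c},\vec{u}}_1$ as a sum over $i \in \{1,\dots,n\}$, contract via \textbf{Proposition \ref{prp27-1335-6aug-wed}}, cancel the $f^{\vec{c}}_{11}$ prefactor, invoke \textbf{Lemma \ref{lemm42-1252-25jul}} to restrict to reverse SSYTTs, and then match the surviving coefficient with $R_T(q,t)$.  Your treatment of the hyper sector is also the paper's: \textbf{Lemma \ref{prp317-2237-16aug}} is used precisely to put $\cals{A}_2(T;q,t)$ into a form whose factors biject with pairs of hyper boxes, and the paper packages this as \textbf{Lemma \ref{lemm36-1211-27jul}} via the four sub-lemmas \textbf{\ref{lemm36-1329}}--\textbf{\ref{lemm39-1329-1aug}}.

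Where your sketch has a real gap is the $T_0 \sqcup T_1$ sector.  You assert that ``pairs involving only super numbers yield $\psi_{T_1'}(t,q)$'' and that mixed ordinary/super pairs produce the hook ratio, but $T_1$ is a \emph{skew} tableau (its inner shape is $\operatorname{sh}(T_2)$), and no direct Macdonald-type telescoping gives $\psi_{T_1'}(t,q)$ from a skew-shape product of $\cals{C}$ factors.  The paper's device is different: it completes $T_0 \sqcup T_1$ to a straight-shape bitableau $(T_0 \sqcup T_1)^{\star\star}$ by inserting auxiliary super numbers $N+M+1,\dots,N+M+\ell(\operatorname{sh}(T_2))$ into the $T_2$-shaped hole, applies the bitableau result of \cite{CSW2025} to both $(T_0 \sqcup T_1)^{\star\star}$ and the added piece $(T_0 \sqcup T_1)^{\operatorname{add}}$ separately (\textbf{Propositions \ref{prp49-1703-25jul}} and \textbf{\ref{prp410-1703-25jul}}), and divides.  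The quotient produces $\psi_{T_1'}(t,q) \times \psi_{T_0}(q,t)$ together with the hook ratio and the $\prod_{\square \in \operatorname{sh}(T_2)}$ factor, but it also leaves behind two \emph{inverse} cross-term blocks (pairs with one box in $(T_0\sqcup T_1)^{\operatorname{add}}$ and one in $T_0\sqcup T_1$); see \textbf{Corollary \ref{coro312-1211-27jul}}.

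Those leftover inverses are then cancelled by the genuine cross terms between $T_2$ and $T_0 \sqcup T_1$ (your ``transition to the hyper sector''), because $\operatorname{sh}((T_0\sqcup T_1)^{\operatorname{add}}) = \operatorname{sh}(T_2)$ and every entry in either diagram is strictly larger than every entry of $T_0 \sqcup T_1$, so the $\cals{C}$ factors agree box by box (\textbf{Propositions \ref{prp313-1201-27jul}} and \textbf{\ref{prp314-1219-27jul}}).  Without this completion-and-cancellation mechanism your decomposition does not close, and that is the step your proposal is missing.
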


\begin{lem}
\label{lemm42-1252-25jul}
Let $\vec{c} = (3^N1^M2^L)$, $\vec{u} = (u_1,\dots,u_{N+M+L})$, and let $\lambda \in \operatorname{Par}(k)$. Then, 

\footnotesize
\begin{align}
	&
	\lim_{\xi \rightarrow t^{-1}}\,\,
	(
	\dualmap
	\comp 
	\bigg|_{
		\substack{
			q_1 = q, \\
			q_2 = q^{-1}t,\\
			q_3 = t^{-1} \\
		}
	}
	)
	\left(
	\cals{N}_{\lambda}(z_1,\dots,z_k )
	\times
	\prod_{1 \leq i < j \leq k}f^{\vec{c}}_{11}\left(\frac{z_j}{z_i} \right)
	\times
	\langle 0 |\widetilde{T}^{\vec{c},\vec{u}}_{1}(z_1 )\cdots \widetilde{T}^{\vec{c},\vec{u}}_{1}(z_k )|0\rangle
	\right)
	\\
	&= 
	\underbrace{				
		\sum_{i_1 = 1}^{N+M+L}
		\cdots
		\sum_{i_k = 1}^{N+M+L}
	}_{
		T(i_1,\dots,i_k) \in 
		\operatorname{RSSYTT}(N,M,L;\lambda)
	}
	\Biggl\{
	u_{i_1}\cdots u_{i_k}
	\times
	\left(
	\frac{
		q^{\frac{1}{2}} - q^{-\frac{1}{2}}
	}{
		t^{-\frac{1}{2}} - t^{\frac{1}{2}}
	}
	\right)^{|T_1|}
	\times
	\left(
	\frac{
		(q^{-1}t)^{\frac{1}{2}} - (qt^{-1})^{\frac{1}{2}}
	}{
		t^{-\frac{1}{2}} - t^{\frac{1}{2}}
	}
	\right)^{|T_2|}
	\notag 
	\\
	&\hspace{3.3cm}\times
	\lim_{\xi \rightarrow t^{-1}}\,\,
	(
	\dualmap
	\comp 
	\bigg|_{
		\substack{
			q_1 = q, \\
			q_2 = q^{-1}t,\\
			q_3 = t^{-1} \\
		}
	}
	)
	\bigg[
	\cals{N}_\lambda(z_1,\dots,z_k)
	\times
	\prod_{1 \leq a < b \leq k}
	\cals{D}^{(i_a,i_b)}\left(
	\frac{z_b}{z_a}
	; q, t
	\right)
	\bigg]
	\Biggr\}
	\notag 
	\\
	&= 
	\underbrace{				
		\sum_{i_1 = 1}^{N+M+L}
		\cdots
		\sum_{i_k = 1}^{N+M+L}
	}_{
		T(i_1,\dots,i_k) \in 
		\operatorname{RSSYTT}(N,M,L;\lambda)
	}
	\Biggl\{
	u_{i_1}\cdots u_{i_k}
	\times
	\left(
	\frac{
		q^{\frac{1}{2}} - q^{-\frac{1}{2}}
	}{
		t^{-\frac{1}{2}} - t^{\frac{1}{2}}
	}
	\right)^{|T_1|}
	\times
	\left(
	\frac{
		(q^{-1}t)^{\frac{1}{2}} - (qt^{-1})^{\frac{1}{2}}
	}{
		t^{-\frac{1}{2}} - t^{\frac{1}{2}}
	}
	\right)^{|T_2|}
	\notag 
	\\
	&\hspace{5.3cm}\times
	(
	\widetilde{\Psi}_{\lambda}^{(q,t^{-1})}
	\comp 
	\bigg|_{
		\substack{
			q_1 = q, \\
			q_2 = q^{-1}t,\\
			q_3 = t^{-1} \\
		}
	}
	)
	\bigg[
	\prod_{1 \leq a < b \leq k}
	\cals{C}^{(i_a,i_b)}\left(
	\frac{z_b}{z_a}
	; q, t
	\right)
	\bigg]
	\Biggr\}
	\notag 
\end{align}
\normalsize
where 
\begin{align}
	\cals{D}^{(i,j)}\left(
	\frac{z_b}{z_a}
	; q, t
	\right)
	:= 
	\begin{cases}
		\displaystyle 
		\frac{
			\left(1 - q_1^{-1}\frac{z_b}{z_a}\right)
			\left(1 - q_2^{-1}\frac{z_b}{z_a}\right)
		}{
			\left(1 - q_3\frac{z_b}{z_a}\right)
			\left(1 - \frac{z_b}{z_a}\right)
		}
		\hspace{0.3cm}
		&\text{ if } i < j
		\\
		\displaystyle 
		\frac{
			\left(1 - q_2^{-1}\frac{z_b}{z_a}\right)
			\left(1 - q_2\frac{z_b}{z_a}\right)
		}{
			\left(1 - q_3^{-1}\frac{z_b}{z_a}\right)
			\left(1 - q_3\frac{z_b}{z_a}\right)
		}
		\hspace{0.3cm}
		&\text{ if } i = j = \text{ hyper number }
		\\
		\displaystyle 
		\frac{
			\left(1 - q_1^{-1}\frac{z_b}{z_a}\right)
			\left(1 - q_1\frac{z_b}{z_a}\right)
		}{
			\left(1 - q_3^{-1}\frac{z_b}{z_a}\right)
			\left(1 - q_3\frac{z_b}{z_a}\right)
		}
		\hspace{0.3cm}
		&\text{ if } i = j = \text{ super number }
		\\
		1
		\hspace{0.3cm}
		&\text{ if } i = j = \text{ ordinary number }
		\\
		\displaystyle
		\frac{
			\left(1 - q_1\frac{z_b}{z_a}\right)
			\left(1 - q_2\frac{z_b}{z_a}\right)
		}{
			\left(1 - q_3^{-1}\frac{z_b}{z_a}\right)
			\left(1 - \frac{z_b}{z_a}\right)
		}
		\hspace{0.3cm}
		&\text{ if } i > j 
	\end{cases}
\end{align}
and 
\begin{align}
	\cals{C}^{(i_a,i_b)}\left(
	\frac{z_b}{z_a}
	; q, t
	\right)
	:= 
	\begin{cases}
		\cals{D}^{(i_a,i_b)}\left(
		\frac{z_b}{z_a}
		; q, t
		\right)
		\hspace{3cm}
		\text{ if }
		\operatorname{row}(a) = \operatorname{row}(b)
		\\
		\cals{D}^{(i_a,i_b)}\left(
		\frac{z_b}{z_a}
		; q, t
		\right)
		\times
		\Delta\left(	q_3^{-\frac{1}{2}}\frac{z_b}{z_a}				\right)^{-1} 
		\hspace{0.1cm}
		\text{ if }
		\operatorname{row}(a) < \operatorname{row}(b)
	\end{cases}
\label{eqn36-1808-31jul}
\end{align}
\end{lem}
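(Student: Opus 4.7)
The plan is to first unwind the correlator $\langle 0 | \widetilde{T}^{\vec{c},\vec{u}}_1(z_1) \cdots \widetilde{T}^{\vec{c},\vec{u}}_1(z_k) | 0 \rangle$ as an explicit sum over index tuples $(i_1, \ldots, i_k) \in \{1, \ldots, N + M + L\}^k$ using Proposition \ref{prp27-1335-6aug-wed}, and then to argue that after applying $\widetilde{\Psi}^{(q, \xi)}_\lambda$ and taking $\xi \to t^{-1}$ only the tuples assembling into reverse SSYTTs survive. First, I specialize (\ref{eqn21-1326-27jul}) to $m = 1$: the constraint $k_1 + \cdots + k_n = 1$ forces exactly one $k_i$ to equal $1$, so
\[
\widetilde{T}^{\vec{c}, \vec{u}}_1(z) = \sum_{i = 1}^{n} \varrho_i \, \widetilde{\Lambda}^{\vec{c}, \vec{u}}_i(z), \qquad \varrho_i := -q_{c_i}^{1/2} q_3^{1/2} \frac{1 - q_{c_i}^{-1}}{1 - q_3}.
\]
A short calculation under $q_1 = q,\, q_2 = q^{-1} t,\, q_3 = t^{-1}$ simplifies $\varrho_i$ to $1$, $(q^{1/2} - q^{-1/2})/(t^{-1/2} - t^{1/2})$, or $((q^{-1}t)^{1/2} - (q t^{-1})^{1/2})/(t^{-1/2} - t^{1/2})$ according as $c_i = 3, 1, 2$, producing exactly the $|T_1|$- and $|T_2|$-power prefactors in the target formula.

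Next, I substitute this expansion and apply Proposition \ref{prp27-1335-6aug-wed} repeatedly to normal-order the product $\widetilde{\Lambda}^{\vec{c},\vec{u}}_{i_1}(z_1) \cdots \widetilde{\Lambda}^{\vec{c},\vec{u}}_{i_k}(z_k)$. Each pair $a < b$ contributes a factor $f^{\vec{c}}_{11}(z_b / z_a)^{-1}$ together with $\Delta(q_3^{1/2} z_b / z_a)$, $\gamma_{c_{i_a}}(z_b / z_a)$, or $\Delta(q_3^{-1/2} z_b / z_a)$ according as $i_a$ is less than, equal to, or greater than $i_b$. The vacuum expectation of the normal-ordered product equals $\prod_a u_{i_a}$ thanks to the $u_j$ prefactor in (\ref{eqn238-0000-7aug}), while the product $\prod_{a<b} f^{\vec{c}}_{11}(z_b / z_a)$ appearing in the statement cancels the $f^{\vec{c}}_{11}^{-1}$'s. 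Using $q_1 q_3 = q_2^{-1}$ and $q_1^{-1} q_3^{-1} = q_2$, one directly identifies $\Delta(q_3^{\pm 1/2} w / z)$ and $\gamma_{c_{i_a}}(w/z)$ with the corresponding cases of $\cals{D}^{(i_a, i_b)}$. At this stage the correlator is expressed as an unrestricted sum over $(i_1, \ldots, i_k) \in \{1, \ldots, N + M + L\}^k$ of the target summand, still dressed by $\cals{N}_\lambda$.

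The main obstacle is the third step: showing that applying $\widetilde{\Psi}^{(q, \xi)}_\lambda$ followed by $\xi \to t^{-1}$ annihilates every tuple whose tableau $T(i_1, \ldots, i_k; \lambda)$ fails to lie in $\operatorname{RSSYTT}(N, M, L; \lambda)$. Under $\widetilde{\Psi}^{(q, \xi)}_\lambda$ the ratio $z_b / z_a$ evaluates to $q^{\mathrm{col}(b) - \mathrm{col}(a)} \xi^{\mathrm{row}(b) - \mathrm{row}(a)}$, so the zeros and poles of the individual factors in $\cals{D}^{(i_a, i_b)}$ and in $\cals{N}_\lambda$ are fully dictated by the geometry of $\lambda$ together with the chosen indices. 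Each RSSYTT violation produces an uncompensated zero: a row weak-decreasing violation $i_a < i_b$ with $\mathrm{col}(b) = \mathrm{col}(a) + 1$ makes the numerator factor $(1 - q_1^{-1} z_b / z_a)$ of $\cals{D}^{(i_a, i_b)}$ vanish; two equal super numbers in the same row similarly kill the super-case of $\cals{D}^{(i, i)}$; a column violation (equal ordinary in a column, or $i_a < i_b$ down a column) leaves the factor $(1 - q_3^{-1} z_b / z_a)$ coming from $\cals{N}_\lambda$ vanishing at $\xi \to t^{-1}$ without a matching pole. Conversely, in a genuine RSSYTT every strict column decrease $i_a > i_b$ produces a pole in $\cals{D}^{(i_a, i_b)}$ that exactly cancels the zero in $\cals{N}_\lambda$, yielding a finite nonzero limit. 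Carrying out this zero-and-pole census row by row and column by column is the detailed combinatorial work to be deferred to Appendix \ref{appA-1254}. Once complete, the first equality follows, and the second is immediate since the definition (\ref{eqn36-1808-31jul}) of $\cals{C}^{(i_a, i_b)}$ absorbs $\cals{N}_\lambda$ by inserting exactly one $\Delta(q_3^{-1/2} z_b / z_a)^{-1}$ per pair $a < b$ with $\mathrm{row}(a) < \mathrm{row}(b)$.
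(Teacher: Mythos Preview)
Your first two steps---expanding $\widetilde T_1^{\vec c,\vec u}$ via \eqref{eqn21-1326-27jul}, normal-ordering with Proposition~\ref{prp27-1335-6aug-wed}, and recognizing the pairwise contraction factors as $\cals D^{(i_a,i_b)}$---are exactly what the paper does at the start of Appendix~\ref{appA-1254}, and your computation of the prefactors $\varrho_i$ is correct.

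The gap is in your third step. Your description of the vanishing mechanism for column violations is not how the argument actually runs, and a ``local'' zero--pole matching will not work. The point is that the $(1-t\xi)$ factors coming from $\cals N_\lambda$ are \emph{tableau-independent}: after internal cancellation between column-adjacent numerator zeros and diagonal-adjacent denominator poles in the $\Delta^{-1}$ factors, $\cals N_\lambda$ carries exactly $(1-t\xi)^{\ell(\lambda)-1}$ (Proposition~\ref{prp-a1-1339-27jul}). What depends on $(i_1,\dots,i_k)$ is the power of $(1-t\xi)^{-1}$ supplied by $\prod_{a<b}\cals D^{(i_a,i_b)}$. For a genuine RSSYTT the paper shows this power is exactly $\ell(\lambda)-1$, but \emph{not} by pairing each column-adjacent zero with a pole at the same location: instead one proves a \emph{triangle-form cancellation} (Proposition~\ref{prpa3-1521-27jul}) stating that for each triple $\begin{smallmatrix} i_{c}&i_{c+1}\\ &i_d\end{smallmatrix}$ in an RSSYTT the combined $\cals D$-contribution carries no net $(1-t\xi)^{\pm 1}$, so only first-column adjacencies survive and give exactly $\ell(\lambda)-1$ poles (Proposition~\ref{prp-a4-1634-27jul-sun}). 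For a tableau with a column violation, one cannot simply say ``that pair has no pole'': the violation may sit far from the first column, and the triangle cancellations around it break down in a nonlocal way. The paper handles this with the notions of \emph{breaking pair}, \emph{breaking triangle} (stoppable vs.\ unstoppable), and \emph{breaking band}, culminating in Theorem~\ref{thma14-1742-31jul}, which shows the pole count drops to at most $\ell(\lambda)-2$. Your ``row by row and column by column census'' does not anticipate this structure; in particular the new hyper-number cases (e.g.\ $i_{a-1}=i_a$ both hyper with $i_a<i_b$) require checking that the breaking-triangle machinery from \cite{CSW2025} still goes through, which is the genuinely new content here.

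A smaller point: the second equality is not quite immediate. One must verify (Proposition~\ref{a16-1103-1aug}) that for an RSSYTT each individual factor $\cals C^{(i_a,i_b)}$ is nonsingular at $\xi=t^{-1}$, so that the limit commutes with the product; the only dangerous denominator is $(1-q^{-1}t\,z_b/z_a)$ at a diagonal adjacency, and the RSSYTT condition rules that out.
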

\begin{proof}
The proof of this lemma is relegated to Appendix \ref{appA-1254}. 
\end{proof}

From \textbf{Lemma \ref{lemm42-1252-25jul}}
and \textbf{Definition \ref{defn314-1234-25jul}}, it is sufficient to prove \textbf{Lemma \ref{lemm45-1458-25jul}} below to establish the statement in \textbf{Theorem \ref{thm43-1458-25jul}}. 

\begin{lem}
\label{lemm45-1458-25jul}
For each $T = T(i_1,\dots,i_k;\lambda) \in \operatorname{RSSYTT}(N,M,L;\lambda)$, we have 

\footnotesize
\begin{align}
	&(
	\widetilde{\Psi}_{\lambda}^{(q,t^{-1})}
	\comp 
	\bigg|_{
		\substack{
			q_1 = q, \\
			q_2 = q^{-1}t,\\
			q_3 = t^{-1} \\
		}
	}
	)
	\bigg[
	\prod_{1 \leq a < b \leq k}
	\cals{C}^{(i_a,i_b)}\left(
	\frac{z_b}{z_a}
	; q, t
	\right)
	\bigg]
	\\
	&=  
	\cals{A}_2(T;q,t) \times 
	\psi_{T^{\prime}_{1}}(t,q) \times \psi_{T_0}(q,t)
	\times
	\frac{
		H(\operatorname{sh}(T_1 \sqcup T_2),q,t^{-1})
	}{
		H(\operatorname{sh}(T_1 \sqcup T_2)^\prime,t^{-1},q)
	}
	\times
	\left(	\frac{t^{-1} - 1}{q-1}		\right)^{|T_1|}
	\notag
	\times 
	\prod_{\square \in \operatorname{sh}(T_2)}
	\frac{
		(t^{-1})^{\ell_{\operatorname{sh}(T_2)}(\square) + 1} - q^{a_{\operatorname{sh}(T_2)}(\square)}
	}{
		q^{a_{\operatorname{sh}(T_2)}(\square) + 1} - (t^{-1})^{\ell_{\operatorname{sh}(T_2)}(\square)}
	}
	\notag 
\end{align}
\normalsize
\end{lem}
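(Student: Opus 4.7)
The plan is to decompose the product $\prod_{1\le a<b\le k}\cals{C}^{(i_a,i_b)}(z_b/z_a;q,t)$ according to (i) whether $a$ and $b$ lie in the same row block $I^{(c)}$ or in different blocks, and (ii) the types of the entries $i_a$ and $i_b$ (ordinary, super, or hyper), and then to identify each block with one of the factors on the right-hand side. Under the specialisation defining $\dualmap$ at $\xi=t^{-1}$, each $z_a$ is replaced by $q^{\operatorname{col}(a)-1}t^{-(\operatorname{row}(a)-1)}y$, so every ratio $z_b/z_a$ becomes an explicit monomial in $q$ and $t$. The reverse SSYTT constraints of \textbf{Definition \ref{def35-1058-25jul}} (weak row- and column-decrease, strict column-decrease of ordinary numbers, strict row-decrease of super numbers) ensure that any factor of $\cals{C}^{(i_a,i_b)}$ that would vanish at the specialisation is already absent via the case split \eqref{eqn36-1808-31jul}.

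First I would treat the ordinary-ordinary block, i.e.\ all pairs with $i_a,i_b\in\{1,\dots,N\}$. After specialisation this is exactly the product whose evaluation, by the FHSSY correlator formula underlying \eqref{eqn11-1058-7aug}, produces the Macdonald branching factor $\psi_{T_0}(q,t)$ together with the restriction of the hook-type ratio $H(\cdot,q,t^{-1})/H(\cdot',t^{-1},q)$ to $\operatorname{sh}(T_0)$. Dually, for super-super pairs the rows and columns swap roles because super numbers strictly decrease in rows but only weakly decrease in columns; combined with the $q_1\leftrightarrow q_2$ symmetry visible in the ``$i=j=$ super'' case of $\cals{C}^{(i_a,i_b)}$, this effects the involution $q\leftrightarrow t$, and the same correlator identity applied to the conjugate diagram $T'_1$ yields $\psi_{T'_1}(t,q)$.

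Next I would analyse the hyper-hyper pairs together with the $\Delta(q_3^{-1/2}z_b/z_a)^{-1}$ factors coming from between-row hyper pairs. Here I would invoke \textbf{Lemma \ref{prp317-2237-16aug}}, which gives a product expansion of $\cals{A}_2(T;q,t)$ in terms of row multiplicities $\theta_{\zeta,d}$, and match the outcome factor-by-factor with the four product pieces on the right of \eqref{eqn35-1146-17aug}. A short Cauchy-type rearrangement of arm/leg contributions on $\operatorname{sh}(T_2)$ absorbs the combinatorial product $\prod_{\square\in\operatorname{sh}(T_2)}\bigl[\cdots\bigr]$, and the overall prefactor $\bigl(\tfrac{t^{-1}-1}{q-1}\bigr)^{|T_1|}$ is collected from the diagonal (strictly row-decreasing) super contributions within $T_1$.

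The final step is the mixed cross-type block --- pairs (ordinary--super), (ordinary--hyper), (super--hyper), together with between-row same-type pairs not already absorbed --- which must telescope into the hook ratio $H(\operatorname{sh}(T_1\sqcup T_2),q,t^{-1})/H(\operatorname{sh}(T_1\sqcup T_2)',t^{-1},q)$. The mechanism I envisage is a box-by-box identification: for each cell $s\in\operatorname{sh}(T_1\sqcup T_2)$, the set of pairs $(a,b)$ whose specialised arguments realise the arm length $a_\mu(s)$ and leg length $\ell_\mu(s)$ is in bijection with the cells in the hook of $s$, so the product of cross-type factors collapses to a single ratio of the form $(q^{a(s)+1}-t^{-\ell(s)})^{\pm 1}$. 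I expect this telescoping step to be the principal obstacle: tracking precisely which cross-type specialisations land on arms versus legs, verifying that apparent zeros from factors $(1-z_b/z_a)$ (when $a,b$ specialise to nearby cells) are cancelled by numerator zeros guaranteed by the reverse SSYTT conditions, and reconciling the half-integer powers of $q$ and $t$ that arise from the $q_3^{\pm 1/2}$ shifts in $\cals{N}_\lambda$ and $\Delta(\cdot)$ on the two sides, all require careful bookkeeping which is where the bulk of the proof will reside.
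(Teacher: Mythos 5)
Your type-by-type decomposition is not the route the paper takes, and the attribution of factors it relies on is incorrect in two places, which together leave a real gap. The paper's proof splits the product $\prod_{a<b}\cals{C}^{(i_a,i_b)}$ into four pieces according to whether boxes lie in $T_2$ (hyper) or in $T_0\sqcup T_1$ (ordinary and super together), not according to all six pairwise type combinations. The key idea you are missing is the augmentation trick of \textbf{Propositions \ref{prop48-1657-25Jul}--\ref{prp410-1703-25jul}}: since $T_0\sqcup T_1$ is a skew-shape bitableau, the authors fill in the missing $\operatorname{sh}(T_2)$ region with fresh super numbers to obtain a full-shape bitableau $(T_0\sqcup T_1)^{\star\star}$, apply the [CSW2025] Lemma 4.5 evaluation to both $(T_0\sqcup T_1)^{\star\star}$ and the added piece $(T_0\sqcup T_1)^{\operatorname{add}}$, and divide. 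This single step simultaneously produces $\psi_{T'_1}(t,q)$, $\psi_{T_0}(q,t)$, the prefactor $\bigl(\tfrac{t^{-1}-1}{q-1}\bigr)^{|T_1|}$, the hook ratio over $\operatorname{sh}(T_1\sqcup T_2)$, and the arm/leg product over $\operatorname{sh}(T_2)$. In particular, the hook ratio does not telescope from the cross-type block as you propose; rather, the cross-type contributions $(T_0\sqcup T_1)\text{--}T_2$ are shown (\textbf{Propositions \ref{prp313-1201-27jul}, \ref{prp314-1219-27jul}}) to equal the cross-type contributions $(T_0\sqcup T_1)\text{--}(T_0\sqcup T_1)^{\operatorname{add}}$, since the added numbers and the hyper numbers are both strictly larger than all entries of $T_0\sqcup T_1$ and occupy the same shape, so those terms cancel out of \textbf{Corollary \ref{coro312-1211-27jul}} entirely. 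Your claim that the ordinary-ordinary block produces part of the hook ratio restricted to $\operatorname{sh}(T_0)$ is also wrong: in [CSW2025]'s formula (quoted in \textbf{Proposition \ref{prp49-1703-25jul}}), the hook ratio is over the super shape and the ordinary block contributes only $\psi_{T_0}(q,t)$. The "box-by-box telescoping of cross-type factors into a single hook ratio" that you correctly flag as the principal obstacle has no realisation here; the paper avoids it completely via the augmentation and cancellation mechanism.
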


First note that 
\footnotesize
\begin{align}
	&(
	\widetilde{\Psi}_{\lambda}^{(q,t^{-1})}
	\comp 
	\bigg|_{
		\substack{
			q_1 = q, \\
			q_2 = q^{-1}t,\\
			q_3 = t^{-1} \\
		}
	}
	)
	\bigg[
	\prod_{1 \leq a < b \leq k}
	\cals{C}^{(i_a,i_b)}\left(
	\frac{z_b}{z_a}
	; q, t
	\right)
	\bigg]
	\label{eqn36-1108-15jul}
	\\
	&=
	(
	\widetilde{\Psi}_{\lambda}^{(q,t^{-1})}
	\comp 
	\bigg|_{
		\substack{
			q_1 = q, \\
			q_2 = q^{-1}t,\\
			q_3 = t^{-1} \\
		}
	}
	)
	\bigg[
	\underbrace{	\prod_{1 \leq a < b \leq k}				}_{\text{ $a$ and $b$ are boxes in $T_2$}}
	\cals{C}^{(i_a,i_b)}\left(
	\frac{z_b}{z_a}
	; q, t
	\right)
	\bigg]
	\label{eqn48-1512-25jul}
	\\
	&\hspace{0.3cm}\times 
	(
	\widetilde{\Psi}_{\lambda}^{(q,t^{-1})}
	\comp 
	\bigg|_{
		\substack{
			q_1 = q, \\
			q_2 = q^{-1}t,\\
			q_3 = t^{-1} \\
		}
	}
	)
	\bigg[
	\underbrace{	\prod_{1 \leq a < b \leq k}				}_{\text{ $a$ and $b$ are boxes in $T_0 \sqcup T_1$}}
	\cals{C}^{(i_a,i_b)}\left(
	\frac{z_b}{z_a}
	; q, t
	\right)
	\bigg]
	\label{eqn49-1512-25jul}
	\\
	&\hspace{0.3cm}\times 
	(
	\widetilde{\Psi}_{\lambda}^{(q,t^{-1})}
	\comp 
	\bigg|_{
		\substack{
			q_1 = q, \\
			q_2 = q^{-1}t,\\
			q_3 = t^{-1} \\
		}
	}
	)
	\bigg[
	\underbrace{	\prod_{1 \leq a < b \leq k}				}_{
		\substack{
			(1) \,\, \text{ $a$ is a box in $T_0 \sqcup T_1$}
			\\
			(2) \,\, \text{ $b$ is a box in $T_2$}
		}
	}
	\cals{C}^{(i_a,i_b)}\left(
	\frac{z_b}{z_a}
	; q, t
	\right)
	\bigg]
	\label{eqn310-1137-27jul}
	\\
	&\hspace{0.3cm}\times 
	(
	\widetilde{\Psi}_{\lambda}^{(q,t^{-1})}
	\comp 
	\bigg|_{
		\substack{
			q_1 = q, \\
			q_2 = q^{-1}t,\\
			q_3 = t^{-1} \\
		}
	}
	)
	\bigg[
	\underbrace{	\prod_{1 \leq a < b \leq k}				}_{
		\substack{
			(1) \,\, \text{ $a$ is a box in $T_2$}
			\\
			(2) \,\, \text{ $b$ is a box in $T_0 \sqcup T_1$}
		}
	}
	\cals{C}^{(i_a,i_b)}\left(
	\frac{z_b}{z_a}
	; q, t
	\right)
	\bigg]
	\label{eqn311-1137-27jul}
\end{align}
\normalsize

\subsection{Analysis on the Quantity in 
\eqref{eqn48-1512-25jul}}

Since $T = T(i_1,\dots,i_k;\lambda) \in \operatorname{RSSYTT}(N,M,L;\lambda)$, the hyper numbers are $N+M+1,\dots,N+M+L$. Thus, we obtain that 

\footnotesize
\begin{align}
	&(
	\widetilde{\Psi}_{\lambda}^{(q,t^{-1})}
	\comp 
	\bigg|_{
		\substack{
			q_1 = q, \\
			q_2 = q^{-1}t,\\
			q_3 = t^{-1} \\
		}
	}
	)
	\bigg[
	\underbrace{	\prod_{1 \leq a < b \leq k}				}_{\text{ $a$ and $b$ are boxes in $T_2$}}
	\cals{C}^{(i_a,i_b)}\left(
	\frac{z_b}{z_a}
	; q, t
	\right)
	\bigg]
	\label{eqn312-1501-1aug}
	\\
	&= 
	\prod_{\zeta = 1}^{\ell(T_2)}
	\prod_{d = N+M+1}^{N+M+L}
	(
	\widetilde{\Psi}_{\lambda}^{(q,t^{-1})}
	\comp 
	\bigg|_{
		\substack{
			q_1 = q, \\
			q_2 = q^{-1}t,\\
			q_3 = t^{-1} \\
		}
	}
	)
	\bigg[
	\underbrace{	\prod_{1 \leq a < b \leq k}				}_{
		\substack{
			(1) \,\, \text{ $a$ and $b$ are boxes in $T_2$}
			\\
			(2) \,\, \operatorname{row}(a) = \zeta
			\\
			(3) \,\, i_a = d 
		}	
	}
	\cals{C}^{(i_a,i_b)}\left(
	\frac{z_b}{z_a}
	; q, t
	\right)
	\bigg]. 
	\notag 
\end{align}
\normalsize
It is clear that 

\footnotesize
\begin{align}
&(
\widetilde{\Psi}_{\lambda}^{(q,t^{-1})}
\comp 
\bigg|_{
	\substack{
		q_1 = q, \\
		q_2 = q^{-1}t,\\
		q_3 = t^{-1} \\
	}
}
)
\bigg[
\underbrace{	\prod_{1 \leq a < b \leq k}				}_{
	\substack{
		(1) \,\, \text{ $a$ and $b$ are boxes in $T_2$}
		\\
		(2) \,\, \operatorname{row}(a) = \zeta
		\\
		(3) \,\, i_a = d 
	}	
}
\cals{C}^{(i_a,i_b)}\left(
\frac{z_b}{z_a}
; q, t
\right)
\bigg]
\label{eqn313-1501-1aug}
\\
&= 
(
\widetilde{\Psi}_{\lambda}^{(q,t^{-1})}
\comp 
\bigg|_{
	\substack{
		q_1 = q, \\
		q_2 = q^{-1}t,\\
		q_3 = t^{-1} \\
	}
}
)
\bigg[
\underbrace{	\prod_{1 \leq a < b \leq k}				}_{
	\substack{
		(1) \,\, \operatorname{row}(a) = \operatorname{row}(b) = \zeta, 
		\\
		(2) \,\, i_a = i_b = d
	}	
}
\cals{C}^{(i_a,i_b)}\left(
\frac{z_b}{z_a}
; q, t
\right)
\bigg]
\notag 
\\
&\hspace{0.5cm}\times 
\prod_{\alpha = N+M+1}^{d - 1} 
(
\widetilde{\Psi}_{\lambda}^{(q,t^{-1})}
\comp 
\bigg|_{
	\substack{
		q_1 = q, \\
		q_2 = q^{-1}t,\\
		q_3 = t^{-1} \\
	}
}
)
\bigg[
\underbrace{	\prod_{1 \leq a < b \leq k}				}_{
	\substack{
		(1) \,\, \operatorname{row}(a) = \operatorname{row}(b) = \zeta, 
		\\
		(2) \,\, i_a =  d
		\\
		(3) \,\, i_b = \alpha
	}	
}
\cals{C}^{(i_a,i_b)}\left(
\frac{z_b}{z_a}
; q, t
\right)
\bigg]
\notag 
\\
&\hspace{0.5cm}\times 
\prod_{\tau = \zeta + 1}^{\ell(T_2)} 
(
\widetilde{\Psi}_{\lambda}^{(q,t^{-1})}
\comp 
\bigg|_{
	\substack{
		q_1 = q, \\
		q_2 = q^{-1}t,\\
		q_3 = t^{-1} \\
	}
}
)
\bigg[
\underbrace{	\prod_{1 \leq a < b \leq k}				}_{
	\substack{
		(1) \,\, \operatorname{row}(a) = \zeta, 
		\\
		(2) \,\,  \operatorname{row}(b) = \tau, 
		\\
		(3) \,\, i_a = i_b = d
	}	
}
\cals{C}^{(i_a,i_b)}\left(
\frac{z_b}{z_a}
; q, t
\right)
\bigg]
\notag 
\\
&\hspace{0.5cm}\times 
\prod_{\tau = \zeta + 1}^{\ell(T_2)}
\prod_{p = d+1}^{N+M+L}
(
\widetilde{\Psi}_{\lambda}^{(q,t^{-1})}
\comp 
\bigg|_{
	\substack{
		q_1 = q, \\
		q_2 = q^{-1}t,\\
		q_3 = t^{-1} \\
	}
}
)
\bigg[
\underbrace{	\prod_{1 \leq a < b \leq k}				}_{
	\substack{
		(1) \,\, \operatorname{row}(a) = \zeta, 
		\\
		(2) \,\,  \operatorname{row}(b) = \tau, 
		\\
		(3) \,\, i_a = d
		\\
		(4) \,\, i_b = p
	}	
}
\cals{C}^{(i_a,i_b)}\left(
\frac{z_b}{z_a}
; q, t
\right)
\bigg]. 
\notag 
\end{align}
\normalsize

\begin{lem}
\label{lemm36-1329}
Suppose that $T = T(i_1,\dots,i_k;\lambda) \in \operatorname{RSSYTT}(N,M,L;\lambda)$. Let $\zeta \in \{1,\dots,\ell_{T_2}\}$ and let $d \in \{N+M+1,\dots,N+M+L\}$. Then, we have 

\footnotesize
\begin{align}
	&(
	\widetilde{\Psi}_{\lambda}^{(q,t^{-1})}
	\comp 
	\bigg|_{
		\substack{
			q_1 = q, \\
			q_2 = q^{-1}t,\\
			q_3 = t^{-1} \\
		}
	}
	)
	\bigg[
	\underbrace{	\prod_{1 \leq a < b \leq k}				}_{
		\substack{
			(1) \,\, \operatorname{row}(a) = \operatorname{row}(b) = \zeta, 
			\\
			(2) \,\, i_a = i_b = d
		}	
	}
	\cals{C}^{(i_a,i_b)}\left(
	\frac{z_b}{z_a}
	; q, t
	\right)
	\bigg]
	= 
	\prod_{j = 1}^{\theta_{\zeta,d} - 1}
	\frac{
		(1 - qt^{-1}q^{\theta_{\zeta,d} - j})(1 - t)
	}{
		(1 - t^{-1}q)(1 - tq^{\theta_{\zeta,d} - j})
	}
\end{align}
\normalsize
\end{lem}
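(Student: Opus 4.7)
The plan is to reduce the left-hand side to an explicit rational function in $q$ and $t$, and then verify the identity by counting pairs and telescoping. First, since $\operatorname{row}(a)=\operatorname{row}(b)=\zeta$, definition \eqref{eqn36-1808-31jul} gives $\cals{C}^{(i_a,i_b)}=\cals{D}^{(i_a,i_b)}$. Since $i_a=i_b=d$ is a hyper number, the ``$i=j=$ hyper number'' case in the definition of $\cals{D}$ applies, and after the substitution $q_1=q$, $q_2=q^{-1}t$, $q_3=t^{-1}$, the factor becomes
\[
\cals{D}^{(d,d)}(u;q,t)=\frac{(1-qt^{-1}u)(1-q^{-1}t\,u)}{(1-tu)(1-t^{-1}u)}.
\]

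Next, by the definition of $\widetilde{\Psi}_\lambda^{(q,t^{-1})}$, the variable associated with the box in row $r$ and column $c$ is sent to $q^{c-1}(t^{-1})^{r-1}y$. Since $a$ and $b$ both lie in row $\zeta$, the ratio specialises to $z_b/z_a = q^{\operatorname{col}(b)-\operatorname{col}(a)}$. Writing $n:=\theta_{\zeta,d}$, the key combinatorial observation is that the reverse SSYTT condition forces the $n$ boxes of row $\zeta$ carrying the label $d$ to occupy a contiguous block of columns; indeed, in a row of a reverse SSYTT the entries are weakly decreasing from left to right, so boxes of equal value are consecutive. Consequently, for each $m=1,\dots,n-1$, the number of pairs $(a,b)$ with $a<b$ in this block and $\operatorname{col}(b)-\operatorname{col}(a)=m$ is exactly $n-m$, and so the left-hand side of the lemma becomes
\[
\prod_{m=1}^{n-1}\left[\frac{(1-q^{m+1}t^{-1})(1-q^{m-1}t)}{(1-q^{m}t)(1-q^{m}t^{-1})}\right]^{n-m}.
\]

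Finally, the remaining task is a telescoping simplification, which I expect to be the main bookkeeping obstacle. Regrouping by the four families $(1-q^k t^{\pm 1})$ and summing the exponents over $m$, cancellations occur except at the endpoints: the total exponent of $(1-q^k t^{-1})$ collapses to $-(n-1)$ at $k=1$, to $+1$ for $k=2,\dots,n$, and to $0$ otherwise; the total exponent of $(1-q^k t)$ collapses to $n-1$ at $k=0$, to $-1$ for $k=1,\dots,n-1$, and to $0$ otherwise. Reassembling gives
\[
\frac{(1-t)^{n-1}\prod_{k=2}^{n}(1-q^{k}t^{-1})}{(1-qt^{-1})^{n-1}\prod_{k=1}^{n-1}(1-q^{k}t)},
\]
which, after reindexing $k\mapsto n-j+1$ in the numerator product and $k\mapsto n-j$ in the denominator product, coincides precisely with the right-hand side of the lemma. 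The first two steps are immediate from the definitions and the contiguity of equal labels within a row; only the exponent bookkeeping in the last step requires any care.
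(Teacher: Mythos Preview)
Your proof is correct. The paper states this lemma without proof, treating it as a direct computation; your argument supplies precisely that computation---identifying $\cals{C}^{(d,d)}=\cals{D}^{(d,d)}$ via \eqref{eqn36-1808-31jul}, specialising $z_b/z_a\to q^{\operatorname{col}(b)-\operatorname{col}(a)}$, using the contiguity of equal entries in a row of a reverse SSYTT to count pairs, and telescoping the resulting product---which is exactly the intended route.
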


\begin{lem}
Suppose that $T = T(i_1,\dots,i_k;\lambda) \in \operatorname{RSSYTT}(N,M,L;\lambda)$. Let $\zeta \in \{1,\dots,\ell_{T_2}\}$ and let $d, \alpha \in \{N+M+1,\dots,N+M+L\}$ where $d > \alpha$. Then, we have 

\footnotesize
\begin{align}
	&(
	\widetilde{\Psi}_{\lambda}^{(q,t^{-1})}
	\comp 
	\bigg|_{
		\substack{
			q_1 = q, \\
			q_2 = q^{-1}t,\\
			q_3 = t^{-1} \\
		}
	}
	)
	\bigg[
	\underbrace{	\prod_{1 \leq a < b \leq k}				}_{
		\substack{
			(1) \,\, \operatorname{row}(a) = \operatorname{row}(b) = \zeta, 
			\\
			(2) \,\, i_a =  d
			\\
			(3) \,\, i_b = \alpha
		}	
	}
	\cals{C}^{(i_a,i_b)}\left(
	\frac{z_b}{z_a}
	; q, t
	\right)
	\bigg]
	\\
	&= 
	\prod_{\omega = 1}^{\theta_{\zeta,\alpha}}
	\frac{
		\left(	1 - q^{	\sum_{\gamma = \alpha + 1}^{N + M + L} \theta_{\zeta,\gamma}	+ \omega - \sum_{\gamma = d+1}^{N + M + L}\theta_{\zeta,\gamma} 		}			\right)
		\left(	1 - tq^{	\sum_{\gamma = \alpha + 1}^{N + M + L} \theta_{\zeta,\gamma}	+ \omega - (\sum_{\gamma = d}^{N + M + L}\theta_{\zeta,\gamma} 	+ 1 )	}		\right)
	}{
		\left(	1 - q^{	\sum_{\gamma = \alpha + 1}^{N + M + L} \theta_{\zeta,\gamma}	+ \omega - \sum_{\gamma = d}^{N + M + L}\theta_{\zeta,\gamma} 		}			\right)
		\left(	1 - 	tq^{	\sum_{\gamma = \alpha + 1}^{N + M + L} \theta_{\zeta,\gamma}	+ \omega - (\sum_{\gamma = d + 1}^{N + M + L}\theta_{\zeta,\gamma} 	+ 1 )	}			\right)
	}
	\notag 
\end{align}
\normalsize
\end{lem}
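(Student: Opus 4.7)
The plan is to reduce the expression to an explicit double product indexed by $\Xi \in \{1,\dots,\theta_{\zeta,d}\}$ and $\omega \in \{1,\dots,\theta_{\zeta,\alpha}\}$, and then show that the product over $\Xi$ telescopes. First I identify the pairs $(a,b)$ contributing to the product: $a$ ranges over the $\theta_{\zeta,d}$ boxes in row $\zeta$ with value $d$, and $b$ over the $\theta_{\zeta,\alpha}$ boxes in row $\zeta$ with value $\alpha$. Since row entries weakly decrease from left to right and $d > \alpha$, every such $a$ lies strictly to the left of every such $b$, so the constraint $a < b$ is automatic. Because $\operatorname{row}(a) = \operatorname{row}(b)$ we have $\cals{C}^{(d,\alpha)} = \cals{D}^{(d,\alpha)}$, and since $d > \alpha$ we use the last case of the definition of $\cals{D}$. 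Under the substitution $q_1 = q$, $q_2 = q^{-1}t$, $q_3 = t^{-1}$, this simplifies to
\begin{align*}
\cals{D}^{(d,\alpha)}\!\left(\tfrac{z_b}{z_a}\right) = \frac{(1 - q\,\tfrac{z_b}{z_a})(1 - q^{-1}t\,\tfrac{z_b}{z_a})}{(1 - t\,\tfrac{z_b}{z_a})(1 - \tfrac{z_b}{z_a})}.
\end{align*}

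Next I determine the column positions. In row $\zeta$, boxes with value $d$ occupy columns $M_d + 1, \dots, M_d + \theta_{\zeta,d}$, where $M_d := \sum_{\gamma=d+1}^{N+M+L}\theta_{\zeta,\gamma}$; analogously for $\alpha$. Applying $\widetilde{\Psi}_{\lambda}^{(q,\xi)}$, the $\Xi$-th box with value $d$ has $z_a = q^{M_d + \Xi - 1}\xi^{\zeta-1}y$ and the $\omega$-th box with value $\alpha$ has $z_b = q^{M_\alpha + \omega - 1}\xi^{\zeta-1}y$, so that $z_b/z_a = q^{S_\omega - M_d - \Xi}$ where $S_\omega := M_\alpha + \omega = \sum_{\gamma=\alpha+1}^{N+M+L}\theta_{\zeta,\gamma} + \omega$. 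Crucially the $\xi$-dependence cancels because the two boxes sit in the same row, so no limit issues arise.

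With $P := S_\omega - M_d$, the total product becomes
\begin{align*}
\prod_{\omega=1}^{\theta_{\zeta,\alpha}} \prod_{\Xi=1}^{\theta_{\zeta,d}} \frac{(1 - q^{P - \Xi + 1})(1 - t\,q^{P - \Xi - 1})}{(1 - t\,q^{P - \Xi})(1 - q^{P - \Xi})}.
\end{align*}
The inner product in $\Xi$ telescopes: the $(1 - q^{\bigcdot})$-factors leave only $(1 - q^{P})/(1 - q^{P - \theta_{\zeta,d}})$, while the $(1 - t\,q^{\bigcdot})$-factors leave $(1 - t\,q^{P - \theta_{\zeta,d} - 1})/(1 - t\,q^{P - 1})$. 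Using $M_d + \theta_{\zeta,d} = \sum_{\gamma=d}^{N+M+L}\theta_{\zeta,\gamma}$, the four exponents match $A_\omega, B_\omega, C_\omega, D_\omega$ in the target formula exactly, giving the desired product over $\omega$.

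The main obstacle is purely bookkeeping: one must be careful about (i) the reverse (weakly decreasing) convention when translating the position of the $\Xi$-th box with value $d$ into a column index, and (ii) aligning the two telescoped expressions $P = S_\omega - M_d$ and $P - \theta_{\zeta,d} = S_\omega - \sum_{\gamma=d}^{N+M+L}\theta_{\zeta,\gamma}$ with the rather intricate-looking exponents in the claimed answer. No deeper combinatorial input is needed beyond the column-position computation and the telescoping; in particular the limit $\xi \to t^{-1}$ plays no role because the ratio $z_b/z_a$ depends only on $q$.
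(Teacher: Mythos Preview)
Your proposal is correct. The paper states this lemma without proof, treating it as one of four routine computational lemmas; your argument---enumerating the contributing pairs via their column positions in row $\zeta$, reducing $\cals{C}^{(d,\alpha)}$ to the $i>j$ case of $\cals{D}$, substituting $z_b/z_a = q^{(M_\alpha+\omega)-(M_d+\Xi)}$, and telescoping the inner product over $\Xi$---is precisely the direct verification the paper leaves to the reader.
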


\begin{lem}
Suppose that $T = T(i_1,\dots,i_k;\lambda) \in \operatorname{RSSYTT}(N,M,L;\lambda)$. Let $\zeta, \tau \in \{1,\dots,\ell_{T_2}\}$ where $\tau > \zeta$, 
and let $d \in \{N+M+1,\dots,N+M+L\}$. Then, we have 

\footnotesize
\begin{align}
	&(
	\widetilde{\Psi}_{\lambda}^{(q,t^{-1})}
	\comp 
	\bigg|_{
		\substack{
			q_1 = q, \\
			q_2 = q^{-1}t,\\
			q_3 = t^{-1} \\
		}
	}
	)
	\bigg[
	\underbrace{	\prod_{1 \leq a < b \leq k}				}_{
		\substack{
			(1) \,\, \operatorname{row}(a) = \zeta, 
			\\
			(2) \,\,  \operatorname{row}(b) = \tau, 
			\\
			(3) \,\, i_a = i_b = d
		}	
	}
	\cals{C}^{(i_a,i_b)}\left(
	\frac{z_b}{z_a}
	; q, t
	\right)
	\bigg]
	\\
	&= 
	\prod_{\omega = 1}^{\theta_{\tau,d}}
	\frac{
		\left(		1 - t^{-1} (t^{-1})^{\tau - \zeta}	q^{	\sum_{\gamma = d+1}^{N + M + L}\theta_{\tau,\gamma}	+ \omega - 	\sum_{\gamma = d+ 1}^{N + M + L} \theta_{\zeta,\gamma}				}			\right)
		\left(
		1 - (t^{-1})^{\tau - \zeta}	q^{	\sum_{\gamma = d+1}^{N + M + L}\theta_{\tau,\gamma}	+ \omega - 	\sum_{\gamma = d}^{N + M + L} \theta_{\zeta,\gamma}				}		
		\right)
	}{
		\left(		1 - t^{-1} (t^{-1})^{\tau - \zeta}	q^{	\sum_{\gamma = d+1}^{N + M + L}\theta_{\tau,\gamma}	+ \omega - 	\sum_{\gamma = d}^{N + M + L} \theta_{\zeta,\gamma}				}			\right)
		\left(
		1 - (t^{-1})^{\tau - \zeta}	q^{	\sum_{\gamma = d+1}^{N + M + L}\theta_{\tau,\gamma}	+ \omega - 	\sum_{\gamma = d+1}^{N + M + L} \theta_{\zeta,\gamma}				}		
		\right)
	}
	\notag 
\end{align}
\normalsize
\end{lem}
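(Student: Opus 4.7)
The plan is to reduce the claim to a one-variable telescoping identity. First I would compute $\cals{C}^{(d,d)}(z;q,t)$ explicitly for a hyper number $d$ with the two boxes lying in different rows. By \eqref{eqn36-1808-31jul} together with the hyper-hyper case of $\cals{D}^{(i,j)}$, one has
\[
\cals{C}^{(d,d)}(z;q,t) = \frac{(1-q_2^{-1}z)(1-q_2 z)}{(1-q_3^{-1}z)(1-q_3 z)} \cdot \frac{(1-z)(1-q_3^{-1}z)}{(1-q_1 z)(1-q_1^{-1}q_3^{-1}z)}.
\]
After specializing $q_1=q$, $q_2=q^{-1}t$, $q_3=t^{-1}$, the identity $q_1^{-1}q_3^{-1}=q_2$ produces two cancellations and leaves the compact form
\[
\cals{C}^{(d,d)}(z;q,t) = \frac{(1-qt^{-1}z)(1-z)}{(1-t^{-1}z)(1-qz)}.
\]

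Next, because the entries in every row of a reverse SSYTT are weakly decreasing from left to right, the $d$-valued boxes in row $\zeta$ occupy columns $S_\zeta^+ + \Xi$ for $\Xi = 1,\dots,\theta_{\zeta,d}$, where $S_\zeta^+ := \sum_{\gamma=d+1}^{N+M+L}\theta_{\zeta,\gamma}$; analogously, the $d$-boxes in row $\tau$ occupy columns $S_\tau+\omega$ for $\omega = 1,\dots,\theta_{\tau,d}$, with $S_\tau := \sum_{\gamma=d+1}^{N+M+L}\theta_{\tau,\gamma}$. Applying $\dualmap$ followed by $\xi \to t^{-1}$ then sends $z_b/z_a$ to
\[
X_{\Xi,\omega} := (t^{-1})^{\tau-\zeta}\, q^{S_\tau+\omega-S_\zeta^+-\Xi}.
\]

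For fixed $\omega$, setting $x_\Xi := X_{\Xi,\omega}$, the relation $q x_\Xi = x_{\Xi-1}$ recasts the general factor as
\[
\frac{(1-qt^{-1}x_\Xi)(1-x_\Xi)}{(1-t^{-1}x_\Xi)(1-qx_\Xi)} \;=\; \frac{(1-t^{-1}x_{\Xi-1})(1-x_\Xi)}{(1-x_{\Xi-1})(1-t^{-1}x_\Xi)},
\]
so the inner product over $\Xi \in \{1,\dots,\theta_{\zeta,d}\}$ telescopes to
\[
\frac{(1-t^{-1}x_0)\,(1-x_{\theta_{\zeta,d}})}{(1-x_0)\,(1-t^{-1}x_{\theta_{\zeta,d}})}.
\]
Substituting $x_0 = (t^{-1})^{\tau-\zeta}q^{S_\tau+\omega-S_\zeta^+}$ and $x_{\theta_{\zeta,d}} = (t^{-1})^{\tau-\zeta}q^{S_\tau+\omega-S_\zeta^-}$, where $S_\zeta^- := S_\zeta^+ + \theta_{\zeta,d} = \sum_{\gamma=d}^{N+M+L}\theta_{\zeta,\gamma}$, and taking the remaining product over $\omega \in \{1,\dots,\theta_{\tau,d}\}$ yields exactly the claimed expression. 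There is no serious obstacle: the only place requiring care is the algebraic simplification in the first step, where the identity $q_1^{-1}q_3^{-1}=q_2$ provides the cancellations that make the telescoping transparent; after that, the rest is mechanical.
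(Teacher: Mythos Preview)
Your proposal is correct. The paper states this lemma without proof (it is one of the four computational lemmas \textbf{Lemmas \ref{lemm36-1329}}--\textbf{\ref{lemm39-1329-1aug}} that feed into \textbf{Lemma \ref{lemm36-1211-27jul}}), and your direct computation --- simplifying $\cals{C}^{(d,d)}$ via \textbf{Proposition \ref{prpa15-1033-1aug}}(6), reading off the column positions of the $d$-boxes from the reverse SSYTT condition, and telescoping the inner product over $\Xi$ --- is exactly the intended argument.
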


\begin{lem}
\label{lemm39-1329-1aug}
Suppose that $T = T(i_1,\dots,i_k;\lambda) \in \operatorname{RSSYTT}(N,M,L;\lambda)$. Let $\zeta, \tau \in \{1,\dots,\ell_{T_2}\}$ where $\tau > \zeta$, 
and let $d, p \in \{N+M+1,\dots,N+M+L\}$ where $d < p$. Then, we have 

\footnotesize
\begin{align}
	&(
	\widetilde{\Psi}_{\lambda}^{(q,t^{-1})}
	\comp 
	\bigg|_{
		\substack{
			q_1 = q, \\
			q_2 = q^{-1}t,\\
			q_3 = t^{-1} \\
		}
	}
	)
	\bigg[
	\underbrace{	\prod_{1 \leq a < b \leq k}				}_{
		\substack{
			(1) \,\, \operatorname{row}(a) = \zeta, 
			\\
			(2) \,\,  \operatorname{row}(b) = \tau, 
			\\
			(3) \,\, i_a = d
			\\
			(4) \,\, i_b = p
		}	
	}
	\cals{C}^{(i_a,i_b)}\left(
	\frac{z_b}{z_a}
	; q, t
	\right)
	\bigg]
	\\
	&= 
	\prod_{\Xi = 1}^{\theta_{\zeta,d}}
	\prod_{\omega = 1}^{\theta_{\tau,p}}
	\Biggl\{
	\notag 
	\\
	&\frac{
		\left(	1 - q^{-1}					
		(t^{-1})^{\tau - \zeta} q^{	\sum_{\gamma = p + 1}^{N + M + L} \theta_{\tau,\gamma}	+ \omega	- (\sum_{\gamma = d + 1}^{N + M + L} \theta_{\zeta,\gamma} + \Xi)		}	
		\right)
		\left(	1 - qt^{-1}			
		(t^{-1})^{\tau - \zeta} q^{	\sum_{\gamma = p + 1}^{N + M + L} \theta_{\tau,\gamma}	+ \omega	- (\sum_{\gamma = d + 1}^{N + M + L} \theta_{\zeta,\gamma} + \Xi)		}
		\right)
	}{
		\left(	1 - q			
		(t^{-1})^{\tau - \zeta} q^{	\sum_{\gamma = p + 1}^{N + M + L} \theta_{\tau,\gamma}	+ \omega	- (\sum_{\gamma = d + 1}^{N + M + L} \theta_{\zeta,\gamma} + \Xi)		}
		\right)
		\left(	1 - q^{-1}t			
		(t^{-1})^{\tau - \zeta} q^{	\sum_{\gamma = p + 1}^{N + M + L} \theta_{\tau,\gamma}	+ \omega	- (\sum_{\gamma = d + 1}^{N + M + L} \theta_{\zeta,\gamma} + \Xi)		}
		\right)
	}
	\notag 
	\\
	&\times
	\frac{
		\left(	1 - t		
		(t^{-1})^{\tau - \zeta} q^{	\sum_{\gamma = p + 1}^{N + M + L} \theta_{\tau,\gamma}	+ \omega	- (\sum_{\gamma = d + 1}^{N + M + L} \theta_{\zeta,\gamma} + \Xi)		}
		\right)
	}{
		\left(	1 - t^{-1}			
		(t^{-1})^{\tau - \zeta} q^{	\sum_{\gamma = p + 1}^{N + M + L} \theta_{\tau,\gamma}	+ \omega	- (\sum_{\gamma = d + 1}^{N + M + L} \theta_{\zeta,\gamma} + \Xi)		}
		\right)
	}
	\Biggr\}
	\notag 
\end{align}
\normalsize
\end{lem}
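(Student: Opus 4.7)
The strategy is a direct substitution calculation, following exactly the same template used for Lemmas \ref{lemm36-1329}--\ref{lemm39-1329-1aug}. The only genuine work is careful bookkeeping of column indices inside rows $\zeta$ and $\tau$ of the tritableau $T$.

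First I observe that for every pair $(a,b)$ appearing in the restricted product, we have $\operatorname{row}(a)=\zeta<\tau=\operatorname{row}(b)$ and $i_a=d<p=i_b$, so the value $i_a<i_b$ falls in the first case of the definition of $\cals{D}^{(i,j)}$, and the row condition forces the second case of \eqref{eqn36-1808-31jul}. Thus each factor equals
\begin{equation*}
\cals{C}^{(d,p)}\!\left(\tfrac{z_b}{z_a};q,t\right)
=
\frac{(1-q_1^{-1}\tfrac{z_b}{z_a})(1-q_2^{-1}\tfrac{z_b}{z_a})}{(1-q_3\tfrac{z_b}{z_a})(1-\tfrac{z_b}{z_a})}
\cdot
\Delta\!\left(q_3^{-\tfrac12}\tfrac{z_b}{z_a}\right)^{-1}.
\end{equation*}
Writing out $\Delta^{-1}$ explicitly, the factors $(1-\tfrac{z_b}{z_a})$ cancel and I am left with the clean expression
\begin{equation*}
\cals{C}^{(d,p)}\!\left(\tfrac{z_b}{z_a};q,t\right)
=
\frac{(1-q_1^{-1}\tfrac{z_b}{z_a})(1-q_2^{-1}\tfrac{z_b}{z_a})(1-q_3^{-1}\tfrac{z_b}{z_a})}{(1-q_1\tfrac{z_b}{z_a})(1-q_3\tfrac{z_b}{z_a})(1-q_1^{-1}q_3^{-1}\tfrac{z_b}{z_a})}.
\end{equation*}

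Next I enumerate the boxes. Because each row of a reverse SSYTT is weakly decreasing left to right, the $\theta_{\zeta,d}$ cells of row $\zeta$ labeled $d$ occupy columns $\sum_{\gamma=d+1}^{N+M+L}\theta_{\zeta,\gamma}+\Xi$ for $\Xi=1,\dots,\theta_{\zeta,d}$, and similarly the cells of row $\tau$ labeled $p$ occupy columns $\sum_{\gamma=p+1}^{N+M+L}\theta_{\tau,\gamma}+\omega$. Under $\widetilde{\Psi}_\lambda^{(q,\xi)}$ these cells are assigned the specializations $z_a=\xi^{\zeta-1}q^{c_a-1}y$ and $z_b=\xi^{\tau-1}q^{c_b-1}y$, so the ratio becomes
\begin{equation*}
\tfrac{z_b}{z_a}=\xi^{\tau-\zeta}\,q^{\,S},\qquad
S=\sum_{\gamma=p+1}^{N+M+L}\!\!\theta_{\tau,\gamma}+\omega-\Big(\!\sum_{\gamma=d+1}^{N+M+L}\!\!\theta_{\zeta,\gamma}+\Xi\Big),
\end{equation*}
and $\xi\to t^{-1}$ gives $\tfrac{z_b}{z_a}\to X:=(t^{-1})^{\tau-\zeta}q^{S}$.

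I then substitute $q_1=q,\,q_2=q^{-1}t,\,q_3=t^{-1}$, which converts the six factors of the $\cals{C}^{(d,p)}$ expression into $(1-q^{-1}X),(1-qt^{-1}X),(1-tX)$ in the numerator and $(1-qX),(1-t^{-1}X),(1-q^{-1}tX)$ in the denominator. Taking the product over all valid $(\Xi,\omega)\in\{1,\dots,\theta_{\zeta,d}\}\times\{1,\dots,\theta_{\tau,p}\}$ yields exactly the expression on the right-hand side of the lemma. There is no genuine obstacle here, since neither $\theta_{\zeta,d}=0$ nor $\theta_{\tau,p}=0$ produces a singular factor (the product becomes empty by the convention \eqref{eqn14-1159-sun17aug}), and by the remark after Lemma \ref{prp317-2237-16aug} none of the six denominator factors vanishes in the required range of $(\Xi,\omega)$. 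The only thing one has to be careful about is matching the column-numbering convention to the reverse ordering of the row, which I have already verified above; everything else is a mechanical rewriting.
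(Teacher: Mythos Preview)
Your proof is correct and follows the natural direct-substitution approach that the paper intends: identify the relevant case of $\cals{C}^{(i_a,i_b)}$ via Proposition~\ref{prpa15-1033-1aug}(5), locate the boxes by their column indices in the weakly decreasing row, apply the specialization $z_b/z_a\mapsto (t^{-1})^{\tau-\zeta}q^S$, and read off the six factors. The paper itself states Lemmas~\ref{lemm36-1329}--\ref{lemm39-1329-1aug} without proof, treating them as routine computations feeding into Lemma~\ref{lemm36-1211-27jul}, so your write-up is in fact more detailed than what appears there.
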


\begin{lem}
\label{lemm36-1211-27jul}
For each $T = T(i_1,\dots,i_k;\lambda) \in \operatorname{RSSYTT}(N,M,L;\lambda)$, we have 
\begin{align}
	(
	\widetilde{\Psi}_{\lambda}^{(q,t^{-1})}
	\comp 
	\bigg|_{
		\substack{
			q_1 = q, \\
			q_2 = q^{-1}t,\\
			q_3 = t^{-1} \\
		}
	}
	)
	\bigg[
	\underbrace{	\prod_{1 \leq a < b \leq k}				}_{\text{ $a$ and $b$ are boxes in $T_2$}}
	\cals{C}^{(i_a,i_b)}\left(
	\frac{z_b}{z_a}
	; q, t
	\right)
	\bigg]
	= 
	\cals{A}_2(T;q,t)
\end{align}
\end{lem}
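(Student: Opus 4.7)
The plan is to assemble the lemma by combining the decomposition of the LHS already exhibited in equations (3.12)--(3.13) with the four auxiliary evaluations in Lemmas \ref{lemm36-1329}--\ref{lemm39-1329-1aug}, and then to recognize the resulting product as the factorized form of $\cals{A}_2(T;q,t)$ provided by Lemma \ref{prp317-2237-16aug}. The heavy lifting has already been done in those preceding lemmas, so what remains is essentially a matching exercise.

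First, I would invoke equation (3.12), which reorganizes the ordered pairs $(a,b)$ of boxes of $T_2$ with $a<b$ according to the row $\zeta=\operatorname{row}(a)$ and the hyper-number $d=i_a$ of the earlier box. Then equation (3.13) splits the contribution of each fixed $(\zeta,d)$ into four disjoint sub-products distinguished by whether $b$ lies in the same row as $a$ or in a strictly later row, and by whether $i_b=d$, $i_b<d$, or $i_b>d$. Because $T$ is a reverse SSYTT, inside a single row the super/hyper numbers are strictly decreasing, so whenever $\operatorname{row}(a)=\operatorname{row}(b)$ and $i_a<i_b$ is impossible; this is precisely what allows case (ii) in (3.13) to require $\alpha<d$. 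Together these four cases exhaust every pair in $T_2\times T_2$ with $a<b$ exactly once.

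Next, I would apply Lemmas \ref{lemm36-1329}, 3.12, 3.13 and \ref{lemm39-1329-1aug} in turn to evaluate each of the four sub-products. Each of these lemmas is stated precisely so that its output matches, factor for factor, one of the four products appearing on the RHS of equation (3.5) in Lemma \ref{prp317-2237-16aug}: Lemma \ref{lemm36-1329} produces the $\prod_{j=1}^{\theta_{\zeta,d}-1}$ factor, the following two lemmas produce the $\prod_{\alpha=N+M+1}^{d-1}$ and $\prod_{\tau=\zeta+1}^{\ell(T_2)}$ factors respectively, and Lemma \ref{lemm39-1329-1aug} produces the quadruple product over $\tau,p,\Xi,\omega$. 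Taking the product over $\zeta\in\{1,\dots,\ell(T_2)\}$ and $d\in\{N+M+1,\dots,N+M+L\}$ as dictated by (3.12)--(3.13) then reconstructs exactly the RHS of Lemma \ref{prp317-2237-16aug}, which by that lemma equals $\cals{A}_2(T;q,t)$.

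The only point requiring genuine care is bookkeeping: one must check that the ranges of the summation/product indices ($\zeta,\tau,d,\alpha,p,\omega,\Xi,j$) on the four outputs of Lemmas \ref{lemm36-1329}--\ref{lemm39-1329-1aug}, once multiplied together according to (3.12)--(3.13), match those on the RHS of (3.5) term by term, and that no contribution is duplicated or missing. Since Lemma \ref{prp317-2237-16aug} was designed precisely to accommodate this decomposition, this verification is routine, and no additional identities beyond the stated lemmas are needed.
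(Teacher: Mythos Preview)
Your proposal is correct and matches the paper's proof essentially verbatim: the paper's argument is precisely to combine the decomposition in equations \eqref{eqn312-1501-1aug}--\eqref{eqn313-1501-1aug} with Lemmas \ref{lemm36-1329}--\ref{lemm39-1329-1aug} and then identify the result with the factorized expression for $\cals{A}_2(T;q,t)$ in Lemma \ref{prp317-2237-16aug}. One small wording slip: hyper numbers are only \emph{weakly} decreasing along rows (strict decrease is for super numbers), and the four sub-products in \eqref{eqn313-1501-1aug} do not literally exhaust all pairs---the omitted case $\operatorname{row}(b)>\operatorname{row}(a)$ with $i_b<i_a$ contributes trivially since $\cals{C}^{(i_a,i_b)}=1$ there---but neither point affects the argument.
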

\begin{proof}
From equations \eqref{eqn312-1501-1aug}, \eqref{eqn313-1501-1aug} and \textbf{Lemmas \ref{lemm36-1329}} - \textbf{\ref{lemm39-1329-1aug}}, we can see that the formula for 
\begin{align}
(
\widetilde{\Psi}_{\lambda}^{(q,t^{-1})}
\comp 
\bigg|_{
	\substack{
		q_1 = q, \\
		q_2 = q^{-1}t,\\
		q_3 = t^{-1} \\
	}
}
)
\bigg[
\underbrace{	\prod_{1 \leq a < b \leq k}				}_{\text{ $a$ and $b$ are boxes in $T_2$}}
\cals{C}^{(i_a,i_b)}\left(
\frac{z_b}{z_a}
; q, t
\right)
\bigg]
\end{align}
is the same as the formula for $\cals{A}_2(T;q,t)$ as given in \textbf{Lemma \ref{prp317-2237-16aug}}. 
\end{proof}

\subsection{Analysis on the Quantity in  \eqref{eqn49-1512-25jul}}

The goal of this subsection is to find a formula for 
\footnotesize
\begin{align}
(
\widetilde{\Psi}_{\lambda}^{(q,t^{-1})}
\comp 
\bigg|_{
	\substack{
		q_1 = q, \\
		q_2 = q^{-1}t,\\
		q_3 = t^{-1} \\
	}
}
)
\bigg[
\underbrace{	\prod_{1 \leq a < b \leq k}				}_{\text{ $a$ and $b$ are boxes in $T_0 \sqcup T_1$}}
\cals{C}^{(i_a,i_b)}\left(
\frac{z_b}{z_a}
; q, t
\right)
\bigg]. 
\end{align}
\normalsize
We will explain the idea for finding this formula through an example of a reverse SSYTT below. Consider 

\footnotesize
\begin{align}
	T = 
	\begin{ytableau}
		\textcolor{red}{8}  & \textcolor{red}{8} & \textcolor{red}{8} & \textcolor{red}{6} & \textcolor{red}{6} & 3 \\
		\textcolor{red}{8}  & \textcolor{red}{7} & \textcolor{red}{7} & \textcolor{red}{6} & \textcolor{blue}{4} & 2 \\
		\textcolor{red}{7}  & \textcolor{red}{7} & \textcolor{red}{6} & \textcolor{blue}{5} \\ 
		\textcolor{red}{6} & \textcolor{blue}{5} & \textcolor{blue}{4} & 3 \\ 
		\textcolor{red}{6} & \textcolor{blue}{5} 
	\end{ytableau}
\label{eqn410-1632-25jul}
\end{align}
\normalsize
Note that in this paper, we follow the convention of \cite{CSW2025} to denote hyper numbers, super numbers, and ordinary numbers in reverse SSYTT with red, blue, and black colors, respectively. 
For \eqref{eqn410-1632-25jul}, we have 

\footnotesize
\begin{align}
	T_0 \sqcup T_1 = 
	\begin{ytableau}
		\none  & \none & \none & \none & \none & 3 \\
		\none  & \none & \none & \none & \textcolor{blue}{4} & 2 \\
		\none  & \none & \none & \textcolor{blue}{5} \\ 
		\none & \textcolor{blue}{5} & \textcolor{blue}{4} & 3 \\ 
		\none & \textcolor{blue}{5} 
	\end{ytableau}
	\label{eqn310-1513-14jul}
\end{align}
\normalsize
We observe that in the general case of $T \in \operatorname{RSSYTT}(N,M,L;\lambda)$, $T_0 \sqcup T_1$ is a reverse SSYBT with a skew shape. In our previous paper \cite{CSW2025}, we calculated the contribution from reverse SSYBT with a full shape. Therefore, we can not directly use the results from \cite{CSW2025}. To utilize the results from the previous paper, we need to construct a full-shape reverse SSYBT from this skew-shape reverse SSYBT. The question is, how do we do that ? 

The method for constructing a full-shape reverse SSYBT from a skew-shape reverse SSYBT is as follows:
\begin{enumerate}[(1)]
\item Determine the number of rows in $T_2$. We denote the number of rows of $T_2$ by $\ell_{\operatorname{sh}(T_2)}$
\item Add super numbers $\textcolor{blue}{N + M+1}, \dots, \textcolor{blue}{N + M+\ell_{\operatorname{sh}(T_2)}}$ to $T_0 \sqcup T_1$, proceeding from right to left columns, and from smallest to largest numbers. The number added in the same column must be identical. 
\end{enumerate}

If we apply this method to $T_0 \sqcup T_1$ in equation \eqref{eqn310-1513-14jul}, we obtain 

\footnotesize
\begin{align}
	\begin{ytableau}
		\none  & \none & \none & \none & \none & 3 \\
		\none  & \none & \none & \none & \textcolor{blue}{4} & 2 \\
		\none  & \none & \none & \textcolor{blue}{5} \\ 
		\none & \textcolor{blue}{5} & \textcolor{blue}{4} & 3 \\ 
		\none & \textcolor{blue}{5} 
	\end{ytableau}
	\Longrightarrow
	\begin{ytableau}
		\textcolor{blue}{10}  & \textcolor{blue}{9} & \textcolor{blue}{8} & \textcolor{blue}{7} & \textcolor{blue}{6} & 3 \\
		\textcolor{blue}{10}  & \textcolor{blue}{9} & \textcolor{blue}{8} & \textcolor{blue}{7} & \textcolor{blue}{4} & 2 \\
		\textcolor{blue}{10}  & \textcolor{blue}{9} & \textcolor{blue}{8} & \textcolor{blue}{5} \\ 
		\textcolor{blue}{10} & \textcolor{blue}{5} & \textcolor{blue}{4} & 3 \\ 
		\textcolor{blue}{10} & \textcolor{blue}{5} 
	\end{ytableau}
\end{align}
\normalsize
That is, initially we had super numbers $\textcolor{blue}{4}, \textcolor{blue}{5}$, and since $\ell_{\operatorname{sh}(T_2)} = 5$, we need to add the super numbers $\textcolor{blue}{6}, \textcolor{blue}{7}, \textcolor{blue}{8}, \textcolor{blue}{9}, \textcolor{blue}{10}$. 

With this example in mind, we are ready to analyze the general case. First, let us define some notation that we will use. 

\begin{nota}\mbox{}
\begin{enumerate}[(1)]
\item We denote the $T_0 \sqcup T_1$ with added boxes by the notation $(T_0 \sqcup T_1)^{\star\star}$
\item We denote the added part by the symbol $(T_0 \sqcup T_1)^{\operatorname{add}}$
\end{enumerate}
\end{nota}

For example, for $T$ as in \eqref{eqn410-1632-25jul}, we have 
\footnotesize
\begin{align}
	T_0 \sqcup T_1
	&= 
	\begin{ytableau}
		\none  & \none & \none & \none & \none & 3 \\
		\none  & \none & \none & \none & \textcolor{blue}{4} & 2 \\
		\none  & \none & \none & \textcolor{blue}{5} \\ 
		\none & \textcolor{blue}{5} & \textcolor{blue}{4} & 3 \\ 
		\none & \textcolor{blue}{5} 
	\end{ytableau}
	\\
	(T_0 \sqcup T_1)^{\star\star}
	&= 
	\begin{ytableau}
		\textcolor{blue}{10}  & \textcolor{blue}{9} & \textcolor{blue}{8} & \textcolor{blue}{7} & \textcolor{blue}{6} & 3 \\
		\textcolor{blue}{10}  & \textcolor{blue}{9} & \textcolor{blue}{8} & \textcolor{blue}{7} & \textcolor{blue}{4} & 2 \\
		\textcolor{blue}{10}  & \textcolor{blue}{9} & \textcolor{blue}{8} & \textcolor{blue}{5} \\ 
		\textcolor{blue}{10} & \textcolor{blue}{5} & \textcolor{blue}{4} & 3 \\ 
		\textcolor{blue}{10} & \textcolor{blue}{5} 
	\end{ytableau}
	\\
	(T_0 \sqcup T_1)^{\operatorname{add}}
	&= 
	\begin{ytableau}
		\textcolor{blue}{10}  & \textcolor{blue}{9} & \textcolor{blue}{8} & \textcolor{blue}{7} & \textcolor{blue}{6}  \\
		\textcolor{blue}{10}  & \textcolor{blue}{9} & \textcolor{blue}{8} & \textcolor{blue}{7}  \\
		\textcolor{blue}{10}  & \textcolor{blue}{9} & \textcolor{blue}{8} \\ 
		\textcolor{blue}{10} \\ 
		\textcolor{blue}{10} 
	\end{ytableau}
\end{align}
\normalsize

\begin{prop}
\label{prop48-1657-25Jul}
Let $(N,M,L) \in (\bb{Z}^{\geq 0} \times \bb{Z}^{\geq 0} \times \bb{Z}^{\geq 0})\backslash\{(0,0,0)\}$, and $T \in \operatorname{RSSYTT}(N,M,L;\lambda)$. Then, the following statements hold. 
\begin{enumerate}[(1)]
\item $(T_0 \sqcup T_1)^{\star\star} \in \operatorname{RSSYTT}(N,M + \ell_{\operatorname{sh}(T_2)},0;\lambda)$
\item $(T_0 \sqcup T_1)^{\operatorname{add}} \in \operatorname{RSSYTT}(N,M+\ell_{\operatorname{sh}(T_2)},0;\operatorname{sh}(T_2))$ and $(T_0 \sqcup T_1)^{\operatorname{add}}$ contains only super numbers $N+M+1,\dots,N+M+\ell_{\operatorname{sh}(T_2)}$. 
\end{enumerate} 
\end{prop}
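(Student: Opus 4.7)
The plan is a direct verification of the five defining conditions of a reverse semi-standard Young tritableau (Definition 3.5), separately on each of the two regions and then at the boundary between them.

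I would start with part (2), which is strictly simpler. By construction, the shape of $(T_0 \sqcup T_1)^{\operatorname{add}}$ is $\operatorname{sh}(T_2)$, and every entry lies in $\{N+M+1,\dots,N+M+\ell_{\operatorname{sh}(T_2)}\}$. Under the extended type $(N,\, M+\ell_{\operatorname{sh}(T_2)},\, 0)$ these labels are exactly the \emph{new} super numbers, which immediately yields the final clause of part (2). For the five semistandard conditions I would check: rows are strictly decreasing from left to right because the construction rule assigns strictly larger numbers to columns further to the left; columns are weakly decreasing because every cell in a fixed column carries the identical value; column-strict for ordinary numbers is vacuous (no ordinary entries); and super-row-strict coincides with the row-strict-decreasing already established.

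For part (1), I would decompose $(T_0 \sqcup T_1)^{\star\star}$ into the added region $(T_0 \sqcup T_1)^{\operatorname{add}}$, supported on $\operatorname{sh}(T_2)$, and the unchanged region $T_0 \sqcup T_1$, supported on the complementary skew shape $\lambda / \operatorname{sh}(T_2)$. On the added region the conditions hold by part (2). On the unchanged region they are inherited from $T \in \operatorname{RSSYTT}(N,M,L;\lambda)$, because under the extended type the labels in $\{1,\dots,N\}$ remain ordinary and those in $\{N+1,\dots,N+M\}$ remain super, so every relation satisfied by $T$ restricts intact. Only the shared boundary then requires checking.

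The boundary analysis is the crux. Write $\mu := \operatorname{sh}(T_2)$. For each row $\zeta$ with $1 \leq \mu_\zeta < \lambda_\zeta$, the cell at column $\mu_\zeta$ lies in the added region and carries a value $\geq N+M+1$, while the cell at column $\mu_\zeta + 1$ lies in $T_0 \sqcup T_1$ and carries a value $\leq N+M$; the strict inequality yields both row-weak-decreasing and super-row-strict across the horizontal boundary. Symmetrically, for each column $j$ with $1 \leq \mu'_j < \lambda'_j$, the added cell at row $\mu'_j$ carries a value $\geq N+M+1$ and the unchanged cell at row $\mu'_j + 1$ carries a value $\leq N+M$, giving column-weak-decreasing across the vertical boundary. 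The column-strict-for-ordinary condition is automatic, since no ordinary label appears in the added region. The only mild obstacle is the bookkeeping for degenerate cases: rows with $\mu_\zeta = 0$ or $\mu_\zeta = \lambda_\zeta$, and columns with $\mu'_j = 0$ or $\mu'_j = \lambda'_j$; in each such case the boundary in question is empty and the associated condition is vacuously satisfied.
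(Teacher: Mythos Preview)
Your verification is correct and is exactly the argument the paper has in mind; the paper's own proof is the single word ``Obvious.'' You have simply unpacked that word into the routine checks of Definition~3.5 on the added region, the unchanged region, and the interface, which is precisely what is required.
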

\begin{proof}
Obvious. 
\end{proof}

It is easy to show that 
\footnotesize
\begin{align}
	&(
	\widetilde{\Psi}_{\lambda}^{(q,t^{-1})}
	\comp 
	\bigg|_{
		\substack{
			q_1 = q, \\
			q_2 = q^{-1}t,\\
			q_3 = t^{-1} \\
		}
	}
	)
	\bigg[
	\underbrace{	\prod_{1 \leq a < b \leq k}				}_{\text{ $a$ and $b$ are boxes in $(T_0 \sqcup T_1)$}}
	\cals{C}^{(i_a,i_b)}\left(
	\frac{z_b}{z_a}
	; q, t
	\right)
	\bigg]
	\label{eqn316-1129-27jul}
	\\
	&= 
	(
	\widetilde{\Psi}_{\lambda}^{(q,t^{-1})}
	\comp 
	\bigg|_{
		\substack{
			q_1 = q, \\
			q_2 = q^{-1}t,\\
			q_3 = t^{-1} \\
		}
	}
	)
	\bigg[
	\underbrace{	\prod_{1 \leq a < b \leq k}				}_{\text{ $a$ and $b$ are boxes in $(T_0 \sqcup T_1)^{\star\star}$}}
	\cals{C}^{(i_a,i_b)}\left(
	\frac{z_b}{z_a}
	; q, t
	\right)
	\bigg]
\notag 
	\\
	&\hspace{0.3cm}\times
	\left(
	(
	\widetilde{\Psi}_{\lambda}^{(q,t^{-1})}
	\comp 
	\bigg|_{
		\substack{
			q_1 = q, \\
			q_2 = q^{-1}t,\\
			q_3 = t^{-1} \\
		}
	}
	)
	\bigg[
	\underbrace{	\prod_{1 \leq a < b \leq k}				}_{\text{ $a$ and $b$ are boxes in $(T_0 \sqcup T_1)^{\operatorname{add}}$}}
	\cals{C}^{(i_a,i_b)}\left(
	\frac{z_b}{z_a}
	; q, t
	\right)
	\bigg]
	\right)^{-1}
\notag 
	\\
	&\hspace{0.3cm}\times
	\left(
	(
	\widetilde{\Psi}_{\lambda}^{(q,t^{-1})}
	\comp 
	\bigg|_{
		\substack{
			q_1 = q, \\
			q_2 = q^{-1}t,\\
			q_3 = t^{-1} \\
		}
	}
	)
	\bigg[
	\underbrace{	\prod_{1 \leq a < b \leq k}				}_{
		\substack{
			(1) \,\, \text{ $a$ is a box in $(T_0 \sqcup T_1)^{\operatorname{add}}$}
			\\
			(2) \,\, \text{ $b$ is a box in $T_0 \sqcup T_1$}
		}
	}
	\cals{C}^{(i_a,i_b)}\left(
	\frac{z_b}{z_a}
	; q, t
	\right)
	\bigg]
	\right)^{-1}
	\notag 
	\\
	&\hspace{0.3cm}\times
	\left(
	(
	\widetilde{\Psi}_{\lambda}^{(q,t^{-1})}
	\comp 
	\bigg|_{
		\substack{
			q_1 = q, \\
			q_2 = q^{-1}t,\\
			q_3 = t^{-1} \\
		}
	}
	)
	\bigg[
	\underbrace{	\prod_{1 \leq a < b \leq k}				}_{
		\substack{
			(1) \,\, \text{ $a$ is a box in $T_0 \sqcup T_1$}
			\\
			(2) \,\, \text{ $b$ is a box in $(T_0 \sqcup T_1)^{\operatorname{add}}$}
		}
	}
	\cals{C}^{(i_a,i_b)}\left(
	\frac{z_b}{z_a}
	; q, t
	\right)
	\bigg]
	\right)^{-1}.
	\notag
\end{align}
\normalsize


\begin{prop}\mbox{}
\label{prp49-1703-25jul}

\footnotesize
\begin{align}
	&(
	\widetilde{\Psi}_{\lambda}^{(q,t^{-1})}
	\comp 
	\bigg|_{
		\substack{
			q_1 = q, \\
			q_2 = q^{-1}t,\\
			q_3 = t^{-1} \\
		}
	}
	)
	\bigg[
	\underbrace{	\prod_{1 \leq a < b \leq k}				}_{\text{ $a$ and $b$ are boxes in $(T_0 \sqcup T_1)^{\star\star}$}}
	\cals{C}^{(i_a,i_b)}\left(
	\frac{z_b}{z_a}
	; q, t
	\right)
	\bigg]
	\\
	&= 
	\prod_{\xi = N + 1}^{N + M + \ell_{\operatorname{sh}(T_2)} - 1}
	\psi_{\left(	(T_0 \sqcup T_1)^{\star\star}_{\operatorname{super part}}			\right)^{\prime \, (\xi)}/\left(	(T_0 \sqcup T_1)^{\star\star}_{\operatorname{super part}}			\right)^{\prime \, (\xi + 1)}}(t,q)
	\times 
	\psi_{
		T_0
	}(q,t)
	\times 
	\frac{
		H(\operatorname{sh}(T_1 \sqcup T_2),q,t^{-1})
	}{
		H(\operatorname{sh}(T_1 \sqcup T_2)^\prime,t^{-1},q)
	}
	\times 
	\left(
	\frac{t^{-1} - 1}{q - 1}
	\right)^{
		|T_1| + |T_2|
	}
	\notag 
\end{align}
\normalsize
\end{prop}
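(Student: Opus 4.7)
The key insight is that by \textbf{Proposition \ref{prop48-1657-25Jul}}(1), the padded diagram $(T_0 \sqcup T_1)^{\star\star}$ is a full-shape reverse semi-standard Young bitableau (reverse SSYBT) of type $(N, M + \ell_{\operatorname{sh}(T_2)}, 0)$ with shape $\lambda$. The restricted product $\prod_{a<b} \mathcal{C}^{(i_a,i_b)}(z_b/z_a;q,t)$, ranging over boxes of this padded bitableau, is therefore precisely the combinatorial object that arises in the super Macdonald setting ($L=0$) already treated in \cite{CSW2025}. My plan is to reduce the present statement to that earlier result by identifying both sides as the coefficient of a monomial in $\supermac_\lambda$.

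Concretely, I would apply the $L=0$ analogue of \textbf{Lemma \ref{lemm42-1252-25jul}} together with the super Macdonald correspondence (equation \eqref{eqn12-1057-7aug}) to the quantum corner VOA $q\widetilde{Y}_{M+\ell_{\operatorname{sh}(T_2)},\,0,\,N}$ with partition $\lambda$. On one side, the lemma expresses the dual-mapped correlator as a sum over reverse SSYBTs of type $(N,M+\ell_{\operatorname{sh}(T_2)},0)$, whose summand contains $(\widetilde{\Psi}^{(q,t^{-1})}_\lambda \comp \cdots)[\prod \mathcal{C}^{(i_a,i_b)}]$ together with elementary prefactors. On the other side, the correspondence identifies the same correlator with $\supermac_\lambda$ at appropriately rescaled variables, and the combinatorial formula for $\supermac_\lambda$ recalled in the proof following \textbf{Definition \ref{defn314-1234-25jul}} expands it explicitly as a sum over bitableaux. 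Equating the coefficients of $x_{(T_0 \sqcup T_1)^{\star\star}}$ on the two sides will yield the claimed identity.

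Under this matching: the ordinary part of $(T_0 \sqcup T_1)^{\star\star}$ is exactly $T_0$, contributing $\psi_{T_0}(q,t)$; the super part is $T_1 \sqcup (T_0 \sqcup T_1)^{\operatorname{add}}$, whose underlying shape is $\operatorname{sh}(T_1)\sqcup \operatorname{sh}(T_2) = \operatorname{sh}(T_1 \sqcup T_2)$ by construction of the padding, contributing the $H$-ratio $\frac{H(\operatorname{sh}(T_1 \sqcup T_2),q,t^{-1})}{H(\operatorname{sh}(T_1 \sqcup T_2)^\prime,t^{-1},q)}$. Unfolding \textbf{Definition \ref{dfn214-2113-26jul}}(2) applied to the super part then recovers the telescoping product $\prod_{\xi = N+1}^{N+M+\ell_{\operatorname{sh}(T_2)}-1}\psi_{(\cdots)^{\prime(\xi)}/(\cdots)^{\prime(\xi+1)}}(t,q)$ in the statement. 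Finally, the exponent $|T_1|+|T_2|$ in $\left(\frac{t^{-1}-1}{q-1}\right)^{|T_1|+|T_2|}$ equals the total number of super boxes in $(T_0 \sqcup T_1)^{\star\star}$ (since the added region has shape $\operatorname{sh}(T_2)$), and it matches the factor produced by the rescaling $y \mapsto q^{-1/2}t^{-1/2}u$ of super variables on the right-hand side of equation \eqref{eqn12-1057-7aug}.

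The main obstacle lies in the bookkeeping of normalization factors. The super Macdonald correspondence bundles together the $\cals{N}_\lambda$ prefactor, the $\prod f^{\vec{c}}_{11}$-contribution, the elementary prefactors of \textbf{Lemma \ref{lemm42-1252-25jul}} such as $\left(\frac{q^{1/2}-q^{-1/2}}{t^{-1/2}-t^{1/2}}\right)^{|T_1|}$, and the $u$-rescalings; isolating the pure $\widetilde{\Psi}^{(q,t^{-1})}_\lambda[\prod \mathcal{C}]$-piece in the clean form stated in \textbf{Proposition \ref{prp49-1703-25jul}} requires carefully stripping each of these and verifying that the resulting power of $(t^{-1}-1)/(q-1)$ emerges with the correct exponent $|T_1|+|T_2|$ rather than $|T_1|$. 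Once this accounting is made rigorous—essentially by repeating the telescoping argument of \cite{CSW2025} with $M$ replaced by $M+\ell_{\operatorname{sh}(T_2)}$—no further ideas beyond those already established are needed.
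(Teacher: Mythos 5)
Your proposal diverges from the paper's proof, and it has a real gap. The paper's argument is a one-step citation: having observed via \textbf{Proposition \ref{prop48-1657-25Jul}}(1) that $(T_0 \sqcup T_1)^{\star\star}$ is a reverse SSYBT of type $(N,M+\ell_{\operatorname{sh}(T_2)},0;\lambda)$, it directly invokes \textbf{Lemma~4.5} of \cite{CSW2025}, which gives the value of $\widetilde{\Psi}^{(q,t^{-1})}_\lambda\bigl[\prod \mathcal{C}^{(i_a,i_b)}\bigr]$ for any reverse SSYBT, and then merely renames pieces (the ordinary part is $T_0$, the super part's $\psi$-factor telescopes by \textbf{Definition \ref{dfn214-2113-26jul}}(2)). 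You instead propose to recover that evaluation by comparing coefficients between the $L=0$ analogue of \textbf{Lemma \ref{lemm42-1252-25jul}} and the combinatorial expansion of $\supermac_\lambda$.

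The coefficient-matching step is where this breaks. The monomial $x_T = u_{i_1}\cdots u_{i_k}$ depends only on the content (the multiset of entries) of a tableau, not on the tableau itself; in general, many distinct reverse SSYBTs share the same content and hence the same monomial. Equating coefficients of $x_{(T_0\sqcup T_1)^{\star\star}}$ therefore only yields an equality of \emph{sums} over all bitableaux with the same content, not the term-by-term identity you need for the single tableau $(T_0\sqcup T_1)^{\star\star}$. There is also a logical circularity: the correspondence \eqref{eqn12-1057-7aug} that you propose to use was established in \cite{CSW2025} precisely through the analogue of the lemma you are trying to derive, so invoking \eqref{eqn12-1057-7aug} to re-extract that lemma runs backwards through the same dependency chain. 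Your identifications of the pieces --- the ordinary part being $T_0$, the super part's shape being $\operatorname{sh}(T_1\sqcup T_2)$, the exponent $|T_1|+|T_2|$ counting the super boxes of $(T_0\sqcup T_1)^{\star\star}$, and the telescoping of $\psi$ --- are all correct and do match the paper's bookkeeping; what is missing is the direct input of \textbf{Lemma~4.5} of \cite{CSW2025}, which must be cited (not reconstructed via polynomial coefficient comparison) to close the argument.
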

\begin{proof}
We know from \textbf{Proposition \ref{prop48-1657-25Jul}} that $(T_0 \sqcup T_1)^{\star\star} \in \operatorname{RSSYTT}(N,M + \ell_{\operatorname{sh}(T_2)},0;\lambda)$. Then, from \textbf{Lemma 4.5} of \cite{CSW2025}, we know that 

\footnotesize
\begin{align}
	&(
	\widetilde{\Psi}_{\lambda}^{(q,t^{-1})}
	\comp 
	\bigg|_{
		\substack{
			q_1 = q, \\
			q_2 = q^{-1}t,\\
			q_3 = t^{-1} \\
		}
	}
	)
	\bigg[
	\underbrace{	\prod_{1 \leq a < b \leq k}				}_{\text{ $a$ and $b$ are boxes in $(T_0 \sqcup T_1)^{\star\star}$}}
	\cals{C}^{(i_a,i_b)}\left(
	\frac{z_b}{z_a}
	; q, t
	\right)
	\bigg]
	\\
	&= 
	\psi_{
		\left(	(T_0 \sqcup T_1)^{\star\star}_{\operatorname{super part}}			\right)^\prime
	}(t,q)
	\times
	\psi_{
		\left(	(T_0 \sqcup T_1)^{\star\star}_{\operatorname{ordinary part}}			\right)
	}(q,t)
	\times 
	\frac{
		H(\operatorname{sh}(T_1 \sqcup T_2),q,t^{-1})
	}{
		H(\operatorname{sh}(T_1 \sqcup T_2)^\prime,t^{-1},q)
	}
	\times 
	\left(
	\frac{t^{-1} - 1}{q - 1}
	\right)^{
		|T_1| + |T_2|
	}.
	\notag 
\end{align}
\normalsize
From the construction of $(T_0 \sqcup T_1)^{\star\star}$ we have $(T_0 \sqcup T_1)^{\star\star}_{\operatorname{ordinary part}} = T_0$, and from \textbf{Definition \ref{dfn214-2113-26jul}}, we know that 

\footnotesize
\begin{align}
	&\psi_{
		\left(	(T_0 \sqcup T_1)^{\star\star}_{\operatorname{super part}}			\right)^\prime
	}(t,q)
	= 
	\prod_{\xi = N + 1}^{N + M + \ell_{\operatorname{sh}(T_2)} - 1}
	\psi_{\left(	(T_0 \sqcup T_1)^{\star\star}_{\operatorname{super part}}			\right)^{\prime \, (\xi)}/\left(	(T_0 \sqcup T_1)^{\star\star}_{\operatorname{super part}}			\right)^{\prime \, (\xi + 1)}}(t,q)
\end{align}
\normalsize

Therefore, we obtain that 
\footnotesize
\begin{align}
	&(
	\widetilde{\Psi}_{\lambda}^{(q,t^{-1})}
	\comp 
	\bigg|_{
		\substack{
			q_1 = q, \\
			q_2 = q^{-1}t,\\
			q_3 = t^{-1} \\
		}
	}
	)
	\bigg[
	\underbrace{	\prod_{1 \leq a < b \leq k}				}_{\text{ $a$ and $b$ are boxes in $(T_0 \sqcup T_1)^{\star\star}$}}
	\cals{C}^{(i_a,i_b)}\left(
	\frac{z_b}{z_a}
	; q, t
	\right)
	\bigg]
	\\
	&= 
	\prod_{\xi = N + 1}^{N + M + \ell_{\operatorname{sh}(T_2)} - 1}
	\psi_{\left(	(T_0 \sqcup T_1)^{\star\star}_{\operatorname{super part}}			\right)^{\prime \, (\xi)}/\left(	(T_0 \sqcup T_1)^{\star\star}_{\operatorname{super part}}			\right)^{\prime \, (\xi + 1)}}(t,q)
	\times 
	\psi_{
		T_0
	}(q,t)
	\times 
	\frac{
		H(\operatorname{sh}(T_1 \sqcup T_2),q,t^{-1})
	}{
		H(\operatorname{sh}(T_1 \sqcup T_2)^\prime,t^{-1},q)
	}
	\times 
	\left(
	\frac{t^{-1} - 1}{q - 1}
	\right)^{
		|T_1| + |T_2|
	}
	\notag 
\end{align}
\normalsize
\end{proof}

\begin{prop}\mbox{}
\label{prp410-1703-25jul}
\footnotesize
\begin{align}
	&(
	\widetilde{\Psi}_{\lambda}^{(q,t^{-1})}
	\comp 
	\bigg|_{
		\substack{
			q_1 = q, \\
			q_2 = q^{-1}t,\\
			q_3 = t^{-1} \\
		}
	}
	)
	\bigg[
	\underbrace{	\prod_{1 \leq a < b \leq k}				}_{\text{ $a$ and $b$ are boxes in $(T_0 \sqcup T_1)^{\operatorname{add}}$}}
	\cals{C}^{(i_a,i_b)}\left(
	\frac{z_b}{z_a}
	; q, t
	\right)
	\bigg]
	\label{eqn323-2211-26jul}
	\\
	&= \prod_{\xi = N + M + 1}^{N + M + \ell_{\operatorname{sh}(T_2)} - 1}
	\psi_{\left(	(T_0 \sqcup T_1)^{\star\star}_{\operatorname{super part}}			\right)^{\prime \, (\xi)}/\left(	(T_0 \sqcup T_1)^{\star\star}_{\operatorname{super part}}			\right)^{\prime \, (\xi + 1)}}(t,q)
	\times 
	\frac{
		H(\operatorname{sh}(T_2),q,t^{-1})
	}{
		H(\operatorname{sh}(T_2)^\prime,t^{-1},q)
	}
	\times 
	\left(		\frac{t^{-1} - 1}{q - 1}			\right)^{|T_2|}
	\notag 
\end{align}
\normalsize
\end{prop}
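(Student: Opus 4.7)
The plan is to reduce this proposition to a direct application of \textbf{Lemma 4.5} of \cite{CSW2025}, applied to $(T_0 \sqcup T_1)^{\operatorname{add}}$ viewed as a self-contained reverse semistandard Young bitableau. By \textbf{Proposition \ref{prop48-1657-25Jul}}(2), the added portion is an element of $\operatorname{RSSYTT}(N, M+\ell_{\operatorname{sh}(T_2)}, 0; \operatorname{sh}(T_2))$ containing only the super numbers $N+M+1, \dots, N+M+\ell_{\operatorname{sh}(T_2)}$. Since $\cals{C}^{(i_a,i_b)}$ depends only on the labels $i_a, i_b$ and on whether $a,b$ share a row, the pairwise contractions among boxes of the added portion are the same whether one views this portion as sitting inside $(T_0 \sqcup T_1)^{\star\star}$ or as an isolated bitableau. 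Furthermore, the map $\dualmap$ assigns $q^{j-1}\xi^{i-1}y$ to the box at position $(i,j)$ of $\lambda$, and since $(T_0 \sqcup T_1)^{\operatorname{add}}$ occupies the top-left sub-diagram of shape $\operatorname{sh}(T_2) \subseteq \lambda$, the resulting substitutions match those of the natural specialization $\widetilde{\Psi}_{\operatorname{sh}(T_2)}^{(q,t^{-1})}$ applied to the added portion alone.

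With this identification, \textbf{Lemma 4.5} of \cite{CSW2025} produces a product of four factors: a super-part $\psi$, an ordinary-part $\psi$, an $H$-ratio, and a power of $(t^{-1}-1)/(q-1)$. Three of these simplify immediately: the ordinary-part $\psi$ is $1$ since the added portion contains no ordinary numbers, the shape of its super part is precisely $\operatorname{sh}(T_2)$ yielding the $H$-ratio $H(\operatorname{sh}(T_2),q,t^{-1})/H(\operatorname{sh}(T_2)^\prime,t^{-1},q)$, and the total number of super boxes equals $|T_2|$, producing $((t^{-1}-1)/(q-1))^{|T_2|}$.

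It remains to identify the super-part $\psi$-factor $\psi_{((T_0 \sqcup T_1)^{\operatorname{add}}_{\operatorname{super part}})^\prime}(t,q)$ with the stated product. To do this I would expand via \textbf{Definition \ref{dfn214-2113-26jul}}(2): for any $\xi \geq N+M+1$, the original super labels $N+1,\dots,N+M$ are strictly less than $\xi$ and so do not contribute to $((T_0 \sqcup T_1)^{\star\star}_{\operatorname{super part}})^{(\xi)}$. Consequently this sub-diagram agrees with the one built purely from the added portion, and the same holds after transposition. Expanding the defining product of $\psi$ for $\xi$ running from $N+M+1$ to $N+M+\ell_{\operatorname{sh}(T_2)}$, and dropping the terminal factor at $\xi = N+M+\ell_{\operatorname{sh}(T_2)}$ which is $\psi_{\nu/\emptyset}(t,q) = 1$ by \textbf{Definition \ref{dfn214-2113-26jul}}(1), yields the product stated in the proposition. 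The main obstacle is the bookkeeping needed to verify that applying \textbf{Lemma 4.5} of \cite{CSW2025} to the isolated added portion is genuinely equivalent to extracting its contribution from the full correlator on $(T_0 \sqcup T_1)^{\star\star}$; once the locality of the $\cals{C}$-factors and the compatibility of the $\dualmap$-substitutions are noted, the remainder reduces to matching index ranges.
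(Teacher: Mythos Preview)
Your proposal is correct and follows essentially the same route as the paper: apply \textbf{Lemma 4.5} of \cite{CSW2025} to $(T_0 \sqcup T_1)^{\operatorname{add}}$ as a standalone reverse SSYBT of shape $\operatorname{sh}(T_2)$ (justified via \textbf{Proposition \ref{prop48-1657-25Jul}}(2)), then expand the super-part $\psi$-factor using \textbf{Definition \ref{dfn214-2113-26jul}}(2) and identify each sub-diagram $\big((T_0\sqcup T_1)^{\operatorname{add}}\big)^{\prime\,(\xi)}$ with $\big((T_0\sqcup T_1)^{\star\star}_{\operatorname{super\ part}}\big)^{\prime\,(\xi)}$ for $\xi \geq N+M+1$. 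Your explicit remarks on the locality of $\cals{C}^{(i_a,i_b)}$, the compatibility of the $\widetilde{\Psi}$-substitutions on the top-left sub-diagram, and the vanishing of the terminal factor $\psi_{\nu/\emptyset}(t,q)=1$ make the bookkeeping slightly more transparent than the paper's presentation, but the argument is the same.
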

\begin{proof}
We know from \textbf{Proposition \ref{prop48-1657-25Jul}} that $(T_0 \sqcup T_1)^{\operatorname{add}} \in \operatorname{RSSYTT}(N,M+\ell_{\operatorname{sh}(T_2)},0;\operatorname{sh}(T_2))$. Thus, by using \textbf{Lemma 4.5} of \cite{CSW2025}, we can show that 

\footnotesize
\begin{align}
	&(
	\widetilde{\Psi}_{\lambda}^{(q,t^{-1})}
	\comp 
	\bigg|_{
		\substack{
			q_1 = q, \\
			q_2 = q^{-1}t,\\
			q_3 = t^{-1} \\
		}
	}
	)
	\bigg[
	\underbrace{	\prod_{1 \leq a < b \leq k}				}_{\text{ $a$ and $b$ are boxes in $(T_0 \sqcup T_1)^{\operatorname{add}}$}}
	\cals{C}^{(i_a,i_b)}\left(
	\frac{z_b}{z_a}
	; q, t
	\right)
	\bigg]
	\label{eqn324-2210-26jul}
	\\
	&= 
	\psi_{ 	\left(	(T_0 \sqcup T_1)^{\operatorname{add}}_{\operatorname{super part}}			\right)^\prime				}(t,q)
	\times 
	\frac{
		H(\operatorname{sh}(T_2),q,t^{-1})
	}{
		H(\operatorname{sh}(T_2)^\prime,t^{-1},q)
	}
	\times 
	\left(		\frac{t^{-1} - 1}{q - 1}			\right)^{|T_2|}. 
	\notag 
\end{align}
\normalsize
Since $(T_0 \sqcup T_1)^{\operatorname{add}}$ contains only super numbers $N+M+1,\dots,N+M+\ell_{\operatorname{sh}(T_2)}$, we get that 

\begin{align}
	&\psi_{ 	\left(	(T_0 \sqcup T_1)^{\operatorname{add}}			\right)^\prime				}(t,q)
	= 
	\prod_{\xi = N + M + 1}^{N + M + \ell_{\operatorname{sh}(T_2)} - 1}
	\psi_{\left(	(T_0 \sqcup T_1)^{\operatorname{add}}			\right)^{\prime \, (\xi)}/\left(	(T_0 \sqcup T_1)^{\operatorname{add}}			\right)^{\prime \, (\xi + 1)}	}(t/q). 
\end{align}
From the construction of $(T_0 \sqcup T_1)^{\operatorname{add}}$, it is clear that for each 
$i \in \{N+M+1,\dots,N+M+\ell_{T_2}\}$, we have $\left((T_0 \sqcup T_1)^{\operatorname{add}}			\right)^{\prime \, (i)} = 
\left(	(T_0 \sqcup T_1)^{\star\star}_{\operatorname{super part}}			\right)^{\prime \, (i)}.$
Therefore, 
\begin{align}
	&\psi_{ 	\left(	(T_0 \sqcup T_1)^{\operatorname{add}}			\right)^\prime				}(t,q)
	= 
	\prod_{\xi = N + M + 1}^{N + M + \ell_{\operatorname{sh}(T_2)} - 1}
	\psi_{\left(	(T_0 \sqcup T_1)^{\star\star}_{\operatorname{super part}}			\right)^{\prime \, (\xi)}/\left(	(T_0 \sqcup T_1)^{\star\star}_{\operatorname{super part}}			\right)^{\prime \, (\xi + 1)}}(t,q). 
\label{eqn326-2210-26jul}
\end{align}
Substituting \eqref{eqn326-2210-26jul} into \eqref{eqn324-2210-26jul}, we immediately obtain \eqref{eqn323-2211-26jul}. 
\end{proof}

\begin{cor}\mbox{}
\label{cor311-1128-27jul}
\footnotesize
\begin{align}
	&
	(
	\widetilde{\Psi}_{\lambda}^{(q,t^{-1})}
	\comp 
	\bigg|_{
		\substack{
			q_1 = q, \\
			q_2 = q^{-1}t,\\
			q_3 = t^{-1} \\
		}
	}
	)
	\bigg[
	\underbrace{	\prod_{1 \leq a < b \leq k}				}_{\text{ $a$ and $b$ are boxes in $(T_0 \sqcup T_1)^{\star\star}$}}
	\cals{C}^{(i_a,i_b)}\left(
	\frac{z_b}{z_a}
	; q, t
	\right)
	\bigg]
	\\
	&\hspace{0.3cm}\times
	\left(
	(
	\widetilde{\Psi}_{\lambda}^{(q,t^{-1})}
	\comp 
	\bigg|_{
		\substack{
			q_1 = q, \\
			q_2 = q^{-1}t,\\
			q_3 = t^{-1} \\
		}
	}
	)
	\bigg[
	\underbrace{	\prod_{1 \leq a < b \leq k}				}_{\text{ $a$ and $b$ are boxes in $(T_0 \sqcup T_1)^{\operatorname{add}}$}}
	\cals{C}^{(i_a,i_b)}\left(
	\frac{z_b}{z_a}
	; q, t
	\right)
	\bigg]
	\right)^{-1}
	\notag 
	\\
	&=  
	\psi_{T^{\prime}_{1}}(t,q) \times \psi_{T_0}(q,t)
	\times
	\frac{
		H(\operatorname{sh}(T_1 \sqcup T_2),q,t^{-1})
	}{
		H(\operatorname{sh}(T_1 \sqcup T_2)^\prime,t^{-1},q)
	}
	\times
	\left(	\frac{t^{-1} - 1}{q-1}		\right)^{|T_1|}
	\notag 
	\times 
	\prod_{\square \in T_2}
	\frac{
		(t^{-1})^{\ell_{\operatorname{sh}(T_2)}(\square) + 1} - q^{a_{\operatorname{sh}(T_2)}(\square)}
	}{
		q^{a_{\operatorname{sh}(T_2)}(\square) + 1} - (t^{-1})^{\ell_{\operatorname{sh}(T_2)}(\square)}
	}
	\notag 
\end{align}
\normalsize
\end{cor}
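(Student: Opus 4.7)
The plan is to divide the formula in \textbf{Proposition \ref{prp49-1703-25jul}} by the formula in \textbf{Proposition \ref{prp410-1703-25jul}} and match the result term-by-term against the RHS of the Corollary. The $\psi_{T_0}(q,t)$ factor carries over directly from the numerator, so three non-trivial identifications remain: the prefactor exponent, the $H$-function ratio, and the $\psi$-product.

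The prefactor is immediate: $\left(\frac{t^{-1}-1}{q-1}\right)$ appears with exponent $|T_1|+|T_2|$ in the numerator and $|T_2|$ in the denominator, leaving $\left(\frac{t^{-1}-1}{q-1}\right)^{|T_1|}$. For the $H$-ratio, the quotient produces
$$\frac{H(\operatorname{sh}(T_1\sqcup T_2),q,t^{-1})}{H(\operatorname{sh}(T_1\sqcup T_2)^\prime,t^{-1},q)}\cdot\frac{H(\operatorname{sh}(T_2)^\prime,t^{-1},q)}{H(\operatorname{sh}(T_2),q,t^{-1})}.$$
The first piece already matches the Corollary. For the second, I will apply the definition $H(\mu,q,t)=\prod_{s\in\mu}(q^{a_\mu(s)+1}-t^{\ell_\mu(s)})$ together with the standard transposition identities $a_{\mu^\prime}(s^\prime)=\ell_\mu(s)$ and $\ell_{\mu^\prime}(s^\prime)=a_\mu(s)$; these rewrite the numerator as $\prod_{\square\in\operatorname{sh}(T_2)}\bigl((t^{-1})^{\ell_{\operatorname{sh}(T_2)}(\square)+1}-q^{a_{\operatorname{sh}(T_2)}(\square)}\bigr)$, which divided by $H(\operatorname{sh}(T_2),q,t^{-1})$ gives exactly the product appearing in the Corollary.

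The $\psi$-product is the most delicate piece. After the product ranges overlap for $\xi\in\{N+M+1,\ldots,N+M+\ell_{\operatorname{sh}(T_2)}-1\}$ cancel, the ratio leaves $\prod_{\xi=N+1}^{N+M}\psi_{\left((T_0\sqcup T_1)^{\star\star}_{\operatorname{super part}}\right)^{\prime\,(\xi)}/\left((T_0\sqcup T_1)^{\star\star}_{\operatorname{super part}}\right)^{\prime\,(\xi+1)}}(t,q)$. I plan to show this equals $\psi_{T_1^\prime}(t,q)$ as defined via \textbf{Definition \ref{dfn214-2113-26jul}}(2), by establishing the partition equality $\left((T_0\sqcup T_1)^{\star\star}_{\operatorname{super part}}\right)^{\prime\,(\xi)}=(T_1^\prime)^{(\xi)}$ for every $\xi\in\{N+1,\ldots,N+M+1\}$. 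The key observation is that every added super number lies in $\{N+M+1,\ldots,N+M+\ell_{\operatorname{sh}(T_2)}\}$, hence is $\geq\xi$, so $\left((T_0\sqcup T_1)^{\star\star}_{\operatorname{super part}}\right)^{\prime\,(\xi)}$ contains the full $\operatorname{sh}(T_2)^\prime$ plus the transposed positions of super numbers $\geq\xi$ coming from $T_1$. Under the convention $(T_1^\prime)^{(N+M+1)}=\operatorname{sh}(T_2)^\prime$, this matches $(T_1^\prime)^{(\xi)}$ as partitions, and \textbf{Definition \ref{dfn214-2113-26jul}}(2) then collapses the product to $\psi_{T_1^\prime}(t,q)$.

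The main obstacle is this partition identification: one must verify, using the explicit column-by-column rule of \textbf{Proposition \ref{prop48-1657-25Jul}}, that the added portion populates exactly $\operatorname{sh}(T_2)$ with numbers arranged so that its transpose under the assignment $\geq\xi$ is genuinely $\operatorname{sh}(T_2)^\prime$ for all $\xi\leq N+M$; equivalently, that $\operatorname{sh}(T_2)$ plays consistently the role of the base partition for $T_1^\prime$ under Definition \ref{dfn214-2113-26jul}. Everything else is bookkeeping, and once this identification is in place the Corollary follows at once from combining \textbf{Propositions \ref{prp49-1703-25jul}} and \textbf{\ref{prp410-1703-25jul}}.
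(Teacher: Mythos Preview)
Your proposal is correct and follows essentially the same approach as the paper: divide the formulas from \textbf{Propositions \ref{prp49-1703-25jul}} and \textbf{\ref{prp410-1703-25jul}}, cancel the overlapping $\psi$-product range, rewrite the residual $H$-ratio via the transposition identities, and identify the remaining $\psi$-product with $\psi_{T_1^\prime}(t,q)$ by observing that all added super numbers exceed $N+M$. If anything, your treatment of the partition identification $\left((T_0\sqcup T_1)^{\star\star}_{\operatorname{super part}}\right)^{\prime\,(\xi)}=(T_1^\prime)^{(\xi)}$ and of the convention $(T_1^\prime)^{(N+M+1)}=\operatorname{sh}(T_2)^\prime$ is more explicit than the paper's, which simply asserts equation \eqref{eqn328-1112-27jul} as ``clear from the construction of $(T_0 \sqcup T_1)^{\star\star}$.''
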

\begin{proof}
From \textbf{Propositions \ref{prp49-1703-25jul}} and \textbf{\ref{prp410-1703-25jul}}, we can show that 
\footnotesize
\begin{align}
&
(
\widetilde{\Psi}_{\lambda}^{(q,t^{-1})}
\comp 
\bigg|_{
	\substack{
		q_1 = q, \\
		q_2 = q^{-1}t,\\
		q_3 = t^{-1} \\
	}
}
)
\bigg[
\underbrace{	\prod_{1 \leq a < b \leq k}				}_{\text{ $a$ and $b$ are boxes in $(T_0 \sqcup T_1)^{\star\star}$}}
\cals{C}^{(i_a,i_b)}\left(
\frac{z_b}{z_a}
; q, t
\right)
\bigg]
\label{eqn327-1113-27jul}
\\
&\hspace{0.3cm}\times
\left(
(
\widetilde{\Psi}_{\lambda}^{(q,t^{-1})}
\comp 
\bigg|_{
	\substack{
		q_1 = q, \\
		q_2 = q^{-1}t,\\
		q_3 = t^{-1} \\
	}
}
)
\bigg[
\underbrace{	\prod_{1 \leq a < b \leq k}				}_{\text{ $a$ and $b$ are boxes in $(T_0 \sqcup T_1)^{\operatorname{add}}$}}
\cals{C}^{(i_a,i_b)}\left(
\frac{z_b}{z_a}
; q, t
\right)
\bigg]
\right)^{-1}
\notag 
\\
&= 
\prod_{\xi = N + 1}^{N + M}
\psi_{\left(	(T_0 \sqcup T_1)^{\star\star}_{\operatorname{super part}}			\right)^{\prime \, (\xi)}/\left(	(T_0 \sqcup T_1)^{\star\star}_{\operatorname{super part}}			\right)^{\prime \, (\xi + 1)}}(t,q)
\times
\notag 
\\
&\hspace{0.3cm}\times
\psi_{
	T_0
}(q,t)
\times 
\frac{
	H(\operatorname{sh}(T_1 \sqcup T_2),q,t^{-1})
}{
	H(\operatorname{sh}(T_1 \sqcup T_2)^\prime,t^{-1},q)
}
\times 
\left(
\frac{t^{-1} - 1}{q - 1}
\right)^{
	|T_1| 
}
\times
\Big(
\frac{
	H(\operatorname{sh}(T_2),q,t^{-1})
}{
	H(\operatorname{sh}(T_2)^\prime,t^{-1},q)
}
\Big)^{-1}. 
\notag 
\end{align}
\normalsize
It is clear from the construction of $(T_0 \sqcup T_1)^{\star\star}$ that 
\begin{align}
\psi_{T_1^\prime}(t,q) = 
\prod_{\xi = N + 1}^{N + M}
\psi_{\left(	(T_0 \sqcup T_1)^{\star\star}_{\operatorname{super part}}			\right)^{\prime \, (\xi)}/\left(	(T_0 \sqcup T_1)^{\star\star}_{\operatorname{super part}}			\right)^{\prime \, (\xi + 1)}}(t,q). 
\label{eqn328-1112-27jul}
\end{align}
Moreover, we can show from the definition of $H(\mu,q,t)$ that 
\begin{align}
	\frac{
		H(\operatorname{sh}(T_2)^\prime,t^{-1},q)
	}{
		H(\operatorname{sh}(T_2),q,t^{-1})
	}
	=
	\prod_{\square \in T_2}
	\frac{
		t^{-(\ell_{T_2}(\square) + 1)} - q^{a_{T_2}(\square)}
	}{
		q^{a_{T_2}(\square) + 1} - t^{-\ell_{T_2}(\square)}
	} 
\label{eqn329-1112-27jul}
\end{align}
By using \eqref{eqn328-1112-27jul} and \eqref{eqn329-1112-27jul}, we are able to rewrite equation \eqref{eqn327-1113-27jul} as 
\footnotesize
\begin{align}
	&
	(
	\widetilde{\Psi}_{\lambda}^{(q,t^{-1})}
	\comp 
	\bigg|_{
		\substack{
			q_1 = q, \\
			q_2 = q^{-1}t,\\
			q_3 = t^{-1} \\
		}
	}
	)
	\bigg[
	\underbrace{	\prod_{1 \leq a < b \leq k}				}_{\text{ $a$ and $b$ are boxes in $(T_0 \sqcup T_1)^{\star\star}$}}
	\cals{C}^{(i_a,i_b)}\left(
	\frac{z_b}{z_a}
	; q, t
	\right)
	\bigg]
	\\
	&\hspace{0.3cm}\times
	\left(
	(
	\widetilde{\Psi}_{\lambda}^{(q,t^{-1})}
	\comp 
	\bigg|_{
		\substack{
			q_1 = q, \\
			q_2 = q^{-1}t,\\
			q_3 = t^{-1} \\
		}
	}
	)
	\bigg[
	\underbrace{	\prod_{1 \leq a < b \leq k}				}_{\text{ $a$ and $b$ are boxes in $(T_0 \sqcup T_1)^{\operatorname{add}}$}}
	\cals{C}^{(i_a,i_b)}\left(
	\frac{z_b}{z_a}
	; q, t
	\right)
	\bigg]
	\right)^{-1}
	\notag 
	\\
	&=  
	\psi_{T^{\prime}_{1}}(t,q) \times \psi_{T_0}(q,t)
	\times
	\frac{
		H(\operatorname{sh}(T_1 \sqcup T_2),q,t^{-1})
	}{
		H(\operatorname{sh}(T_1 \sqcup T_2)^\prime,t^{-1},q)
	}
	\times
	\left(	\frac{t^{-1} - 1}{q-1}		\right)^{|T_1|}
	\notag 
	\times 
	\prod_{\square \in T_2}
	\frac{
		(t^{-1})^{\ell_{\operatorname{sh}(T_2)}(\square) + 1} - q^{a_{\operatorname{sh}(T_2)}(\square)}
	}{
		q^{a_{\operatorname{sh}(T_2)}(\square) + 1} - (t^{-1})^{\ell_{\operatorname{sh}(T_2)}(\square)}
	}
	\notag 
\end{align}
\normalsize
\end{proof}

Using \textbf{Corollary \ref{cor311-1128-27jul}} and equation \eqref{eqn316-1129-27jul}, we immediately obtain the following corollary. 

\begin{cor}\mbox{}
\label{coro312-1211-27jul}
\footnotesize
\begin{align}
	&(
	\widetilde{\Psi}_{\lambda}^{(q,t^{-1})}
	\comp 
	\bigg|_{
		\substack{
			q_1 = q, \\
			q_2 = q^{-1}t,\\
			q_3 = t^{-1} \\
		}
	}
	)
	\bigg[
	\underbrace{	\prod_{1 \leq a < b \leq k}				}_{\text{ $a$ and $b$ are boxes in $(T_0 \sqcup T_1)$}}
	\cals{C}^{(i_a,i_b)}\left(
	\frac{z_b}{z_a}
	; q, t
	\right)
	\bigg]
	\\
	&= 
	\psi_{T^{\prime}_{1}}(t,q) \times \psi_{T_0}(q,t)
	\times
	\frac{
		H(\operatorname{sh}(T_1 \sqcup T_2),q,t^{-1})
	}{
		H(\operatorname{sh}(T_1 \sqcup T_2)^\prime,t^{-1},q)
	}
	\times
	\left(	\frac{t^{-1} - 1}{q-1}		\right)^{|T_1|}
	\notag 
	\times 
	\prod_{\square \in T_2}
	\frac{
		(t^{-1})^{\ell_{\operatorname{sh}(T_2)}(\square) + 1} - q^{a_{\operatorname{sh}(T_2)}(\square)}
	}{
		q^{a_{\operatorname{sh}(T_2)}(\square) + 1} - (t^{-1})^{\ell_{\operatorname{sh}(T_2)}(\square)}
	}
	\notag 
	\\
	&\hspace{0.3cm}\times
	\left(
	(
	\widetilde{\Psi}_{\lambda}^{(q,t^{-1})}
	\comp 
	\bigg|_{
		\substack{
			q_1 = q, \\
			q_2 = q^{-1}t,\\
			q_3 = t^{-1} \\
		}
	}
	)
	\bigg[
	\underbrace{	\prod_{1 \leq a < b \leq k}				}_{
		\substack{
			(1) \,\, \text{ $a$ is a box in $(T_0 \sqcup T_1)^{\operatorname{add}}$}
			\\
			(2) \,\, \text{ $b$ is a box in $T_0 \sqcup T_1$}
		}
	}
	\cals{C}^{(i_a,i_b)}\left(
	\frac{z_b}{z_a}
	; q, t
	\right)
	\bigg]
	\right)^{-1}
	\notag 
	\\
	&\hspace{0.3cm}\times
	\left(
	(
	\widetilde{\Psi}_{\lambda}^{(q,t^{-1})}
	\comp 
	\bigg|_{
		\substack{
			q_1 = q, \\
			q_2 = q^{-1}t,\\
			q_3 = t^{-1} \\
		}
	}
	)
	\bigg[
	\underbrace{	\prod_{1 \leq a < b \leq k}				}_{
		\substack{
			(1) \,\, \text{ $a$ is a box in $T_0 \sqcup T_1$}
			\\
			(2) \,\, \text{ $b$ is a box in $(T_0 \sqcup T_1)^{\operatorname{add}}$}
		}
	}
	\cals{C}^{(i_a,i_b)}\left(
	\frac{z_b}{z_a}
	; q, t
	\right)
	\bigg]
	\right)^{-1}.
	\notag
\end{align}
\normalsize
\end{cor}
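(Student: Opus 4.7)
The plan is to obtain Corollary \ref{coro312-1211-27jul} by direct algebraic manipulation: equation \eqref{eqn316-1129-27jul} already expresses the quantity
\[
(\widetilde{\Psi}_{\lambda}^{(q,t^{-1})}\!\circ|_{q_1=q,\,q_2=q^{-1}t,\,q_3=t^{-1}})
\Big[\prod_{\substack{1\le a<b\le k\\ a,b\in T_0\sqcup T_1}}\mathcal{C}^{(i_a,i_b)}(z_b/z_a;q,t)\Big]
\]
as a product of four factors: (i) the analogous quantity with the pair $(a,b)$ ranging over boxes in $(T_0\sqcup T_1)^{\star\star}$, (ii) the reciprocal of the analogous quantity over $(T_0\sqcup T_1)^{\mathrm{add}}$, (iii) the reciprocal of the cross-term where $a\in (T_0\sqcup T_1)^{\mathrm{add}}$ and $b\in T_0\sqcup T_1$, and (iv) the reciprocal of the cross-term where $a\in T_0\sqcup T_1$ and $b\in(T_0\sqcup T_1)^{\mathrm{add}}$. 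This decomposition is just the observation that $(T_0\sqcup T_1)^{\star\star}$ is the disjoint union of $T_0\sqcup T_1$ with the ``added'' super-number boxes, so the product of $\mathcal{C}$-factors over ordered pairs of $(T_0\sqcup T_1)^{\star\star}$ factors into the four subsums according to where $a$ and $b$ lie.

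Next, I would invoke Corollary \ref{cor311-1128-27jul}, which evaluates the product of the first two factors in closed form as
\[
\psi_{T'_1}(t,q)\,\psi_{T_0}(q,t)\,\frac{H(\operatorname{sh}(T_1\sqcup T_2),q,t^{-1})}{H(\operatorname{sh}(T_1\sqcup T_2)^{\prime},t^{-1},q)}\Big(\frac{t^{-1}-1}{q-1}\Big)^{|T_1|}\prod_{\square\in T_2}\frac{(t^{-1})^{\ell_{\operatorname{sh}(T_2)}(\square)+1}-q^{a_{\operatorname{sh}(T_2)}(\square)}}{q^{a_{\operatorname{sh}(T_2)}(\square)+1}-(t^{-1})^{\ell_{\operatorname{sh}(T_2)}(\square)}}.
\]
Substituting this into the identity from equation \eqref{eqn316-1129-27jul}, and leaving the two mixed cross-terms untouched as reciprocals, yields exactly the formula asserted in Corollary \ref{coro312-1211-27jul}.

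There is essentially no obstacle here: the combinatorial heavy lifting was already carried out in Propositions \ref{prp49-1703-25jul} and \ref{prp410-1703-25jul}, which feed into Corollary \ref{cor311-1128-27jul}, and in the derivation of equation \eqref{eqn316-1129-27jul}. The only thing to check carefully is that the partition of pairs $(a,b)$ used in \eqref{eqn316-1129-27jul} really agrees with the decomposition of the product over $(T_0\sqcup T_1)^{\star\star}$, i.e.\ that the factors $\mathcal{C}^{(i_a,i_b)}(z_b/z_a;q,t)$ multiplied for ``same-set'' and ``cross-set'' pairs reconstruct the full product without double-counting. This is a direct bookkeeping check using the fact that $\bigl((T_0\sqcup T_1)^{\star\star}\bigr)=\bigl(T_0\sqcup T_1\bigr)\sqcup\bigl((T_0\sqcup T_1)^{\mathrm{add}}\bigr)$ as sets of boxes, together with the ordering convention implicit in the iterated product $\prod_{1\le a<b\le k}$. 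Once this is verified, the corollary follows in one line of substitution, which is why the paper's remark ``we immediately obtain the following corollary'' is accurate.
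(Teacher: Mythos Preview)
Your proposal is correct and matches the paper's approach exactly: the paper states that Corollary~\ref{coro312-1211-27jul} follows immediately from Corollary~\ref{cor311-1128-27jul} and equation~\eqref{eqn316-1129-27jul}, which is precisely the substitution you describe. There is nothing to add.
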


\subsection{Analysis on the Quantities in  \eqref{eqn310-1137-27jul} and \eqref{eqn311-1137-27jul}}

\begin{prop}\mbox{}
\label{prp313-1201-27jul}
\footnotesize
\begin{align}
&(
\widetilde{\Psi}_{\lambda}^{(q,t^{-1})}
\comp 
\bigg|_{
	\substack{
		q_1 = q, \\
		q_2 = q^{-1}t,\\
		q_3 = t^{-1} \\
	}
}
)
\bigg[
\underbrace{	\prod_{1 \leq a < b \leq k}				}_{
	\substack{
		(1) \,\, \text{ $a$ is a box in $T_2$}
		\\
		(2) \,\, \text{ $b$ is a box in $T_0 \sqcup T_1$}
	}
}
\cals{C}^{(i_a,i_b)}\left(
\frac{z_b}{z_a}
; q, t
\right)
\bigg]
\\
&= 
(
\widetilde{\Psi}_{\lambda}^{(q,t^{-1})}
\comp 
\bigg|_{
	\substack{
		q_1 = q, \\
		q_2 = q^{-1}t,\\
		q_3 = t^{-1} \\
	}
}
)
\bigg[
\underbrace{	\prod_{1 \leq a < b \leq k}				}_{
	\substack{
		(1) \,\, \text{ $a$ is a box in $(T_0 \sqcup T_1)^{\operatorname{add}}$}
		\\
		(2) \,\, \text{ $b$ is a box in $T_0 \sqcup T_1$}
	}
}
\cals{C}^{(i_a,i_b)}\left(
\frac{z_b}{z_a}
; q, t
\right)
\bigg]
\notag 
\end{align}
\normalsize
\end{prop}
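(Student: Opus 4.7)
The plan is to show that the two expressions on the left- and right-hand sides are equal not just after applying the specialization $\dualmap$, but already as rational functions in $z_1, \ldots, z_k$. By construction of $(T_0 \sqcup T_1)^{\operatorname{add}}$ given in Section \ref{sec4-0012}, its underlying shape coincides with $\operatorname{sh}(T_2)$, and its boxes occupy precisely the positions within the Young diagram of $\lambda$ that are occupied by $T_2$. Since the indices $a \in \{1, \ldots, k\}$ are assigned to boxes by reading order on $\lambda$, the conditions ``$a$ is a box in $T_2$'' and ``$a$ is a box in $(T_0 \sqcup T_1)^{\operatorname{add}}$'' select the same subset of indices. Hence the pair-sets
$\{(a,b) : 1 \leq a < b \leq k,\ a \in T_2,\ b \in T_0 \sqcup T_1\}$
and
$\{(a,b) : 1 \leq a < b \leq k,\ a \in (T_0 \sqcup T_1)^{\operatorname{add}},\ b \in T_0 \sqcup T_1\}$
coincide pair by pair.

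Next, I would verify that the factors $\cals{C}^{(i_a, i_b)}(z_b/z_a; q, t)$ attached to each pair $(a, b)$ agree under this identification. For any such pair, $i_b \leq N+M$, while the label $i_a$ at position $a$ lies in $\{N+M+1, \ldots, N+M+L\}$ when $a$ is treated as a box of $T_2$, and in $\{N+M+1, \ldots, N+M+\ell_{\operatorname{sh}(T_2)}\}$ when $a$ is treated as a box of $(T_0 \sqcup T_1)^{\operatorname{add}}$. In both cases $i_a > i_b$, so the $i > j$ clause of the definition of $\cals{D}^{(i,j)}$ applies, giving
\[
\cals{D}^{(i_a, i_b)}\!\left(\frac{z_b}{z_a}; q, t\right) = \frac{\left(1 - q_1 \frac{z_b}{z_a}\right)\left(1 - q_2 \frac{z_b}{z_a}\right)}{\left(1 - q_3^{-1} \frac{z_b}{z_a}\right)\left(1 - \frac{z_b}{z_a}\right)},
\]
a rational function of $z_b/z_a$ that depends on neither $i_a$ nor $i_b$ individually. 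The refinement from $\cals{D}^{(i_a, i_b)}$ to $\cals{C}^{(i_a, i_b)}$ in equation \eqref{eqn36-1808-31jul} depends only on whether $\operatorname{row}(a) = \operatorname{row}(b)$ or $\operatorname{row}(a) < \operatorname{row}(b)$, and since $a$ sits at the same position in both descriptions, this comparison is unchanged.

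Combining the two points, the two products coincide factor by factor as rational functions in $z_1, \ldots, z_k$, and therefore remain equal after applying $\dualmap$ together with the substitution $q_1 = q$, $q_2 = q^{-1}t$, $q_3 = t^{-1}$. I do not expect any serious obstacle here: the argument reduces to a direct identification of the two pair-index sets, together with the observation that the $i > j$ branch of $\cals{D}^{(i,j)}$ is label-independent. The only mild care required is ensuring that one never stumbles into the $i = j$ clauses (which would be sensitive to whether $i_a$ is a hyper or a super number), and this is ruled out precisely because $i_b \leq N + M < i_a$ in every pair considered.
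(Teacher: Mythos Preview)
Your proof is correct and follows essentially the same approach as the paper's. Both arguments rest on the two observations that $(T_0 \sqcup T_1)^{\operatorname{add}}$ occupies the same positions as $T_2$, and that for any box $b$ in $T_0 \sqcup T_1$ the label $i_a$ at such a position is strictly larger than $i_b$ regardless of whether one reads it from $T_2$ (hyper numbers) or from $(T_0 \sqcup T_1)^{\operatorname{add}}$ (the added super numbers), so the label-independent $i>j$ branch of $\cals{D}^{(i,j)}$ governs every factor. Your version is slightly sharper in that you note the equality already holds at the level of rational functions in $z_1,\dots,z_k$, before applying $\widetilde{\Psi}_\lambda^{(q,t^{-1})}$; the paper states the equality only after specialization, but your stronger observation is valid.
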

\begin{proof}
From the construction of $(T_0 \sqcup T_1)^{\operatorname{add}}$, we know that the numbers appearing in $(T_0 \sqcup T_1)^{\operatorname{add}}$ are all larger than the numbers appearing in $T_0 \sqcup T_1$. Also, since $T_2$ is the part of hyper number, we also get that all of the numbers in $T_2$ are larger than the numbers in $T_0 \sqcup T_1$. Furthermore, we know that $\operatorname{sh}((T_0 \sqcup T_1)^{\operatorname{add}}) = \operatorname{sh}(T_2)$. 

From this, it is clear that if $a \in (T_0 \sqcup T_1)^{\operatorname{add}}, a^\prime \in T_2$ are located in the same position, then for any $b \in T_0 \sqcup T_1$, we have 

\footnotesize
\begin{align}
	(
	\widetilde{\Psi}_{\lambda}^{(q,t^{-1})}
	\comp 
	\bigg|_{
		\substack{
			q_1 = q, \\
			q_2 = q^{-1}t,\\
			q_3 = t^{-1} \\
		}
	}
	)
	\Big[
	\cals{C}^{(i_a,i_b)}\left(
	\frac{z_b}{z_a}
	; q, t
	\right)
	\Big]
	=
	(
	\widetilde{\Psi}_{\lambda}^{(q,t^{-1})}
	\comp 
	\bigg|_{
		\substack{
			q_1 = q, \\
			q_2 = q^{-1}t,\\
			q_3 = t^{-1} \\
		}
	}
	)
	\Big[
	\cals{C}^{(i_{a^\prime},i_b)}\left(
	\frac{z_b}{z_{a^\prime}}
	; q, t
	\right)
	\Big].
\end{align}
\normalsize
From this we can conclude that 
\footnotesize
\begin{align}
	&(
	\widetilde{\Psi}_{\lambda}^{(q,t^{-1})}
	\comp 
	\bigg|_{
		\substack{
			q_1 = q, \\
			q_2 = q^{-1}t,\\
			q_3 = t^{-1} \\
		}
	}
	)
	\bigg[
	\underbrace{	\prod_{1 \leq a < b \leq k}				}_{
		\substack{
			(1) \,\, \text{ $a$ is a box in $T_2$}
			\\
			(2) \,\, \text{ $b$ is a box in $T_0 \sqcup T_1$}
		}
	}
	\cals{C}^{(i_a,i_b)}\left(
	\frac{z_b}{z_a}
	; q, t
	\right)
	\bigg]
	\\
	&= 
	(
	\widetilde{\Psi}_{\lambda}^{(q,t^{-1})}
	\comp 
	\bigg|_{
		\substack{
			q_1 = q, \\
			q_2 = q^{-1}t,\\
			q_3 = t^{-1} \\
		}
	}
	)
	\bigg[
	\underbrace{	\prod_{1 \leq a < b \leq k}				}_{
		\substack{
			(1) \,\, \text{ $a$ is a box in $(T_0 \sqcup T_1)^{\operatorname{add}}$}
			\\
			(2) \,\, \text{ $b$ is a box in $T_0 \sqcup T_1$}
		}
	}
	\cals{C}^{(i_a,i_b)}\left(
	\frac{z_b}{z_a}
	; q, t
	\right)
	\bigg]
	\notag 
\end{align}
\normalsize
\end{proof}

\begin{prop}\mbox{}
\label{prp314-1219-27jul}
	\footnotesize
	\begin{align}
		&(
		\widetilde{\Psi}_{\lambda}^{(q,t^{-1})}
		\comp 
		\bigg|_{
			\substack{
				q_1 = q, \\
				q_2 = q^{-1}t,\\
				q_3 = t^{-1} \\
			}
		}
		)
		\bigg[
		\underbrace{	\prod_{1 \leq a < b \leq k}				}_{
			\substack{
				(1) \,\, \text{ $a$ is a box in $T_0 \sqcup T_1$}
				\\
				(2) \,\, \text{ $b$ is a box in $T_2$}
			}
		}
		\cals{C}^{(i_a,i_b)}\left(
		\frac{z_b}{z_a}
		; q, t
		\right)
		\bigg]
		\\
		&=
		(
		\widetilde{\Psi}_{\lambda}^{(q,t^{-1})}
		\comp 
		\bigg|_{
			\substack{
				q_1 = q, \\
				q_2 = q^{-1}t,\\
				q_3 = t^{-1} \\
			}
		}
		)
		\bigg[
		\underbrace{	\prod_{1 \leq a < b \leq k}				}_{
			\substack{
				(1) \,\, \text{ $a$ is a box in $T_0 \sqcup T_1$}
				\\
				(2) \,\, \text{ $b$ is a box in $(T_0 \sqcup T_1)^{\operatorname{add}}$}
			}
		}
		\cals{C}^{(i_a,i_b)}\left(
		\frac{z_b}{z_a}
		; q, t
		\right)
		\bigg] 
		\notag 
	\end{align}
	\normalsize
\end{prop}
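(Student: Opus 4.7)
The plan is to mirror the proof of \textbf{Proposition \ref{prp313-1201-27jul}} with the roles of the first and second arguments interchanged. The central observation is that $T_2$ and $(T_0 \sqcup T_1)^{\operatorname{add}}$ occupy exactly the same positions, namely the sub-diagram $\operatorname{sh}(T_2)$, so there is a canonical position-preserving bijection $\sigma : T_2 \to (T_0 \sqcup T_1)^{\operatorname{add}}$. I will use this bijection to match factors in the two products pair by pair.

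First I would verify that the two index sets are in natural bijection. Because $\operatorname{sh}(T_2)$ sits in the upper-left of $\lambda$ (hyper numbers occupy an honest sub-diagram, since in a reverse SSYTT they are the largest entries and must therefore cluster in the top-left), every $a \in T_0 \sqcup T_1$ lies outside $\operatorname{sh}(T_2)$. A reading-order comparison then forces $a < b$ to coincide with $\operatorname{row}(a) < \operatorname{row}(b)$ whenever $b \in T_2$, and equally with $\operatorname{row}(a) < \operatorname{row}(\sigma(b))$ whenever $\sigma(b) \in (T_0 \sqcup T_1)^{\operatorname{add}}$ (the same-row case is excluded since super or ordinary entries lie strictly to the right of hyper-column entries in their row). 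Hence the sets of pairs on the two sides correspond via $(a, b) \leftrightarrow (a, \sigma(b))$, and only the $\operatorname{row}(a) < \operatorname{row}(b)$ branch of the definition \eqref{eqn36-1808-31jul} of $\cals{C}$ is ever activated.

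Next I would compare the integrands on corresponding pairs. For $a \in T_0 \sqcup T_1$ we have $i_a \in \{1,\dots,N+M\}$, whereas $i_b \in \{N+M+1,\dots,N+M+L\}$ is a hyper number and $i_{\sigma(b)} \in \{N+M+1,\dots,N+M+\ell_{\operatorname{sh}(T_2)}\}$ is one of the super labels introduced during the construction of $(T_0 \sqcup T_1)^{\operatorname{add}}$. In either case $i_a < i_b$ and $i_a < i_{\sigma(b)}$, so $\cals{D}^{(i_a,i_b)}$ and $\cals{D}^{(i_a,i_{\sigma(b)})}$ both collapse to the single $i<j$ branch, which is a fixed rational function of the ratio of variables and depends only on the inequality, not the particular labels. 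Multiplying by the shared $\Delta(q_3^{-1/2}\,\cdot\,)^{-1}$ factor then yields $\cals{C}^{(i_a,i_b)}(z_b/z_a) = \cals{C}^{(i_a,i_{\sigma(b)})}(z_{\sigma(b)}/z_a)$ as rational functions, modulo the fact that $z_b$ and $z_{\sigma(b)}$ are distinct formal symbols.

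Finally I would invoke the definition of the map $\widetilde{\Psi}_\lambda^{(q,t^{-1})}$, which substitutes the variable attached to each box by a value determined solely by the row and column of that box. Since $b$ and $\sigma(b)$ share the same position, their substituted values agree, and the equality of factors becomes genuine after applying $\widetilde{\Psi}_\lambda^{(q,t^{-1})} \circ |_{q_1=q,\ q_2=q^{-1}t,\ q_3=t^{-1}}$. Taking the product over all pairs gives the claimed identity. I expect no serious obstacle here: the only point requiring care is the row-comparison dichotomy, which collapses cleanly because $\operatorname{sh}(T_2)$ is a sub-diagram and forces the reading-order relation $a<b$ to be equivalent to strict row inequality on both sides.
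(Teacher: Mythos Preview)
Your argument is correct and follows the same idea as the paper: the paper's proof simply reads ``this proposition can be proved by using the same line of argument in \textbf{Proposition \ref{prp313-1201-27jul}}'', and the content of that argument is exactly what you wrote --- $T_2$ and $(T_0\sqcup T_1)^{\operatorname{add}}$ occupy identical positions, the entries in both are strictly larger than every entry of $T_0\sqcup T_1$, so $\cals{D}^{(i_a,i_b)}$ and $\cals{D}^{(i_a,i_{\sigma(b)})}$ both land in the $i<j$ branch, and the substitution map $\widetilde{\Psi}_\lambda^{(q,t^{-1})}$ sees only positions.

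One minor remark: your discussion of why only the $\operatorname{row}(a)<\operatorname{row}(b)$ branch of $\cals{C}$ is activated is correct but not strictly needed; since $b$ and $\sigma(b)$ share the same position, whatever row comparison holds for $(a,b)$ holds verbatim for $(a,\sigma(b))$, so the $\Delta$-factor contributes identically regardless of which branch fires.
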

\begin{proof}
This proposition can be proved by using the same line of argument in \textbf{Proposition \ref{prp313-1201-27jul}}. 
\end{proof}

\subsection{Proof of Lemma \ref{lemm45-1458-25jul}}

According to equation \eqref{eqn36-1108-15jul}, \textbf{Lemma \ref{lemm36-1211-27jul}}, and \textbf{Corollary \ref{coro312-1211-27jul}}, we obtain that 

\footnotesize
\begin{align}
	&(
	\widetilde{\Psi}_{\lambda}^{(q,t^{-1})}
	\comp 
	\bigg|_{
		\substack{
			q_1 = q, \\
			q_2 = q^{-1}t,\\
			q_3 = t^{-1} \\
		}
	}
	)
	\bigg[
	\prod_{1 \leq a < b \leq k}
	\cals{C}^{(i_a,i_b)}\left(
	\frac{z_b}{z_a}
	; q, t
	\right)
	\bigg]
	\\
	&=  
	\cals{A}_2(T;q,t) \times 
	\psi_{T^{\prime}_{1}}(t,q) \times \psi_{T_0}(q,t)
	\times
	\frac{
		H(\operatorname{sh}(T_1 \sqcup T_2),q,t^{-1})
	}{
		H(\operatorname{sh}(T_1 \sqcup T_2)^\prime,t^{-1},q)
	}
	\times
	\left(	\frac{t^{-1} - 1}{q-1}		\right)^{|T_1|}
	\notag
	\times 
	\prod_{\square \in \operatorname{sh}(T_2)}
	\frac{
		(t^{-1})^{\ell_{\operatorname{sh}(T_2)}(\square) + 1} - q^{a_{\operatorname{sh}(T_2)}(\square)}
	}{
		q^{a_{\operatorname{sh}(T_2)}(\square) + 1} - (t^{-1})^{\ell_{\operatorname{sh}(T_2)}(\square)}
	}
	\notag 
	\\
	&\hspace{0.3cm}\times
	\left(
	(
	\widetilde{\Psi}_{\lambda}^{(q,t^{-1})}
	\comp 
	\bigg|_{
		\substack{
			q_1 = q, \\
			q_2 = q^{-1}t,\\
			q_3 = t^{-1} \\
		}
	}
	)
	\bigg[
	\underbrace{	\prod_{1 \leq a < b \leq k}				}_{
		\substack{
			(1) \,\, \text{ $a$ is a box in $(T_0 \sqcup T_1)^{\operatorname{add}}$}
			\\
			(2) \,\, \text{ $b$ is a box in $T_0 \sqcup T_1$}
		}
	}
	\cals{C}^{(i_a,i_b)}\left(
	\frac{z_b}{z_a}
	; q, t
	\right)
	\bigg]
	\right)^{-1}
	\notag 
	\\
	&\hspace{0.3cm}\times
	\left(
	(
	\widetilde{\Psi}_{\lambda}^{(q,t^{-1})}
	\comp 
	\bigg|_{
		\substack{
			q_1 = q, \\
			q_2 = q^{-1}t,\\
			q_3 = t^{-1} \\
		}
	}
	)
	\bigg[
	\underbrace{	\prod_{1 \leq a < b \leq k}				}_{
		\substack{
			(1) \,\, \text{ $a$ is a box in $T_0 \sqcup T_1$}
			\\
			(2) \,\, \text{ $b$ is a box in $(T_0 \sqcup T_1)^{\operatorname{add}}$}
		}
	}
	\cals{C}^{(i_a,i_b)}\left(
	\frac{z_b}{z_a}
	; q, t
	\right)
	\bigg]
	\right)^{-1}.
	\notag
	\\
	&\hspace{0.3cm}\times
	(
	\widetilde{\Psi}_{\lambda}^{(q,t^{-1})}
	\comp 
	\bigg|_{
		\substack{
			q_1 = q, \\
			q_2 = q^{-1}t,\\
			q_3 = t^{-1} \\
		}
	}
	)
	\bigg[
	\underbrace{	\prod_{1 \leq a < b \leq k}				}_{
		\substack{
			(1) \,\, \text{ $a$ is a box in $T_2$}
			\\
			(2) \,\, \text{ $b$ is a box in $T_0 \sqcup T_1$}
		}
	}
	\cals{C}^{(i_a,i_b)}\left(
	\frac{z_b}{z_a}
	; q, t
	\right)
	\bigg]
	\notag 
	\\
	&\hspace{0.3cm}\times
	(
	\widetilde{\Psi}_{\lambda}^{(q,t^{-1})}
	\comp 
	\bigg|_{
		\substack{
			q_1 = q, \\
			q_2 = q^{-1}t,\\
			q_3 = t^{-1} \\
		}
	}
	)
	\bigg[
	\underbrace{	\prod_{1 \leq a < b \leq k}				}_{
		\substack{
			(1) \,\, \text{ $a$ is a box in $T_0 \sqcup T_1$}
			\\
			(2) \,\, \text{ $b$ is a box in $T_2$}
		}
	}
	\cals{C}^{(i_a,i_b)}\left(
	\frac{z_b}{z_a}
	; q, t
	\right)
	\bigg]
	\notag 
\end{align}
\normalsize
Applying \textbf{Propositions \ref{prp313-1201-27jul}} and \textbf{\ref{prp314-1219-27jul}}, we get that 

\footnotesize
\begin{align}
	&(
	\widetilde{\Psi}_{\lambda}^{(q,t^{-1})}
	\comp 
	\bigg|_{
		\substack{
			q_1 = q, \\
			q_2 = q^{-1}t,\\
			q_3 = t^{-1} \\
		}
	}
	)
	\bigg[
	\prod_{1 \leq a < b \leq k}
	\cals{C}^{(i_a,i_b)}\left(
	\frac{z_b}{z_a}
	; q, t
	\right)
	\bigg]
	\\
	&=  
	\cals{A}_2(T;q,t) \times 
	\psi_{T^{\prime}_{1}}(t,q) \times \psi_{T_0}(q,t)
	\times
	\frac{
		H(\operatorname{sh}(T_1 \sqcup T_2),q,t^{-1})
	}{
		H(\operatorname{sh}(T_1 \sqcup T_2)^\prime,t^{-1},q)
	}
	\times
	\left(	\frac{t^{-1} - 1}{q-1}		\right)^{|T_1|}
	\notag
	\times 
	\prod_{\square \in \operatorname{sh}(T_2)}
	\frac{
		(t^{-1})^{\ell_{\operatorname{sh}(T_2)}(\square) + 1} - q^{a_{\operatorname{sh}(T_2)}(\square)}
	}{
		q^{a_{\operatorname{sh}(T_2)}(\square) + 1} - (t^{-1})^{\ell_{\operatorname{sh}(T_2)}(\square)}
	}. 
	\notag 
\end{align}
\normalsize
Thus, we have proved the \textbf{Lemma \ref{lemm45-1458-25jul}}. 

\section{Partially Symmetricity of the Quantum Corner Polynomial}
\label{sec5-0016-7aug}

In this section, we will prove that the quantum corner polynomial, as defined in \textbf{Definition \ref{defn314-1234-25jul}}, is a partially symmetric polynomial. This is the another main result of this paper. 
We begin by recalling the definition of a partially symmetric polynomial.

\begin{dfn}[Definition 6.10 of \cite{CSw2024}]
Let $g(x_1,\dots,x_n) := \sum_{(i_1,\dots,i_n)}c_{i_1,\dots,i_n}x_1^{i_1}\cdots x_n^{i_n}$ be a formal power series and let $I_1,\dots,I_\ell$ be a collection of disjoint subsets of $\{1,\dots,n\}$ such that 
\begin{align}
I_1 \cup \cdots \cup I_\ell = \{1,\dots,n\}
\end{align}
We say that $g(x_1,\dots,x_n)$ is a \textbf{partially symmetric polynomial} with respect to the index sets $I_1,\dots,I_\ell$ if for each $i \in \{1,\dots,\ell\}$, the formal power series $g(x_1,\dots,x_n)$ is symmetric with respect to the variables $\{x_j\}_{j \in I_i}$. That is, for any permutation $\sigma$ of $\{1,\dots,n\}$ such that for each $i \in \{1,\dots,\ell\}$ $\sigma(I_i) = I_i$, we have 
\begin{align}
g(x_1,\dots,x_n) = g(x_{\sigma(1)},\dots,x_{\sigma(n)}) 
\end{align}
\end{dfn}

We now state the main theorem of this section. 

\begin{thm}
\label{thm42-main-1526}
The quantum corner polynomials
\begin{align}
\quantumcorner_\lambda(x_1,\dots,x_N;x_{N+1},\dots,x_{N+M};x_{N+M+1},\dots,x_{N+M+L};
q,t)
\end{align}
are partially symmetric with respect to the index sets $I_1 = \{1,\dots,N\}$, $I_2 = \{N + 1,\dots, N + M\}$, $I_3 = \{N+M+1,\dots,N+M+L\}$. 
\end{thm}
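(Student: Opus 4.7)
The plan is to establish the partial symmetry one block of variables at a time by fixing two of the three sub-tableaux $T_0, T_1, T_2$ of $T \in \operatorname{RSSYTT}(N,M,L;\lambda)$ and analyzing the internal sum over fillings of the third. In each case the internal sum should collapse to a (skew) Macdonald-type polynomial whose symmetry in the appropriate block of variables is already classical.

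For the ordinary variables $\{x_i\}_{i \in I_1}$, I would fix $T_1$ and $T_2$, so that $T_0$ fills the skew shape $\lambda/\operatorname{sh}(T_1 \sqcup T_2)$. Inspecting the six factors comprising $R_T(q,t)$, only $\psi_{T_0}(q,t)$ depends on the ordinary filling: $\mathcal{A}_2(T;q,t)$ and the hook product over $\operatorname{sh}(T_2)$ depend only on $T_2$; $\psi_{T_1'}(t,q)$ depends only on $T_1$; the $H$-ratio depends only on $\operatorname{sh}(T_1 \sqcup T_2)$; and the power of $(q^{-1}t-1)/(t^{-1}-1)$ depends only on $|T_2|$. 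The inner sum $\sum_{T_0} \psi_{T_0}(q,t)\, x_{T_0}$ is therefore precisely the combinatorial formula for the skew Macdonald polynomial $P_{\lambda/\operatorname{sh}(T_1 \sqcup T_2)}(x_1,\ldots,x_N;q,t)$, whose symmetry in $x_1,\ldots,x_N$ is classical.

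For the super variables $\{x_i\}_{i \in I_2}$, I would fix both $T_0$ and $T_2$. Writing $\nu := \operatorname{sh}(T_1 \sqcup T_2) = \lambda \setminus \operatorname{sh}(T_0)$ (which is determined by $T_0$), the tableau $T_1$ then fills the skew shape $\nu/\operatorname{sh}(T_2)$. Again inspecting $R_T(q,t)$, the only factor depending on the $T_1$-filling is $\psi_{T_1'}(t,q)$; crucially, the $H$-ratio depends on $T_1$ only through $\nu$, which is already fixed. Transposing the tableau (which leaves the monomial $y_{T_1}$ invariant), the inner sum becomes the combinatorial formula $\sum_{\tilde T} \psi_{\tilde T}(t,q)\, y_{\tilde T} = P_{\nu'/\operatorname{sh}(T_2)'}(y_1,\ldots,y_M;t,q)$, a skew Macdonald polynomial in the super variables with swapped parameters, whose symmetry in $y_1,\ldots,y_M$ is classical.

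The main obstacle is the symmetry in the hyper variables $\{x_i\}_{i \in I_3}$. Fixing $T_0$ and $T_1$ and setting $\mu := \operatorname{sh}(T_0 \sqcup T_1)$, one must show that
\[
\sum_{T_2} \mathcal{A}_2(T;q,t) \cdot \frac{H(\operatorname{sh}(T_1 \sqcup T_2),q,t^{-1})}{H(\operatorname{sh}(T_1 \sqcup T_2)',t^{-1},q)} \cdot \prod_{\square \in \operatorname{sh}(T_2)} \frac{(t^{-1})^{\ell_{\operatorname{sh}(T_2)}(\square)+1} - q^{a_{\operatorname{sh}(T_2)}(\square)}}{q^{a_{\operatorname{sh}(T_2)}(\square)+1} - (t^{-1})^{\ell_{\operatorname{sh}(T_2)}(\square)}} \cdot w_{T_2}
\]
is symmetric in $w_1,\ldots,w_L$, where $T_2$ ranges over reverse semistandard fillings of a partition shape contained in $\lambda/\mu$ by hyper numbers $\{N+M+1,\ldots,N+M+L\}$. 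My strategy is to exploit the alternative expression for $\mathcal{A}_2(T;q,t)$ given in Lemma 3.11 to identify this weighted sum, up to a $T_2$-independent normalization, with a skew Macdonald polynomial $P_{\operatorname{sh}(T_2)}(w_1,\ldots,w_L;Q,T)$ in new parameters $(Q,T)$ dictated by the type-$2$ OPE factor $\gamma_2(z)$; the natural guess, reflecting $q_2 = q^{-1}t$ and $q_3^{-1} = t$, is $Q = q^{-1}t$ and $T = t$. The key technical hurdle is the combinatorial matching: Lemma 3.11 expresses $\mathcal{A}_2$ as a quadruple product indexed by $(\zeta,\tau,d,p)$, which must be reassembled, together with the $H$-ratio and the hook product over $\operatorname{sh}(T_2)$, into the product form $\prod_d \psi^{(Q,T)}_{T_2^{(d)}/T_2^{(d+1)}}$ of the Macdonald combinatorial formula. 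Although this is a rational identity in principle, the separate matching of the \textbf{same row, same entry}, \textbf{same row, different entry}, \textbf{different row, same entry}, and \textbf{off-diagonal} contributions is combinatorially intricate and constitutes the bulk of the work.
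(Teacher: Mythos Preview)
Your arguments for the ordinary block $I_1$ and the super block $I_2$ are correct: once $T_1, T_2$ (respectively $T_0, T_2$) are frozen, only the factor $\psi_{T_0}(q,t)$ (respectively $\psi_{T_1'}(t,q)$) depends on the remaining filling, and the inner sum is the combinatorial formula for a skew Macdonald polynomial, whose symmetry is classical.

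For the hyper block $I_3$, however, there is a genuine gap. First a minor point: once $T_0$ and $T_1$ are fixed \emph{as fillings}, the shape $\operatorname{sh}(T_2)$ is completely determined, so the $H$-ratio and the hook product over $\operatorname{sh}(T_2)$ are constants and can be pulled out. The real question is therefore whether
\[
\sum_{T_2}\mathcal{A}_2(T;q,t)\,w_{T_2}
\]
over hyper fillings of that fixed partition shape is symmetric in $w_1,\dots,w_L$. You conjecture this equals, up to normalization, a Macdonald polynomial in parameters $(q^{-1}t,t)$, but you do not prove it, and the formal structure does not obviously support it: the weight $\mathcal{A}_2$ is built from the branching coefficients $\mathfrak{C}_{\lambda/\mu}$, which are \emph{not} equal to $\psi_{\lambda/\mu}$ in any parameters (compare the second numerator and denominator factors: $\mathfrak{C}$ uses $\mu_{j+1}$ where $\psi$ uses $\lambda_{j+1}$), and is further multiplied by two nontrivial products. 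Reassembling Lemma~3.17 into a product of $\psi^{(Q,T)}$-weights would be a new identity, not mere bookkeeping, and you have given no argument for it.

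The paper takes a completely different route that bypasses this combinatorial obstacle. It invokes Theorem~\ref{thm43-1458-25jul} to realize $\mathfrak{QC}_\lambda$ as a specialized correlator of the currents $\widetilde T_1$, rewrites this correlator as a sum over compositions $(a_1,\dots,a_{N+M+L})$ whose coefficient is a star product
\[
\epsilon^{(3)}_{a_1}\star\cdots\star\epsilon^{(3)}_{a_N}\star\epsilon^{(1)}_{a_{N+1}}\star\cdots\star\epsilon^{(1)}_{a_{N+M}}\star\epsilon^{(2)}_{a_{N+M+1}}\star\cdots\star\epsilon^{(2)}_{a_{N+M+L}},
\]
and then appeals to the commutativity $\epsilon^{(c)}_n\star\epsilon^{(c)}_m=\epsilon^{(c)}_m\star\epsilon^{(c)}_n$ (Feigin--Hashizume--Hoshino--Shiraishi--Yanagida) to conclude that swapping $a_i\leftrightarrow a_j$ within any single color block leaves the coefficient unchanged. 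This handles all three blocks \emph{uniformly}, and the hyper case costs exactly as much as the other two. The trade-off is clear: the paper's proof imports heavy machinery (the full correspondence theorem and an external commutativity result), whereas your approach would be self-contained and combinatorial if the $I_3$ matching could be established---but as written it is not.
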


To prove this theorem, we will need to use a key fact related to star product $\star$. We will therefore review the necessary facts about the star product below. 

\begin{dfn}[\cite{FHHSY-comm} \cite{FHSSY-ker}]
For each $j \in \bb{Z}^{\geq 0}$, let $\cals{R}^j$ be the set of symmetric rational functions of $j$ variables with coefficients in the field $\bb{Q}(q,t)$. We define the \textbf{star product} as a map $\star : \cals{R}^m \otimes \cals{R}^n \rightarrow \cals{R}^{m+n}$ that sends each $f \in \cals{R}^m$ and $g \in \cals{R}^n$ to 

\footnotesize
\begin{align}
&(f \star g)(x_1,\dots,x_{m+n})
\\
&:= 
\underset{x_1,\dots,x_{m+n}}{\operatorname{Sym}}
\Bigg[
f(x_1,\dots,x_m)
g(x_{m+1},\dots,x_{m+n})
\prod_{
\substack{
1 \leq \alpha \leq m
\\
m+1 \leq \beta \leq m+n
}
}
\frac{
	\left(1 - t	\frac{z_\beta}{z_\alpha}		\right)
	\left(1 - q^{-1}		\frac{z_\beta}{z_\alpha}					\right)
	\left(1 - qt^{-1}			\frac{z_\beta}{z_\alpha}				\right)
}{
	\left(1 - 	\frac{z_\beta}{z_\alpha}				\right)^3
}
\Bigg]
\in 
\cals{R}^{m+n}.
\notag 
\end{align}
\normalsize 
Here $\underset{x_1,\dots,x_{j}}{\operatorname{Sym}}$ means 
\begin{align}
\underset{x_1,\dots,x_{j}}{\operatorname{Sym}}\Bigg[
h(x_1,\dots,x_j)
\Bigg]
:= 
\frac{1}{j!}\sum_{\sigma \in S_j}
h(x_{\sigma(1)},\dots,x_{\sigma(j)}),
\end{align}
where $S_j$ is the set of all bijection from $\{1,\dots,j\}$ to $\{1,\dots,j\}$. 
\end{dfn}

\begin{dfn}
For each $c \in \{1,2,3\}$ and 
$n \in \bb{Z}^{\geq 0}$, we define 
\begin{align}
	\epsilon^{(c)}_n(z_1,\dots,z_n) := 
	\begin{cases}
		\displaystyle 
		\prod_{
			1 \leq i < j \leq n 
		}
		\frac{
			\left(1 - qt^{-1}\frac{z_j}{z_i}\right)
			\left(1 - q^{-1}t\frac{z_j}{z_i}\right)
		}{
			\left(1 - \frac{z_j}{z_i}\right)^2
		}
		&\text{ if } c = 2
		\\
		\displaystyle 
		\prod_{
			1 \leq i < j \leq n 
		}
		\frac{
			\left(1 - q\frac{z_j}{z_i}\right)
			\left(1 - q^{-1}\frac{z_j}{z_i}\right)
		}{
			\left(1 - \frac{z_j}{z_i}\right)^2
		}
		&\text{ if } c = 1
		\\
		\displaystyle 
		\prod_{
			1 \leq i < j \leq n 
		}
		\frac{
			\left(1 - t\frac{z_j}{z_i}\right)
			\left(1 - t^{-1}\frac{z_j}{z_i}\right)
		}{
			\left(1 - \frac{z_j}{z_i}\right)^2
		}
		&\text{ if } c = 3 
	\end{cases}
\end{align}
\end{dfn}

\begin{rem}
It is clear that for each $c \in \{1,2,3\}$, $\epsilon^{(c)}_n \in \cals{R}^n$. 
\end{rem}

\begin{prop}
\label{prop46-1604-5aug}
For each $c \in \{1,2,3\}$ and for each $n, m \in \bb{Z}^{\geq 0}$, we have $\epsilon^{(c)}_{n} \star \epsilon^{(c)}_{m} = \epsilon^{(c)}_{m} \star \epsilon^{(c)}_{n}$. 
\end{prop}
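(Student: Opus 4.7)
The plan is to interpret $\epsilon^{(c)}_n$ as a vacuum correlator of vertex operators on the Fock space $\mathcal{H}^{(c)}$ and then use the shuffle-algebra interpretation of the star product to reduce commutativity to the symmetry of the correlator. First, I would compute the OPE of $\eta^{(c)}(z;p)$ with itself using the Heisenberg commutation relation \eqref{eqn221-1222-6aug}. A direct Wick-type calculation yields
\begin{equation*}
\eta^{(c)}(z;p)\,\eta^{(c)}(w;p) = G_c(w/z)\,\normord{\eta^{(c)}(z;p)\,\eta^{(c)}(w;p)},
\end{equation*}
where $G_c(u):=(1-a_cu)(1-a_c^{-1}u)/(1-u)^2$ with $a_1=q,\,a_2=qt^{-1},\,a_3=t$ is exactly the factor appearing in the definition of $\epsilon_n^{(c)}$. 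The symmetry $G_c(u)=G_c(u^{-1})$ is immediate, so iterating the OPE gives
\begin{equation*}
\langle 0|\,\eta^{(c)}(z_1;p)\cdots \eta^{(c)}(z_n;p)\,|0\rangle = \epsilon_n^{(c)}(z_1,\dots,z_n),
\end{equation*}
and this correlator is manifestly symmetric in all $n$ arguments.

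Next, I would identify the star-product kernel with the shuffle-algebra structure function for $U_{q,t}(\widehat{\widehat{\fraks{gl}}}_1)$. Using the identity $q_1q_2q_3=1$ (which holds since $q_1=q,\,q_2=q^{-1}t,\,q_3=t^{-1}$), one has
\begin{equation*}
F(u) = \frac{(1-q_1^{-1}u)(1-q_2^{-1}u)(1-q_3^{-1}u)}{(1-u)^3},
\end{equation*}
which is (up to conventional rescaling) the standard Feigin-Odesskii shuffle kernel $\omega(x_\beta/x_\alpha)$. Consequently, the star product on $\bigoplus_n\mathcal{R}^n$ is the shuffle multiplication dual to the coproduct \eqref{copro1}--\eqref{copro4}.

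Via the shuffle embedding, $\epsilon_n^{(c)}$ lies in the image of the ``horizontal Heisenberg'' subalgebra generated by the modes of $K^\pm(z)$ under $\rho^{(c)}_{H,u}$; this subalgebra is abelian by relation \eqref{2.3main}, yielding $\epsilon_n^{(c)}\star\epsilon_m^{(c)}=\epsilon_m^{(c)}\star\epsilon_n^{(c)}$. The key algebraic fact underpinning this step is the identity $F(u)F(u^{-1})=G_1(u)G_2(u)G_3(u)$, which controls how the shuffle cross-terms between the two groups of variables reassemble into symmetric $G$-factors.

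The main obstacle is the precise identification of $\epsilon_n^{(c)}$ with a specific Heisenberg-generator expression inside the shuffle algebra; this is where one pays the price for the indirect vertex-operator detour. To avoid it, I would work directly: write
\begin{equation*}
(\epsilon_n^{(c)}\star\epsilon_m^{(c)}-\epsilon_m^{(c)}\star\epsilon_n^{(c)})(x_1,\dots,x_{n+m})=\binom{n+m}{m}^{-1}\sum_{|P|=m}\epsilon_m^{(c)}(x_P)\epsilon_n^{(c)}(x_{P^c})\Bigl[\prod_{\alpha\in P,\,\beta\in P^c}F(x_\beta/x_\alpha)-\prod_{\alpha\in P,\,\beta\in P^c}F(x_\alpha/x_\beta)\Bigr],
\end{equation*}
and use $F(u)F(u^{-1})=G_1(u)G_2(u)G_3(u)$ together with $G_c(u)=G_c(u^{-1})$ to pair up conjugate $F$-factors into symmetric $G_c$-factors. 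After this reorganization the bracket collapses into a manifestly $P\leftrightarrow P^c$-symmetric sum, which vanishes in the antisymmetrized combination. The combinatorial bookkeeping of this pairing over all subsets $P$ is the delicate part, but once carried out it directly yields the claimed commutativity.
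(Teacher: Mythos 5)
The paper does not prove this proposition at all: the proof is a one-line citation to Theorem 1.5 of \cite{FHHSY-comm}. Your attempt to give a self-contained argument is a reasonable ambition, but both halves of your proposal have problems, and the second (the one you actually commit to) has a genuine gap.

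On the vertex-operator detour: the claimed OPE
$\eta^{(c)}(z;p)\,\eta^{(c)}(w;p) = G_c(w/z)\,\normord{\eta^{(c)}(z;p)\,\eta^{(c)}(w;p)}$
is not what the paper's Heisenberg relation \eqref{eqn221-1222-6aug} produces. Writing $\eta^{(c)}$ as in the definition before \eqref{eqn231-1209-6aug}, a Wick contraction gives the factor
\begin{equation*}
\exp\!\Bigl(-\sum_{n\ge 1}\frac{(q_{c'}^{\,n}-1)(q_{c''}^{\,n}-1)}{n}\Bigl(\frac{w}{z}\Bigr)^{\!n}\Bigr)
=\frac{(1-q_c^{-1}w/z)(1-w/z)}{(1-q_{c'}w/z)(1-q_{c''}w/z)},
\end{equation*}
(here $c',c''$ denote the other two indices), which is not $G_c(w/z)$. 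In particular $\langle 0|\eta^{(c)}(z_1)\cdots\eta^{(c)}(z_n)|0\rangle\neq\epsilon^{(c)}_n(z_1,\dots,z_n)$ with the paper's normalization, so the free-field route would need a different vertex operator. You acknowledge there is an obstacle here and switch to a direct computation, so this is not fatal by itself.

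The fatal problem is in the direct computation. Your decomposition of $\epsilon^{(c)}_n\star\epsilon^{(c)}_m-\epsilon^{(c)}_m\star\epsilon^{(c)}_n$ as a sum over subsets $P$ of size $m$, with the bracketed difference of two $F$-products, is correct. The identities $F(u)F(u^{-1})=G_1(u)G_2(u)G_3(u)$ and $G_c(u)=G_c(u^{-1})$ are also correct. But the proposed cancellation mechanism does not follow from them. First, the conjugate factors $F(x_\alpha/x_\beta)$ and $F(x_\beta/x_\alpha)$ for a fixed pair $(\alpha,\beta)$ live in different terms of the bracketed difference, not inside a single product, so the identity $F(u)F(u^{-1})=\prod_c G_c(u)$ controls the \emph{product} of the two bracketed terms and gives no hold on their \emph{difference}. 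Second, and more seriously, the asserted ``manifestly $P\leftrightarrow P^c$-symmetric sum'' cannot exist when $n\neq m$: the sum runs only over $P$ with $|P|=m$, and $P^c$ has size $n$, so $P^c$ is simply not an index of the sum and there is nothing to pair $P$ against. The $P\leftrightarrow P^c$ argument does close up the sum when $n=m$, but in that case $\epsilon^{(c)}_n\star\epsilon^{(c)}_m=\epsilon^{(c)}_m\star\epsilon^{(c)}_n$ is the tautology $\epsilon^{(c)}_n\star\epsilon^{(c)}_n=\epsilon^{(c)}_n\star\epsilon^{(c)}_n$. Every nontrivial instance of the proposition has $n\neq m$, and those are exactly the cases your reorganization does not reach. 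The actual proof in \cite{FHHSY-comm} uses contour-deformation and kernel-function technology and is substantially deeper than a subset-pairing identity; if you want a self-contained argument you will need a different idea, most likely exhibiting a common factorization of $\epsilon^{(c)}_n\star\epsilon^{(c)}_m$ and $\epsilon^{(c)}_m\star\epsilon^{(c)}_n$ by pole/residue analysis rather than term-by-term pairing over $P$.
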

\begin{proof}
See Theorem 1.5 of \cite{FHHSY-comm}. 
\end{proof}

Now we are ready to prove \textbf{Theorem \ref{thm42-main-1526}}. 

\begin{proof}[Proof of \textbf{Theorem \ref{thm42-main-1526}}]
First, for convenience, we introduce a shorthand notation: for each $i \in \{1,\dots,N+M+L\}$, let 
\begin{align}
	y_i := 
	\frac{
		q_{c_i}^{\frac{1}{2}} - q_{c_i}^{-\frac{1}{2}}
	}{
		q_{3}^{\frac{1}{2}} - q_{3}^{-\frac{1}{2}}
	}. 
\end{align}
Note that 

\footnotesize
\begin{align}
	&
	\bigg|_{
		\substack{
			q_1 = q, \\
			q_2 = q^{-1}t,\\
			q_3 = t^{-1} \\
		}
	}
	\left[
	\prod_{1 \leq i < j \leq k}f^{\vec{c}}_{11}\left(\frac{z_j}{z_i} \right)
	\times
	\langle 0 |\widetilde{T}^{\vec{c},\vec{u}}_{1}(z_1 )\cdots \widetilde{T}^{\vec{c},\vec{u}}_{1}(z_k )|0\rangle
	\right]
	\label{eqn46-1544-5aug}
	\\
	&= 
	\sum_{i_1 = 1}^{N+M+L}
	\cdots
	\sum_{i_k = 1}^{N+M+L}
	\bigg[
	\bigg|_{
		\substack{
			q_1 = q, \\
			q_2 = q^{-1}t,\\
			q_3 = t^{-1} \\
		}
	}
	y_{i_1}\cdots y_{i_k}
	u_{i_1}\cdots u_{i_k}
	\times
	\prod_{1 \leq a < b \leq k}
	\bigg|_{
		\substack{
			q_1 = q, \\
			q_2 = q^{-1}t,\\
			q_3 = t^{-1} \\
		}
	}
	\cals{D}^{(i_a,i_b)}\left(
	\frac{z_b}{z_a}
	; q, t
	\right)
	\bigg]
	\notag 
	\\
	&=
	\prod_{1 \leq a < b \leq k}
	\frac{
		\left(1 - \frac{z_b}{z_a}\right)^2
	}{
		\left(1 - t\frac{z_b}{z_a}\right)
		\left(1 - t^{-1}\frac{z_b}{z_a}\right)
	}
	\times 
	\sum_{i_1 = 1}^{N+M+L}
	\cdots
	\sum_{i_k = 1}^{N+M+L}
	\bigg[
	\bigg|_{
		\substack{
			q_1 = q, \\
			q_2 = q^{-1}t,\\
			q_3 = t^{-1} \\
		}
	}
	y_{i_1}\cdots y_{i_k}
	u_{i_1}\cdots u_{i_k}
	\times
	\prod_{1 \leq a < b \leq k}
	\gamma^{(i_a,i_b)}(\frac{z_b}{z_a},q,t)
	\bigg]
	\notag 
\end{align}
\normalsize
where 
\begin{align}
	\gamma^{(i,j)}(z,q,t)
	:= 
	\begin{cases}
		\displaystyle 
		\frac{
			\left(1 - tz\right)
			\left(1 - q^{-1}		z				\right)
			\left(1 - qt^{-1}			z			\right)
		}{
			\left(1 - 	z			\right)^3
		}
		\hspace{0.3cm}
		&\text{ if } i < j
		\\
		\displaystyle 
		\frac{
			\left(1 - qt^{-1}		z		\right)
			\left(1 - q^{-1}t		z		\right)
		}{
			\left(1 - z\right)^2
		}
		\hspace{0.3cm}
		&\text{ if } i = j = \text{ hyper number }
		\\
		\displaystyle 
		\frac{
			\left(1 - q^{-1}			z			\right)
			\left(1 - q			z	\right)
		}{
			\left(1 - z\right)^2
		}
		\hspace{0.3cm}
		&\text{ if } i = j = \text{ super number }
		\\
		\displaystyle 
		\frac{
			\left(1 - tz\right)
			\left(1 - t^{-1}z\right)
		}{
			\left(1 - z\right)^2
		}
		\hspace{0.3cm}
		&\text{ if } i = j = \text{ ordinary number }
		\\
		\displaystyle
		\frac{
			\left(1 - t^{-1}z\right)
			\left(1 - q		z		\right)
			\left(1 - q^{-1}t		z	\right)
		}{
			\left(1 - z\right)^3
		}
		\hspace{0.3cm}
		&\text{ if } i > j 
	\end{cases}
\end{align}

Define
\begin{align}
	\cals{J}(a_1,\dots,a_n) = 
	\left\{
	(I_1,\dots,I_n)
	\;\middle\vert\;
	\begin{array}{@{}l@{}}
		(1) \,\, I_1 \sqcup \cdots \sqcup I_n = 
		\{1,\dots,m\}
		\\
		(2) \,\, |I_k| = a_k \,\, ^\forall k = 1,\dots,n
	\end{array}
	\right\}. 
\end{align}
Then, one can show that 
\vspace{0.2cm}

\footnotesize
\begin{align}
	&\sum_{i_1 = 1}^{N+M+L}
	\cdots
	\sum_{i_k = 1}^{N+M+L}
	\bigg[
	\bigg|_{
		\substack{
			q_1 = q, \\
			q_2 = q^{-1}t,\\
			q_3 = t^{-1} \\
		}
	}
	y_{i_1}\cdots y_{i_k}
	u_{i_1}\cdots u_{i_k}
	\times
	\prod_{1 \leq a < b \leq k}
	\gamma^{(i_a,i_b)}(\frac{z_b}{z_a},q,t)
	\bigg]
	\label{eqn48-1544-5aug}
	\\
	&= 
	\underbrace{				
		\sum_{
			(a_1,\dots,a_{N+M+L}) \in 
			\left(
			\bb{Z}^{\geq 0}
			\right)^{N+M+L}
		}
	}_{
		a_1 + \dots + a_{N+M+L} = k 
	}
	\biggl\{
	\bigg|_{
		\substack{
			q_1 = q, \\
			q_2 = q^{-1}t,\\
			q_3 = t^{-1} \\
		}
	}
	\bigg(
	\frac{q_1^{\frac{1}{2}} - q_1^{-\frac{1}{2}}}{
		q_3^{\frac{1}{2}} - q_3^{-\frac{1}{2}}
	}
	\bigg)^{a_{N+1} + \cdots + a_{N+M}}
	\bigg(
	\frac{q_2^{\frac{1}{2}} - q_2^{-\frac{1}{2}}}{
		q_3^{\frac{1}{2}} - q_3^{-\frac{1}{2}}
	}
	\bigg)^{a_{N+M+1} + \cdots + a_{N+M+L}}
	\times
	u_1^{a_1}\cdots u_{N+M+L}^{a_{N+M+L}}
	\times 
	\notag 
	\\
	&\times
	\sum_{
		(I_1,\dots,I_{N+M+L}) \in 
		\cals{J}(a_1,\dots,a_{N+M+L})
	}
	\Big[
	\prod_{k = 1}^{N}\epsilon^{(3)}_{a_k}(z_{I_k} ; q,t)
	\times
	\prod_{k = N + 1}^{N + M}\epsilon^{(1)}_{a_k}(z_{I_k} ; q,t)
	\times
	\prod_{k = N+M+1}^{N+M+L}\epsilon^{(2)}_{a_k}(z_{I_k} ; q,t)
	\times 
	\notag 
	\\
	&\times 
	\prod_{1 \leq i < j \leq N+M+L}
	\prod_{\alpha \in I_i, \beta \in I_j}
	\frac{
		\left(1 - t	\frac{z_\beta}{z_\alpha}		\right)
		\left(1 - q^{-1}		\frac{z_\beta}{z_\alpha}					\right)
		\left(1 - qt^{-1}			\frac{z_\beta}{z_\alpha}				\right)
	}{
		\left(1 - 	\frac{z_\beta}{z_\alpha}				\right)^3
	}
	\Big]
	\biggr\}
	\notag 
	\\
	&= 
	\underbrace{				
		\sum_{
			(a_1,\dots,a_{N+M+L}) \in 
			\left(
			\bb{Z}^{\geq 0}
			\right)^{N+M+L}
		}
	}_{
		a_1 + \dots + a_{N+M+L} = k 
	}
	\biggl\{
	\bigg|_{
		\substack{
			q_1 = q, \\
			q_2 = q^{-1}t,\\
			q_3 = t^{-1} \\
		}
	}
	\bigg(
	\frac{q_1^{\frac{1}{2}} - q_1^{-\frac{1}{2}}}{
		q_3^{\frac{1}{2}} - q_3^{-\frac{1}{2}}
	}
	\bigg)^{a_{N+1} + \cdots + a_{N+M}}
	\bigg(
	\frac{q_2^{\frac{1}{2}} - q_2^{-\frac{1}{2}}}{
		q_3^{\frac{1}{2}} - q_3^{-\frac{1}{2}}
	}
	\bigg)^{a_{N+M+1} + \cdots + a_{N+M+L}}
	\times
	u_1^{a_1}\cdots u_{N+M+L}^{a_{N+M+L}}
	\times 
	\notag 
	\\
	&\times
	\frac{
		\left(	a_1 + \cdots + a_{N+M+L}	 \right) !
	}{
		a_1! \cdots a_{N+M+L} !
	} 
	\times 
	\Bigg[
	\epsilon^{(3)}_{a_1}(z_{I_1} ; q,t) \star \cdots \star \epsilon^{(3)}_{a_N}(z_{I_N} ; q,t)
	\star \epsilon^{(1)}_{a_{N+1}}(z_{I_{N+1}} ; q,t) \star \cdots \star \epsilon^{(1)}_{a_{N+M}}(z_{I_{N+M}} ; q,t)
	\notag 
	\\
	&\star \epsilon^{(2)}_{a_{N+M+1}}(z_{I_{N+M+1}} ; q,t) \star \cdots \star \epsilon^{(2)}_{a_{N+M+L}}(z_{I_{N+M+L}} ; q,t)
	\Bigg]
	\biggr\}
	\notag 
\end{align}
\normalsize
From equations \eqref{eqn46-1544-5aug} and \eqref{eqn48-1544-5aug}, 
we obtain that for $(a_1,\dots,a_{N+M+L}) \in 
\left(
\bb{Z}^{\geq 0}
\right)^{N+M+L}$ such that $a_1 + \dots + a_{N+M+L} = k$, the coefficient in front of $u_1^{a_1}\cdots u_{N+M+L}^{a_{N+M+L}}$ in
 
\footnotesize
\begin{align}
	\lim_{\xi \rightarrow t^{-1}}\,\,
	(
	\dualmap
	\comp 
	\bigg|_{
		\substack{
			q_1 = q, \\
			q_2 = q^{-1}t,\\
			q_3 = t^{-1} \\
		}
	}
	)
	\left(
	\cals{N}_{\lambda}(z_1,\dots,z_k )
	\times
	\prod_{1 \leq i < j \leq k}f^{\vec{c}}_{11}\left(\frac{z_j}{z_i} \right)
	\times
	\langle 0 |\widetilde{T}^{\vec{c},\vec{u}}_{1}(z_1 )\cdots \widetilde{T}^{\vec{c},\vec{u}}_{1}(z_k )|0\rangle
	\right)
\end{align}
\normalsize
is equal to 

\footnotesize
\begin{align}
	&
	\bigg|_{
		\substack{
			q_1 = q, \\
			q_2 = q^{-1}t,\\
			q_3 = t^{-1} \\
		}
	}
	\bigg(
	\frac{q_1^{\frac{1}{2}} - q_1^{-\frac{1}{2}}}{
		q_3^{\frac{1}{2}} - q_3^{-\frac{1}{2}}
	}
	\bigg)^{a_{N+1} + \cdots + a_{N+M}}
	\bigg(
	\frac{q_2^{\frac{1}{2}} - q_2^{-\frac{1}{2}}}{
		q_3^{\frac{1}{2}} - q_3^{-\frac{1}{2}}
	}
	\bigg)^{a_{N+M+1} + \cdots + a_{N+M+L}}
	\times 
	\frac{
		\left(	a_1 + \cdots + a_{N+M+L}	 \right) !
	}{
		a_1! \cdots a_{N+M+L} !
	} 
	\times 
	\label{eqn414-1208-5aug}
	\\
	&\times
	(\lim_{\xi \rightarrow t^{-1}} \comp \dualmap)
	\biggl\{
	\bigg|_{
		\substack{
			q_1 = q, \\
			q_2 = q^{-1}t,\\
			q_3 = t^{-1} \\
		}
	}
	\cals{N}_{\lambda}(z_1,\dots,z_k )
	\times 
	\prod_{1 \leq a < b \leq k}
	\frac{
		\left(1 - \frac{z_b}{z_a}\right)^2
	}{
		\left(1 - t\frac{z_b}{z_a}\right)
		\left(1 - t^{-1}\frac{z_b}{z_a}\right)
	}
	\times 
	\notag 
	\\
	&\times\Bigg[
	\epsilon^{(3)}_{a_1}(z_{I_1} ; q,t) \star \cdots \star \epsilon^{(3)}_{a_N}(z_{I_N} ; q,t)
	\star \epsilon^{(1)}_{a_{N+1}}(z_{I_{N+1}} ; q,t) \star \cdots \star \epsilon^{(1)}_{a_{N+M}}(z_{I_{N+M}} ; q,t)
	\notag 
	\\
	&\star \epsilon^{(2)}_{a_{N+M+1}}(z_{I_{N+M+1}} ; q,t) \star \cdots \star \epsilon^{(2)}_{a_{N+M+L}}(z_{I_{N+M+L}} ; q,t)
	\Bigg]
	\biggr\}
	\notag 
\end{align}
\normalsize

By utilizing \textbf{Theorem \ref{thm43-1458-25jul}} and Equation \eqref{eqn414-1208-5aug}, one can see that to prove \textbf{Theorem \ref{thm42-main-1526}}, it is sufficient to show that for each $c \in \{1,2,3\}$ and for each non-negative integers $n, m \in \bb{Z}^{\geq 0}$, $\epsilon^{(c)}_{n} \star \epsilon^{(c)}_{m} = \epsilon^{(c)}_{m} \star \epsilon^{(c)}_{n}$. 

However, this is precisely the statement of \textbf{Proposition \ref{prop46-1604-5aug}}. Therefore, we have proved \textbf{Theorem \ref{thm42-main-1526}}. 
\end{proof}

\appendix

\section{Proof of Lemma \ref{prp317-2237-16aug}}
\label{app-A-2246-16aug}

\begin{prop}\mbox{}
\label{prpa1-17aug-sun-1329}
\footnotesize
\begin{align}
	&\prod_{\zeta = 1}^{\ell(T_2)}
	\prod_{d = N+M+1}^{N+M+L}
	\prod_{\alpha = N+M+1}^{d - 1} 
	\prod_{\omega = 1}^{\theta_{\zeta,\alpha}}
	\frac{
		\left(	1 - q^{	\sum_{\gamma = \alpha + 1}^{N + M + L} \theta_{\zeta,\gamma}	+ \omega - \sum_{\gamma = d+1}^{N + M + L}\theta_{\zeta,\gamma} 		}			\right)
		\left(	1 - tq^{	\sum_{\gamma = \alpha + 1}^{N + M + L} \theta_{\zeta,\gamma}	+ \omega - (\sum_{\gamma = d}^{N + M + L}\theta_{\zeta,\gamma} 	+ 1 )	}		\right)
	}{
		\left(	1 - q^{	\sum_{\gamma = \alpha + 1}^{N + M + L} \theta_{\zeta,\gamma}	+ \omega - \sum_{\gamma = d}^{N + M + L}\theta_{\zeta,\gamma} 		}			\right)
		\left(	1 - 	tq^{	\sum_{\gamma = \alpha + 1}^{N + M + L} \theta_{\zeta,\gamma}	+ \omega - (\sum_{\gamma = d + 1}^{N + M + L}\theta_{\zeta,\gamma} 	+ 1 )	}			\right)
	}
	\notag 
	\\
	&=
	\prod_{\zeta = 1}^{\ell(T_2)}
	\left[
	\frac{
		f_{q,t}(q^{T^{(N+M+1)}_{\zeta} 		})
	}{
		f_{q,t}(1	)
	}
	\right]
	\times
	\prod_{d = N+M+1}^{N+M+L}
	\prod_{\zeta = 1}^{\ell(T_2)}
	\left[
	\frac{
		f_{q,t}(1)
	}{
		f_{q,t}(q^{T^{(d)}_{\zeta} - T^{(d+1)}_{\zeta}})
	}
	\right]
\end{align}
\normalsize
\end{prop}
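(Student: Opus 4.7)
The plan is to evaluate the left-hand side by carrying out the four nested products in the order \emph{first} over $d$ (innermost), \emph{then} over $\omega$, \emph{finally} over $\alpha$; each step produces either a telescoping cancellation or a $q$-Pochhammer identification. The defining identity $f_{q,t}(u) = (tu;q)_\infty / (qu;q)_\infty$ is only invoked at the very end to convert the result into the form stated on the right-hand side.

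\textbf{Step 1 (rewrite exponents and swap products).} Using $T^{(d)}_\zeta = \sum_{\gamma = d}^{N+M+L}\theta_{\zeta,\gamma}$ and $T^{(N+M+L+1)}_\zeta = 0$, every summation inside an exponent becomes a single symbol $T^{(\cdot)}_\zeta$. The condition $N+M+1 \le \alpha < d \le N+M+L$ is reindexed as $\alpha$ from $N+M+1$ to $N+M+L$, $d$ from $\alpha+1$ to $N+M+L$ (empty products at $\alpha = N+M+L$ contribute $1$). Writing $c := T^{(\alpha+1)}_\zeta$, the LHS becomes $\prod_\zeta \prod_\alpha \prod_\omega$ of the $d$-product
\[
\prod_{d=\alpha+1}^{N+M+L}
\frac{\bigl(1 - q^{c+\omega - T^{(d+1)}_\zeta}\bigr)\bigl(1 - tq^{c+\omega - T^{(d)}_\zeta - 1}\bigr)}{\bigl(1 - q^{c+\omega - T^{(d)}_\zeta}\bigr)\bigl(1 - tq^{c+\omega - T^{(d+1)}_\zeta - 1}\bigr)}.
\]

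\textbf{Step 2 (telescope over $d$).} Setting $u_d := 1 - q^{c+\omega - T^{(d)}_\zeta}$ and $v_d := 1 - tq^{c+\omega - T^{(d)}_\zeta - 1}$, the product above is $\bigl(\prod_d u_{d+1}/u_d\bigr)\bigl(\prod_d v_d/v_{d+1}\bigr) = (u_{N+M+L+1}/u_{\alpha+1})(v_{\alpha+1}/v_{N+M+L+1})$. Evaluating at the boundary indices (where $T^{(N+M+L+1)}_\zeta = 0$ and $T^{(\alpha+1)}_\zeta = c$) yields the collapsed factor
\[
\frac{(1 - q^{c+\omega})(1 - tq^{\omega-1})}{(1 - q^\omega)(1 - tq^{c+\omega-1})}.
\]

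\textbf{Step 3 ($q$-Pochhammer product over $\omega$).} Taking the product over $\omega = 1, \ldots, \theta_{\zeta,\alpha}$ and recognizing standard $q$-Pochhammer symbols gives
\[
\frac{(q^{c+1};q)_{\theta_{\zeta,\alpha}}\, (t;q)_{\theta_{\zeta,\alpha}}}{(q;q)_{\theta_{\zeta,\alpha}}\, (tq^c;q)_{\theta_{\zeta,\alpha}}}
= \frac{(q;q)_{T^{(\alpha)}_\zeta}\, (t;q)_{T^{(\alpha+1)}_\zeta}\, (t;q)_{\theta_{\zeta,\alpha}}}{(q;q)_{T^{(\alpha+1)}_\zeta}\, (t;q)_{T^{(\alpha)}_\zeta}\, (q;q)_{\theta_{\zeta,\alpha}}},
\]
where the last equality uses $(q^{c+1};q)_b = (q;q)_{c+b}/(q;q)_c$ and $(tq^c;q)_b = (t;q)_{c+b}/(t;q)_c$ with $c+b = T^{(\alpha)}_\zeta$.

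\textbf{Step 4 (telescope over $\alpha$ and identify the RHS).} The factor $(q;q)_{T^{(\alpha)}_\zeta}(t;q)_{T^{(\alpha+1)}_\zeta}/\bigl[(q;q)_{T^{(\alpha+1)}_\zeta}(t;q)_{T^{(\alpha)}_\zeta}\bigr]$ telescopes over $\alpha = N+M+1, \ldots, N+M+L$ to $(q;q)_{T^{(N+M+1)}_\zeta}/(t;q)_{T^{(N+M+1)}_\zeta}$, again using $T^{(N+M+L+1)}_\zeta = 0$; the residual factor is $\prod_{d}(t;q)_{T^{(d)}_\zeta - T^{(d+1)}_\zeta}/(q;q)_{T^{(d)}_\zeta - T^{(d+1)}_\zeta}$ after relabeling $\alpha$ as $d$. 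The elementary identity $f_{q,t}(q^a)/f_{q,t}(1) = (q;q)_a/(t;q)_a$ for $a \in \mathbb{Z}^{\geq 0}$ (immediate from splitting $(tq^a;q)_\infty(t;q)_a = (t;q)_\infty$ and similarly for $(q^{a+1};q)_\infty$) converts both telescoped pieces into the desired $f_{q,t}$-ratios, and multiplying over $\zeta$ produces the RHS exactly. The main obstacle is purely bookkeeping: keeping the two nested telescopings (over $d$ and over $\alpha$) straight and verifying that both boundary contributions $T^{(N+M+L+1)}_\zeta = 0$ surface in the correct positions; no non-trivial identity beyond the standard Pochhammer manipulations is needed.
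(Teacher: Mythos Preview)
Your argument is correct. Each of the four steps checks out: the swap of the $\alpha$- and $d$-products is legitimate, the $d$-telescope collapses exactly as you claim (with the boundary values $T^{(\alpha+1)}_\zeta$ and $T^{(N+M+L+1)}_\zeta=0$ appearing in the right places), the $\omega$-product is a standard finite-Pochhammer identity, and the $\alpha$-telescope together with $f_{q,t}(q^a)/f_{q,t}(1)=(q;q)_a/(t;q)_a$ reproduces the stated right-hand side.

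Your route differs from the paper's in the order of evaluation. The paper keeps $d$ fixed, performs the double product over $(\alpha,\omega)$ for each of the four factors separately (observing that $T^{(\alpha+1)}_\zeta+\omega$ sweeps the interval $T^{(d)}_\zeta+1,\dots,T^{(N+M+1)}_\zeta$), expresses each piece as a ratio of \emph{infinite} $q$-Pochhammer symbols, recognizes the resulting combination as a product of $f_{q,t}$-ratios, and only then telescopes over $d$ at the $f_{q,t}$ level. You instead telescope over $d$ \emph{first} at the level of individual linear factors, so that only \emph{finite} Pochhammer symbols appear until the very last conversion to $f_{q,t}$; you also introduce a second telescope over $\alpha$ that the paper does not need because its $(\alpha,\omega)$-product already absorbs that range. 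Your approach is slightly more elementary (no infinite products until the end) and makes the two cancellation mechanisms more transparent; the paper's approach reaches the $f_{q,t}$ language sooner, which matches the shape of the target expression earlier in the computation.
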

\begin{proof}
First note that for any $\beta \in \{1,\dots,N+M+L\}$, we get that $\sum_{\gamma = \beta}^{N+M+L}\theta_{\zeta,\gamma} = T^{(\beta)}_{\zeta}$. Thus, 
\footnotesize
\begin{align}
	&\prod_{\zeta = 1}^{\ell(T_2)}
	\prod_{d = N+M+1}^{N+M+L}
	\prod_{\alpha = N+M+1}^{d - 1} 
	\prod_{\omega = 1}^{\theta_{\zeta,\alpha}}
	\frac{
		\left(	1 - q^{	\sum_{\gamma = \alpha + 1}^{N + M + L} \theta_{\zeta,\gamma}	+ \omega - \sum_{\gamma = d+1}^{N + M + L}\theta_{\zeta,\gamma} 		}			\right)
		\left(	1 - tq^{	\sum_{\gamma = \alpha + 1}^{N + M + L} \theta_{\zeta,\gamma}	+ \omega - (\sum_{\gamma = d}^{N + M + L}\theta_{\zeta,\gamma} 	+ 1 )	}		\right)
	}{
		\left(	1 - q^{	\sum_{\gamma = \alpha + 1}^{N + M + L} \theta_{\zeta,\gamma}	+ \omega - \sum_{\gamma = d}^{N + M + L}\theta_{\zeta,\gamma} 		}			\right)
		\left(	1 - 	tq^{	\sum_{\gamma = \alpha + 1}^{N + M + L} \theta_{\zeta,\gamma}	+ \omega - (\sum_{\gamma = d + 1}^{N + M + L}\theta_{\zeta,\gamma} 	+ 1 )	}			\right)
	}
	\notag 
	\\
	&=
	\prod_{d = N+M+1}^{N+M+L}
	\prod_{\zeta = 1}^{\ell(T_2)}
	\prod_{\alpha = N+M+1}^{d - 1} 
	\prod_{\omega = 1}^{\theta_{\zeta,\alpha}}
	\frac{
		\left(	1 - q^{		T^{(\alpha + 1)}_{\zeta}		+ \omega - T^{(d+1)}_{\zeta}	}			\right)
		\left(	1 - tq^{	T^{(\alpha + 1)}_{\zeta}		+ \omega - (	T^{(d)}_{\zeta}	+ 1 )	}		\right)
	}{
		\left(	1 - q^{	T^{(\alpha + 1)}_{\zeta}		+ \omega - T^{(d)}_{\zeta}		}			\right)
		\left(	1 - 	tq^{	T^{(\alpha + 1)}_{\zeta}		+ \omega - (	T^{(d+1)}_{\zeta}	+ 1 )	}			\right)
	}
\end{align}
\normalsize
One can easily show that 
\footnotesize
\begin{align}
\prod_{\alpha = N+M+1}^{d - 1} 
\prod_{\omega = 1}^{\theta_{\zeta,\alpha}}
\left(	1 - q^{		T^{(\alpha + 1)}_{\zeta}		+ \omega - T^{(d+1)}_{\zeta}	}			\right)
&= 
\frac{
	\left(	qq^{T^{(d)}_{\zeta} - T^{(d+1)}_{\zeta}}	; q	\right)_\infty
}{
	\left(	qq^{T^{(N+M+1)}_{\zeta} - T^{(d+1)}_{\zeta}}	; q	\right)_\infty
}
\\
\prod_{\alpha = N+M+1}^{d - 1} 
\prod_{\omega = 1}^{\theta_{\zeta,\alpha}}
\frac{
	1
}{
	\left(	1 - 	tq^{	T^{(\alpha + 1)}_{\zeta}		+ \omega - (	T^{(d+1)}_{\zeta}	+ 1 )	}			\right)
}
&=
\frac{
	\left(	tq^{T^{(N+M+1)}_{\zeta} - T^{(d+1)}_{\zeta}}	; q	\right)_\infty
}{
	\left(	tq^{T^{(d)}_{\zeta} - T^{(d+1)}_{\zeta}}	; q	\right)_\infty
}
\\
\prod_{\alpha = N+M+1}^{d - 1} 
\prod_{\omega = 1}^{\theta_{\zeta,\alpha}}
\left(	1 - tq^{	T^{(\alpha + 1)}_{\zeta}		+ \omega - (	T^{(d)}_{\zeta}	+ 1 )	}		\right)
&= 
\frac{
	(t;q)_{\infty}
}{
	(tq^{T^{(N+M+1)}_{\zeta} - T^{(d)}_{\zeta}} ; q)_{\infty}
}
\\
\prod_{\alpha = N+M+1}^{d - 1} 
\prod_{\omega = 1}^{\theta_{\zeta,\alpha}}
\frac{
	1
}{
	\left(	1 - q^{	T^{(\alpha + 1)}_{\zeta}		+ \omega - T^{(d)}_{\zeta}		}			\right)
}
&= 
\frac{
	\left(	qq^{T^{(N+M+1)}_{\zeta} - T^{(d)}_{\zeta}}	; q	\right)_\infty
}{		
	\left(	q	; q	\right)_\infty
}
\end{align}
\normalsize
Thus, we obtain that 
\footnotesize
\begin{align}
	&\prod_{\zeta = 1}^{\ell(T_2)}
	\prod_{d = N+M+1}^{N+M+L}
	\prod_{\alpha = N+M+1}^{d - 1} 
	\prod_{\omega = 1}^{\theta_{\zeta,\alpha}}
	\frac{
		\left(	1 - q^{	\sum_{\gamma = \alpha + 1}^{N + M + L} \theta_{\zeta,\gamma}	+ \omega - \sum_{\gamma = d+1}^{N + M + L}\theta_{\zeta,\gamma} 		}			\right)
		\left(	1 - tq^{	\sum_{\gamma = \alpha + 1}^{N + M + L} \theta_{\zeta,\gamma}	+ \omega - (\sum_{\gamma = d}^{N + M + L}\theta_{\zeta,\gamma} 	+ 1 )	}		\right)
	}{
		\left(	1 - q^{	\sum_{\gamma = \alpha + 1}^{N + M + L} \theta_{\zeta,\gamma}	+ \omega - \sum_{\gamma = d}^{N + M + L}\theta_{\zeta,\gamma} 		}			\right)
		\left(	1 - 	tq^{	\sum_{\gamma = \alpha + 1}^{N + M + L} \theta_{\zeta,\gamma}	+ \omega - (\sum_{\gamma = d + 1}^{N + M + L}\theta_{\zeta,\gamma} 	+ 1 )	}			\right)
	}
	\notag 
	\\
	&= 
	\prod_{d = N+M+1}^{N+M+L}
	\prod_{\zeta = 1}^{\ell(T_2)}
	\Bigg[
	\frac{
		\left(	qq^{T^{(d)}_{\zeta} - T^{(d+1)}_{\zeta}}	; q	\right)_\infty
	}{
		\left(	qq^{T^{(N+M+1)}_{\zeta} - T^{(d+1)}_{\zeta}}	; q	\right)_\infty
	} 
	\times 
	\frac{
		\left(	tq^{T^{(N+M+1)}_{\zeta} - T^{(d+1)}_{\zeta}}	; q	\right)_\infty
	}{
		\left(	tq^{T^{(d)}_{\zeta} - T^{(d+1)}_{\zeta}}	; q	\right)_\infty
	} 
	\notag 
	\\
	&\hspace{6.8cm}\times 
	\frac{
		(t;q)_{\infty}
	}{
		(tq^{T^{(N+M+1)}_{\zeta} - T^{(d)}_{\zeta}} ; q)_{\infty}
	} 
	\times
	\frac{
		\left(	qq^{T^{(N+M+1)}_{\zeta} - T^{(d)}_{\zeta}}	; q	\right)_\infty
	}{		
		\left(	q	; q	\right)_\infty
	}
	\Bigg]
	\notag 
	\\
	&= 
	\prod_{d = N+M+1}^{N+M+L}
	\prod_{\zeta = 1}^{\ell(T_2)}
	\left[
	\frac{
		f_{q,t}(q^{T^{(N+M+1)}_{\zeta} - T^{(d+1)}_{\zeta}})f_{q,t}(1)
	}{
		f_{q,t}(q^{T^{(d)}_{\zeta} - T^{(d+1)}_{\zeta}})f_{q,t}(q^{T^{(N+M+1)}_{\zeta} - T^{(d)}_{\zeta}})
	}
	\right]
	\notag 
	\\
	&=
	\prod_{\zeta = 1}^{\ell(T_2)}
	\left[
	\frac{
		f_{q,t}(q^{T^{(N+M+1)}_{\zeta} 		})
	}{
		f_{q,t}(1	)
	}
	\right]
	\times
	\prod_{d = N+M+1}^{N+M+L}
	\prod_{\zeta = 1}^{\ell(T_2)}
	\left[
	\frac{
		f_{q,t}(1)
	}{
		f_{q,t}(q^{T^{(d)}_{\zeta} - T^{(d+1)}_{\zeta}})
	}
	\right]
\end{align}
\normalsize
\end{proof}

\begin{prop}
\label{prpa2-17aug-sun-1329}
\footnotesize
\begin{align}
	&\prod_{\zeta = 1}^{\ell(T_2)}
	\prod_{d = N+M+1}^{N+M+L} 
	\prod_{\tau = \zeta + 1}^{\ell(T_2)} 
	\prod_{\omega = 1}^{\theta_{\tau,d}}
	\frac{
		\left(		1 - t^{-1} (t^{-1})^{\tau - \zeta}	q^{	\sum_{\gamma = d+1}^{N + M + L}\theta_{\tau,\gamma}	+ \omega - 	\sum_{\gamma = d+ 1}^{N + M + L} \theta_{\zeta,\gamma}				}			\right)
		\left(
		1 - (t^{-1})^{\tau - \zeta}	q^{	\sum_{\gamma = d+1}^{N + M + L}\theta_{\tau,\gamma}	+ \omega - 	\sum_{\gamma = d}^{N + M + L} \theta_{\zeta,\gamma}				}		
		\right)
	}{
		\left(		1 - t^{-1} (t^{-1})^{\tau - \zeta}	q^{	\sum_{\gamma = d+1}^{N + M + L}\theta_{\tau,\gamma}	+ \omega - 	\sum_{\gamma = d}^{N + M + L} \theta_{\zeta,\gamma}				}			\right)
		\left(
		1 - (t^{-1})^{\tau - \zeta}	q^{	\sum_{\gamma = d+1}^{N + M + L}\theta_{\tau,\gamma}	+ \omega - 	\sum_{\gamma = d+1}^{N + M + L} \theta_{\zeta,\gamma}				}		
		\right)
	}
	\notag 
	\\
	&=
	\prod_{d = N+M+1}^{N+M+L} 
	\prod_{\zeta = 1}^{\ell(T_2)}
	\prod_{\tau = \zeta + 1}^{\ell(T_2)} 
	\prod_{\alpha = 1}^{\theta_{\zeta,d}}
	\frac{
		\left(		1 - t t^{\tau - \zeta} q^{ T^{(d+1)}_{\zeta} + \alpha	 -	(T^{(d)}_{\tau}	+ 1)		}					\right)
		\left(
		1 - t^{\tau - \zeta} q^{ T^{(d+1)}_{\zeta} + \alpha	-(	T^{(d+1)}_{\tau}	+ 1)		}	
		\right)
	}{
		\left(
		1 - t^{\tau - \zeta} q^{ T^{(d+1)}_{\zeta} + \alpha	 -	(T^{(d)}_{\tau} + 1)		}	
		\right)
		\left(		1 - t t^{\tau - \zeta} q^{	T^{(d+1)}_{\zeta} + \alpha	 - (T^{(d+1)}_{\tau}	+ 1)		}					\right)
	}
\end{align}
\normalsize
\end{prop}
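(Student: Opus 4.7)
The plan is to reduce the identity to a per-triple telescoping calculation. Observe that both sides are products over the same index set
$\{(\zeta,\tau,d)\colon 1\le\zeta<\tau\le\ell(T_2),\ N+M+1\le d\le N+M+L\}$; only the order in which the three outer product symbols are written differs. Hence it suffices to verify, for each fixed triple $(\zeta,\tau,d)$, that the inner product in $\omega$ on the LHS equals the inner product in $\alpha$ on the RHS. Using the bookkeeping identity $\sum_{\gamma=\beta}^{N+M+L}\theta_{\zeta,\gamma}=T^{(\beta)}_{\zeta}$ I abbreviate $A:=T^{(d+1)}_{\zeta}$, $B:=T^{(d)}_{\zeta}$, $A':=T^{(d+1)}_{\tau}$, $B':=T^{(d)}_{\tau}$, and set $c:=t^{-(\tau-\zeta)}$. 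Then $\theta_{\tau,d}=B'-A'$, $\theta_{\zeta,d}=B-A$, and the re-indexings $p:=T^{(d+1)}_{\tau}+\omega$ and $r:=T^{(d+1)}_{\zeta}+\alpha-1$ reduce the assertion to
\begin{align*}
\prod_{p=A'+1}^{B'}\frac{(1-ct^{-1}q^{p-A})(1-cq^{p-B})}{(1-ct^{-1}q^{p-B})(1-cq^{p-A})}
=\prod_{r=A}^{B-1}\frac{(1-c^{-1}tq^{r-B'})(1-c^{-1}q^{r-A'})}{(1-c^{-1}q^{r-B'})(1-c^{-1}tq^{r-A'})}.
\end{align*}

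Next, I bring the two sides to a common shape. On the LHS I apply $1-x=-x(1-x^{-1})$ to every factor, which converts $(1-ct^{-1}q^{p-X})$ into $-ct^{-1}q^{p-X}(1-c^{-1}tq^{X-p})$ and $(1-cq^{p-X})$ into $-cq^{p-X}(1-c^{-1}q^{X-p})$. Pairing these within the two natural sub-products of the LHS, the $-x$ prefactors cancel entirely, leaving only an overall factor $t^{-(B'-A')}$ in each sub-product which then cancels between numerator and denominator. After the substitutions $s=A-p$ and $s=B-p$, respectively, one obtains
\begin{align*}
\mathrm{LHS}=\frac{F(A-B',\,A-A'-1)}{F(B-B',\,B-A'-1)},\qquad
\mathrm{RHS}=\frac{F(A-B',\,B-B'-1)}{F(A-A',\,B-A'-1)},
\end{align*}
where $F(x,y):=\prod_{s=x}^{y}(1-c^{-1}tq^{s})/(1-c^{-1}q^{s})$; the RHS representation is read off directly with $s=r-B'$ and $s=r-A'$.

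The final step is pure telescoping. The elementary rule $F(x,y)\,F(y+1,z)=F(x,z)$, applied once with $y=A-A'-1$ and once with $y=B-B'-1$ (both with $z=B-A'-1$, $x=A-B'$), yields
\begin{align*}
F(A-B',\,A-A'-1)\cdot F(A-A',\,B-A'-1)&=F(A-B',\,B-A'-1),\\
F(A-B',\,B-B'-1)\cdot F(B-B',\,B-A'-1)&=F(A-B',\,B-A'-1).
\end{align*}
Cross-multiplying the two ratios displayed above and cancelling these equal products establishes the per-triple identity, whence Proposition \ref{prpa2-17aug-sun-1329} follows upon taking the product over all admissible $(\zeta,\tau,d)$. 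The main point to watch is the bookkeeping of the $-ct^{\pm 1}q^{\bullet}$ prefactors produced by the $1-x=-x(1-x^{-1})$ trick: they split symmetrically between numerator and denominator and cancel, but the $t$-power tracking is the one place where a sign or normalization error is most likely and should be written out explicitly.
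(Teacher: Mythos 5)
Your proof is correct and follows essentially the same route as the paper's: reduce to a per-triple inner-product identity, apply the $1-x=-x(1-x^{-1})$ trick, and re-index using $T^{(\beta)}_{\zeta}=\sum_{\gamma\geq\beta}\theta_{\zeta,\gamma}$. Where the paper dispatches the crucial $\omega$-to-$\alpha$ re-indexing (a product of $\theta_{\tau,d}$ factors matched against one of $\theta_{\zeta,d}$ factors) with ``one can show that,'' your telescoping argument with the function $F(x,y)$ makes that step fully explicit, which is a genuine improvement in rigour over the paper's exposition.
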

\begin{proof}
First, one can show that 
\footnotesize
\begin{align}
	&\prod_{\omega = 1}^{\theta_{\tau,d}}
	\frac{
		\left(		1 - t^{-1} (t^{-1})^{\tau - \zeta}	q^{	\sum_{\gamma = d+1}^{N + M + L}\theta_{\tau,\gamma}	+ \omega - 	\sum_{\gamma = d+ 1}^{N + M + L} \theta_{\zeta,\gamma}				}			\right)
		\left(
		1 - (t^{-1})^{\tau - \zeta}	q^{	\sum_{\gamma = d+1}^{N + M + L}\theta_{\tau,\gamma}	+ \omega - 	\sum_{\gamma = d}^{N + M + L} \theta_{\zeta,\gamma}				}		
		\right)
	}{
		\left(		1 - t^{-1} (t^{-1})^{\tau - \zeta}	q^{	\sum_{\gamma = d+1}^{N + M + L}\theta_{\tau,\gamma}	+ \omega - 	\sum_{\gamma = d}^{N + M + L} \theta_{\zeta,\gamma}				}			\right)
		\left(
		1 - (t^{-1})^{\tau - \zeta}	q^{	\sum_{\gamma = d+1}^{N + M + L}\theta_{\tau,\gamma}	+ \omega - 	\sum_{\gamma = d+1}^{N + M + L} \theta_{\zeta,\gamma}				}		
		\right)
	}
	\\
	&= 
	\prod_{\alpha = 1}^{\theta_{\zeta,d}}
	\frac{
		\left(		1 - t^{-1} (t^{-1})^{\tau - \zeta} q^{	\sum_{\gamma = d}^{N+M+L} \theta_{\tau,\gamma}	+ 1	- (\sum_{\gamma = d + 1}^{N+M+L} \theta_{\zeta,\gamma} + \alpha)		}					\right)
		\left(
		1 - (t^{-1})^{\tau - \zeta} q^{	\sum_{\gamma = d + 1}^{N+M+L} \theta_{\tau,\gamma}	+ 1	- (\sum_{\gamma = d + 1}^{N+M+L} \theta_{\zeta,\gamma} + \alpha)		}	
		\right)
	}{
		\left(		1 - t^{-1} (t^{-1})^{\tau - \zeta} q^{	\sum_{\gamma = d + 1}^{N+M+L} \theta_{\tau,\gamma}	+ 1	- (\sum_{\gamma = d + 1}^{N+M+L} \theta_{\zeta,\gamma} + \alpha)		}					\right)
		\left(
		1 - (t^{-1})^{\tau - \zeta} q^{	\sum_{\gamma = d}^{N+M+L} \theta_{\tau,\gamma}	+ 1	- (\sum_{\gamma = d + 1}^{N+M+L} \theta_{\zeta,\gamma} + \alpha)		}	
		\right)
	}
	\notag 
\end{align}
\normalsize
Thus,
\footnotesize
\begin{align}
	&\prod_{\zeta = 1}^{\ell(T_2)}
	\prod_{d = N+M+1}^{N+M+L} 
	\prod_{\tau = \zeta + 1}^{\ell(T_2)} 
	\prod_{\omega = 1}^{\theta_{\tau,d}}
	\frac{
		\left(		1 - t^{-1} (t^{-1})^{\tau - \zeta}	q^{	\sum_{\gamma = d+1}^{N + M + L}\theta_{\tau,\gamma}	+ \omega - 	\sum_{\gamma = d+ 1}^{N + M + L} \theta_{\zeta,\gamma}				}			\right)
		\left(
		1 - (t^{-1})^{\tau - \zeta}	q^{	\sum_{\gamma = d+1}^{N + M + L}\theta_{\tau,\gamma}	+ \omega - 	\sum_{\gamma = d}^{N + M + L} \theta_{\zeta,\gamma}				}		
		\right)
	}{
		\left(		1 - t^{-1} (t^{-1})^{\tau - \zeta}	q^{	\sum_{\gamma = d+1}^{N + M + L}\theta_{\tau,\gamma}	+ \omega - 	\sum_{\gamma = d}^{N + M + L} \theta_{\zeta,\gamma}				}			\right)
		\left(
		1 - (t^{-1})^{\tau - \zeta}	q^{	\sum_{\gamma = d+1}^{N + M + L}\theta_{\tau,\gamma}	+ \omega - 	\sum_{\gamma = d+1}^{N + M + L} \theta_{\zeta,\gamma}				}		
		\right)
	}
	\notag 
	\\
	&=
	\prod_{\zeta = 1}^{\ell(T_2)}
	\prod_{d = N+M+1}^{N+M+L} 
	\prod_{\tau = \zeta + 1}^{\ell(T_2)} 
	\prod_{\alpha = 1}^{\theta_{\zeta,d}}
	\frac{
		\left(		1 - t t^{\tau - \zeta} q^{ \sum_{\gamma = d + 1}^{N+M+L} \theta_{\zeta,\gamma} + \alpha	 -	(\sum_{\gamma = d}^{N+M+L} \theta_{\tau,\gamma}	+ 1)		}					\right)
		\left(
		1 - t^{\tau - \zeta} q^{ \sum_{\gamma = d + 1}^{N+M+L} \theta_{\zeta,\gamma} + \alpha	-(	\sum_{\gamma = d + 1}^{N+M+L} \theta_{\tau,\gamma}	+ 1)		}	
		\right)
	}{
		\left(		1 - t t^{\tau - \zeta} q^{	\sum_{\gamma = d + 1}^{N+M+L} \theta_{\zeta,\gamma} + \alpha	 - (\sum_{\gamma = d + 1}^{N+M+L} \theta_{\tau,\gamma}	+ 1)		}					\right)
		\left(
		1 - t^{\tau - \zeta} q^{ \sum_{\gamma = d + 1}^{N+M+L} \theta_{\zeta,\gamma} + \alpha	 -	(\sum_{\gamma = d}^{N+M+L} \theta_{\tau,\gamma}	+ 1)		}	
		\right)
	}
	\notag 
	\\
	&=
	\prod_{d = N+M+1}^{N+M+L} 
	\prod_{\zeta = 1}^{\ell(T_2)}
	\prod_{\tau = \zeta + 1}^{\ell(T_2)} 
	\prod_{\alpha = 1}^{\theta_{\zeta,d}}
	\frac{
		\left(		1 - t t^{\tau - \zeta} q^{ T^{(d+1)}_{\zeta} + \alpha	 -	(T^{(d)}_{\tau}	+ 1)		}					\right)
		\left(
		1 - t^{\tau - \zeta} q^{ T^{(d+1)}_{\zeta} + \alpha	-(	T^{(d+1)}_{\tau}	+ 1)		}	
		\right)
	}{
		\left(		1 - t t^{\tau - \zeta} q^{	T^{(d+1)}_{\zeta} + \alpha	 - (T^{(d+1)}_{\tau}	+ 1)		}					\right)
		\left(
		1 - t^{\tau - \zeta} q^{ T^{(d+1)}_{\zeta} + \alpha	 -	(T^{(d)}_{\tau} + 1)		}	
		\right)
	}
\end{align}
\normalsize
Note that in the last equality we use the fact that for any $\beta \in \{1,\dots,N+M+L\}$, we get that $\sum_{\gamma = \beta}^{N+M+L}\theta_{\zeta,\gamma} = T^{(\beta)}_{\zeta}$. 
\end{proof}

\begin{prop}\mbox{}
\label{prpa3-17aug-sun-1329}
\footnotesize
\begin{align}
	&
	\prod_{\zeta = 1}^{\ell(T_2)}
	\prod_{d = N+M+1}^{N+M+L}
	\prod_{\tau = \zeta + 1}^{\ell(T_2)}
	\prod_{p = d+1}^{N+M+L}
	\prod_{\Xi = 1}^{\theta_{\zeta,d}}
	\prod_{\omega = 1}^{\theta_{\tau,p}}
	\Biggl\{
	\\
	&\frac{
		\left(	1 - q^{-1}					
		(t^{-1})^{\tau - \zeta} q^{	\sum_{\gamma = p + 1}^{N + M + L} \theta_{\tau,\gamma}	+ \omega	- (\sum_{\gamma = d + 1}^{N + M + L} \theta_{\zeta,\gamma} + \Xi)		}	
		\right)
		\left(	1 - qt^{-1}			
		(t^{-1})^{\tau - \zeta} q^{	\sum_{\gamma = p + 1}^{N + M + L} \theta_{\tau,\gamma}	+ \omega	- (\sum_{\gamma = d + 1}^{N + M + L} \theta_{\zeta,\gamma} + \Xi)		}
		\right)
	}{
		\left(	1 - q			
		(t^{-1})^{\tau - \zeta} q^{	\sum_{\gamma = p + 1}^{N + M + L} \theta_{\tau,\gamma}	+ \omega	- (\sum_{\gamma = d + 1}^{N + M + L} \theta_{\zeta,\gamma} + \Xi)		}
		\right)
		\left(	1 - q^{-1}t			
		(t^{-1})^{\tau - \zeta} q^{	\sum_{\gamma = p + 1}^{N + M + L} \theta_{\tau,\gamma}	+ \omega	- (\sum_{\gamma = d + 1}^{N + M + L} \theta_{\zeta,\gamma} + \Xi)		}
		\right)
	}
	\notag 
	\\
	&\times
	\frac{
		\left(	1 - t		
		(t^{-1})^{\tau - \zeta} q^{	\sum_{\gamma = p + 1}^{N + M + L} \theta_{\tau,\gamma}	+ \omega	- (\sum_{\gamma = d + 1}^{N + M + L} \theta_{\zeta,\gamma} + \Xi)		}
		\right)
	}{
		\left(	1 - t^{-1}			
		(t^{-1})^{\tau - \zeta} q^{	\sum_{\gamma = p + 1}^{N + M + L} \theta_{\tau,\gamma}	+ \omega	- (\sum_{\gamma = d + 1}^{N + M + L} \theta_{\zeta,\gamma} + \Xi)		}
		\right)
	}
	\Biggr\}
	\notag 
	\\
	&=
	\prod_{\zeta = 1}^{\ell(T_2)}
	\Biggl\{
	\frac{f_{q,t}\left(		t^{\ell(T_2) - \zeta} q^{ T^{(N+M+1)}_{\zeta}		}					\right)}{
		f_{q,t}\left(		t^{\ell(T_2) - \zeta} 					\right)
	}
	\times 
	\frac{
		f_{q,t}\left(	1					\right)
	}{
		f_{q,t}\left(q^{ T^{(N+M+1)}_{\zeta} 		}						\right)
	} 
	\Biggr\}
	\times
	\notag 
	\\
	&\hspace{0.3cm}\times 
	\prod_{\zeta = 1}^{\ell(T_2)}
	\prod_{d = N+M+1}^{N+M+L}
	\prod_{\tau = \zeta + 1}^{\ell(T_2)}
	\Biggl\{
	\frac{
		f_{q,t}\left(	t^{\tau - \zeta} q^{ T^{(d+1)}_{\zeta} 	- T^{(d+1)}_{\tau}	 		}						\right)
	}{
		f_{q,t}\left(		t^{-1}		
		t^{\tau - \zeta} q^{ T^{(d+1)}_{\zeta} 	- T^{(d+1)}_{\tau}			}					\right)
	}
	\times
	\frac{
		f_{q,t}\left(		t^{-1}		
		t^{\tau - \zeta} q^{ T^{(d)}_{\zeta} 	- T^{(d+1)}_{\tau} 		}					\right)
	}{
		f_{q,t}\left(	t^{\tau - \zeta} q^{ T^{(d)}_{\zeta} 	- T^{(d+1)}_{\tau}		}						\right)
	}
	\Biggr\} 
	\notag 
\end{align}
\normalsize
\end{prop}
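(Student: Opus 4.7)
The plan mirrors the reduction strategy used to prove \textbf{Propositions \ref{prpa1-17aug-sun-1329}} and \textbf{\ref{prpa2-17aug-sun-1329}}: first rewrite every exponent using the identity $T^{(\beta)}_\zeta=\sum_{\gamma=\beta}^{N+M+L}\theta_{\zeta,\gamma}$, then collapse the resulting finite products into ratios of infinite $q$-Pochhammer symbols that reassemble into $f_{q,t}$-values. The crucial combinatorial observation is that, because $T^{(p+1)}_\tau+\theta_{\tau,p}=T^{(p)}_\tau$ and $T^{(N+M+L+1)}_\tau=0$, as $p$ ranges over $\{d+1,\dots,N+M+L\}$ and, for each $p$, $\omega$ ranges over $\{1,\dots,\theta_{\tau,p}\}$, the quantity $T^{(p+1)}_\tau+\omega$ takes each value in $\{1,\dots,T^{(d+1)}_\tau\}$ exactly once. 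Similarly, the substitution $k=T^{(d+1)}_\zeta+\Xi$ makes $k$ sweep $\{T^{(d+1)}_\zeta+1,\dots,T^{(d)}_\zeta\}$. Consequently the quintuple product in the LHS rearranges into
\begin{align*}
\prod_{d=N+M+1}^{N+M+L}\prod_{\zeta=1}^{\ell(T_2)}\prod_{\tau=\zeta+1}^{\ell(T_2)}\prod_{j=1}^{T^{(d+1)}_\tau}\prod_{k=T^{(d+1)}_\zeta+1}^{T^{(d)}_\zeta}\Phi\bigl(t^{\zeta-\tau}q^{j-k}\bigr),
\end{align*}
where $\Phi(y):=\dfrac{(1-q^{-1}y)(1-qt^{-1}y)(1-ty)}{(1-qy)(1-q^{-1}ty)(1-t^{-1}y)}$.

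The next step is to execute the inner $j$-telescoping using $\prod_{j=1}^{A}(1-xq^j)=(xq;q)_\infty/(xq^{A+1};q)_\infty$, applied separately to each of the six monomial factors of $\Phi$. This eliminates $j$ and leaves a $k$-product involving several ratios of infinite $q$-Pochhammer symbols. Grouping these symbols by the numerator/denominator pairs that differ only by the replacement $t\leftrightarrow q$, and invoking the definition $f_{q,t}(u)=(tu;q)_\infty/(qu;q)_\infty$, the ratios collapse into $f_{q,t}$-values. Carrying out the remaining $k$-product, in which all intermediate values $k=T^{(d+1)}_\zeta+1,\dots,T^{(d)}_\zeta-1$ cancel, leaves for each fixed triple $(\zeta,d,\tau)$ the contribution
\begin{align*}
\frac{f_{q,t}\!\bigl(t^{\tau-\zeta}q^{T^{(d+1)}_\zeta-T^{(d+1)}_\tau}\bigr)\,f_{q,t}\!\bigl(t^{-1}t^{\tau-\zeta}q^{T^{(d)}_\zeta-T^{(d+1)}_\tau}\bigr)}{f_{q,t}\!\bigl(t^{-1}t^{\tau-\zeta}q^{T^{(d+1)}_\zeta-T^{(d+1)}_\tau}\bigr)\,f_{q,t}\!\bigl(t^{\tau-\zeta}q^{T^{(d)}_\zeta-T^{(d+1)}_\tau}\bigr)},
\end{align*}
which is exactly the generic $(\zeta,d,\tau)$-factor appearing on the second line of the RHS.

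Finally, the first line of the RHS arises as the boundary contribution of two further telescopings. After the $j$-collapse, certain $(\,\cdot\,;q)_\infty$ factors depend only on $\zeta$ and telescope across $d$—leaving only the $d=N+M+1$ endpoint, which supplies the $T^{(N+M+1)}_\zeta$ in the exponent—and across $\tau$—leaving only the $\tau=\ell(T_2)$ endpoint, which supplies the $t^{\ell(T_2)-\zeta}$ factor (using $T^{(N+M+L+1)}_\zeta=0$). Collecting these boundary contributions reconstructs
\begin{align*}
\prod_{\zeta=1}^{\ell(T_2)}\frac{f_{q,t}\!\bigl(t^{\ell(T_2)-\zeta}q^{T^{(N+M+1)}_\zeta}\bigr)}{f_{q,t}\!\bigl(t^{\ell(T_2)-\zeta}\bigr)}\cdot\frac{f_{q,t}(1)}{f_{q,t}\!\bigl(q^{T^{(N+M+1)}_\zeta}\bigr)}.
\end{align*}
The main obstacle is the careful bookkeeping of this interlocked double telescoping: one must verify that the $t$-shifted pairing of the six monomial factors of $\Phi$ is exactly balanced so that, after the $j$-collapse, the surviving $q$-Pochhammer ratios reassemble without leaving spurious $(\,\cdot\,;q)_\infty$ residues, and that the separation between the interior $(\zeta,d,\tau)$-terms and the pure boundary $\zeta$-terms is clean.
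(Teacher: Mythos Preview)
Your approach is essentially the same as the paper's: telescope the finite products into ratios of $q$-Pochhammer symbols and reassemble them as $f_{q,t}$-values. Your observation that the double index $(p,\omega)$ can be merged into a single index $j$ running over $\{1,\dots,T^{(d+1)}_\tau\}$ is a mild streamlining of the paper's computation, which instead carries out the $\omega$-, $\Xi$-, and $p$-products sequentially; both routes land on the same four-ratio expression for each fixed $(\zeta,d,\tau)$.

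One point of imprecision: after the $j$- and $k$-collapses, the per-triple contribution is \emph{not} just the second-line factor you display. It is that factor multiplied by
\[
\frac{f_{q,t}\bigl(t^{\tau-\zeta}q^{T^{(d)}_\zeta}\bigr)}{f_{q,t}\bigl(t^{\tau-\zeta-1}q^{T^{(d)}_\zeta}\bigr)}\cdot\frac{f_{q,t}\bigl(t^{\tau-\zeta-1}q^{T^{(d+1)}_\zeta}\bigr)}{f_{q,t}\bigl(t^{\tau-\zeta}q^{T^{(d+1)}_\zeta}\bigr)},
\]
which depends on $\zeta$, $d$, \emph{and} $\tau$ (not ``only on $\zeta$''). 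It is precisely these extra factors that telescope, first in $d$ (using $T^{(N+M+L+1)}_\zeta=0$) and then in $\tau$, to produce the first line of the RHS. Your final paragraph gestures at this but conflates the order of events; in a clean write-up you should display the full four-ratio per-triple output and then execute the $d$- and $\tau$-telescoping explicitly.
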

\begin{proof}
Note that we can write 
\footnotesize
\begin{align}
	&\frac{
		\left(	1 - q^{-1}					
		(t^{-1})^{\tau - \zeta} q^{	\sum_{\gamma = p + 1}^{N + M + L} \theta_{\tau,\gamma}	+ \omega	- (\sum_{\gamma = d + 1}^{N + M + L} \theta_{\zeta,\gamma} + \Xi)		}	
		\right)
		\left(	1 - qt^{-1}			
		(t^{-1})^{\tau - \zeta} q^{	\sum_{\gamma = p + 1}^{N + M + L} \theta_{\tau,\gamma}	+ \omega	- (\sum_{\gamma = d + 1}^{N + M + L} \theta_{\zeta,\gamma} + \Xi)		}
		\right)
	}{
		\left(	1 - q			
		(t^{-1})^{\tau - \zeta} q^{	\sum_{\gamma = p + 1}^{N + M + L} \theta_{\tau,\gamma}	+ \omega	- (\sum_{\gamma = d + 1}^{N + M + L} \theta_{\zeta,\gamma} + \Xi)		}
		\right)
		\left(	1 - q^{-1}t			
		(t^{-1})^{\tau - \zeta} q^{	\sum_{\gamma = p + 1}^{N + M + L} \theta_{\tau,\gamma}	+ \omega	- (\sum_{\gamma = d + 1}^{N + M + L} \theta_{\zeta,\gamma} + \Xi)		}
		\right)
	}
	\\
	&\times
	\frac{
		\left(	1 - t		
		(t^{-1})^{\tau - \zeta} q^{	\sum_{\gamma = p + 1}^{N + M + L} \theta_{\tau,\gamma}	+ \omega	- (\sum_{\gamma = d + 1}^{N + M + L} \theta_{\zeta,\gamma} + \Xi)		}
		\right)
	}{
		\left(	1 - t^{-1}			
		(t^{-1})^{\tau - \zeta} q^{	\sum_{\gamma = p + 1}^{N + M + L} \theta_{\tau,\gamma}	+ \omega	- (\sum_{\gamma = d + 1}^{N + M + L} \theta_{\zeta,\gamma} + \Xi)		}
		\right)
	}
	\notag 
	\\
	&= 
	\frac{
		\left(	1 - 					
		(t^{-1})^{\tau - \zeta} q^{	\sum_{\gamma = p + 1}^{N + M + L} \theta_{\tau,\gamma}	+ \omega - 1	- (\sum_{\gamma = d + 1}^{N + M + L} \theta_{\zeta,\gamma} + \Xi)		}	
		\right)
		\left(	1 - 
		(t^{-1})^{\tau - \zeta} q^{	\sum_{\gamma = p + 1}^{N + M + L} \theta_{\tau,\gamma}	+ \omega	- (\sum_{\gamma = d + 1}^{N + M + L} \theta_{\zeta,\gamma} + \Xi)		}
		\right)
	}{
		\left(	1 - 			
		(t^{-1})^{\tau - \zeta} q^{	\sum_{\gamma = p + 1}^{N + M + L} \theta_{\tau,\gamma}	+ \omega	- (\sum_{\gamma = d + 1}^{N + M + L} \theta_{\zeta,\gamma} + \Xi)		}
		\right)
		\left(	1 - 			
		(t^{-1})^{\tau - \zeta} q^{	\sum_{\gamma = p + 1}^{N + M + L} \theta_{\tau,\gamma}	+ \omega + 1	- (\sum_{\gamma = d + 1}^{N + M + L} \theta_{\zeta,\gamma} + \Xi)		}
		\right)
	}
	\notag 
	\\
	&\times
	\frac{
		\left(	1 - t^{-1}			
		(t^{-1})^{\tau - \zeta} q^{	\sum_{\gamma = p + 1}^{N + M + L} \theta_{\tau,\gamma}	+ \omega + 1	- (\sum_{\gamma = d + 1}^{N + M + L} \theta_{\zeta,\gamma} + \Xi)		}
		\right)
		\left(	1 - t		
		(t^{-1})^{\tau - \zeta} q^{	\sum_{\gamma = p + 1}^{N + M + L} \theta_{\tau,\gamma}	+ \omega	- (\sum_{\gamma = d + 1}^{N + M + L} \theta_{\zeta,\gamma} + \Xi)		}
		\right)
	}{
		\left(	1 - t^{-1}			
		(t^{-1})^{\tau - \zeta} q^{	\sum_{\gamma = p + 1}^{N + M + L} \theta_{\tau,\gamma}	+ \omega	- (\sum_{\gamma = d + 1}^{N + M + L} \theta_{\zeta,\gamma} + \Xi)		}
		\right)
		\left(	1 - t			
		(t^{-1})^{\tau - \zeta} q^{	\sum_{\gamma = p + 1}^{N + M + L} \theta_{\tau,\gamma}	+ \omega - 1	- (\sum_{\gamma = d + 1}^{N + M + L} \theta_{\zeta,\gamma} + \Xi)		}
		\right)
	}
	\notag 
\end{align}
\normalsize
Thus,
\footnotesize
\begin{align}
	&
	\prod_{\omega = 1}^{\theta_{\tau,p}}
	\Biggl\{
	\frac{
		\left(	1 - q^{-1}					
		(t^{-1})^{\tau - \zeta} q^{	\sum_{\gamma = p + 1}^{N + M + L} \theta_{\tau,\gamma}	+ \omega	- (\sum_{\gamma = d + 1}^{N + M + L} \theta_{\zeta,\gamma} + \Xi)		}	
		\right)
		\left(	1 - qt^{-1}			
		(t^{-1})^{\tau - \zeta} q^{	\sum_{\gamma = p + 1}^{N + M + L} \theta_{\tau,\gamma}	+ \omega	- (\sum_{\gamma = d + 1}^{N + M + L} \theta_{\zeta,\gamma} + \Xi)		}
		\right)
	}{
		\left(	1 - q			
		(t^{-1})^{\tau - \zeta} q^{	\sum_{\gamma = p + 1}^{N + M + L} \theta_{\tau,\gamma}	+ \omega	- (\sum_{\gamma = d + 1}^{N + M + L} \theta_{\zeta,\gamma} + \Xi)		}
		\right)
		\left(	1 - q^{-1}t			
		(t^{-1})^{\tau - \zeta} q^{	\sum_{\gamma = p + 1}^{N + M + L} \theta_{\tau,\gamma}	+ \omega	- (\sum_{\gamma = d + 1}^{N + M + L} \theta_{\zeta,\gamma} + \Xi)		}
		\right)
	}
	\\
	&\times
	\frac{
		\left(	1 - t		
		(t^{-1})^{\tau - \zeta} q^{	\sum_{\gamma = p + 1}^{N + M + L} \theta_{\tau,\gamma}	+ \omega	- (\sum_{\gamma = d + 1}^{N + M + L} \theta_{\zeta,\gamma} + \Xi)		}
		\right)
	}{
		\left(	1 - t^{-1}			
		(t^{-1})^{\tau - \zeta} q^{	\sum_{\gamma = p + 1}^{N + M + L} \theta_{\tau,\gamma}	+ \omega	- (\sum_{\gamma = d + 1}^{N + M + L} \theta_{\zeta,\gamma} + \Xi)		}
		\right)
	}
	\Biggr\}
	\notag 
	\\
	&= 
	\frac{
		\left(	1 - 					
		t^{\tau - \zeta} q^{ T^{(d+1)}_{\zeta} + \Xi		- T^{(p+1)}_{\tau}			}	
		\right)
		\left(	1 - 
		t^{\tau - \zeta} q^{ T^{(d+1)}_{\zeta} + \Xi	-(T^{(p+1)}_{\tau}	+ 1)		}
		\right)
		\left(	1 - t
		t^{\tau - \zeta} q^{ T^{(d+1)}_{\zeta} + \Xi	-(T^{(p)}_{\tau}	 + 1)		}
		\right)
		\left(	1 - t^{-1}		
		t^{\tau - \zeta} q^{ T^{(d+1)}_{\zeta} + \Xi		- T^{(p)}_{\tau}			}
		\right)
	}{
		\left(	1 - t			
		t^{\tau - \zeta} q^{ T^{(d+1)}_{\zeta} + \Xi		-(T^{(p+1)}_{\tau}	+ 1)	}
		\right)
		\left(	1 - t^{-1}			
		t^{\tau - \zeta} q^{ T^{(d+1)}_{\zeta} + \Xi		- T^{(p+1)}_{\tau}	 	}
		\right)
		\left(	1 - 			
		t^{\tau - \zeta} q^{ T^{(d+1)}_{\zeta} + \Xi	- T^{(p)}_{\tau}			}
		\right)
		\left(	1 - 			
		t^{\tau - \zeta} q^{ T^{(d+1)}_{\zeta} + \Xi		-(T^{(p)}_{\tau}	 + 1)		}
		\right)
	}
	\notag 
\end{align}
\normalsize
Therefore,
\footnotesize 
\begin{align}
	&
	\prod_{\Xi = 1}^{\theta_{\zeta,d}}
	\prod_{\omega = 1}^{\theta_{\tau,p}}
	\Biggl\{
	\frac{
		\left(	1 - q^{-1}					
		(t^{-1})^{\tau - \zeta} q^{	\sum_{\gamma = p + 1}^{N + M + L} \theta_{\tau,\gamma}	+ \omega	- (\sum_{\gamma = d + 1}^{N + M + L} \theta_{\zeta,\gamma} + \Xi)		}	
		\right)
		\left(	1 - qt^{-1}			
		(t^{-1})^{\tau - \zeta} q^{	\sum_{\gamma = p + 1}^{N + M + L} \theta_{\tau,\gamma}	+ \omega	- (\sum_{\gamma = d + 1}^{N + M + L} \theta_{\zeta,\gamma} + \Xi)		}
		\right)
	}{
		\left(	1 - q			
		(t^{-1})^{\tau - \zeta} q^{	\sum_{\gamma = p + 1}^{N + M + L} \theta_{\tau,\gamma}	+ \omega	- (\sum_{\gamma = d + 1}^{N + M + L} \theta_{\zeta,\gamma} + \Xi)		}
		\right)
		\left(	1 - q^{-1}t			
		(t^{-1})^{\tau - \zeta} q^{	\sum_{\gamma = p + 1}^{N + M + L} \theta_{\tau,\gamma}	+ \omega	- (\sum_{\gamma = d + 1}^{N + M + L} \theta_{\zeta,\gamma} + \Xi)		}
		\right)
	}
	\\
	&\times
	\frac{
		\left(	1 - t		
		(t^{-1})^{\tau - \zeta} q^{	\sum_{\gamma = p + 1}^{N + M + L} \theta_{\tau,\gamma}	+ \omega	- (\sum_{\gamma = d + 1}^{N + M + L} \theta_{\zeta,\gamma} + \Xi)		}
		\right)
	}{
		\left(	1 - t^{-1}			
		(t^{-1})^{\tau - \zeta} q^{	\sum_{\gamma = p + 1}^{N + M + L} \theta_{\tau,\gamma}	+ \omega	- (\sum_{\gamma = d + 1}^{N + M + L} \theta_{\zeta,\gamma} + \Xi)		}
		\right)
	}
	\Biggr\}
	\notag 
	\\
	&= 
	\prod_{\Xi = 0}^{\theta_{\zeta,d} - 1}
	\frac{
		\left(	1 - 					
		t^{\tau - \zeta} q^{ T^{(d+1)}_{\zeta} + \Xi + 1		- T^{(p+1)}_{\tau}			}	
		\right)
	}{
		\left(	1 - t			
		t^{\tau - \zeta} q^{ T^{(d+1)}_{\zeta} + \Xi 	- T^{(p+1)}_{\tau}		}
		\right)
	}
	\times
	\prod_{\Xi = 0}^{\theta_{\zeta,d} - 1}
	\frac{
		\left(	1 - 
		t^{\tau - \zeta} q^{ T^{(d+1)}_{\zeta} + \Xi 	- T^{(p+1)}_{\tau}			}
		\right)
	}{
		\left(	1 - t^{-1}			
		t^{\tau - \zeta} q^{ T^{(d+1)}_{\zeta} + \Xi + 1		- T^{(p+1)}_{\tau}	 	}
		\right)
	}
	\notag 
	\\
	&\hspace{0.3cm}\times
	\prod_{\Xi = 0}^{\theta_{\zeta,d} - 1}
	\frac{
		\left(	1 - t
		t^{\tau - \zeta} q^{ T^{(d+1)}_{\zeta} + \Xi 	- T^{(p)}_{\tau}	 		}
		\right)
	}{
		\left(	1 - 			
		t^{\tau - \zeta} q^{ T^{(d+1)}_{\zeta} + \Xi + 1	- T^{(p)}_{\tau}			}
		\right)
	} 
	\times
	\prod_{\Xi = 0}^{\theta_{\zeta,d} - 1}
	\frac{
		\left(	1 - t^{-1}		
		t^{\tau - \zeta} q^{ T^{(d+1)}_{\zeta} + \Xi + 1		- T^{(p)}_{\tau}			}
		\right)
	}{
		\left(	1 - 			
		t^{\tau - \zeta} q^{ T^{(d+1)}_{\zeta} + \Xi 		- T^{(p)}_{\tau}			}
		\right)
	}
	\notag 
	\\
	&= \frac{f_{q,t}\left(		t^{\tau - \zeta} q^{ T^{(d)}_{\zeta}	- T^{(p+1)}_{\tau}		}					\right)}{
		f_{q,t}\left(	t^{\tau - \zeta} q^{ T^{(d)}_{\zeta} 	- T^{(p)}_{\tau}		}						\right)
	}
	\times 
	\frac{
		f_{q,t}\left(	t^{-1}			
		t^{\tau - \zeta} q^{ T^{(d+1)}_{\zeta} 	- T^{(p+1)}_{\tau}	 	}						\right)
	}{
		f_{q,t}\left(		t^{-1}		
		t^{\tau - \zeta} q^{ T^{(d+1)}_{\zeta} 	- T^{(p)}_{\tau}			}					\right)
	}
	\times 
	\frac{
		f_{q,t}\left(	t^{\tau - \zeta} q^{ T^{(d+1)}_{\zeta} 	- T^{(p)}_{\tau}	 		}						\right)
	}{
		f_{q,t}\left(		t^{\tau - \zeta} q^{ T^{(d+1)}_{\zeta}	- T^{(p+1)}_{\tau}		}					\right)
	}
	\times
	\frac{
		f_{q,t}\left(		t^{-1}		
		t^{\tau - \zeta} q^{ T^{(d)}_{\zeta} 	- T^{(p)}_{\tau} 		}					\right)
	}{
		f_{q,t}\left(	t^{-1}			
		t^{\tau - \zeta} q^{ T^{(d)}_{\zeta} 	- T^{(p+1)}_{\tau}		}						\right)
	}.
	\notag 
\end{align}
\normalsize
Consequently, 
\footnotesize
\begin{align}
	&\prod_{p = d+1}^{N+M+L}
	\prod_{\Xi = 1}^{\theta_{\zeta,d}}
	\prod_{\omega = 1}^{\theta_{\tau,p}}
	\Biggl\{
	\\
	&\frac{
		\left(	1 - q^{-1}					
		(t^{-1})^{\tau - \zeta} q^{	\sum_{\gamma = p + 1}^{N + M + L} \theta_{\tau,\gamma}	+ \omega	- (\sum_{\gamma = d + 1}^{N + M + L} \theta_{\zeta,\gamma} + \Xi)		}	
		\right)
		\left(	1 - qt^{-1}			
		(t^{-1})^{\tau - \zeta} q^{	\sum_{\gamma = p + 1}^{N + M + L} \theta_{\tau,\gamma}	+ \omega	- (\sum_{\gamma = d + 1}^{N + M + L} \theta_{\zeta,\gamma} + \Xi)		}
		\right)
	}{
		\left(	1 - q			
		(t^{-1})^{\tau - \zeta} q^{	\sum_{\gamma = p + 1}^{N + M + L} \theta_{\tau,\gamma}	+ \omega	- (\sum_{\gamma = d + 1}^{N + M + L} \theta_{\zeta,\gamma} + \Xi)		}
		\right)
		\left(	1 - q^{-1}t			
		(t^{-1})^{\tau - \zeta} q^{	\sum_{\gamma = p + 1}^{N + M + L} \theta_{\tau,\gamma}	+ \omega	- (\sum_{\gamma = d + 1}^{N + M + L} \theta_{\zeta,\gamma} + \Xi)		}
		\right)
	}
	\notag 
	\\
	&\times
	\frac{
		\left(	1 - t		
		(t^{-1})^{\tau - \zeta} q^{	\sum_{\gamma = p + 1}^{N + M + L} \theta_{\tau,\gamma}	+ \omega	- (\sum_{\gamma = d + 1}^{N + M + L} \theta_{\zeta,\gamma} + \Xi)		}
		\right)
	}{
		\left(	1 - t^{-1}			
		(t^{-1})^{\tau - \zeta} q^{	\sum_{\gamma = p + 1}^{N + M + L} \theta_{\tau,\gamma}	+ \omega	- (\sum_{\gamma = d + 1}^{N + M + L} \theta_{\zeta,\gamma} + \Xi)		}
		\right)
	}
	\Biggr\}
	\notag 
	\\
	&= 
	\frac{f_{q,t}\left(		t^{\tau - \zeta} q^{ T^{(d)}_{\zeta}		}					\right)}{
		f_{q,t}\left(	t^{-1}			
		t^{\tau - \zeta} q^{ T^{(d)}_{\zeta} 		}						\right)
	}
	\times 
	\frac{
		f_{q,t}\left(	t^{-1}			
		t^{\tau - \zeta} q^{ T^{(d+1)}_{\zeta} 		}						\right)
	}{
		f_{q,t}\left(		t^{\tau - \zeta} q^{ T^{(d+1)}_{\zeta}		}					\right)
	}
	\times 
	\frac{
		f_{q,t}\left(	t^{\tau - \zeta} q^{ T^{(d+1)}_{\zeta} 	- T^{(d+1)}_{\tau}	 		}						\right)
	}{
		f_{q,t}\left(		t^{-1}		
		t^{\tau - \zeta} q^{ T^{(d+1)}_{\zeta} 	- T^{(d+1)}_{\tau}			}					\right)
	}
	\times
	\frac{
		f_{q,t}\left(		t^{-1}		
		t^{\tau - \zeta} q^{ T^{(d)}_{\zeta} 	- T^{(d+1)}_{\tau} 		}					\right)
	}{
		f_{q,t}\left(	t^{\tau - \zeta} q^{ T^{(d)}_{\zeta} 	- T^{(d+1)}_{\tau}		}						\right)
	}. 
	\notag 
\end{align}
\normalsize
From this, we obtain that 
\footnotesize
\begin{align}
	&
	\prod_{\zeta = 1}^{\ell(T_2)}
	\prod_{d = N+M+1}^{N+M+L}
	\prod_{\tau = \zeta + 1}^{\ell(T_2)}
	\prod_{p = d+1}^{N+M+L}
	\prod_{\Xi = 1}^{\theta_{\zeta,d}}
	\prod_{\omega = 1}^{\theta_{\tau,p}}
	\Biggl\{
	\\
	&\frac{
		\left(	1 - q^{-1}					
		(t^{-1})^{\tau - \zeta} q^{	\sum_{\gamma = p + 1}^{N + M + L} \theta_{\tau,\gamma}	+ \omega	- (\sum_{\gamma = d + 1}^{N + M + L} \theta_{\zeta,\gamma} + \Xi)		}	
		\right)
		\left(	1 - qt^{-1}			
		(t^{-1})^{\tau - \zeta} q^{	\sum_{\gamma = p + 1}^{N + M + L} \theta_{\tau,\gamma}	+ \omega	- (\sum_{\gamma = d + 1}^{N + M + L} \theta_{\zeta,\gamma} + \Xi)		}
		\right)
	}{
		\left(	1 - q			
		(t^{-1})^{\tau - \zeta} q^{	\sum_{\gamma = p + 1}^{N + M + L} \theta_{\tau,\gamma}	+ \omega	- (\sum_{\gamma = d + 1}^{N + M + L} \theta_{\zeta,\gamma} + \Xi)		}
		\right)
		\left(	1 - q^{-1}t			
		(t^{-1})^{\tau - \zeta} q^{	\sum_{\gamma = p + 1}^{N + M + L} \theta_{\tau,\gamma}	+ \omega	- (\sum_{\gamma = d + 1}^{N + M + L} \theta_{\zeta,\gamma} + \Xi)		}
		\right)
	}
	\notag 
	\\
	&\times
	\frac{
		\left(	1 - t		
		(t^{-1})^{\tau - \zeta} q^{	\sum_{\gamma = p + 1}^{N + M + L} \theta_{\tau,\gamma}	+ \omega	- (\sum_{\gamma = d + 1}^{N + M + L} \theta_{\zeta,\gamma} + \Xi)		}
		\right)
	}{
		\left(	1 - t^{-1}			
		(t^{-1})^{\tau - \zeta} q^{	\sum_{\gamma = p + 1}^{N + M + L} \theta_{\tau,\gamma}	+ \omega	- (\sum_{\gamma = d + 1}^{N + M + L} \theta_{\zeta,\gamma} + \Xi)		}
		\right)
	}
	\Biggr\}
	\notag 
	\\
	&=
	\prod_{\zeta = 1}^{\ell(T_2)}
	\Biggl\{
	\frac{f_{q,t}\left(		t^{\ell(T_2) - \zeta} q^{ T^{(N+M+1)}_{\zeta}		}					\right)}{
		f_{q,t}\left(		t^{\ell(T_2) - \zeta} 					\right)
	}
	\times 
	\frac{
		f_{q,t}\left(	1					\right)
	}{
		f_{q,t}\left(q^{ T^{(N+M+1)}_{\zeta} 		}						\right)
	} 
	\Biggr\}
	\times
	\notag 
	\\
	&\hspace{0.3cm}\times 
	\prod_{\zeta = 1}^{\ell(T_2)}
	\prod_{d = N+M+1}^{N+M+L}
	\prod_{\tau = \zeta + 1}^{\ell(T_2)}
	\Biggl\{
	\frac{
		f_{q,t}\left(	t^{\tau - \zeta} q^{ T^{(d+1)}_{\zeta} 	- T^{(d+1)}_{\tau}	 		}						\right)
	}{
		f_{q,t}\left(		t^{-1}		
		t^{\tau - \zeta} q^{ T^{(d+1)}_{\zeta} 	- T^{(d+1)}_{\tau}			}					\right)
	}
	\times
	\frac{
		f_{q,t}\left(		t^{-1}		
		t^{\tau - \zeta} q^{ T^{(d)}_{\zeta} 	- T^{(d+1)}_{\tau} 		}					\right)
	}{
		f_{q,t}\left(	t^{\tau - \zeta} q^{ T^{(d)}_{\zeta} 	- T^{(d+1)}_{\tau}		}						\right)
	}
	\Biggr\}.
	\notag  
\end{align}
\normalsize
\end{proof}

\begin{cor}\mbox{}
\label{cor4-17aug-1339-sun}
\footnotesize
\begin{align}
&\prod_{\zeta = 1}^{\ell(T_2)}
\prod_{d = N+M+1}^{N+M+L}
\prod_{\alpha = N+M+1}^{d - 1} 
\prod_{\omega = 1}^{\theta_{\zeta,\alpha}}
\frac{
	\left(	1 - q^{	\sum_{\gamma = \alpha + 1}^{N + M + L} \theta_{\zeta,\gamma}	+ \omega - \sum_{\gamma = d+1}^{N + M + L}\theta_{\zeta,\gamma} 		}			\right)
	\left(	1 - tq^{	\sum_{\gamma = \alpha + 1}^{N + M + L} \theta_{\zeta,\gamma}	+ \omega - (\sum_{\gamma = d}^{N + M + L}\theta_{\zeta,\gamma} 	+ 1 )	}		\right)
}{
	\left(	1 - q^{	\sum_{\gamma = \alpha + 1}^{N + M + L} \theta_{\zeta,\gamma}	+ \omega - \sum_{\gamma = d}^{N + M + L}\theta_{\zeta,\gamma} 		}			\right)
	\left(	1 - 	tq^{	\sum_{\gamma = \alpha + 1}^{N + M + L} \theta_{\zeta,\gamma}	+ \omega - (\sum_{\gamma = d + 1}^{N + M + L}\theta_{\zeta,\gamma} 	+ 1 )	}			\right)
}
\notag 
\\
&\times
\prod_{\zeta = 1}^{\ell(T_2)}
\prod_{d = N+M+1}^{N+M+L}
\prod_{\tau = \zeta + 1}^{\ell(T_2)}
\prod_{p = d+1}^{N+M+L}
\prod_{\Xi = 1}^{\theta_{\zeta,d}}
\prod_{\omega = 1}^{\theta_{\tau,p}}
\Biggl\{
\notag 
\\
&\frac{
	\left(	1 - q^{-1}					
	(t^{-1})^{\tau - \zeta} q^{	\sum_{\gamma = p + 1}^{N + M + L} \theta_{\tau,\gamma}	+ \omega	- (\sum_{\gamma = d + 1}^{N + M + L} \theta_{\zeta,\gamma} + \Xi)		}	
	\right)
	\left(	1 - qt^{-1}			
	(t^{-1})^{\tau - \zeta} q^{	\sum_{\gamma = p + 1}^{N + M + L} \theta_{\tau,\gamma}	+ \omega	- (\sum_{\gamma = d + 1}^{N + M + L} \theta_{\zeta,\gamma} + \Xi)		}
	\right)
}{
	\left(	1 - q			
	(t^{-1})^{\tau - \zeta} q^{	\sum_{\gamma = p + 1}^{N + M + L} \theta_{\tau,\gamma}	+ \omega	- (\sum_{\gamma = d + 1}^{N + M + L} \theta_{\zeta,\gamma} + \Xi)		}
	\right)
	\left(	1 - q^{-1}t			
	(t^{-1})^{\tau - \zeta} q^{	\sum_{\gamma = p + 1}^{N + M + L} \theta_{\tau,\gamma}	+ \omega	- (\sum_{\gamma = d + 1}^{N + M + L} \theta_{\zeta,\gamma} + \Xi)		}
	\right)
}
\notag 
\\
&\times
\frac{
	\left(	1 - t		
	(t^{-1})^{\tau - \zeta} q^{	\sum_{\gamma = p + 1}^{N + M + L} \theta_{\tau,\gamma}	+ \omega	- (\sum_{\gamma = d + 1}^{N + M + L} \theta_{\zeta,\gamma} + \Xi)		}
	\right)
}{
	\left(	1 - t^{-1}			
	(t^{-1})^{\tau - \zeta} q^{	\sum_{\gamma = p + 1}^{N + M + L} \theta_{\tau,\gamma}	+ \omega	- (\sum_{\gamma = d + 1}^{N + M + L} \theta_{\zeta,\gamma} + \Xi)		}
	\right)
}
\Biggr\}
\label{eqna7-1343-17aug-sun}
\\
&= 
\prod_{d = N+M+1}^{N+M+L}
\prod_{\zeta = 1}^{\ell(T_2)}
\prod_{\tau = \zeta}^{\ell(T_2)}
\frac{
	f_{q,t}\left(	t^{\tau - \zeta} q^{ T^{(d+1)}_{\zeta} 	- T^{(d+1)}_{\tau}	 		}						\right)
}{
	f_{q,t}\left(	t^{\tau - \zeta} q^{ T^{(d)}_{\zeta} 	- T^{(d+1)}_{\tau}		}						\right)
}
\frac{
	f_{q,t}\left(		
	t^{\tau - \zeta} q^{ T^{(d)}_{\zeta} 	- T^{(d+1)}_{\tau + 1} 		}					\right)
}{
	f_{q,t}\left(		
	t^{\tau- \zeta} q^{ T^{(d+1)}_{\zeta} 	- T^{(d+1)}_{\tau + 1}			}					\right)
}
\label{eqn-a7-1329-17aug}
\\
&= \prod_{d = N+M+1}^{N+M+L}\fraks{C}_{T^{(d)}/T^{(d+1)}}(q,t)
\end{align}
\normalsize
\end{cor}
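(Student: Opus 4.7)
The plan is to multiply the right-hand sides of Propositions \ref{prpa1-17aug-sun-1329}, \ref{prpa2-17aug-sun-1329}, and \ref{prpa3-17aug-sun-1329}, reorganize the result into the form displayed in \eqref{eqn-a7-1329-17aug}, and then identify the latter with $\prod_{d}\fraks{C}_{T^{(d)}/T^{(d+1)}}(q,t)$. The second identification is a direct unfolding of Definition \ref{dfn214-2113-26jul} with $\lambda = T^{(d)}$, $\mu = T^{(d+1)}$, $i = \zeta$, $j = \tau$, using the convention $T^{(N+M+L+1)} = 0$ and the fact that $T^{(d+1)} \subseteq T^{(d)}$ with $\ell(T^{(d+1)}) \leq \ell(T_2)$, so most of the effort lies in the first identification.

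To carry out the first identification, I would separate the product in \eqref{eqn-a7-1329-17aug} into its diagonal part ($\zeta = \tau$) and strict off-diagonal part ($\zeta < \tau$). On the diagonal, $t^{\tau-\zeta} = 1$ and $T^{(d+1)}_\zeta - T^{(d+1)}_\tau = 0$, so two of the four $f_{q,t}$ factors collapse to $f_{q,t}(1)$, leaving a product that telescopes in $\zeta$ and whose residual boundary term is precisely $\prod_\zeta f_{q,t}(q^{T^{(N+M+1)}_\zeta})/f_{q,t}(1)$. This is exactly the output of Proposition \ref{prpa1-17aug-sun-1329}.

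For the off-diagonal contribution, I would first convert the finite products $\prod_{\alpha=1}^{\theta_{\zeta,d}}$ in Proposition \ref{prpa2-17aug-sun-1329} to $f_{q,t}$ form using $\prod_{\alpha=1}^{n}(1-xq^\alpha) = (xq;q)_\infty/(xq^{n+1};q)_\infty$ with $n = \theta_{\zeta,d} = T^{(d)}_\zeta - T^{(d+1)}_\zeta$. Each of the four ratios then becomes a quotient of two $f_{q,t}$ values, evaluated at arguments of the form $t^{\tau-\zeta}q^{T^{(\bullet)}_\zeta - T^{(\bullet)}_\tau}$ where the superscripts range over $\{d, d+1\}$. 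Multiplying by Proposition \ref{prpa3-17aug-sun-1329} (already in $f_{q,t}$ form) produces factors involving $T^{(d)}_\tau$ which do not appear in \eqref{eqn-a7-1329-17aug}; these cancel between A.2 and A.3 after shifting $d \mapsto d-1$ in the appropriate block. The surviving boundary contributions at $\tau = \ell(T_2)$ and $d = N+M+L$ are exactly the $t^{\ell(T_2)-\zeta}$ and $q^{T^{(N+M+1)}_\zeta}$ factors on the first line of Proposition \ref{prpa3-17aug-sun-1329}, and those in $f_{q,t}(1)$ pair off against the $f_{q,t}(1)$ factors from Proposition \ref{prpa1-17aug-sun-1329} that are not consumed by the diagonal part.

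The main obstacle is the bookkeeping of index shifts and boundary terms: confirming that after the rewriting of A.2 in $f_{q,t}$ form, the $T^{(d)}_\tau$-dependent factors really do cancel against the matching $T^{(d)}_\tau$-dependent factors in A.3, and that the shifts $d \mapsto d-1$, $\tau \mapsto \tau-1$ needed to align the products align consistently with the convention $T^{(N+M+L+1)} = 0$ and the upper limit $\tau \leq \ell(T_2)$ in \eqref{eqn-a7-1329-17aug}. Once these telescoping cancellations are verified, the desired identity follows without further computation.
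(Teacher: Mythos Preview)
Your plan misidentifies what is being proved. The left-hand side of the corollary is the product of the left-hand sides of Propositions~\ref{prpa1-17aug-sun-1329} and~\ref{prpa3-17aug-sun-1329} only; Proposition~\ref{prpa2-17aug-sun-1329} does not appear. (Proposition~\ref{prpa2-17aug-sun-1329} is combined with this corollary \emph{afterwards}, in the final assembly of Lemma~\ref{prp317-2237-16aug}, to produce the third displayed factor of $\mathcal{A}_2(T;q,t)$ in Definition~\ref{dfn213-1508-1aug}.) So your ``off-diagonal'' mechanism of converting A.2 to $f_{q,t}$ form and cancelling its $T^{(d)}_\tau$-dependent factors against A.3 is aimed at the wrong target: there are no A.2 factors here to cancel.

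The actual mechanism in the paper is different and simpler. After multiplying the right-hand sides of A.1 and A.3, the $f_{q,t}(1)/f_{q,t}(q^{T^{(N+M+1)}_\zeta})$ piece of A.3 cancels the first bracket of A.1. The remaining off-diagonal block of A.3 contains factors with $t^{-1}t^{\tau-\zeta}$; the key step is to rewrite $t^{-1}t^{\tau-\zeta} = t^{(\tau-1)-\zeta}$ and reindex $\tau \mapsto \tau+1$, turning $\prod_{\tau=\zeta+1}^{\ell(T_2)}$ into $\prod_{\tau=\zeta}^{\ell(T_2)-1}$. Extending the upper limit back to $\ell(T_2)$ introduces a boundary term that cancels the $f_{q,t}(t^{\ell(T_2)-\zeta}q^{T^{(N+M+1)}_\zeta})/f_{q,t}(t^{\ell(T_2)-\zeta})$ piece of A.3. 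What survives, together with the second bracket of A.1 (which supplies the $\tau=\zeta$ diagonal), is exactly~\eqref{eqn-a7-1329-17aug}. There is no $d\mapsto d-1$ shift. Your diagonal analysis is also off: only one of the four $f_{q,t}$ factors collapses to $f_{q,t}(1)$ at $\tau=\zeta$, not two, and the telescoping you need there is in $d$ via A.1, not in $\zeta$.
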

\begin{proof}
According to \textbf{Propositions} \textbf{\ref{prpa1-17aug-sun-1329}} and \textbf{\ref{prpa3-17aug-sun-1329}}, we obtain that the quantity in equation \eqref{eqna7-1343-17aug-sun} is equal to 
\footnotesize
\begin{align}
&= 
\prod_{d = N+M+1}^{N+M+L}
\prod_{\zeta = 1}^{\ell(T_2)}
\left[
\frac{
	f_{q,t}(1)
}{
	f_{q,t}(q^{T^{(d)}_{\zeta} - T^{(d+1)}_{\zeta}})
}
\right]
\times
\prod_{\zeta = 1}^{\ell(T_2)}
\Biggl\{
\frac{f_{q,t}\left(		t^{\ell(T_2) - \zeta} q^{ T^{(N+M+1)}_{\zeta}		}					\right)}{
	f_{q,t}\left(		t^{\ell(T_2) - \zeta} 					\right)
}
\Biggr\}
\times
\\
&\hspace{0.3cm}\times 
\prod_{\zeta = 1}^{\ell(T_2)}
\prod_{d = N+M+1}^{N+M+L}
\prod_{\tau = \zeta + 1}^{\ell(T_2)}
\Biggl\{
\frac{
	f_{q,t}\left(	t^{\tau - \zeta} q^{ T^{(d+1)}_{\zeta} 	- T^{(d+1)}_{\tau}	 		}						\right)
}{
	f_{q,t}\left(	t^{\tau - \zeta} q^{ T^{(d)}_{\zeta} 	- T^{(d+1)}_{\tau}		}						\right)
}
\Biggr\} 
\times
\prod_{\zeta = 1}^{\ell(T_2)}
\prod_{d = N+M+1}^{N+M+L}
\prod_{\tau = \zeta + 1}^{\ell(T_2)}
\Biggl\{
\frac{
	f_{q,t}\left(		t^{-1}		
	t^{\tau - \zeta} q^{ T^{(d)}_{\zeta} 	- T^{(d+1)}_{\tau} 		}					\right)
}{
	f_{q,t}\left(		t^{-1}		
	t^{\tau - \zeta} q^{ T^{(d+1)}_{\zeta} 	- T^{(d+1)}_{\tau}			}					\right)
}
\Biggr\} 
\notag 
\end{align}
\normalsize
Since 
\footnotesize
\begin{align}
	&\prod_{\zeta = 1}^{\ell(T_2)}
	\prod_{d = N+M+1}^{N+M+L}
	\prod_{\tau = \zeta + 1}^{\ell(T_2)}
	\Biggl\{
	\frac{
		f_{q,t}\left(		t^{-1}		
		t^{\tau - \zeta} q^{ T^{(d)}_{\zeta} 	- T^{(d+1)}_{\tau} 		}					\right)
	}{
		f_{q,t}\left(		t^{-1}		
		t^{\tau - \zeta} q^{ T^{(d+1)}_{\zeta} 	- T^{(d+1)}_{\tau}			}					\right)
	}
	\Biggr\} 
	\\
	&= 
	\prod_{\zeta = 1}^{\ell(T_2)}
	\prod_{d = N+M+1}^{N+M+L}
	\prod_{\tau = \zeta}^{\ell(T_2) - 1}
	\Biggl\{
	\frac{
		f_{q,t}\left(		
		t^{\tau - \zeta} q^{ T^{(d)}_{\zeta} 	- T^{(d+1)}_{\tau + 1} 		}					\right)
	}{
		f_{q,t}\left(		
		t^{\tau- \zeta} q^{ T^{(d+1)}_{\zeta} 	- T^{(d+1)}_{\tau + 1}			}					\right)
	}
	\Biggr\} 
	\notag 
	\\
	&=
	\prod_{\zeta = 1}^{\ell(T_2)}
	\prod_{d = N+M+1}^{N+M+L}
	\prod_{\tau = \zeta}^{\ell(T_2)}
	\frac{
		f_{q,t}\left(		
		t^{\tau - \zeta} q^{ T^{(d)}_{\zeta} 	- T^{(d+1)}_{\tau + 1} 		}					\right)
	}{
		f_{q,t}\left(		
		t^{\tau- \zeta} q^{ T^{(d+1)}_{\zeta} 	- T^{(d+1)}_{\tau + 1}			}					\right)
	}
	\times 
	\prod_{\zeta = 1}^{\ell(T_2)}
	\frac{
		f_{q,t}\left(		
		t^{\ell(T_2) - \zeta} 				\right)
	}{
		f_{q,t}\left(		
		t^{\ell(T_2) - \zeta} q^{ T^{(N+M+1)}_{\zeta} 		}					\right)
	},
	\notag 
\end{align}
\normalsize
we obtain that the LHS of equation \eqref{eqn-a7-1329-17aug} is equal to 
\footnotesize
\begin{align}
	&= 
	\prod_{d = N+M+1}^{N+M+L}
	\prod_{\zeta = 1}^{\ell(T_2)}
	\left[
	\frac{
		f_{q,t}(1)
	}{
		f_{q,t}(q^{T^{(d)}_{\zeta} - T^{(d+1)}_{\zeta}})
	}
	\right]
	\\
	&\hspace{0.3cm}\times 
	\prod_{\zeta = 1}^{\ell(T_2)}
	\prod_{d = N+M+1}^{N+M+L}
	\prod_{\tau = \zeta + 1}^{\ell(T_2)}
	\Biggl\{
	\frac{
		f_{q,t}\left(	t^{\tau - \zeta} q^{ T^{(d+1)}_{\zeta} 	- T^{(d+1)}_{\tau}	 		}						\right)
	}{
		f_{q,t}\left(	t^{\tau - \zeta} q^{ T^{(d)}_{\zeta} 	- T^{(d+1)}_{\tau}		}						\right)
	}
	\Biggr\} 
	\times
	\prod_{\zeta = 1}^{\ell(T_2)}
	\prod_{d = N+M+1}^{N+M+L}
	\prod_{\tau = \zeta}^{\ell(T_2)}
	\frac{
		f_{q,t}\left(		
		t^{\tau - \zeta} q^{ T^{(d)}_{\zeta} 	- T^{(d+1)}_{\tau + 1} 		}					\right)
	}{
		f_{q,t}\left(		
		t^{\tau- \zeta} q^{ T^{(d+1)}_{\zeta} 	- T^{(d+1)}_{\tau + 1}			}					\right)
	}
	\notag 
	\\
	&= 
	\prod_{d = N+M+1}^{N+M+L}
	\prod_{\zeta = 1}^{\ell(T_2)}
	\prod_{\tau = \zeta}^{\ell(T_2)}
	\frac{
		f_{q,t}\left(	t^{\tau - \zeta} q^{ T^{(d+1)}_{\zeta} 	- T^{(d+1)}_{\tau}	 		}						\right)
	}{
		f_{q,t}\left(	t^{\tau - \zeta} q^{ T^{(d)}_{\zeta} 	- T^{(d+1)}_{\tau}		}						\right)
	}
	\frac{
		f_{q,t}\left(		
		t^{\tau - \zeta} q^{ T^{(d)}_{\zeta} 	- T^{(d+1)}_{\tau + 1} 		}					\right)
	}{
		f_{q,t}\left(		
		t^{\tau- \zeta} q^{ T^{(d+1)}_{\zeta} 	- T^{(d+1)}_{\tau + 1}			}					\right)
	}
	\notag 
\end{align}
\normalsize

Next, observe that for $\tau \in \{1,\dots,\ell(T_2)\}$ such that $\tau > \ell(T^{(d+1)})$, 
\footnotesize
\begin{align}
	\frac{
		f_{q,t}\left(	t^{\tau - \zeta} q^{ T^{(d+1)}_{\zeta} 	- T^{(d+1)}_{\tau}	 		}						\right)
	}{
		f_{q,t}\left(	t^{\tau - \zeta} q^{ T^{(d)}_{\zeta} 	- T^{(d+1)}_{\tau}		}						\right)
	}
	\frac{
		f_{q,t}\left(		
		t^{\tau - \zeta} q^{ T^{(d)}_{\zeta} 	- T^{(d+1)}_{\tau + 1} 		}					\right)
	}{
		f_{q,t}\left(		
		t^{\tau- \zeta} q^{ T^{(d+1)}_{\zeta} 	- T^{(d+1)}_{\tau + 1}			}					\right)
	} 
	= 1
\end{align}
\normalsize
Therefore, 
\footnotesize
\begin{align}
	&\prod_{\zeta = 1}^{\ell(T_2)}
	\prod_{\tau = \zeta}^{\ell(T_2)}
	\frac{
		f_{q,t}\left(	t^{\tau - \zeta} q^{ T^{(d+1)}_{\zeta} 	- T^{(d+1)}_{\tau}	 		}						\right)
	}{
		f_{q,t}\left(	t^{\tau - \zeta} q^{ T^{(d)}_{\zeta} 	- T^{(d+1)}_{\tau}		}						\right)
	}
	\frac{
		f_{q,t}\left(		
		t^{\tau - \zeta} q^{ T^{(d)}_{\zeta} 	- T^{(d+1)}_{\tau + 1} 		}					\right)
	}{
		f_{q,t}\left(		
		t^{\tau- \zeta} q^{ T^{(d+1)}_{\zeta} 	- T^{(d+1)}_{\tau + 1}			}					\right)
	}
	\\ 
	&= 
	\prod_{\zeta = 1}^{\ell(T_2)}
	\prod_{\tau = \zeta}^{\ell(T^{(d+1)})}
	\frac{
		f_{q,t}\left(	t^{\tau - \zeta} q^{ T^{(d+1)}_{\zeta} 	- T^{(d+1)}_{\tau}	 		}						\right)
	}{
		f_{q,t}\left(	t^{\tau - \zeta} q^{ T^{(d)}_{\zeta} 	- T^{(d+1)}_{\tau}		}						\right)
	}
	\frac{
		f_{q,t}\left(		
		t^{\tau - \zeta} q^{ T^{(d)}_{\zeta} 	- T^{(d+1)}_{\tau + 1} 		}					\right)
	}{
		f_{q,t}\left(		
		t^{\tau- \zeta} q^{ T^{(d+1)}_{\zeta} 	- T^{(d+1)}_{\tau + 1}			}					\right)
	} 
	\notag 
	\\
	&= 
	\prod_{1 \leq \zeta \leq \tau \leq \ell(T^{(d+1)})}
	\frac{
		f_{q,t}\left(	t^{\tau - \zeta} q^{ T^{(d+1)}_{\zeta} 	- T^{(d+1)}_{\tau}	 		}						\right)
	}{
		f_{q,t}\left(	t^{\tau - \zeta} q^{ T^{(d)}_{\zeta} 	- T^{(d+1)}_{\tau}		}						\right)
	}
	\frac{
		f_{q,t}\left(		
		t^{\tau - \zeta} q^{ T^{(d)}_{\zeta} 	- T^{(d+1)}_{\tau + 1} 		}					\right)
	}{
		f_{q,t}\left(		
		t^{\tau- \zeta} q^{ T^{(d+1)}_{\zeta} 	- T^{(d+1)}_{\tau + 1}			}					\right)
	} 
	\notag 
	\\
	&= \fraks{C}_{T^{(d)}/T^{(d+1)}}(q,t).
	\notag 
\end{align}
\normalsize
Thus, we have proved the \textbf{Corollary \ref{cor4-17aug-1339-sun}}. 
\end{proof}

From \textbf{Proposition \ref{prpa2-17aug-sun-1329}} and \textbf{Corollary \ref{cor4-17aug-1339-sun}}, we get that the RHS of equation \eqref{eqn35-1146-17aug} is equal to 
\begin{align}
	& 
	\prod_{d = N+M+1}^{N+M+L}\fraks{C}_{T^{(d)}/T^{(d+1)}}(q,t)
	\times 
	\prod_{\zeta = 1}^{\ell(T_2)}
	\prod_{d = N+M+1}^{N+M+L}
	\prod_{j = 1}^{\theta_{\zeta,d} - 1}
	\frac{
		(1 - qt^{-1}q^{ j})(1 - t)
	}{
		(1 - qt^{-1})(1 - tq^{ j})
	}
	\\
	&\times
	\prod_{d = N+M+1}^{N+M+L} 
	\prod_{\zeta = 1}^{\ell(T_2)}
	\prod_{\tau = \zeta + 1}^{\ell(T_2)} 
	\prod_{\alpha = 0}^{\theta_{\zeta,d} - 1}
	\frac{
		\left(		1 - t t^{\tau - \zeta} q^{ T^{(d+1)}_{\zeta} + \alpha	 -	T^{(d)}_{\tau}			}					\right)
		\left(
		1 - t^{\tau - \zeta} q^{ T^{(d+1)}_{\zeta} + \alpha	-	T^{(d+1)}_{\tau}			}	
		\right)
	}{
		\left(
		1 - t^{\tau - \zeta} q^{ T^{(d+1)}_{\zeta} + \alpha	 -	T^{(d)}_{\tau} 		}	
		\right)
		\left(		1 - t t^{\tau - \zeta} q^{	T^{(d+1)}_{\zeta} + \alpha	 - T^{(d+1)}_{\tau}			}					\right)
	}.
	\notag 
\end{align}
However, this is nothing but the definition of $\cals{A}_2(T;q,t)$ (\textbf{Definition \ref{dfn213-1508-1aug}}). Thus, we have proved the \textbf{Lemma \ref{prp317-2237-16aug}}. 

\section{Proof of Lemma \ref{lemm42-1252-25jul}}
\label{appA-1254}

The goal of this appendix is to prove \textbf{Lemma \ref{lemm42-1252-25jul}}. The idea of the proof is similar to the case of the quantum corner VOA with $\vec{c} = (3^N1^M)$ as discussed in \textbf{Appendix A} of paper \cite{CSW2025}. However, since this paper focuses on the quantum corner VOA with $\vec{c} = (3^N1^M2^L)$, it is necessary to generalize the explanations provided in \textbf{Appendix A} of paper \cite{CSW2025}. Here, we will provide detailed proofs only for statements that have changed. In cases where the statement has not changed, we will refer to paper \cite{CSW2025} for proof details. 

From equation \eqref{eqn21-1326-27jul}, we know that 
\begin{align}
	\widetilde{T}^{\vec{c},\vec{u}}_1(z)
	&= 
	\sum_{i = 1}^{n}
	\frac{
		q_{c_i}^{\frac{1}{2}} - q_{c_i}^{-\frac{1}{2}}
	}{
		q_{3}^{\frac{1}{2}} - q_{3}^{-\frac{1}{2}}
	}
	\normord{
		\widetilde{		\Lambda		}^{\vec{c},\vec{u}}_i(z)
	}
\end{align}
Thus, 
\footnotesize
\begin{align}
	&
	(
	\dualmap
	\comp 
	\bigg|_{
		\substack{
			q_1 = q, \\
			q_2 = q^{-1}t,\\
			q_3 = t^{-1} \\
		}
	}
	)
	\left(
	\cals{N}_{\lambda}(z_1,\dots,z_k )
	\times
	\prod_{1 \leq i < j \leq k}f^{\vec{c}}_{11}\left(\frac{z_j}{z_i} \right)
	\times
	\langle 0 |\widetilde{T}^{\vec{c},\vec{u}}_{1}(z_1 )\cdots \widetilde{T}^{\vec{c},\vec{u}}_{1}(z_k )|0\rangle
	\right)
	\\
	&= 
	\sum_{i_1 = 1}^{N+M+L}
	\cdots
	\sum_{i_k = 1}^{N+M+L}
	(
	\dualmap
	\comp 
	\bigg|_{
		\substack{
			q_1 = q, \\
			q_2 = q^{-1}t,\\
			q_3 = t^{-1} \\
		}
	}
	)
	\bigg[
	y_{i_1}\cdots y_{i_k}
	u_{i_1}\cdots u_{i_k}
	\times
	\cals{N}_\lambda(z_1,\dots,z_k)
	\times
	\prod_{1 \leq a < b \leq k}
	\cals{D}^{(i_a,i_b)}\left(
	\frac{z_b}{z_a}
	; q, t
	\right)
	\bigg]
	\notag 
\end{align}
\normalsize
where 
$\displaystyle y_j :=
\frac{
	q_{c_j}^{\frac{1}{2}} - q_{c_j}^{-\frac{1}{2}}
}{
	q_{3}^{\frac{1}{2}} - q_{3}^{-\frac{1}{2}}
}
\,\, (j = 1,\dots,N+M+L)$
and 
\footnotesize 
\begin{align}
	\cals{D}^{(i,j)}\left(
	\frac{z_b}{z_a}
	; q, t
	\right)
	:= 
	\begin{cases}
		\displaystyle 
		\frac{
			\left(1 - q_1^{-1}\frac{z_b}{z_a}\right)
			\left(1 - q_2^{-1}\frac{z_b}{z_a}\right)
		}{
			\left(1 - q_3\frac{z_b}{z_a}\right)
			\left(1 - \frac{z_b}{z_a}\right)
		}
		\hspace{0.3cm}
		&\text{ if } i < j
		\\
		\displaystyle 
		\frac{
			\left(1 - q_2^{-1}\frac{z_b}{z_a}\right)
			\left(1 - q_2\frac{z_b}{z_a}\right)
		}{
			\left(1 - q_3^{-1}\frac{z_b}{z_a}\right)
			\left(1 - q_3\frac{z_b}{z_a}\right)
		}
		\hspace{0.3cm}
		&\text{ if } i = j = \text{ hyper-number }
		\\
		\displaystyle 
		\frac{
			\left(1 - q_1^{-1}\frac{z_b}{z_a}\right)
			\left(1 - q_1\frac{z_b}{z_a}\right)
		}{
			\left(1 - q_3^{-1}\frac{z_b}{z_a}\right)
			\left(1 - q_3\frac{z_b}{z_a}\right)
		}
		\hspace{0.3cm}
		&\text{ if } i = j = \text{ super-number }
		\\
		1
		\hspace{0.3cm}
		&\text{ if } i = j = \text{ ordinary-number }
		\\
		\displaystyle
		\frac{
			\left(1 - q_1\frac{z_b}{z_a}\right)
			\left(1 - q_2\frac{z_b}{z_a}\right)
		}{
			\left(1 - q_3^{-1}\frac{z_b}{z_a}\right)
			\left(1 - \frac{z_b}{z_a}\right)
		}
		\hspace{0.3cm}
		&\text{ if } i > j 
	\end{cases}
\end{align}
\normalsize

\begin{prop}[\cite{CSW2025}]\mbox{}
\label{prp-a1-1339-27jul}
Let $\lambda \in \operatorname{Par}(k)$. Then, the following statements hold:
\begin{enumerate}[(1)]
\item 
$\cals{N}_{\lambda}(z_1,\dots,z_k )$ contains exactly the factor $(1-t\xi)^{\ell(\lambda) - 1}$.
\item
$\cals{N}_{\lambda}(z_1,\dots,z_k )$ can be written as 

\footnotesize
\begin{align}
	&\cals{N}_{\lambda}(z_1,\dots,z_k )
	= 
	\Delta\left(	q_3^{-\frac{1}{2}}\frac{z_{\lambda_1 + 1}}{z_1}				\right)^{-1}
	\Delta\left(	q_3^{-\frac{1}{2}}\frac{z_{\lambda_1 + \lambda_2 + 1}}{z_{\lambda_1 + 1}}				\right)^{-1}
	\times 
	\cdots 
	\times 
	\Delta\left(	q_3^{-\frac{1}{2}}\frac{z_{\sum_{j = 1}^{\ell(\lambda) - 1}\lambda_j+1}}{z_{\sum_{j = 1}^{\ell(\lambda) - 2}\lambda_j+1}}				\right)^{-1}
	\times
	\cals{A}
	\label{d7-1950-21jan}
\end{align}
\normalsize
where the limit 
\begin{align}
	\lim_{\xi \rightarrow t^{-1}}\,\,
	(
	\dualmap
	\comp 
	\bigg|_{
		\substack{
			q_1 = q, \\
			q_2 = q^{-1}t,\\
			q_3 = t^{-1} \\
		}
	}
	)
	\left(
	\cals{A}
	\right)
\end{align}
exists and be nonzero. 
\end{enumerate}
\end{prop}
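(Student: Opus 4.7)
The plan is to prove both claims by a direct computation of how $\dualmap \comp|_{q_1=q,\,q_2=q^{-1}t,\,q_3=t^{-1}}$ acts on each individual factor $\Delta(q_3^{-1/2}z_j/z_i)^{-1}$ appearing in $\cals{N}_\lambda$. Under the specialization one has
\begin{align*}
\Delta(t^{1/2}z)^{-1} = \frac{(1-z)(1-tz)}{(1-qz)(1-q^{-1}tz)},
\end{align*}
and if the box indexed by $i$ sits at row $c$, column $\alpha$ while the box indexed by $j$ sits at row $d$, column $\beta$, then $\dualmap$ sends $z_j/z_i$ to $q^{\beta-\alpha}\xi^{d-c}$. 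Substituting and invoking the generic condition $q^a t^b = 1 \Rightarrow a=b=0$, a four-factor case analysis shows that among the two numerator and two denominator factors only two can vanish at $\xi = t^{-1}$: the numerator factor $(1-tq^{\beta-\alpha}\xi^{d-c})$ when $\beta=\alpha$ and $d-c=1$, and the denominator factor $(1-q^{\beta-\alpha-1}t\xi^{d-c})$ when $\beta=\alpha+1$ and $d-c=1$. In both cases the vanishing factor is, as a polynomial in $\xi$, literally $(1-t\xi)$.

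For claim (1) I then count. Each consecutive pair of rows $(c, c+1)$ contributes $\lambda_{c+1}$ numerator factors $(1-t\xi)$ (one for each $1 \leq \alpha = \beta \leq \lambda_{c+1}$, using that $\lambda_{c+1} \leq \lambda_c$) and $\lambda_{c+1} - 1$ denominator factors $(1-t\xi)$ (one for each $\alpha \in \{1,\dots,\lambda_{c+1} - 1\}$ with $\beta = \alpha + 1$), so the net power of $(1-t\xi)$ in $\dualmap(\cals{N}_\lambda)$ is
\begin{align*}
\sum_{c=1}^{\ell(\lambda)-1}\bigl( \lambda_{c+1} - (\lambda_{c+1} - 1) \bigr) = \ell(\lambda) - 1,
\end{align*}
which is exactly claim (1).

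For claim (2) I extract the $\ell(\lambda) - 1$ diagonal factors indexed by the first box of each consecutive pair of rows, i.e.\ $(\alpha,\beta,c,d) = (1,1,c,c+1)$; each such factor specializes to $\Delta(t^{1/2}\xi)^{-1} = \frac{(1-\xi)(1-t\xi)}{(1-q\xi)(1-q^{-1}t\xi)}$, which contributes precisely one $(1-t\xi)$ zero and otherwise evaluates to a nonzero finite number at $\xi = t^{-1}$ by genericity. The remaining factor $\cals{A}$ then retains, in its numerator, the $\sum_{c}(\lambda_{c+1} - 1)$ residual $(1-t\xi)$ factors coming from positions $\alpha=\beta \geq 2$ and, in its denominator, the identical number of $(1-t\xi)$ factors from positions $\beta=\alpha+1$; since both sets are literally the same polynomial in $\xi$ they cancel, leaving a ratio whose numerator and denominator each evaluate to nonzero finite numbers at $\xi = t^{-1}$. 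The hard part is the bookkeeping required to certify that the four-factor classification captures every coincidence: one must verify, invoking the generic condition on $(q,t)$ at each step, that no other combination of $(\alpha,\beta,c,d)$ produces a zero or pole at $\xi = t^{-1}$, so that the cancellation inside $\cals{A}$ is a polynomial identity rather than a limit and no hidden indeterminate form arises.
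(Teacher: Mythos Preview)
Your proof is correct; the direct factor-by-factor analysis under the specialization, followed by the counting of zeros and poles at $\xi=t^{-1}$ via the generic condition on $(q,t)$, is exactly the natural approach. The present paper does not supply its own proof of this proposition but defers to \cite{CSW2025}, so there is nothing further to compare.
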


From \textbf{Proposition \ref{prp-a1-1339-27jul}}, we obtain that 
\footnotesize
\begin{align}
	&
	\lim_{\xi \rightarrow t^{-1}}\,\,
	(
	\dualmap
	\comp 
	\bigg|_{
		\substack{
			q_1 = q, \\
			q_2 = q^{-1}t,\\
			q_3 = t^{-1} \\
		}
	}
	)
	\left(
	\cals{N}_{\lambda}(z_1,\dots,z_k )
	\times
	\prod_{1 \leq i < j \leq k}f^{\vec{c}}_{11}\left(\frac{z_j}{z_i} \right)
	\times
	\langle 0 |\widetilde{T}^{\vec{c},\vec{u}}_{1}(z_1 )\cdots \widetilde{T}^{\vec{c},\vec{u}}_{1}(z_k )|0\rangle
	\right)
	\label{d7-eqn-1830-21jan}
	\\
	&= 
	\lim_{\xi \rightarrow t^{-1}}\,\,
	(
	\dualmap
	\comp 
	\bigg|_{
		\substack{
			q_1 = q, \\
			q_2 = q^{-1}t,\\
			q_3 = t^{-1} \\
		}
	}
	)
	\left(
	\cals{A}
	\right)
	\times
	\notag 
	\\
	&\hspace{0.3cm} \times \sum_{i_1 = 1}^{N+M}
	\cdots
	\sum_{i_k = 1}^{N+M}
	\lim_{\xi \rightarrow t^{-1}}\,\,
	(
	\dualmap
	\comp 
	\bigg|_{
		\substack{
			q_1 = q, \\
			q_2 = q^{-1}t,\\
			q_3 = t^{-1} \\
		}
	}
	)
	\Bigg[
	\Delta\left(	q_3^{-\frac{1}{2}}\frac{z_{\lambda_1 + 1}}{z_1}				\right)^{-1}
	\times 
	\cdots 
	\times 
	\Delta\left(	q_3^{-\frac{1}{2}}\frac{z_{\sum_{j = 1}^{\ell(\lambda) - 1}\lambda_j+1}}{z_{\sum_{j = 1}^{\ell(\lambda) - 2}\lambda_j+1}}				\right)^{-1}
	\notag 
	\\
	&\hspace{5.1cm}\times
	\prod_{1 \leq i < j \leq k}f^{\vec{c}}_{11}\left(\frac{z_j}{z_i} \right)
	\times
	y_{i_1} \cdots y_{i_k}
	\times 
	\langle 0 |			\normord{
		\widetilde{\Lambda}^{\vec{c},\vec{u}}_{i_1}(z_1)
	}
	\times
	\cdots
	\times
	\normord{
		\widetilde{\Lambda}^{\vec{c},\vec{u}}_{i_k}(z_k)
	}				
	|0\rangle
	\Bigg].
	\notag 
	\\
	&= 
	\lim_{\xi \rightarrow t^{-1}}\,\,
	(
	\dualmap
	\comp 
	\bigg|_{
		\substack{
			q_1 = q, \\
			q_2 = q^{-1}t,\\
			q_3 = t^{-1} \\
		}
	}
	)
	\left(
	\cals{A}
	\right)
	\times
	\notag 
	\\
	&\hspace{0.3cm} \times \sum_{i_1 = 1}^{N+M}
	\cdots
	\sum_{i_k = 1}^{N+M}
	\lim_{\xi \rightarrow t^{-1}}\,\,
	(
	\dualmap
	\comp 
	\bigg|_{
		\substack{
			q_1 = q, \\
			q_2 = q^{-1}t,\\
			q_3 = t^{-1} \\
		}
	}
	)
	\Bigg[
	y_{i_1}\cdots y_{i_k}
	u_{i_1}\cdots u_{i_k}
	\notag 
	\\
	&\hspace{4.7cm}\times
	\Delta\left(	q_3^{-\frac{1}{2}}\frac{z_{\lambda_1 + 1}}{z_1}				\right)^{-1}
	\times 
	\cdots 
	\times 
	\Delta\left(	q_3^{-\frac{1}{2}}\frac{z_{\sum_{j = 1}^{\ell(\lambda) - 1}\lambda_j+1}}{z_{\sum_{j = 1}^{\ell(\lambda) - 2}\lambda_j+1}}				\right)^{-1}
	\times
	\prod_{1 \leq a < b \leq k}
	\cals{D}^{(i_a,i_b)}\left(
	\frac{z_b}{z_a}
	; q, t
	\right)
	\Bigg]
	\notag
\end{align}
\normalsize

\begin{lem}[\cite{CSW2025}]
\label{lemma2-1359-27jul}
Let $\lambda \in \operatorname{Par}(k)$. 
Suppose that $(i_1,\dots,i_k) \in \{1,\dots,N+M+L\}^k$ such that $T(i_1,\dots,i_k;\lambda)$ is a tableau with an adjacent pair of boxes in a row that breaks the reverse SSYTT conditions. Then, we have 

\footnotesize
\begin{align}
	&\lim_{\xi \rightarrow t^{-1}}\,\,
	(
	\dualmap
	\comp 
	\bigg|_{
		\substack{
			q_1 = q, \\
			q_2 = q^{-1}t,\\
			q_3 = t^{-1} \\
		}
	}
	)
	\Bigg[
	y_{i_1}\cdots y_{i_k}
	u_{i_1}\cdots u_{i_k}
	\\
	&\hspace{2.9cm}\times
	\Delta\left(	q_3^{-\frac{1}{2}}\frac{z_{\lambda_1 + 1}}{z_1}				\right)^{-1}
	\times 
	\cdots 
	\times 
	\Delta\left(	q_3^{-\frac{1}{2}}\frac{z_{\sum_{j = 1}^{\ell(\lambda) - 1}\lambda_j+1}}{z_{\sum_{j = 1}^{\ell(\lambda) - 2}\lambda_j+1}}				\right)^{-1}
	\times
	\prod_{1 \leq a < b \leq k}
	\cals{D}^{(i_a,i_b)}\left(
	\frac{z_b}{z_a}
	; q, t
	\right)
	\Bigg]
	\notag
	\\
	&= 0. 
	\notag 
\end{align}
\normalsize 
\end{lem}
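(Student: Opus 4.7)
The plan is to exhibit a single factor of the integrand which, after the specialization and the $\dualmap$ substitution, is identically zero, and then to argue that this forces the entire rational function---and hence its limit---to vanish. The hypothesis that the adjacent pair $(a, a+1)$ in some row of $\lambda$ violates the conditions of Definition~\ref{def35-1058-25jul} leaves only two possibilities: either (i) $i_a < i_{a+1}$, violating condition~(2), or (ii) $i_a = i_{a+1}$ with both values being super numbers, violating condition~(5). In case (i),
\begin{equation*}
\cals{D}^{(i_a, i_{a+1})}(z;q,t) = \frac{(1-q_1^{-1}z)(1-q_2^{-1}z)}{(1-q_3 z)(1-z)},
\end{equation*}
while in case (ii),
\begin{equation*}
\cals{D}^{(i_a, i_{a+1})}(z;q,t) = \frac{(1-q_1^{-1}z)(1-q_1 z)}{(1-q_3^{-1}z)(1-q_3 z)}.
\end{equation*}
In both cases the numerator carries the factor $(1-q_1^{-1}z)$.

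Next, since $a$ and $a+1$ lie in the same row of $\lambda$ and in adjacent columns, the map $\dualmap$ sends $z_{a+1}/z_a$ to $q$. Under the substitution $q_1 = q$, the factor $(1-q_1^{-1}z)$ evaluated at $z = q$ becomes $(1 - q^{-1}\cdot q) = 0$, while by the genericity assumption on $q,t$ the remaining numerator and denominator factors of $\cals{D}^{(i_a, i_{a+1})}(q;q,t)$ are nonzero elements of $\mathbb{Q}(q,t)$. Hence this one $\cals{D}$-factor, after specialization and $\dualmap$, equals the \emph{constant} $0 \in \mathbb{Q}(q,t)$, independent of $\xi$. Since the whole expression inside the limit is a product of rational functions in $\xi, y$ over $\mathbb{Q}(q,t)$ and one of its factors is identically zero, the product is the zero rational function in the field $\mathbb{Q}(q,t)(\xi, y)$, and so its limit as $\xi \to t^{-1}$ is $0$.

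The main obstacle to this slick argument looks like a possible $0 \cdot \infty$ indeterminacy: other $\cals{D}^{(i_{a'}, i_{b'})}$ factors can develop a pole at $\xi = t^{-1}$---for instance when $a'$ and $b'$ sit in the same column in adjacent rows with $i_{a'} = i_{b'}$ a super or hyper number, so that $z_{b'}/z_{a'} = \xi$ meets the denominator factor $(1 - q_3^{-1}\xi)$ at $\xi = t^{-1} = q_3$. This concern dissolves upon noting that the zero contributed by the bad row-pair is a \emph{constant}, not a zero that develops only in the limit $\xi \to t^{-1}$: at the level of rational functions, writing the full product in lowest terms as $N(\xi, y)/D(\xi, y)$, the polynomial $N$ acquires the factor $0$ and is therefore the zero polynomial. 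The ratio $N/D$ is then the zero rational function, and no indeterminate form ever arises.
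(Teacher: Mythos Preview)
Your argument is correct and is the natural one. The paper does not reproduce a proof here, citing instead \cite{CSW2025}; the mechanism you identify---that the violating adjacent row-pair forces a numerator factor $(1-q_1^{-1}z_{a+1}/z_a)$ which under $\dualmap$ becomes the constant $0\in\bb{Q}(q,t)$, independently of $\xi$---is exactly the expected one and is almost certainly the argument intended in the cited reference.

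One minor remark on exposition: the phrase ``the polynomial $N$ acquires the factor $0$'' is loosely worded. A cleaner way to frame the $0\cdot\infty$ resolution is to observe that after applying $\dualmap$, each individual factor of the product is a well-defined element of the field $\bb{Q}(q,t)(\xi)$: for same-row pairs the denominators $(1-q^m)$, $(1-t^{\pm 1}q^m)$ are nonzero by the genericity of $q,t$, and for cross-row pairs and the $\Delta^{-1}$ terms the denominators are nonzero polynomials in $\xi$. One of these factors is the zero element of the field, so the product is zero in $\bb{Q}(q,t)(\xi)$, and the limit $\xi\to t^{-1}$ of the zero function is $0$. This is what you mean, and the mathematics is sound; only the sentence about $N$ could be tightened.
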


From the \textbf{Lemma \ref{lemma2-1359-27jul}} above, we have shown that if the reverse SSYTT condition is broken in the row direction, then the contribution from it is $0$. Therefore, in order to prove \textbf{Lemma \ref{lemm42-1252-25jul}}, it is sufficient to show that if $T(i_1,\dots,i_k;\lambda)$ is a tableau where every adjacent pair of boxes in a row satisfies the reverse SSYTT conditions, but there is an adjacent pair of boxes in a column that breaks the reverse SSYTT condition, then

\footnotesize
\begin{align}
	&\lim_{\xi \rightarrow t^{-1}}\,\,
	(
	\dualmap
	\comp 
	\bigg|_{
		\substack{
			q_1 = q, \\
			q_2 = q^{-1}t,\\
			q_3 = t^{-1} \\
		}
	}
	)
	\Bigg[
	y_{i_1}\cdots y_{i_k}
	u_{i_1}\cdots u_{i_k}
	\\
	&\hspace{2.9cm}\times
	\Delta\left(	q_3^{-\frac{1}{2}}\frac{z_{\lambda_1 + 1}}{z_1}				\right)^{-1}
	\times 
	\cdots 
	\times 
	\Delta\left(	q_3^{-\frac{1}{2}}\frac{z_{\sum_{j = 1}^{\ell(\lambda) - 1}\lambda_j+1}}{z_{\sum_{j = 1}^{\ell(\lambda) - 2}\lambda_j+1}}				\right)^{-1}
	\times
	\prod_{1 \leq a < b \leq k}
	\cals{D}^{(i_a,i_b)}\left(
	\frac{z_b}{z_a}
	; q, t
	\right)
	\Bigg]
	\notag
	\\
	&= 0. 
	\notag 
\end{align}
\normalsize 
To prove this statement, we need the concept of triangle-from cancellation and the concepts of breaking pair, breaking triangle, and breaking band, which are explained in detail in the following subsections.

\subsection{Triangle-form Cancellation}

\begin{prop}
\label{prpa3-1521-27jul}
Suppose that 
\begin{align}
	\begin{ytableau}
		i_{c}  &   i_{c+1} \\
		\none & i_{d}
	\end{ytableau}
	\label{D22-1725-17jan}
\end{align}
are boxes in a reverse SSYTT $T(i_1,\dots,i_k;\lambda)$. Then, there will be no factor $(1 - t\xi)$ appearing in the numerator or denominator of the quantity 
\begin{align}
	(
	\dualmap
	\comp 
	\bigg|_{
		\substack{
			q_1 = q, \\
			q_2 = q^{-1}t,\\
			q_3 = t^{-1} \\
		}
	}
	)
	\Bigg[
	\underbrace{	\prod_{1 \leq a < b \leq k}				}_{a,b \in \{c,c+1,d\}}
	\cals{D}^{(i_a,i_b)}\left(
	\frac{z_b}{z_a}
	; q, t
	\right)
	\Bigg]. 
	\label{A22-eqn-1555-23jan}
\end{align}
\end{prop}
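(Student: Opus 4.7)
The plan is to make $\dualmap$ explicit on the three spectral variables and then track factor by factor where $(1-t\xi)$ can appear. If box $c$ sits at row $\alpha$, column $\beta$, then $c+1$ sits at $(\alpha,\beta+1)$ and $d$ sits at $(\alpha+1,\beta+1)$, so after applying $\dualmap$ one has
\[
\frac{z_{c+1}}{z_c} = q, \qquad \frac{z_d}{z_{c+1}} = \xi, \qquad \frac{z_d}{z_c} = q\xi.
\]
The factor $\cals{D}^{(i_c,i_{c+1})}(q;q,t)$ is evaluated at a point independent of $\xi$, so regardless of the values of $i_c$ and $i_{c+1}$ it cannot produce any $(1-t\xi)$. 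The argument therefore reduces to the two pairs $(c,d)$ and $(c+1,d)$.

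Under the specialization $q_1=q,\ q_2=q^{-1}t,\ q_3=t^{-1}$, the crucial identities $q_2\cdot q = t$ and $q_3^{-1}=t$ hold. Substituting $z=q\xi$ and $z=\xi$ respectively into the five cases defining $\cals{D}^{(i,j)}$ and inspecting each primitive factor directly, I would verify that: (a)~$\cals{D}^{(i,j)}(q\xi;q,t)$ contains a single $(1-t\xi)$ in its numerator (and nowhere else) if and only if $i>j$ or $i=j$ is a hyper number; and (b)~$\cals{D}^{(i,j)}(\xi;q,t)$ contains a single $(1-t\xi)$ in its denominator (and nowhere else) if and only if $i>j$ or $i=j$ is a hyper or super number. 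Genericity of $q$ and $t$ rules out any accidental coincidence between $(1-t\xi)$ and another primitive factor, so (a) and (b) are exhaustive.

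Finally I would run the case analysis on $(i_{c+1},i_d)$ permitted by the reverse SSYTT constraints $i_c \geq i_{c+1} \geq i_d$. The column rule excludes $i_{c+1}=i_d$ both ordinary, leaving exactly three cases: (i)~$i_{c+1}>i_d$; (ii)~$i_{c+1}=i_d$ both super; (iii)~$i_{c+1}=i_d$ both hyper. In case~(i) the row inequality gives $i_c>i_d$; in case~(ii) the row strictness for super numbers combined with $i_c \geq i_{c+1}$ forces $i_c>i_{c+1}=i_d$; in case~(iii) either $i_c>i_d$ or $i_c=i_d$ is hyper. In every permitted configuration, statements (a) and (b) then say that the pair $(c+1,d)$ contributes exactly one $(1-t\xi)$ in the denominator while the pair $(c,d)$ contributes exactly one $(1-t\xi)$ in the numerator, and these cancel. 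The argument is entirely bookkeeping; the only care required is handling the subcases around $i_c=i_d$ and verifying that no single $\cals{D}^{(i,j)}$ produces two copies of $(1-t\xi)$ on the same side, which is clear from the structure of the defining factors.
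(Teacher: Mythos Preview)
Your argument is correct. The identifications $z_{c+1}/z_c=q$, $z_d/z_{c+1}=\xi$, $z_d/z_c=q\xi$ are right, and your claims (a) and (b) check out directly from the specialized form of $\cals{D}^{(i,j)}$. The RSSYTT constraints on $(i_{c+1},i_d)$ and $(i_c,i_d)$ then force exactly the cancellation you describe.

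Compared to the paper: the paper proceeds by brute-force enumeration, splitting first on the type of $i_c$ (deferring the ordinary and super cases to \cite{CSW2025}) and then, for $i_c$ hyper, writing out the full three-factor product explicitly in each of eight subcases and inspecting the result. Your approach is organized differently: you isolate the two abstract facts (a) and (b) about a \emph{single} factor $\cals{D}^{(i,j)}$ at the two relevant arguments, then reduce the RSSYTT case analysis to reading off which clause of (a)/(b) applies. The underlying mechanism is the same, but your decomposition is cleaner, fully self-contained (no appeal to the earlier paper), and makes transparent \emph{why} the cancellation happens rather than merely verifying it case by case. The paper's version has the minor advantage of displaying the actual products, which could be reused downstream, but for the stated proposition your route is more efficient.
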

\begin{proof}
We will divide the proof into 3 cases: 
\begin{enumerate}
	\item $i_c$ is a hyper number
	\item $i_c$ is a super number
	\item $i_c$ is an ordinary number
\end{enumerate}
In the cases where $i_c$ is an ordinary number and $i_c$ is a super number, we have already proved this in \cite{CSW2025}. Therefore, in this proof, we will only prove the case where $i_c$ is a hyper number, which can be further divided into subcases as follows:
\begin{itemize}
	\item $i_c, i_{c+1}, i_d$ are hyper numbers such that $i_c > i_{c+1} > i_d$. In this case, we have 
	\footnotesize
	\begin{align}
		&(
		\dualmap
		\comp 
		\bigg|_{
			\substack{
				q_1 = q, \\
				q_2 = q^{-1}t,\\
				q_3 = t^{-1} \\
			}
		}
		)
		\Bigg[
		\underbrace{	\prod_{1 \leq a < b \leq k}				}_{a,b \in \{c,c+1,d\}}
		\cals{D}^{(i_a,i_b)}\left(
		\frac{z_b}{z_a}
		; q, t
		\right)
		\Bigg]
		= 
		\frac{
			(1 - q^2)(1 - t)(1 - q^2\xi)(1 - q^{-1}t\xi)
		}{
			(1 - tq)(1 - q)(1 - tq\xi)(1 - \xi)
		}
	\end{align}
	\normalsize
	\item $i_c, i_{c+1}, i_d$ are hyper numbers such that $i_c > i_{c+1} = i_d$. In this case, we have 
	\footnotesize
	\begin{align}
		&(
		\dualmap
		\comp 
		\bigg|_{
			\substack{
				q_1 = q, \\
				q_2 = q^{-1}t,\\
				q_3 = t^{-1} \\
			}
		}
		)
		\Bigg[
		\underbrace{	\prod_{1 \leq a < b \leq k}				}_{a,b \in \{c,c+1,d\}}
		\cals{D}^{(i_a,i_b)}\left(
		\frac{z_b}{z_a}
		; q, t
		\right)
		\Bigg]
		= 
		\frac{
			(1 - q^2)(1 - t)(1 - q^2\xi)(1 - qt^{-1}\xi)(1 - q^{-1}t\xi)
		}{
			(1 - tq)(1 - q)(1 - tq\xi)(1 - q\xi)(1 - t^{-1}\xi)
		}
	\end{align}
	\normalsize
	\item $i_c, i_{c+1}, i_d$ are hyper-numbers such that $i_c = i_{c+1} > i_d$. In this case, we have 
	\footnotesize
	\begin{align}
		&(
		\dualmap
		\comp 
		\bigg|_{
			\substack{
				q_1 = q, \\
				q_2 = q^{-1}t,\\
				q_3 = t^{-1} \\
			}
		}
		)
		\Bigg[
		\underbrace{	\prod_{1 \leq a < b \leq k}				}_{a,b \in \{c,c+1,d\}}
		\cals{D}^{(i_a,i_b)}\left(
		\frac{z_b}{z_a}
		; q, t
		\right)
		\Bigg]
		= 
		\frac{
			(1 - t^{-1}q^2)(1 - t)(1 - q^2\xi)(1 - q^{-1}t\xi)
		}{
			(1 - tq)(1 - t^{-1}q)(1 - tq\xi)(1 - \xi)
		}
	\end{align}
	\normalsize
	\item $i_c, i_{c+1}, i_d$ are hyper-numbers such that $i_c = i_{c+1} = i_d$. In this case, we have 
	\footnotesize
	\begin{align}
		&(
		\dualmap
		\comp 
		\bigg|_{
			\substack{
				q_1 = q, \\
				q_2 = q^{-1}t,\\
				q_3 = t^{-1} \\
			}
		}
		)
		\Bigg[
		\underbrace{	\prod_{1 \leq a < b \leq k}				}_{a,b \in \{c,c+1,d\}}
		\cals{D}^{(i_a,i_b)}\left(
		\frac{z_b}{z_a}
		; q, t
		\right)
		\Bigg]
		= 
		\frac{
			(1 - t^{-1}q^2)(1 - t)(1 - t^{-1}q^2\xi)(1 - q^{-1}t\xi)
		}{
			(1 - tq)(1 - t^{-1}q)(1 - tq\xi)(1 - t^{-1}\xi)
		}
	\end{align}
	\normalsize
	\item $i_c$ is a hyper number, $i_{c+1}$ is a super number, $i_d$ is a super number such that $i_{c+1} = i_d$.
	In this case, we have $i_c > i_{c+1} = i_d$. Thus, 
	\footnotesize
	\begin{align}
		&(
		\dualmap
		\comp 
		\bigg|_{
			\substack{
				q_1 = q, \\
				q_2 = q^{-1}t,\\
				q_3 = t^{-1} \\
			}
		}
		)
		\Bigg[
		\underbrace{	\prod_{1 \leq a < b \leq k}				}_{a,b \in \{c,c+1,d\}}
		\cals{D}^{(i_a,i_b)}\left(
		\frac{z_b}{z_a}
		; q, t
		\right)
		\Bigg]
		= 
		\frac{
			(1 - q^2)(1 - t)(1 - q^2\xi)(1 - qt^{-1}\xi)(1 - q^{-1}t\xi)
		}{
			(1 - tq)(1 - q)(1 - tq\xi)(1 - q\xi)(1 - t^{-1}\xi)
		}
	\end{align}
	\normalsize
	\item $i_c$ is a hyper number, $i_{c+1}$ is a super number, $i_d$ is a super number such that $i_{c+1} > i_d$. 
	In this case, we have $i_c > i_{c+1} > i_d$. Thus, 
	\footnotesize
	\begin{align}
		&(
		\dualmap
		\comp 
		\bigg|_{
			\substack{
				q_1 = q, \\
				q_2 = q^{-1}t,\\
				q_3 = t^{-1} \\
			}
		}
		)
		\Bigg[
		\underbrace{	\prod_{1 \leq a < b \leq k}				}_{a,b \in \{c,c+1,d\}}
		\cals{D}^{(i_a,i_b)}\left(
		\frac{z_b}{z_a}
		; q, t
		\right)
		\Bigg]
		= 
		\frac{
			(1 - q^2)(1 - t)(1 - q^2\xi)(1 - q^{-1}t\xi)
		}{
			(1 - tq)(1 - q)(1 - tq\xi)(1 - \xi)
		}
	\end{align}
	\normalsize
	\item $i_c$ is a hyper number, $i_{c+1}$ is a super number, $i_d$ is an ordinary number. In this case, we have 
	$i_c > i_{c+1} > i_d$. Thus, 
	\footnotesize
	\begin{align}
		&(
		\dualmap
		\comp 
		\bigg|_{
			\substack{
				q_1 = q, \\
				q_2 = q^{-1}t,\\
				q_3 = t^{-1} \\
			}
		}
		)
		\Bigg[
		\underbrace{	\prod_{1 \leq a < b \leq k}				}_{a,b \in \{c,c+1,d\}}
		\cals{D}^{(i_a,i_b)}\left(
		\frac{z_b}{z_a}
		; q, t
		\right)
		\Bigg]
		= 
		\frac{
			(1 - q^2)(1 - t)(1 - q^2\xi)(1 - q^{-1}t\xi)
		}{
			(1 - tq)(1 - q)(1 - tq\xi)(1 - \xi)
		}
	\end{align}
	\normalsize
	\item $i_c$ is a hyper number, $i_{c+1}$ is an ordinary number, $i_d$ is an ordinary number such that $i_{c+1} > i_d$. In this case, we have $i_c > i_{c+1} > i_d$. Thus, 
	\footnotesize
	\begin{align}
		&(
		\dualmap
		\comp 
		\bigg|_{
			\substack{
				q_1 = q, \\
				q_2 = q^{-1}t,\\
				q_3 = t^{-1} \\
			}
		}
		)
		\Bigg[
		\underbrace{	\prod_{1 \leq a < b \leq k}				}_{a,b \in \{c,c+1,d\}}
		\cals{D}^{(i_a,i_b)}\left(
		\frac{z_b}{z_a}
		; q, t
		\right)
		\Bigg]
		= 
		\frac{
			(1 - q^2)(1 - t)(1 - q^2\xi)(1 - q^{-1}t\xi)
		}{
			(1 - tq)(1 - q)(1 - tq\xi)(1 - \xi)
		}
	\end{align}
	\normalsize
\end{itemize}
We can see that in all cases, there will be no factor $(1 - t\xi)$ appearing in the numerator or denominator. Thus, we have completed the proof. 
\end{proof}

\begin{prop}
\label{prp-a4-1634-27jul-sun}
Suppose that $T(i_1,\dots,i_k;\lambda)$ is a reverse SSYTT with shape $\lambda$ where $|\lambda| = k$. Then, we have 
\begin{align}
	(
	\dualmap
	\comp 
	\bigg|_{
		\substack{
			q_1 = q, \\
			q_2 = q^{-1}t,\\
			q_3 = t^{-1} \\
		}
	}
	)
	\Bigg[
	\prod_{1 \leq a < b \leq k}			
	\cals{D}^{(i_a,i_b)}\left(
	\frac{z_b}{z_a}
	; q, t
	\right)
	\Bigg]
	=
	\frac{1}{(1 - t\xi)^{\ell(\lambda)-1}} \times \cals{F}, 
\end{align}
where $\cals{F}$ is a nonzero elements in $\bb{Q}(q,t,\xi)$ and no factor $(1 - t\xi)$ appears in the numerator or denominator of $\cals{F}$. 
\end{prop}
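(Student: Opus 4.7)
The plan is to prove Proposition \ref{prp-a4-1634-27jul-sun} by a direct enumeration of the $(1-t\xi)$ factors appearing in the image of $\prod_{a<b}\cals{D}^{(i_a,i_b)}(z_b/z_a;q,t)$ under the dual map. Since $\dualmap$ sends the variable $z$ associated to box $(r,j)$ of $\lambda$ to $q^{j-1}\xi^{r-1}y$, each ratio $z_b/z_a$ becomes $q^{c_b-c_a}\xi^{r_b-r_a}$, and a binomial factor $(1 - X z_b/z_a)$ inside some $\cals{D}^{(i_a,i_b)}$ becomes $(1 - Xq^{c_b-c_a}\xi^{r_b-r_a})$. Scanning the monomials $X \in \{1,q^{\pm 1},t^{\pm 1},qt^{-1},q^{-1}t\}$ that appear in the five branches of $\cals{D}$, one checks that the equality $(1 - Xq^{c_b-c_a}\xi^{r_b-r_a}) = (1-t\xi)$ can hold in exactly two configurations: vertical adjacent pairs $r_b=r_a+1,\ c_b=c_a$, which produce a $(1-t\xi)$ in the denominator from the factor $(1-tz_b/z_a)$; or upper-left diagonal adjacent pairs $r_b=r_a+1,\ c_b=c_a+1$, which produce a $(1-t\xi)$ in the numerator from the factor $(1-q^{-1}t z_b/z_a)$.

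The next step is a case analysis using the reverse SSYTT axioms. For a vertical adjacent pair $a=(r-1,j),\ b=(r,j)$, the column-weakly-decreasing rule gives $i_a\geq i_b$, and the ordinary-column strictness forbids $i_a=i_b$ ordinary; in each of the remaining three cases ($i_a>i_b$, $i_a=i_b$ hyper, $i_a=i_b$ super) the explicit formula for $\cals{D}^{(i_a,i_b)}$ contains exactly one factor $(1-tz_b/z_a)$ in the denominator, contributing $-1$ to the net exponent of $(1-t\xi)$. For an upper-left diagonal adjacent pair $a=(r-1,j-1),\ b=(r,j)$ with $j\geq 2$, the two chains $i_a \geq i_{(r-1,j)} \geq i_b$ and $i_a \geq i_{(r,j-1)} \geq i_b$, combined with the super-row and ordinary-column strictness, rule out $i_a=i_b$ super and $i_a=i_b$ ordinary; the remaining cases $i_a>i_b$ and $i_a=i_b$ hyper both contribute exactly one factor $(1-q^{-1}tz_b/z_a)$ in the numerator, hence $+1$. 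In fact the pairing of vertical and upper-left diagonal contributions around any single box $(r,j)$ with $r,j\geq 2$ is precisely the triangle-form cancellation of \textbf{Proposition \ref{prpa3-1521-27jul}}.

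Summing up, there are $\sum_{r=2}^{\ell(\lambda)}\lambda_r$ vertical adjacent pairs and $\sum_{r=2}^{\ell(\lambda)}(\lambda_r-1)$ upper-left diagonal adjacent pairs, so the net exponent of $(1-t\xi)$ equals $\sum_{r=2}^{\ell(\lambda)}\big((\lambda_r-1)-\lambda_r\big)=-(\ell(\lambda)-1)$, yielding the claimed factor $(1-t\xi)^{-(\ell(\lambda)-1)}$. The remaining quantity $\cals{F}$ is a finite product of binomials $(1-Y)^{\pm 1}$ where each $Y$ is a monomial in $q,t,\xi$ distinct from both $1$ and $t\xi$, so $\cals{F}$ is a nonzero element of $\bb{Q}(q,t,\xi)$ with no further $(1-t\xi)$ factor. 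The main obstacle is not the arithmetic but the enumeration in the second paragraph: one must verify exhaustively that no other combination of $X$, $c_b-c_a$, and $r_b-r_a$ among the five branches of $\cals{D}^{(i_a,i_b)}$ produces a $(1-t\xi)$, and that the forbidden equalities ($i_a=i_b$ super or ordinary for the upper-left diagonal) really are excluded by the reverse SSYTT conditions. This precise exclusion is what makes the count come out to exactly $-(\ell(\lambda)-1)$ rather than to a smaller negative number.
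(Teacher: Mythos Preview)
Your proposal is correct and follows essentially the same approach as the paper's proof: both isolate the $(1-t\xi)$ factors as coming only from pairs of boxes in adjacent rows, observe that the diagonal contribution (numerator) cancels the vertical contribution (denominator) everywhere except in the first column, and count the $\ell(\lambda)-1$ surviving vertical pairs. The paper packages the cancellation step as \textbf{Proposition~\ref{prpa3-1521-27jul}} (the triangle-form cancellation, verified case by case) and then treats only the first-column pairs; you instead count the vertical and diagonal contributions globally and subtract, which is the same bookkeeping in a slightly more direct form---as you yourself note when identifying your pairing with \textbf{Proposition~\ref{prpa3-1521-27jul}}.
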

\begin{proof}
To determine the total number of $(1-t\xi)$ remaining after cancellations, it is sufficient to consider only the $\cals{D}^{(i,j)}\left(
\frac{z_b}{z_a}
; q, t
\right)$ 
arising from pairs of boxes that are adjacent in a column or diagonally. Moreover, by using \textbf{Proposition \ref{prpa3-1521-27jul}}, we know that we only need to examine the first column of $T(i_1,\dots,i_k;\lambda)$. That is, what we need to analyze are any adjacent pairs of boxes in the first column only, i.e. 
\begin{align}
	\begin{ytableau}
		\scriptstyle	i_1  \\ 
		\scriptstyle	i_{\lambda_1 + 1} 
	\end{ytableau}
	\hspace{0.2cm}
	,
	\begin{ytableau}
		\scriptstyle	i_{\lambda_1 + 1} \\
		\scriptstyle	i_{\lambda_1 + \lambda_2 + 1}
	\end{ytableau}
	\hspace{0.2cm}
	,
	\cdots
\end{align}
Since the argument used to analyze each adjacent pair of boxes is the same, here we will show the details only for the case of the pair of boxes 
\begin{align}
	\begin{ytableau}
		\scriptstyle	i_1  \\ 
		\scriptstyle	i_{\lambda_1 + 1} 
	\end{ytableau}
	\,\, . 
\end{align}

Since $T(i_1,\dots,i_k;\lambda)$ is a reverse SSYTT, the possible cases are 
\begin{itemize}
	\item $i_1 > i_{\lambda_1 + 1}$. In this case, we have 
	\footnotesize
	\begin{align}
		(
		\dualmap
		\comp 
		\bigg|_{
			\substack{
				q_1 = q, \\
				q_2 = q^{-1}t,\\
				q_3 = t^{-1} \\
			}
		}
		)
		\Bigg[
		\underbrace{		\prod_{1 \leq a < b \leq k}					}_{
			\substack{
				a = 1, \,\, b = \lambda_1 + 1
			}
		}
		\cals{D}^{(i_a,i_b)}\left(
		\frac{z_b}{z_a}
		; q, t
		\right)
		\Bigg]
		=
		\frac{
			(1 - q\xi)(1 - q^{-1}t\xi)
		}{
			(1 - t\xi)(1 - \xi)
		}
	\end{align}
	\normalsize
	\item $i_1 = i_{\lambda_1 + 1} = \text{hyper-number}$. In this case, we have 
	\footnotesize
	\begin{align}
		(
		\dualmap
		\comp 
		\bigg|_{
			\substack{
				q_1 = q, \\
				q_2 = q^{-1}t,\\
				q_3 = t^{-1} \\
			}
		}
		)
		\Bigg[
		\underbrace{		\prod_{1 \leq a < b \leq k}					}_{
			\substack{
				a = 1, \,\, b = \lambda_1 + 1
			}
		}
		\cals{D}^{(i_a,i_b)}\left(
		\frac{z_b}{z_a}
		; q, t
		\right)
		\Bigg]
		= 
		\frac{
			(1 - qt^{-1}\xi)
			(1 - q^{-1}t\xi)
		}{
			(1 - t\xi)
			(1 - t^{-1}\xi)
		}
	\end{align}
	\normalsize
	\item $i_1 = i_{\lambda_1 + 1} = \text{super-number}$. In this case, we have 
	\footnotesize
	\begin{align}
		(
		\dualmap
		\comp 
		\bigg|_{
			\substack{
				q_1 = q, \\
				q_2 = q^{-1}t,\\
				q_3 = t^{-1} \\
			}
		}
		)
		\Bigg[
		\underbrace{		\prod_{1 \leq a < b \leq k}					}_{
			\substack{
				a = 1, \,\, b = \lambda_1 + 1
			}
		}
		\cals{D}^{(i_a,i_b)}\left(
		\frac{z_b}{z_a}
		; q, t
		\right)
		\Bigg]
		= 
		\frac{
			(1 - q\xi)
			(1 - q^{-1}\xi)
		}{
			(1 - t\xi)
			(1 - t^{-1}\xi)
		}
	\end{align}
	\normalsize
\end{itemize}
We can immediately see that in all cases, there will be one factor of $(1 - t\xi)$ in the denominator. Since there are a toal of $\ell(\lambda) - 1$ adjacent pairs of boxes in the first column, we have 
\begin{align}
	(
	\dualmap
	\comp 
	\bigg|_{
		\substack{
			q_1 = q, \\
			q_2 = q^{-1}t,\\
			q_3 = t^{-1} \\
		}
	}
	)
	\Bigg[
	\prod_{1 \leq a < b \leq k}			
	\cals{D}^{(i_a,i_b)}\left(
	\frac{z_b}{z_a}
	; q, t
	\right)
	\Bigg]
	=
	\frac{1}{(1 - t\xi)^{\ell(\lambda)-1}} \times \cals{F}, 
\end{align}
where $\cals{F}$ is a nonzero elements in $\bb{Q}(q,t,\xi)$ and no factor $(1 - t\xi)$ appears in the numerator or denominator of $\cals{F}$. 
\end{proof}

From Propositions \ref{prp-a1-1339-27jul} and \ref{prp-a4-1634-27jul-sun}, we get the following corollary. 

\begin{cor}
If $\lambda \in \operatorname{Par}$  and $T(i_1,\dots,i_k;\lambda)$ is a reverse SSYTT with shape $\lambda$, then 

\footnotesize
\begin{align}
	&\lim_{\xi \rightarrow t^{-1}}\,\,
	(
	\dualmap
	\comp 
	\bigg|_{
		\substack{
			q_1 = q, \\
			q_2 = q^{-1}t,\\
			q_3 = t^{-1} \\
		}
	}
	)
	\Bigg[
	y_{i_1}\cdots y_{i_k}
	u_{i_1}\cdots u_{i_k}
	\\
	&\hspace{3.2cm}\times
	\Delta\left(	q_3^{-\frac{1}{2}}\frac{z_{\lambda_1 + 1}}{z_1}				\right)^{-1}
	\times 
	\cdots 
	\times 
	\Delta\left(	q_3^{-\frac{1}{2}}\frac{z_{\sum_{j = 1}^{\ell(\lambda) - 1}\lambda_j+1}}{z_{\sum_{j = 1}^{\ell(\lambda) - 2}\lambda_j+1}}				\right)^{-1}
	\times
	\prod_{1 \leq a < b \leq k}
	\cals{D}^{(i_a,i_b)}\left(
	\frac{z_b}{z_a}
	; q, t
	\right)
	\Bigg]
	\notag 
	\\
	&\neq 0 
	\notag 
\end{align}
\normalsize
\end{cor}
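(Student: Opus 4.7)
The plan is to show that two distinct sources of $(1-t\xi)$ factors cancel exactly, leaving a nonzero finite limit as $\xi \to t^{-1}$. On one side, \textbf{Proposition \ref{prp-a4-1634-27jul-sun}} supplies a denominator $(1-t\xi)^{\ell(\lambda)-1}$ in the image of $\prod_{a<b}\cals{D}^{(i_a,i_b)}$. On the other side, the $\ell(\lambda)-1$ explicit prefactors $\Delta(q_3^{-\frac{1}{2}}z_{\cdots}/z_{\cdots})^{-1}$—one per consecutive pair of rows of $\lambda$—will each contribute a simple zero $(1-t\xi)$ in the numerator after applying the dualmap.

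First, I would invoke \textbf{Proposition \ref{prp-a4-1634-27jul-sun}} to write
\begin{align*}
\dualmapwithcompo \Bigg[\, \prod_{1 \leq a < b \leq k} \cals{D}^{(i_a,i_b)}\!\left(\frac{z_b}{z_a};q,t\right) \Bigg] \;=\; \frac{\cals{F}}{(1-t\xi)^{\ell(\lambda)-1}},
\end{align*}
where $\cals{F} \in \bb{Q}(q,t,\xi)$ contains no $(1-t\xi)$ in either numerator or denominator; in particular $\cals{F}|_{\xi=t^{-1}}$ is a well-defined nonzero element of $\bb{Q}(q,t)$.

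Next, I would evaluate the dualmap on each individual factor $\Delta\!\left(q_3^{-\frac{1}{2}} z_{\sum_{j=1}^{i}\lambda_j+1}/z_{\sum_{j=1}^{i-1}\lambda_j+1}\right)^{-1}$ for $i=1,\dots,\ell(\lambda)-1$. The dualmap sends the indicated ratio to $\xi^{i}y/\xi^{i-1}y = \xi$, and substituting $q_1=q,\, q_2=q^{-1}t,\, q_3=t^{-1}$ into the definition of $\Delta$ yields
\begin{align*}
\Delta(q_3^{-\frac{1}{2}}\xi)^{-1} \Big|_{\substack{q_1=q \\ q_2=q^{-1}t \\ q_3=t^{-1}}} \;=\; \frac{(1-\xi)(1-t\xi)}{(1-q\xi)(1-q^{-1}t\xi)},
\end{align*}
which contributes exactly one factor of $(1-t\xi)$ in the numerator. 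The residual factors $(1-\xi)$, $(1-q\xi)$, and $(1-q^{-1}t\xi)$ all remain nonzero at $\xi=t^{-1}$ by the generic condition on $q,t$. Taking the product over all $\ell(\lambda)-1$ pairs of consecutive rows yields the compensating numerator $(1-t\xi)^{\ell(\lambda)-1}$.

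Combining the two computations, the $(1-t\xi)^{\ell(\lambda)-1}$ numerator from the $\Delta^{-1}$ factors cancels precisely the $(1-t\xi)^{\ell(\lambda)-1}$ denominator from $\cals{F}/(1-t\xi)^{\ell(\lambda)-1}$, and the limit $\xi \to t^{-1}$ reduces to a finite product of nonzero quantities: the scalars $y_{i_1}\cdots y_{i_k} u_{i_1}\cdots u_{i_k}$, the evaluated residual rational factors, and $\cals{F}|_{\xi=t^{-1}}$. The only potential obstacle would be an unexpected additional zero or pole at $\xi=t^{-1}$ that escapes this bookkeeping, but the precise order-one vanishing in each of \textbf{Propositions \ref{prp-a1-1339-27jul}} and \textbf{\ref{prp-a4-1634-27jul-sun}} has already been pinned down, so the cancellation is exact and the remaining argument is routine.
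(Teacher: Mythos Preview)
Your proposal is correct and follows essentially the same approach as the paper: the paper derives the corollary directly from \textbf{Propositions \ref{prp-a1-1339-27jul}} and \textbf{\ref{prp-a4-1634-27jul-sun}}, matching the $(1-t\xi)^{\ell(\lambda)-1}$ numerator coming from the $\Delta^{-1}$ factors against the $(1-t\xi)^{\ell(\lambda)-1}$ denominator coming from $\prod\cals{D}^{(i_a,i_b)}$. Your explicit evaluation of each $\Delta(q_3^{-1/2}\xi)^{-1}$ simply unpacks what \textbf{Proposition \ref{prp-a1-1339-27jul}} encodes, and the rest of the argument is identical.
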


\subsection{Breaking Pair, Breaking Triangle, and Breaking Band}

In this subsection, we will define the concepts of breaking pair, breaking triangle, and breaking band. The definitions of these concepts will be similar to those we provided in \cite{CSW2025}, except that in \cite{CSW2025} we discussed the case of reverse SSYBT, while here we will discuss these concepts in the case of reverse SSYTT, which is a more general case. 

\begin{dfn}[Breaking pair]
Let $\lambda \in \operatorname{Par}(k)$, let $\{i_1,\dots,i_k\} \in \{1,\dots,N+M+L\}^k$ such that $T(i_1,\dots,i_k;\lambda)$ is a Young tableau with shape $\lambda$ where every adjacent pair of boxes in any row satisfies the reverse SSYTT condition. Consider an adjacent pair of boxes 
\begin{align}
	\begin{ytableau}
		i_a
		\\
		i_b
	\end{ytableau}
	\label{eqn-a29-1554-26jan}
\end{align}
in a column. In this situation, we say that \eqref{eqn-a29-1554-26jan} is a \textbf{breaking pair} if it violates the reverse SSYTT condition. In cases where \eqref{eqn-a29-1554-26jan} is not a breaking pair, we say it is a non-breaking pair. 
\end{dfn}

\begin{dfn}[Breaking triangle]
Let $\lambda \in \operatorname{Par}(k)$, let $\{i_1,\dots,i_k\} \in \{1,\dots,N+M+L\}^k$ such that $T(i_1,\dots,i_k;\lambda)$ is a Young tableau with shape $\lambda$ where every adjacent pair of boxes in any row satisfies the reverse SSYTT condition. Let 
\begin{align}
	\begin{ytableau}
		i_a
		\\
		i_b
	\end{ytableau}
\end{align}
be a breaking pair of $T(i_1,\dots,i_k;\lambda)$ that is not in the first column. Then, we call 
\begin{align}
	\begin{ytableau}
		i_{a-1} & i_a
		\\
		\none & i_b
	\end{ytableau}
\end{align}
a \textbf{breaking triangle} of $T(i_1,\dots,i_k;\lambda)$. 
\end{dfn}

Next, suppose that 
\begin{align}
	\begin{ytableau}
		i_{a-1} & i_a
		\\
		\none & i_b
	\end{ytableau}
\end{align}
is a breaking triangle of $T(i_1,\dots,i_k;\lambda)$. Recall from the definition of a breaking triangle that we assume every row in $T(i_1,\dots,i_k;\lambda)$ does not violate the reverse SSYTT conditions. Therefore, we can conclude that one of the cases listed below will occur:
\begin{itemize}
	\item $i_{a-1} > i_a$
	\item $i_{a-1} = i_a = \text{ ordinary number }$
	\item $i_{a-1} = i_a = \text{ hyper number }$
\end{itemize}
On the other hand, from the definition of a breaking triangle, we know that 
\begin{align}
	\begin{ytableau}
		i_a
		\\
		i_b
	\end{ytableau}
\end{align}
will definitely violate the reverse SSYTT condition. Therefore, we can conclude that one of the cases listed below will occur:
\begin{itemize}
	\item $i_a < i_b$ or
	\item  $i_a = i_b = \text{ordinary number}$ 
\end{itemize}

From this explanation, we get that all possible cases are:
\begin{enumerate}[(1)]
	\item \label{case1-1429}$i_a < i_b < i_{a-1}$
	\item  \label{case2-1429} $i_a< i_b = i_{a-1}$
	\item  \label{case3-1429}$i_a < i_{a-1} < i_b$
	\item  \label{case4-1429}$\text{ordinary number} = i_a = i_{a-1} < i_b$
	\item  \label{case5-1429}$\text{ordinary number} = i_a = i_b < i_{a-1}$
	\item  \label{case6-1429}$\text{ordinary number} = i_a = i_b = i_{a-1}$
	\item \label{case7-1429} $\text{hyper number} = i_{a-1} = i_a < i_b$
\end{enumerate}
We have already calculated the value of 
\begin{align}
	&(
	\dualmap
	\comp 
	\bigg|_{
		\substack{
			q_1 = q, \\
			q_2 = q^{-1}t,\\
			q_3 = t^{-1} \\
		}
	}
	)
	\Bigg[
	\underbrace{		\prod_{1 \leq c < d \leq k}			}_{c,d \in \{a-1,a,b\}}
	\cals{D}^{(i_c,i_d)}\left(
	\frac{z_d}{z_c}
	; q, t
	\right)
	\Bigg]
\end{align}
for the breaking triangle in cases \ref{case1-1429} - \ref{case6-1429} in \cite{CSW2025}. Therefore, here we will only calculate the case \ref{case7-1429}. Since $\text{hyper number} = i_{a-1} = i_a < i_b$, we get that 
\begin{align}
	&(
	\dualmap
	\comp 
	\bigg|_{
		\substack{
			q_1 = q, \\
			q_2 = q^{-1}t,\\
			q_3 = t^{-1} \\
		}
	}
	)
	\Bigg[
	\underbrace{		\prod_{1 \leq c < d \leq k}			}_{c,d \in \{a-1,a,b\}}
	\cals{D}^{(i_c,i_d)}\left(
	\frac{z_d}{z_c}
	; q, t
	\right)
	\Bigg]
	= 
	\frac{
		(1 - t^{-1}q^2)(1 - t)(1 - q^{-1}\xi)(1 - t^{-1}q^2\xi)
	}{
		(1 - tq)(1 - t^{-1}q)(1 - t^{-1}\xi)(1 - q\xi)
	},
\label{eqn-a-21-1814-27jul}
\end{align}
and get that 
\begin{align}
	\begin{ytableau}
		i_{a-1}
		\\
		i_{b-1}
	\end{ytableau} 
\label{eqn-a-22-1814-27jul}
\end{align}
will also be a breaking pair.

This analysis motivates us to define types of breaking triangles as we will define in the definition below. 

\begin{dfn}
Suppose that 
\begin{align}
	\begin{ytableau}
		i_{a-1} & i_a
		\\
		\none & i_b
	\end{ytableau}
\label{eqna23-1759-27jul}
\end{align}
is a breaking triangle of $T(i_1,\dots,i_k;\lambda)$. We say that it is an \textbf{unstoppable breaking triangle} if one of the following conditions holds:
\begin{enumerate}[(1)]
	\item 	$i_a< i_b = i_{a-1}$,
	\item  $i_a < i_{a-1} < i_b$,
	\item  $\text{ordinary number} = i_a = i_{a-1} < i_b$,
	\item  $\text{ordinary number} = i_a = i_b = i_{a-1}$.
	\item $\text{hyper number} = i_{a-1} = i_a < i_b$
\end{enumerate}
Conversely, if the breaking triangle \eqref{eqna23-1759-27jul} is not an unstoppable breaking triangle, then we say that it is a \textbf{stoppable breaking triangle}.
\end{dfn}

\begin{rem}
From the explanation above, we deduce that a breaking triangle is a stoppable breaking triangle if one of the following conditions holds:
\begin{enumerate}[(1)]
	\item	$i_a < i_b < i_{a-1}$,
	\item  $\text{ordinary number} = i_a = i_b < i_{a-1}$. 
\end{enumerate}
\end{rem}

From equations \eqref{eqn-a-21-1814-27jul}, \eqref{eqn-a-22-1814-27jul} above, and the \textbf{Propositions A.8.} and \textbf{A.9.} of \cite{CSW2025}, 
the \textbf{Propositions \ref{prp-a10-1816-27jul}} and \textbf{\ref{prp-a11-1816-27jul}} below immediately follow.

\begin{prop}
\label{prp-a10-1816-27jul}
If 
\begin{align}
	\begin{ytableau}
		i_{a-1} & i_a
		\\
		\none & i_b
	\end{ytableau}
\end{align}
is an unstoppable breaking triangle of $T(i_1,\dots,i_k;\lambda)$, then we get that 
\begin{align}
	\begin{ytableau}
		i_{a-1}
		\\
		i_{b-1}
	\end{ytableau} 
\end{align}
is a breaking pair of $T(i_1,\dots,i_k;\lambda)$. 
\end{prop}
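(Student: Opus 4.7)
The plan is to proceed by case analysis on the five defining conditions of an unstoppable breaking triangle. Cases (1)--(4) already appear in the setting of \cite{CSW2025}, where the tableau contains only ordinary and super numbers, and are precisely the content of \textbf{Propositions A.8} and \textbf{A.9} of that paper; I will invoke those two results directly. The only genuinely new case is (5), the hyper-hyper configuration $i_{a-1} = i_a < i_b$ with $i_{a-1}$ a hyper number, which has no analogue in the previous work since \cite{CSW2025} treated $L = 0$.

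For case (5) I will argue as follows. Since $i_a$ is a hyper number and $i_b > i_a$, and the hyper numbers form the upper segment $\{N+M+1,\dots,N+M+L\}$ of the alphabet, the entry $i_b$ must also be hyper. The hypothesis that every row of $T(i_1,\dots,i_k;\lambda)$ satisfies the reverse SSYTT condition then supplies the row inequality $i_{b-1} \geq i_b$; chaining this with the strict inequality $i_b > i_a = i_{a-1}$ yields $i_{b-1} > i_{a-1}$. This strict inequality violates the weakly-decreasing column rule, so $(i_{a-1},i_{b-1})$ is by definition a breaking pair.

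Equations \eqref{eqn-a-21-1814-27jul}--\eqref{eqn-a-22-1814-27jul} serve as the vertex-operator companion of this combinatorial step: they record the explicit form of the $\dualmap$-image of the three-box product $\prod_{c,d \in \{a-1,a,b\}} \cals{D}^{(i_c,i_d)}$ in the hyper-hyper setting, and the appearance of a factor which will contribute an extra $(1-t\xi)$-type zero in the limit $\xi \to t^{-1}$ is the analytic shadow of the fact that a new breaking pair has been created at position $(a-1, b-1)$.

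The main obstacle is organizational rather than technical. One must partition the five unstoppable conditions correctly, confirm that the four cases inherited from \cite{CSW2025} are stated generally enough to apply verbatim when only ordinary and super numbers occur locally in the triangle, and then dispose of case (5) by the brief inequality argument above. No substantial estimate is required, but some care is needed to keep track of which entries are ordinary, super, or hyper in each sub-case, and in particular to exploit that the hyper nature of $i_a$ in case (5) forces $i_b$ to be hyper as well, which is what allows the row chain $i_{b-1} \geq i_b$ to be combined with $i_b > i_{a-1}$ without needing any additional strictness from the column rules.
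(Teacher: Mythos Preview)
Your proof is correct and follows the paper's approach exactly: the paper likewise defers cases (1)--(4) to \textbf{Propositions A.8} and \textbf{A.9} of \cite{CSW2025} and treats only the new hyper case (5), recording the conclusion at \eqref{eqn-a-22-1814-27jul} without spelling out the inequality chain $i_{b-1} \geq i_b > i_a = i_{a-1}$ that you supply.

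One small correction to your commentary paragraph: for an \emph{unstoppable} breaking triangle the point of \eqref{eqn-a-21-1814-27jul} is precisely that \emph{no} factor of $(1 - t\xi)$ appears in numerator or denominator, not that an extra zero is produced. The $(1-t\xi)$ zero is contributed only by \emph{stoppable} breaking triangles (see \textbf{Proposition \ref{prp-a11-1816-27jul}}); the unstoppable ones merely propagate the breaking pair leftward without affecting the $(1-t\xi)$ count, which is why one must keep walking until the band terminates. Your combinatorial argument for \textbf{Proposition \ref{prp-a10-1816-27jul}} is unaffected by this, but the analytic interpretation you sketch is inverted.
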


\begin{prop}
\label{prp-a11-1816-27jul}
If 
\begin{align}
	\begin{ytableau}
		i_{a-1} & i_a
		\\
		\none & i_b
	\end{ytableau}
	\label{eqn-a26-1829-27jul}
\end{align}
is an unstoppable breaking triangle of $T(i_1,\dots,i_k;\lambda)$, then there will be no factor of $(1-t\xi)$ appearing in the numerator and denominator of 
\begin{align}
	&(
	\dualmap
	\comp 
	\bigg|_{
		\substack{
			q_1 = q, \\
			q_2 = q^{-1}t,\\
			q_3 = t^{-1} \\
		}
	}
	)
	\Bigg[
	\underbrace{		\prod_{1 \leq c < d \leq k}			}_{c,d \in \{a-1,a,b\}}
	\cals{D}^{(i_c,i_d)}\left(
	\frac{z_d}{z_c}
	; q, t
	\right)
	\Bigg].
	\label{eqn-a-27-1830-27jul}
\end{align}
Conversely, if \eqref{eqn-a26-1829-27jul} is a stoppable breaking triangle, then there will be a factor of $(1-t\xi)$ appearing in the numerator of \eqref{eqn-a-27-1830-27jul}, and only one such factor. 
\end{prop}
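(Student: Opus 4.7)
The plan is to prove Proposition \ref{prp-a11-1816-27jul} by direct case analysis on the relative ordering and types of the entries $i_{a-1}, i_a, i_b$. First I would observe that the product in \eqref{eqn-a-27-1830-27jul} reduces to only three factors, namely $\cals{D}^{(i_{a-1},i_a)}(z_a/z_{a-1})$, $\cals{D}^{(i_a,i_b)}(z_b/z_a)$ and $\cals{D}^{(i_{a-1},i_b)}(z_b/z_{a-1})$. Since boxes $a-1$ and $a$ sit at adjacent columns of one row and box $b$ sits directly below box $a$, the map $\dualmap$ specializes these three ratios to $q$, $\xi$ and $q\xi$ respectively. After imposing $q_1=q$, $q_2=q^{-1}t$, $q_3=t^{-1}$, each factor becomes an explicit element of $\bb{Q}(q,t,\xi)$ whose form depends only on the comparison of $i_{a-1}$ with $i_a$, of $i_a$ with $i_b$, and (in the equality cases) on the common type.

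Next I would run through the seven configurations (five unstoppable, two stoppable) case by case, multiplying and simplifying. The key observation is that a factor $(1-t\xi)$ can never appear in $\cals{D}^{(i_{a-1},i_a)}(q)$ or in $\cals{D}^{(i_a,i_b)}(\xi)$ after the $q_i$-specialization: their possible numerator/denominator factors only involve monomials of the form $(1-q^2)$, $(1-t)$, $(1-q^{\pm 1}\xi)$, $(1-qt^{-1}\xi)$, $(1-t^{-1}\xi)$ or $(1-\xi)$. Hence the only place a $(1-t\xi)$ can appear is in the diagonal factor $\cals{D}^{(i_{a-1},i_b)}(q\xi)$. In each stoppable case the diagonal factor uses the $i>j$ rule and produces precisely one numerator term $(1-q_2 q\xi)=(1-t\xi)$ which does not cancel against anything coming from the other two factors, giving the required unique $(1-t\xi)$ in the numerator. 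In each unstoppable case the diagonal factor uses a different rule ($i<j$, or a matching $i=j$ rule) whose output, after multiplication with the two remaining factors, contains no surviving $(1-t\xi)$.

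Cases (1)--(4) of the unstoppable list and both stoppable cases match computations already carried out in Appendix A of \cite{CSW2025} for reverse SSYBTs; only case (5), hyper $=i_{a-1}=i_a<i_b$, is genuinely new. A direct computation using the hyper-hyper rule for $\cals{D}^{(i_{a-1},i_a)}(q)$ and the $i<j$ rule for $\cals{D}^{(i_{a-1},i_b)}(q\xi)$, together with the cancellation of $(1-\xi)$ between $\cals{D}^{(i_a,i_b)}(\xi)$ and $\cals{D}^{(i_{a-1},i_b)}(q\xi)$ and the identity $qt^{-1}=t^{-1}q$ which eliminates the pair $(1-qt^{-1}\xi)/(1-t^{-1}q\xi)$, yields
\begin{equation*}
\frac{(1-t^{-1}q^2)(1-t)(1-q^{-1}\xi)(1-q^2t^{-1}\xi)}{(1-tq)(1-t^{-1}q)(1-t^{-1}\xi)(1-q\xi)},
\end{equation*}
which is manifestly free of $(1-t\xi)$, as asserted. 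The main obstacle is not conceptual but clerical: one must carefully enumerate every sub-case, and in particular check each combination of types in the equality configurations, to ensure that no accidental $(1-t\xi)$ is overlooked in the unstoppable list and that the single $(1-t\xi)$ in each stoppable case survives simplification.
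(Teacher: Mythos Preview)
Your proposal is correct and follows essentially the same approach as the paper: reduce to the case analysis already done in Appendix~A of \cite{CSW2025} for the bitableau cases, and handle the one genuinely new configuration (hyper $=i_{a-1}=i_a<i_b$) by the explicit computation you wrote out, which matches the paper's equation \eqref{eqn-a-21-1814-27jul}. Your structural remark that $(1-t\xi)$ can only arise from the diagonal factor $\cals{D}^{(i_{a-1},i_b)}(q\xi)$ is a slight organizational improvement over the paper's bare case-by-case treatment, but the underlying method is the same.
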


\begin{dfn}[Breaking band]
Let $\lambda \in \operatorname{Par}(k)$, let $\{i_1,\dots,i_k\} \in \{1,\dots,N+M+L\}^k$ such that $T(i_1,\dots,i_k;\lambda)$ is a Young tableau with shape $\lambda$ where every adjacent pair of boxes in any row satisfies the reverse SSYTT condition. Given that 
\begin{align}
	\begin{ytableau}
		i_a & i_{a+1}
		\\
		i_b & i_{b+1}
	\end{ytableau}
	\cdots
	\begin{ytableau}
		i_{a + \beta}
		\\
		i_{b + \beta}
	\end{ytableau}
	\label{eqn-a50-1051-27jan}
\end{align}
is part of the Young tableau $T(i_1,\dots,i_k;\lambda)$, we say that \eqref{eqn-a50-1051-27jan} is a \textbf{breaking band} starting from column $j$ and ending at column $i$ (where $ j \geq i$) if all the conditions stated below are true:
\begin{enumerate}[(1)]
\item For each $\gamma \in \{0,\dots,\beta\}$
\begin{align}
	\begin{ytableau}
		i_{a+\gamma}
		\\
		i_{b+\gamma}
	\end{ytableau} 
\end{align}
is a breaking pair.
\item The breaking pair 
\begin{align}
	\begin{ytableau}
		i_{a+\beta}
		\\
		i_{b+\beta}
	\end{ytableau} 
\end{align}
is in column $j$.
\item The breaking pair 
\begin{align}
	\begin{ytableau}
		i_{a}
		\\
		i_{b}
	\end{ytableau} 
\end{align}
is in column $i$. 
\item The pair of boxes 
\begin{align}
	\begin{ytableau}
		i_{a+\beta + 1}
		\\
		i_{b+\beta + 1}
	\end{ytableau} 
\end{align}
(if it exists) is not a breaking pair. 
\item The pair of boxes 
\begin{align}
	\begin{ytableau}
		i_{a - 1}
		\\
		i_{b - 1}
	\end{ytableau} 
\end{align}
(if it exists) is not a breaking pair. 
\end{enumerate}
\end{dfn}

\begin{prop}
Assume that
\begin{align}
	\begin{ytableau}
		i_a & i_{a+1}
		\\
		i_b & i_{b+1}
	\end{ytableau}
	\cdots
	\begin{ytableau}
		i_{a + \beta}
		\\
		i_{b + \beta}
	\end{ytableau}
	\label{eqna56-1256-27jan}
\end{align}
is a breaking band starting from column $j$ and ending at column $i$. Then, the following statements are true:
\begin{enumerate}[(1)]
\item If $i > 1$, then 
\begin{align}
	&(
	\dualmap
	\comp 
	\bigg|_{
		\substack{
			q_1 = q, \\
			q_2 = q^{-1}t,\\
			q_3 = t^{-1} \\
		}
	}
	)
	\Bigg[
	\underbrace{		\prod_{1 \leq c < d \leq k}			}_{c,d \in \{a-1,a,\dots,a+\beta, b, \dots, b+\beta\}}
	\cals{D}^{(i_c,i_d)}\left(
	\frac{z_d}{z_c}
	; q, t
	\right)
	\Bigg]
	= 
	(1-t\xi)^n
	\times
	\text{other factor}, 
	\label{a58-1147-27jan}
\end{align}
for some integer $n \in \bb{Z}^{\geq 1}$ and no factor $(1 - t\xi)$ appears in the numerator or denominator of \say{other factor} in \eqref{a58-1147-27jan}. 
\item If $i = 1$, then 
\begin{align}
	&(
	\dualmap
	\comp 
	\bigg|_{
		\substack{
			q_1 = q, \\
			q_2 = q^{-1}t,\\
			q_3 = t^{-1} \\
		}
	}
	)
	\Bigg[
	\underbrace{		\prod_{1 \leq c < d \leq k}			}_{c,d \in \{a,\dots,a+\beta, b, \dots, b+\beta\}}
	\cals{D}^{(i_c,i_d)}\left(
	\frac{z_d}{z_c}
	; q, t
	\right)
	\Bigg]
	= 
	(1-t\xi)^n
	\times
	\text{other factor}, 
	\label{a59-1147-27jan}
\end{align}
for some integer $n \in \bb{Z}^{\geq 0}$ and no factor $(1 - t\xi)$ appears in the numerator or denominator of \say{other factor} in \eqref{a59-1147-27jan}. 
\end{enumerate}
\end{prop}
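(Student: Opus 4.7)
The plan is to reduce the claim to a counting of stoppable breaking triangles inside the band, combining the local analyses established in \textbf{Propositions \ref{prpa3-1521-27jul}}, \textbf{\ref{prp-a10-1816-27jul}}, and \textbf{\ref{prp-a11-1816-27jul}}.

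First I would split the product
\[
\prod_{c<d,\ c,d\in\{a-1,a,\dots,a+\beta,\ b,\dots,b+\beta\}}\cals{D}^{(i_c,i_d)}\!\left(\tfrac{z_d}{z_c};q,t\right)
\]
(with $a-1$ omitted when $i=1$) into two types of contributions: (i) the three-pair product attached to each breaking triangle $T_\gamma$ at columns $(p-1,p)$ with $p\in\{i,\dots,j\}$ and $p\ge 2$, and (ii) the remaining ``residual'' pairs. Under $\dualmap$ followed by the specialization $q_1=q,\ q_2=q^{-1}t,\ q_3=t^{-1}$, a pair $(c,d)$ whose boxes lie in rows $r,r'$ and columns $p,p'$ sends $z_d/z_c$ to $q^{p'-p}\xi^{r'-r}$. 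A direct inspection of the factors $(1-q_\alpha^{\pm 1}z)$ occurring in each $\cals{D}^{(i,j)}$ shows that the only factor that can produce $(1-t\xi)$ is $(1-q_2 z)$ evaluated at $z=q\xi$; this occurs exclusively for the diagonal pair of one of the $T_\gamma$. Consequently the residual pairs — same-row pairs (giving ratios $q^k$), reverse-diagonal pairs (ratio $q^{-1}\xi$), cross pairs at column-distance $\ge 2$ (ratios $q^{\pm k}\xi$ with $k\ge 2$), and when $i=1$ the isolated same-column breaking pair at column $1$ — contribute no $(1-t\xi)$ to either the numerator or the denominator.

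Applying \textbf{Proposition \ref{prp-a11-1816-27jul}} to each $T_\gamma$ then turns the problem into a counting problem: each stoppable $T_\gamma$ contributes exactly one $(1-t\xi)$ in the numerator, while each unstoppable $T_\gamma$ contributes none. Letting $n$ be the number of stoppable triangles among the $T_\gamma$, the entire product factorizes as $(1-t\xi)^n$ times a rational function in which $(1-t\xi)$ does not appear, which is precisely the shape asserted.

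It remains to bound $n$ from below. The case $i=1$ is immediate: $n\ge 0$ is automatic, corresponding to the possibility that every interior triangle is unstoppable. The main step is the case $i>1$, for which I focus on the leftmost attached triangle $T_0$ at columns $(i-1,i)$: by the very definition of a breaking band, the column pair at column $i-1$ is non-breaking, so the contrapositive of \textbf{Proposition \ref{prp-a10-1816-27jul}} — an unstoppable breaking triangle forces the pair to its left to be breaking — ensures that $T_0$ must be stoppable, yielding $n\ge 1$. I expect the boundary analysis for $T_0$ to be the main obstacle: one must verify, using the weak decrease of row labels together with the strict decrease of super numbers along a row, that the non-breaking pair $(a-1,b-1)$ really does rule out every unstoppable configuration listed for $T_0$. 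Granted this, the proposition follows.
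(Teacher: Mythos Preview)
Your proposal is correct and follows essentially the same approach as the paper, which simply defers to \textbf{Proposition A.11} of \cite{CSW2025} and notes that the relevant local facts about breaking pairs and breaking triangles (\textbf{Propositions \ref{prp-a10-1816-27jul}} and \textbf{\ref{prp-a11-1816-27jul}}) carry over verbatim to the tritableau setting. In effect you have reconstructed the argument the paper only cites: the pair-disjoint decomposition into breaking-triangle contributions plus residual pairs, together with the contrapositive of \textbf{Proposition \ref{prp-a10-1816-27jul}} to force stoppability of the leftmost triangle when $i>1$, is exactly the mechanism behind the referenced proof.
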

\begin{proof}
The proof is identical to that of \textbf{Proposition A.11} in \cite{CSW2025}. This is because the proof of \textbf{Proposition A.11} in \cite{CSW2025} relies solely on facts about breaking pairs and breaking bands discussed in preceding propositions. As explained above, all statements remain true in the case of a tritableau.
\end{proof}

\begin{thm}
\label{thma14-1742-31jul}
Let $\lambda \in \operatorname{Par}(k)$, let $\{i_1,\dots,i_k\} \in \{1,\dots,N+M+L\}^k$ such that $T(i_1,\dots,i_k;\lambda)$ is a Young tableau with shape $\lambda$ where every adjacent pair of boxes in any row satisfies the reverse SSYTT condition. If $T(i_1,\dots,i_k;\lambda)$ contains at least one breaking band, then
\begin{align}
	&(
	\dualmap
	\comp 
	\bigg|_{
		\substack{
			q_1 = q, \\
			q_2 = q^{-1}t,\\
			q_3 = t^{-1} \\
		}
	}
	)
	\Bigg[
	\prod_{1 \leq c < d \leq k}
	\cals{D}^{(i_c,i_d)}\left(
	\frac{z_d}{z_c}
	; q, t
	\right)
	\Bigg]
	= 
	\frac{1}{	(1-t\xi)^n	}
	\times
	\text{other factors}
	\label{a72-eqn-1600-27jan}
\end{align}
where $n \leq \ell(\lambda) - 2$ and no factor $(1 - t\xi)$ appears in the numerator or denominator of \say{other factor} in \eqref{a72-eqn-1600-27jan}. 
\end{thm}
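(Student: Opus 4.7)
The plan is to count the exponent of $(1-t\xi)$ in the image of $\prod_{1 \leq c < d \leq k} \cals{D}^{(i_c,i_d)}(z_d/z_c; q,t)$ under the composed map on the LHS of \eqref{a72-eqn-1600-27jan}, by decomposing contributions into local pieces. The basic observation is that, by Proposition \ref{prpa3-1521-27jul}, any triangle of three boxes in $T$ that respects the reverse SSYTT condition contributes no $(1-t\xi)$ factor, while by Propositions \ref{prp-a10-1816-27jul} and \ref{prp-a11-1816-27jul} the only stoppable triangle inside a breaking band is the leftmost one, producing a single $(1-t\xi)$ in the numerator when that leftmost column exceeds $1$. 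Hence the only surviving $(1-t\xi)$ factors come from (i) adjacent-row pairs in the first column of $T$, and (ii) breaking triangles, which have already been organized into breaking bands in the preceding discussion.

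I would then organize the count as follows. Every pair of adjacent rows of $\lambda$ contributes exactly one pair to the first column of $T$; this pair is either non-breaking, or lies at the left end of a breaking band that terminates in column $1$. Let $p$ denote the number of non-breaking first-column pairs, $q_1$ the number of breaking bands ending at some column $i > 1$, and $q_2$ the number of breaking bands ending at column $1$. Then $p + q_2 = \ell(\lambda) - 1$. Each non-breaking first-column pair contributes $(1-t\xi)^{-1}$ as in Proposition \ref{prp-a4-1634-27jul-sun}, while by the preceding breaking-band proposition every band ending at column $i > 1$ contributes $(1-t\xi)^{n_b}$ with $n_b \geq 1$, and every band ending at column $1$ contributes $(1-t\xi)^{n_b}$ with $n_b \geq 0$. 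Combining, the total exponent $P$ of $(1-t\xi)$ satisfies
\[
P \;\geq\; -p + q_1 \cdot 1 + q_2 \cdot 0 \;=\; -(\ell(\lambda) - 1 - q_2) + q_1 \;=\; -\ell(\lambda) + 1 + q_1 + q_2.
\]
Since $T$ contains at least one breaking band by hypothesis, $q_1 + q_2 \geq 1$, so $P \geq -(\ell(\lambda) - 2)$, which gives $n \leq \ell(\lambda) - 2$ in \eqref{a72-eqn-1600-27jan}, as claimed.

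The main obstacle is to justify rigorously that the local pieces above fully account for all $(1-t\xi)$ contributions in the complete product $\prod_{1 \leq c < d \leq k} \cals{D}^{(i_c,i_d)}$. Specifically, pairs $(c,d)$ that do not sit inside a single triangle or band -- typically pairs between a box in a band and a box in an unrelated row, or pairs spanning two different bands -- must be shown not to contribute any additional $(1-t\xi)$ factor after the substitution under $\dualmap$. In the bitableau case this is carried out by a case analysis on the form of $z_d/z_c$ in \cite{CSW2025}; the same strategy applies to the tritableau case, but introduces several new subcases involving hyper numbers, parallel to those enumerated in the proof of Proposition \ref{prpa3-1521-27jul}.
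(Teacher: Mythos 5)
Your reconstruction follows the same route the paper takes: the paper's own ``proof'' is a one-line deferral to Theorem~A.12 of \cite{CSW2025}, and the chain of lemmas you invoke (Propositions~\ref{prpa3-1521-27jul}, \ref{prp-a4-1634-27jul-sun}, \ref{prp-a10-1816-27jul}, \ref{prp-a11-1816-27jul}, and the unnumbered breaking-band proposition) is exactly the scaffolding that argument rests on, adapted to tritableaux in this appendix. Your bookkeeping identity $p + q_2 = \ell(\lambda) - 1$ and the resulting bound $P \geq -(\ell(\lambda)-1) + (q_1 + q_2) \geq -(\ell(\lambda)-2)$ are correct and match the paper's conclusion.

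The one thing I would push back on is the framing of your last paragraph as an ``obstacle.'' The claim that only pairs $(c,d)$ occupying vertically or diagonally adjacent boxes can produce a factor $(1-t\xi)$ is not an afterthought to be filled in; it is the structural fact that makes the localization to triangles and breaking bands legitimate in the first place, and it is exactly what the paper's proof of Proposition~\ref{prp-a4-1634-27jul-sun} invokes when it says ``it is sufficient to consider only the $\cals{D}^{(i,j)}$ arising from pairs of boxes that are adjacent in a column or diagonally.'' The verification is, however, elementary once one writes it down: under $\dualmap$ the ratio $z_d/z_c$ specializes to $\xi^{\operatorname{row}(d)-\operatorname{row}(c)} q^{\operatorname{col}(d)-\operatorname{col}(c)}$, so a factor $(1-\alpha\, z_d/z_c)$ can become $(1-t\xi)$ only if the $\xi$-exponent equals $1$ (forcing $\operatorname{row}(d)=\operatorname{row}(c)+1$), after which the genericity of $q,t$ pins down $\alpha q^{\operatorname{col}(d)-\operatorname{col}(c)} = t$; scanning the coefficients appearing in $\cals{D}^{(i,j)}$ at the specialization $q_1=q$, $q_2=q^{-1}t$, $q_3=t^{-1}$, the only candidates are $\alpha = t$ with $\operatorname{col}(d)=\operatorname{col}(c)$ and $\alpha=q^{-1}t$ with $\operatorname{col}(d)=\operatorname{col}(c)+1$. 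Stating this explicitly, rather than deferring it, would make your argument self-contained; as written it is correct but incomplete in the same way the paper's own exposition is incomplete.
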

\begin{proof}
The proof is identical to that of \textbf{Theorem A.12} in \cite{CSW2025}. This is because the proof of \textbf{Theorem A.12} in \cite{CSW2025} relies solely on \textbf{Proposition A.11} in \cite{CSW2025} and facts about breaking pairs and breaking bands discussed in preceding propositions. As explained above, all statements remain true in the case of a tritableau.
\end{proof}

As a consequence of \textbf{Proposition \ref{prp-a1-1339-27jul}}, 
\textbf{Lemma \ref{lemma2-1359-27jul}}, 
and \textbf{Theorem \ref{thma14-1742-31jul}}, we have 

\footnotesize
\begin{align}
	&
	\lim_{\xi \rightarrow t^{-1}}\,\,
	(
	\dualmap
	\comp 
	\bigg|_{
		\substack{
			q_1 = q, \\
			q_2 = q^{-1}t,\\
			q_3 = t^{-1} \\
		}
	}
	)
	\left(
	\cals{N}_{\lambda}(z_1,\dots,z_k )
	\times
	\prod_{1 \leq i < j \leq k}f^{\vec{c}}_{11}\left(\frac{z_j}{z_i} \right)
	\times
	\langle 0 |\widetilde{T}^{\vec{c},\vec{u}}_{1}(z_1 )\cdots \widetilde{T}^{\vec{c},\vec{u}}_{1}(z_k )|0\rangle
	\right)
	\label{eqn-a38-1102-1aug}
	\\
	&= 
	\underbrace{				
		\sum_{i_1 = 1}^{N+M+L}
		\cdots
		\sum_{i_k = 1}^{N+M+L}
	}_{
		T(i_1,\dots,i_k) \in 
		\operatorname{RSSYTT}(N,M,L;\lambda)
	}
	\Biggl\{
	u_{i_1}\cdots u_{i_k}
	\times
	\left(
	\frac{
		q^{\frac{1}{2}} - q^{-\frac{1}{2}}
	}{
		t^{-\frac{1}{2}} - t^{\frac{1}{2}}
	}
	\right)^{|T_1|}
	\times
	\left(
	\frac{
		(q^{-1}t)^{\frac{1}{2}} - (qt^{-1})^{\frac{1}{2}}
	}{
		t^{-\frac{1}{2}} - t^{\frac{1}{2}}
	}
	\right)^{|T_2|}
	\notag 
	\\
	&\hspace{3.3cm}\times
	\lim_{\xi \rightarrow t^{-1}}\,\,
	(
	\dualmap
	\comp 
	\bigg|_{
		\substack{
			q_1 = q, \\
			q_2 = q^{-1}t,\\
			q_3 = t^{-1} \\
		}
	}
	)
	\bigg[
	\cals{N}_\lambda(z_1,\dots,z_k)
	\times
	\prod_{1 \leq a < b \leq k}
	\cals{D}^{(i_a,i_b)}\left(
	\frac{z_b}{z_a}
	; q, t
	\right)
	\bigg]
	\Biggr\}
	\notag 
\end{align}
\normalsize

In the next subsection, we show that for each $T(i_1,\dots,i_k;\lambda) \in 
\operatorname{RSSYTT}(N,M,L;\lambda)$, we have 
\footnotesize
\begin{align}
	&\lim_{\xi \rightarrow t^{-1}}\,\,
	(
	\dualmap
	\comp 
	\bigg|_{
		\substack{
			q_1 = q, \\
			q_2 = q^{-1}t,\\
			q_3 = t^{-1} \\
		}
	}
	)
	\bigg[
	\cals{N}_\lambda(z_1,\dots,z_k)
	\times
	\prod_{1 \leq a < b \leq k}
	\cals{D}^{(i_a,i_b)}\left(
	\frac{z_b}{z_a}
	; q, t
	\right)
	\bigg]
	\label{a39-31jul-1751}
	\\
	&= 
	(
	\widetilde{\Psi}_{\lambda}^{(q,t^{-1})}
	\comp 
	\bigg|_{
		\substack{
			q_1 = q, \\
			q_2 = q^{-1}t,\\
			q_3 = t^{-1} \\
		}
	}
	)
	\bigg[
	\prod_{1 \leq a < b \leq k}
	\cals{C}^{(i_a,i_b)}\left(
	\frac{z_b}{z_a}
	; q, t
	\right)
	\bigg]
	\notag 
\end{align}
\normalsize

\subsection{Derivation of \eqref{a39-31jul-1751}}

Note that 
\footnotesize
\begin{align}
	&\cals{N}_\lambda(z_1,\dots,z_k)
	\times 
	\prod_{1 \leq a < b \leq k}
	\cals{D}^{(i_a,i_b)}\left(
	\frac{z_b}{z_a}
	; q, t
	\right)
	\\
	&= 
	\prod_{c = 1}^{\ell(\lambda)}
	\underbrace{		\prod_{1 \leq a < b \leq k}				}_{
		\operatorname{row}(a) = \operatorname{row}(b) = c
	}
	\cals{C}^{(i_a,i_b)}\left(
	\frac{z_b}{z_a}
	; q, t
	\right)
	\times 
	\prod_{a = 1}^{k - 1}
	\underbrace{		\prod_{b \in \{1,\dots,k\}}				}_{
		\substack{
			(1) \,\, b > a 
			\\
			(2) \,\, \operatorname{row}(a) < \operatorname{row}(b)
			\\
			(3) \,\, i_a \leq i_b
		}
	}
	\cals{C}^{(i_a,i_b)}\left(
	\frac{z_b}{z_a}
	; q, t
	\right).
	\notag  
\end{align}
\normalsize
Thus, 
\footnotesize
\begin{align}
	&(
	\dualmap
	\comp 
	\bigg|_{
		\substack{
			q_1 = q, \\
			q_2 = q^{-1}t,\\
			q_3 = t^{-1} \\
		}
	}
	)
	\bigg[
	\cals{N}_\lambda(z_1,\dots,z_k)
	\times
	\prod_{1 \leq a < b \leq k}
	\cals{D}^{(i_a,i_b)}\left(
	\frac{z_b}{z_a}
	; q, t
	\right)
	\bigg]
	\\
	&= 
	\prod_{c = 1}^{\ell(\lambda)}
	\underbrace{		\prod_{1 \leq a < b \leq k}				}_{
		\operatorname{row}(a) = \operatorname{row}(b) = c
	}
	(
	\dualmap
	\comp 
	\bigg|_{
		\substack{
			q_1 = q, \\
			q_2 = q^{-1}t,\\
			q_3 = t^{-1} \\
		}
	}
	)
	\bigg[
	\cals{C}^{(i_a,i_b)}\left(
	\frac{z_b}{z_a}
	; q, t
	\right)
	\bigg]
	\times 
	\prod_{a = 1}^{k - 1}
	\underbrace{		\prod_{b \in \{1,\dots,k\}}				}_{
		\substack{
			(1) \,\, b > a 
			\\
			(2) \,\, \operatorname{row}(a) < \operatorname{row}(b)
			\\
			(3) \,\, i_a \leq i_b
		}
	}
	(
	\dualmap
	\comp 
	\bigg|_{
		\substack{
			q_1 = q, \\
			q_2 = q^{-1}t,\\
			q_3 = t^{-1} \\
		}
	}
	)
	\bigg[
	\cals{C}^{(i_a,i_b)}\left(
	\frac{z_b}{z_a}
	; q, t
	\right)
	\bigg]
	\notag 
\end{align}
\normalsize

Before proceeding with further analysis, we will discuss the facts that will be used. 

\begin{prop}
\label{prpa15-1033-1aug}
Suppose $a,b$ are boxes in a tableau $T(i_1,\dots,i_k;\lambda)$ (not necessarily reverse SSYTT) such that $a < b$. 
\begin{enumerate}[(1)]
	\item If $\operatorname{row}(a) = \operatorname{row}(b)$ and $i_a > i_b$, then 
	\footnotesize
	\begin{align}
		\bigg|_{
			\substack{
				q_1 = q, \\
				q_2 = q^{-1}t,\\
				q_3 = t^{-1} \\
			}
		}
		\cals{C}^{(i_a,i_b)}\left(
		\frac{z_b}{z_a}
		; q, t
		\right)
		= 
		\frac{
			\left(1 - q\frac{z_b}{z_a}\right)
			\left(1 - q^{-1}t\frac{z_b}{z_a}\right)
		}{
			\left(1 - t\frac{z_b}{z_a}\right)
			\left(1 - \frac{z_b}{z_a}\right)
		}
	\end{align}
	\normalsize
	\item If $\operatorname{row}(a) = \operatorname{row}(b)$ and $i_a = i_b = \text{hyper number}$, then 
	\footnotesize
	\begin{align}
		\bigg|_{
			\substack{
				q_1 = q, \\
				q_2 = q^{-1}t,\\
				q_3 = t^{-1} \\
			}
		}
		\cals{C}^{(i_a,i_b)}\left(
		\frac{z_b}{z_a}
		; q, t
		\right)
		= 
		\frac{
			\left(1 - qt^{-1}\frac{z_b}{z_a}\right)
			\left(1 - q^{-1}t\frac{z_b}{z_a}\right)
		}{
			\left(1 - t\frac{z_b}{z_a}\right)
			\left(1 - t^{-1}\frac{z_b}{z_a}\right)
		}
	\end{align}
	\normalsize
	\item If $\operatorname{row}(a) = \operatorname{row}(b)$ and $i_a = i_b = \text{ordinary number}$, then 
	\footnotesize
	\begin{align}
		\bigg|_{
			\substack{
				q_1 = q, \\
				q_2 = q^{-1}t,\\
				q_3 = t^{-1} \\
			}
		}
		\cals{C}^{(i_a,i_b)}\left(
		\frac{z_b}{z_a}
		; q, t
		\right)
		= 1 
	\end{align}
	\normalsize
	\item If $\operatorname{row}(a) < \operatorname{row}(b)$ and $i_a > i_b$, then 
	\footnotesize
	\begin{align}
		\bigg|_{
			\substack{
				q_1 = q, \\
				q_2 = q^{-1}t,\\
				q_3 = t^{-1} \\
			}
		}
		\cals{C}^{(i_a,i_b)}\left(
		\frac{z_b}{z_a}
		; q, t
		\right)
		= 1
	\end{align}
	\normalsize
	\item If $\operatorname{row}(a) < \operatorname{row}(b)$ and $i_a < i_b$, then 
	\footnotesize
	\begin{align}
		\bigg|_{
			\substack{
				q_1 = q, \\
				q_2 = q^{-1}t,\\
				q_3 = t^{-1} \\
			}
		}
		\cals{C}^{(i_a,i_b)}\left(
		\frac{z_b}{z_a}
		; q, t
		\right)
		= 
		\frac{
			\left(1 - q^{-1}\frac{z_b}{z_a}\right)
			\left(1 - qt^{-1}\frac{z_b}{z_a}\right)
			\left(1 - t\frac{z_b}{z_a}\right)
		}{
			\left(1 - q\frac{z_b}{z_a}\right)
			\left(1 - q^{-1}t\frac{z_b}{z_a}\right)
			\left(1 - t^{-1}\frac{z_b}{z_a}\right)
		}
		\label{C7-1704-21jul}
	\end{align}
	\normalsize
	\item If $\operatorname{row}(a) < \operatorname{row}(b)$ and $i_a = i_b = \text{hyper number}$, then 
	\footnotesize
	\begin{align}
		\bigg|_{
			\substack{
				q_1 = q, \\
				q_2 = q^{-1}t,\\
				q_3 = t^{-1} \\
			}
		}
		\cals{C}^{(i_a,i_b)}\left(
		\frac{z_b}{z_a}
		; q, t
		\right)
		= 
		\frac{
			\left(1 - qt^{-1}\frac{z_b}{z_a}\right)
			\left(1 - \frac{z_b}{z_a}\right)
		}{
			\left(1 - t^{-1}\frac{z_b}{z_a}\right)
			\left(1 - q\frac{z_b}{z_a}\right)
		}
		\label{C8-1704-21jul}
	\end{align}
	\normalsize
	\item If $\operatorname{row}(a) < \operatorname{row}(b)$ and $i_a = i_b = \text{super number}$, then 
	\footnotesize
	\begin{align}
		\bigg|_{
			\substack{
				q_1 = q, \\
				q_2 = q^{-1}t,\\
				q_3 = t^{-1} \\
			}
		}
		\cals{C}^{(i_a,i_b)}\left(
		\frac{z_b}{z_a}
		; q, t
		\right)
		= 
		\frac{
			\left(1 - q^{-1}\frac{z_b}{z_a}\right)
			\left(1 - \frac{z_b}{z_a}\right)
		}{
			\left(1 - q^{-1}t\frac{z_b}{z_a}\right)
			\left(1 - t^{-1}\frac{z_b}{z_a}\right)
		}
		\label{C9-1704-21jul}
	\end{align}
	\normalsize
	\item If $\operatorname{row}(a) < \operatorname{row}(b)$ and $i_a = i_b = \text{ordinary number}$, then 
	\footnotesize
	\begin{align}
		\bigg|_{
			\substack{
				q_1 = q, \\
				q_2 = q^{-1}t,\\
				q_3 = t^{-1} \\
			}
		}
		\cals{C}^{(i_a,i_b)}\left(
		\frac{z_b}{z_a}
		; q, t
		\right)
		= 
		\frac{
			\left(1 - t\frac{z_b}{z_a}\right)
			\left(1 - \frac{z_b}{z_a}\right)
		}{
			\left(1 - q\frac{z_b}{z_a}\right)
			\left(1 - q^{-1}t\frac{z_b}{z_a}\right)
		}
		\label{C10-1704-21jul}
	\end{align}
	\normalsize
	\item If $\operatorname{row}(a) = \operatorname{row}(b)$ and $i_a < i_b$, then 
	\footnotesize
	\begin{align}
		\bigg|_{
			\substack{
				q_1 = q, \\
				q_2 = q^{-1}t,\\
				q_3 = t^{-1} \\
			}
		}
		\cals{C}^{(i_a,i_b)}\left(
		\frac{z_b}{z_a}
		; q, t
		\right)
		= 
		\frac{
			\left(1 - q^{-1}\frac{z_b}{z_a}\right)
			\left(1 - qt^{-1}\frac{z_b}{z_a}\right)
		}{
			\left(1 - t^{-1}\frac{z_b}{z_a}\right)
			\left(1 - \frac{z_b}{z_a}\right)
		}
	\end{align}
	\normalsize
	\item If $\operatorname{row}(a) = \operatorname{row}(b)$ and $i_a = i_b = \text{supernumber}$, then 
	\footnotesize
	\begin{align}
		\bigg|_{
			\substack{
				q_1 = q, \\
				q_2 = q^{-1}t,\\
				q_3 = t^{-1} \\
			}
		}
		\cals{C}^{(i_a,i_b)}\left(
		\frac{z_b}{z_a}
		; q, t
		\right)
		= 
		\frac{
			\left(1 - q^{-1}\frac{z_b}{z_a}\right)
			\left(1 - q\frac{z_b}{z_a}\right)
		}{
			\left(1 - t^{-1}\frac{z_b}{z_a}\right)
			\left(1 - t\frac{z_b}{z_a}\right)
		}
	\end{align}
	\normalsize
\end{enumerate}
\end{prop}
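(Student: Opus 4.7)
The plan is to prove \textbf{Proposition \ref{prpa15-1033-1aug}} by a straightforward case-by-case substitution. The ten cases fall naturally into two groups according to the definition of $\cals{C}^{(i_a,i_b)}$ in equation \eqref{eqn36-1808-31jul}: when $\operatorname{row}(a) = \operatorname{row}(b)$ we simply have $\cals{C}^{(i_a,i_b)} = \cals{D}^{(i_a,i_b)}$, and when $\operatorname{row}(a) < \operatorname{row}(b)$ we have the extra factor $\Delta(q_3^{-1/2} z_b/z_a)^{-1}$. In each case the value of $\cals{D}^{(i_a,i_b)}$ is read off from the piecewise definition given before \textbf{Proposition \ref{prp27-1335-6aug-wed}}, depending on whether $i_a < i_b$, $i_a > i_b$, or $i_a = i_b$ with the common value being ordinary, super, or hyper.

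First I would record the substitution table: under $q_1 = q$, $q_2 = q^{-1}t$, $q_3 = t^{-1}$, the key specializations are $q_3^{1/2} = t^{-1/2}$, $q_3^{-1/2} = t^{1/2}$, so that
\begin{align*}
\Delta\!\left(q_3^{-1/2}\tfrac{z_b}{z_a}\right)^{-1}
= \frac{(1 - \tfrac{z_b}{z_a})(1 - t\tfrac{z_b}{z_a})}{(1 - q\tfrac{z_b}{z_a})(1 - q^{-1}t\tfrac{z_b}{z_a})}.
\end{align*}
With this in hand, items (1)--(3) and (9)--(10), which are the $\operatorname{row}(a) = \operatorname{row}(b)$ cases, follow immediately by substitution into the $i > j$, hyper-equal, ordinary-equal, $i < j$, and super-equal branches of $\cals{D}$, respectively. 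For the $i_a > i_b$ branch, $\cals{D}^{(i_a,i_b)}$ specializes using $(q_1,q_2,q_3) = (q, q^{-1}t, t^{-1})$ to give exactly the formula in (1); the same mechanical substitution handles (2), (3), (9), (10).

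For items (4)--(8), where $\operatorname{row}(a) < \operatorname{row}(b)$, I would multiply the above factor $\Delta(q_3^{-1/2}z_b/z_a)^{-1}$ by the appropriate branch of $\cals{D}$ and simplify. Case (4) ($i_a > i_b$ across rows) and the ordinary-equal case both require noting a pair of cancellations between the denominator of $\cals{D}$ and the numerator of $\Delta^{-1}$ (for (4) the product telescopes to $1$; for (8) the factors $(1-q\cdot)(1-q^{-1}t\cdot)$ cancel against the numerator of $\Delta^{-1}$ in a different slot). The remaining cases (5), (6), (7) each require one cross-cancellation between a factor in the numerator of $\cals{D}$ and a factor in the denominator of $\Delta^{-1}$, after which the quoted formulas appear directly.

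The work is entirely mechanical and contains no genuine obstacle; the only thing to be careful about is correctly identifying which factor in each specialized $\cals{D}$-branch cancels against which factor of the specialized $\Delta^{-1}$. I would organize the verification as a two-column check for each of the ten cases rather than a single running computation, since this makes the bookkeeping transparent and mirrors the statement of the proposition. Because all ten identities are rational-function equalities in the single variable $z_b/z_a$ with parameters $q,t$, no convergence or analytic issue arises.
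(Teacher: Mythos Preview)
Your proposal is correct and matches the paper's approach exactly: the paper's proof consists of the single sentence ``The proof is a direct result from the definition of $\cals{C}^{(i_a,i_b)}\left(\frac{z_b}{z_a}; q, t\right)$ given in equation \eqref{eqn36-1808-31jul},'' and your plan is precisely that direct substitution carried out case by case. Your recorded specialization of $\Delta(q_3^{-1/2}z_b/z_a)^{-1}$ is correct, and the cancellations you describe are the right ones.
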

\begin{proof}
The proof is a direct result from the definition of $\cals{C}^{(i_a,i_b)}\left(
\frac{z_b}{z_a}
; q, t
\right)$ 
given in equation \eqref{eqn36-1808-31jul}. 
\end{proof}

\begin{prop}
\label{a16-1103-1aug}
Let $T(i_1,\dots,i_k;\lambda) \in 
\operatorname{RSSYTT}(N,M,L;\lambda)$. Then, the following statements hold:
\begin{enumerate}[(1)]
\item \label{kor1-1027}
If $a,b$ are boxes in $T(i_1,\dots,i_k;\lambda)$ located in the same row such that $a < b$, then 
\begin{align}
	\lim_{\xi \rightarrow t^{-1}}\,\,
	(
	\dualmap
	\comp 
	\bigg|_{
		\substack{
			q_1 = q, \\
			q_2 = q^{-1}t,\\
			q_3 = t^{-1} \\
		}
	}
	)
	\Big[
	\cals{C}^{(i_a,i_b)}\left(
	\frac{z_b}{z_a}
	; q, t
	\right)
	\Big]
	=
	(
	\widetilde{\Psi}_{\lambda}^{(q,t^{-1})}
	\comp 
	\bigg|_{
		\substack{
			q_1 = q, \\
			q_2 = q^{-1}t,\\
			q_3 = t^{-1} \\
		}
	}
	)
	\Big[
	\cals{C}^{(i_a,i_b)}\left(
	\frac{z_b}{z_a}
	; q, t
	\right)
	\Big]
\end{align}
\item \label{kor2-1027}
If $a,b$ are boxes in $T(i_1,\dots,i_k;\lambda)$ that satisfy the following conditions:
\begin{enumerate}[(i)]
\item $a < b $,
\item $\operatorname{row}(a) < \operatorname{row}(b)$,
\item $i_a \leq i_b$,
\end{enumerate}
then 
\begin{align}
	\lim_{\xi \rightarrow t^{-1}}\,\,
	(
	\dualmap
	\comp 
	\bigg|_{
		\substack{
			q_1 = q, \\
			q_2 = q^{-1}t,\\
			q_3 = t^{-1} \\
		}
	}
	)
	\Big[
	\cals{C}^{(i_a,i_b)}\left(
	\frac{z_b}{z_a}
	; q, t
	\right)
	\Big]
	=
	(
	\widetilde{\Psi}_{\lambda}^{(q,t^{-1})}
	\comp 
	\bigg|_{
		\substack{
			q_1 = q, \\
			q_2 = q^{-1}t,\\
			q_3 = t^{-1} \\
		}
	}
	)
	\Big[
	\cals{C}^{(i_a,i_b)}\left(
	\frac{z_b}{z_a}
	; q, t
	\right)
	\Big]
\end{align}
\end{enumerate}
\end{prop}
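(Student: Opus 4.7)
The plan is to prove both parts by showing that after the substitution $\widetilde{\Psi}_\lambda^{(q,\xi)}$, the image of $\cals{C}^{(i_a,i_b)}(z_b/z_a;q,t)$ is already regular at $\xi = t^{-1}$, so that the limit can be taken by direct evaluation. Part~(1) is immediate: if $a,b$ lie in the same row $c$ with columns $j_a < j_b$, then $\widetilde{\Psi}_\lambda^{(q,\xi)}$ sends $z_a \mapsto q^{j_a-1}\xi^{c-1}y$ and $z_b \mapsto q^{j_b-1}\xi^{c-1}y$, so $z_b/z_a \mapsto q^{j_b - j_a}$, which is independent of $\xi$; hence the limit as $\xi \to t^{-1}$ trivially equals the value at $\xi = t^{-1}$.

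For part~(2), write $c_a < c_b$ for the rows and $j_a, j_b$ for the columns of $a, b$; then $z_b/z_a \mapsto q^{j_b - j_a}\xi^{c_b - c_a}$. Since $i_a \leq i_b$, the relevant cases of Proposition~\ref{prpa15-1033-1aug} are (5)--(8). In each of these, every denominator factor has the form $1 - \alpha\,q^{j_b - j_a}\xi^{c_b - c_a}$ with $\alpha \in \{q,\, q^{-1}t,\, t^{-1}\}$, and by the genericity of $q,t$, vanishing at $\xi = t^{-1}$ forces an integer identity in $(j_b - j_a,\, c_b - c_a)$: $\alpha = q$ forces $c_b = c_a$; $\alpha = t^{-1}$ forces $c_b = c_a - 1$; both are excluded by $c_b > c_a$, leaving only the \emph{diagonal} possibility $\alpha = q^{-1}t$ with $(j_b - j_a, c_b - c_a) = (1,1)$. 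This diagonal factor appears in cases~(5),~(7),~(8); case~(6) has no singularity at all.

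The main, and essentially only nontrivial, step is to rule out the diagonal configuration using the reverse SSYTT structure. Since $b = (c_a + 1, j_a + 1) \in T$, one has $\lambda_{c_a+1} \geq j_a + 1$, and the partition condition $\lambda_{c_a} \geq \lambda_{c_a+1}$ guarantees that the box $a' := (c_a, j_a + 1)$ also lies in $T$; let $i_{a'}$ denote its entry. Row-weak-decrease in row $c_a$ gives $i_{a'} \leq i_a$, and column-weak-decrease in column $j_a+1$ gives $i_b \leq i_{a'}$. I then treat the three subcases: in case~(5) with $i_a < i_b$, the chain $i_b \leq i_{a'} \leq i_a < i_b$ is immediately contradictory; in case~(7) with $i_a = i_b$ super, the inequalities force $i_{a'} = i_a$, violating the strict-decrease rule for super numbers in row $c_a$; and in case~(8) with $i_a = i_b$ ordinary, they force $i_{a'} = i_b$, violating the strict-decrease rule for ordinary numbers in column $j_a + 1$. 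With the diagonal configuration excluded, each substituted $\cals{C}^{(i_a,i_b)}$ is regular at $\xi = t^{-1}$, and part~(2) follows by direct evaluation.
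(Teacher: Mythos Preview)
Your proposal is correct and follows essentially the same approach as the paper's proof: both parts reduce to showing that the substituted expression is regular at $\xi = t^{-1}$, and for part~(2) both arguments isolate the factor $(1 - q^{-1}t\,z_b/z_a)$ as the only potential source of a pole, identify the diagonally adjacent configuration as the only way it can vanish, and then use the RSSYTT conditions to exclude that configuration. Your treatment of part~(2) is in fact slightly more explicit than the paper's --- you introduce the intermediate box $a' = (c_a, j_a+1)$ and derive the contradiction case by case, whereas the paper simply states that diagonal adjacency in a reverse SSYTT forces $i_a > i_b$ or $i_a = i_b = $ hyper; conversely, in part~(1) the paper additionally verifies that the denominators do not vanish after the substitution $z_b/z_a \mapsto q^m$, which you might add for completeness.
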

\begin{proof}
\eqref{kor1-1027} Assume $a,b$ are boxes in $T(i_1,\dots,i_k;\lambda)$ such that $a < b$ and $\operatorname{row}(a) = \operatorname{row}(b)$. 
Since $T(i_1,\dots,i_k)$ is a reverse SSYTT, the only possible cases are 
\begin{itemize}
	\item $\operatorname{row}(a) = \operatorname{row}(b)$ and $i_a > i_b$
	\item $\operatorname{row}(a) = \operatorname{row}(b)$ and $i_a = i_b = \text{hyper number}$
	\item $\operatorname{row}(a) = \operatorname{row}(b)$ and $i_a = i_b = \text{ordinary number}$
\end{itemize}
Since $a < b$ and $\operatorname{row}(a) = \operatorname{row}(b)$, under the map 
\begin{align}
	(
	\dualmap
	\comp 
	\bigg|_{
		\substack{
			q_1 = q, \\
			q_2 = q^{-1}t,\\
			q_3 = t^{-1} \\
		}
	}
	)
\end{align}
$\displaystyle \frac{z_b}{z_a}$ is mapped to $q^b$ for some $b \in \bb{Z}^{\geq 1}$. From \textbf{Proposition \ref{prpa15-1033-1aug}}, we can see that in all three cases, substituting $\displaystyle \frac{z_b}{z_a}$ with $q^b$ will never make the denominator zero. Furthermore, the numerator also does not become zero. Thus, we have proved statement \eqref{kor1-1027}. 
\vspace{0.2cm}

\eqref{kor2-1027} Suppose 
$a,b$ are boxes in $T(i_1,\dots,i_k;\lambda)$ that satisfy the following conditions:
\begin{enumerate}[(i)]
	\item $a < b $,
	\item $\operatorname{row}(a) < \operatorname{row}(b)$,
	\item $i_a \leq i_b$,
\end{enumerate}
Since $a < b$ and $\operatorname{row}(a) < \operatorname{row}(b)$, under the map 
\begin{align}
	(
	\dualmap
	\comp 
	\bigg|_{
		\substack{
			q_1 = q, \\
			q_2 = q^{-1}t,\\
			q_3 = t^{-1} \\
		}
	}
	)
\end{align}
$\displaystyle \frac{z_b}{z_a}$ is mapped to $\xi^aq^b$ for some $a \in \bb{Z}^{\geq 1}$ and $b \in \bb{Z}$. We can see from \textbf{Proposition \ref{prpa15-1033-1aug}} that the only cases that causes a singularity from the substitution $\xi = t^{-1}$ are 
\begin{itemize}
	\item
	$\operatorname{row}(a) < \operatorname{row}(b)$ and $i_a < i_b$, which gives 
	\footnotesize
	\begin{align}
		\bigg|_{
			\substack{
				q_1 = q, \\
				q_2 = q^{-1}t,\\
				q_3 = t^{-1} \\
			}
		}
		\cals{C}^{(i_a,i_b)}\left(
		\frac{z_b}{z_a}
		; q, t
		\right)
		= 
		\frac{
			\left(1 - q^{-1}\frac{z_b}{z_a}\right)
			\left(1 - qt^{-1}\frac{z_b}{z_a}\right)
			\left(1 - t\frac{z_b}{z_a}\right)
		}{
			\left(1 - q\frac{z_b}{z_a}\right)
			\left(1 - q^{-1}t\frac{z_b}{z_a}\right)
			\left(1 - t^{-1}\frac{z_b}{z_a}\right)
		}, 
	\end{align}
	\normalsize
	\item 
	$\operatorname{row}(a) < \operatorname{row}(b)$ and $i_a = i_b = \text{super number}$, which gives 
	\footnotesize
	\begin{align}
		\bigg|_{
			\substack{
				q_1 = q, \\
				q_2 = q^{-1}t,\\
				q_3 = t^{-1} \\
			}
		}
		\cals{C}^{(i_a,i_b)}\left(
		\frac{z_b}{z_a}
		; q, t
		\right)
		= 
		\frac{
			\left(1 - q^{-1}\frac{z_b}{z_a}\right)
			\left(1 - \frac{z_b}{z_a}\right)
		}{
			\left(1 - q^{-1}t\frac{z_b}{z_a}\right)
			\left(1 - t^{-1}\frac{z_b}{z_a}\right)
		}, 
	\end{align}
	\normalsize
	\item 
	$\operatorname{row}(a) < \operatorname{row}(b)$ and $i_a = i_b = \text{ordinary number}$, which gives 
	\footnotesize
	\begin{align}
		\bigg|_{
			\substack{
				q_1 = q, \\
				q_2 = q^{-1}t,\\
				q_3 = t^{-1} \\
			}
		}
		\cals{C}^{(i_a,i_b)}\left(
		\frac{z_b}{z_a}
		; q, t
		\right)
		= 
		\frac{
			\left(1 - t\frac{z_b}{z_a}\right)
			\left(1 - \frac{z_b}{z_a}\right)
		}{
			\left(1 - q\frac{z_b}{z_a}\right)
			\left(1 - q^{-1}t\frac{z_b}{z_a}\right)
		}. 
	\end{align}
	\normalsize
\end{itemize}
This is because the factor $\left(1 - q^{-1}t\frac{z_b}{z_a}\right)$ in the denominators. Note that a singularity from the factor $\left(1 - q^{-1}t\frac{z_b}{z_a}\right)$ can only occur when boxes $a$ and $b$ are diagonally adjacent, as shown in the equation \eqref{A57-1048-1aug} below. 
\begin{align}
	\begin{ytableau}
		i_{a}  &   \\
		\none & i_{b}
	\end{ytableau}
\label{A57-1048-1aug}
\end{align}
However, since $T(i_1,\dots,i_k;\lambda)$ is a reverse SSYTT, we must have either 
$i_a > i_b$ or $i_a = i_b = \text{ hyper number}$. This implies that the conditions for this type of singularity can never be met. Therefore, we have shown that 

\footnotesize
\begin{align}
	(
	\dualmap
	\comp 
	\bigg|_{
		\substack{
			q_1 = q, \\
			q_2 = q^{-1}t,\\
			q_3 = t^{-1} \\
		}
	}
	)
	\Big[
	\cals{C}^{(i_a,i_b)}\left(
	\frac{z_b}{z_a}
	; q, t
	\right)
	\Big]
\end{align}
\normalsize
has no singularity at $\xi = t^{-1}$. This implies that 

\footnotesize
\begin{align}
	\lim_{\xi \rightarrow t^{-1}}\,\,
	(
	\dualmap
	\comp 
	\bigg|_{
		\substack{
			q_1 = q, \\
			q_2 = q^{-1}t,\\
			q_3 = t^{-1} \\
		}
	}
	)
	\Big[
	\cals{C}^{(i_a,i_b)}\left(
	\frac{z_b}{z_a}
	; q, t
	\right)
	\Big]
	=
	(
	\widetilde{\Psi}_{\lambda}^{(q,t^{-1})}
	\comp 
	\bigg|_{
		\substack{
			q_1 = q, \\
			q_2 = q^{-1}t,\\
			q_3 = t^{-1} \\
		}
	}
	)
	\Big[
	\cals{C}^{(i_a,i_b)}\left(
	\frac{z_b}{z_a}
	; q, t
	\right)
	\Big]. 
\end{align}
\normalsize
\end{proof}

From equation \eqref{eqn-a38-1102-1aug} and \textbf{Proposition \ref{a16-1103-1aug}}, we obtain that 
\footnotesize
\begin{align}
	&
	\lim_{\xi \rightarrow t^{-1}}\,\,
	(
	\dualmap
	\comp 
	\bigg|_{
		\substack{
			q_1 = q, \\
			q_2 = q^{-1}t,\\
			q_3 = t^{-1} \\
		}
	}
	)
	\left(
	\cals{N}_{\lambda}(z_1,\dots,z_k )
	\times
	\prod_{1 \leq i < j \leq k}f^{\vec{c}}_{11}\left(\frac{z_j}{z_i} \right)
	\times
	\langle 0 |\widetilde{T}^{\vec{c},\vec{u}}_{1}(z_1 )\cdots \widetilde{T}^{\vec{c},\vec{u}}_{1}(z_k )|0\rangle
	\right)
	\\
	&= 
	\underbrace{				
		\sum_{i_1 = 1}^{N+M+L}
		\cdots
		\sum_{i_k = 1}^{N+M+L}
	}_{
		T(i_1,\dots,i_k) \in 
		\operatorname{RSSYTT}(N,M,L;\lambda)
	}
	\Biggl\{
	u_{i_1}\cdots u_{i_k}
	\times
	\left(
	\frac{
		q^{\frac{1}{2}} - q^{-\frac{1}{2}}
	}{
		t^{-\frac{1}{2}} - t^{\frac{1}{2}}
	}
	\right)^{|T_1|}
	\times
	\left(
	\frac{
		(q^{-1}t)^{\frac{1}{2}} - (qt^{-1})^{\frac{1}{2}}
	}{
		t^{-\frac{1}{2}} - t^{\frac{1}{2}}
	}
	\right)^{|T_2|}
	\notag 
	\\
	&\hspace{5.3cm}\times
	(
	\widetilde{\Psi}_{\lambda}^{(q,t^{-1})}
	\comp 
	\bigg|_{
		\substack{
			q_1 = q, \\
			q_2 = q^{-1}t,\\
			q_3 = t^{-1} \\
		}
	}
	)
	\bigg[
	\prod_{1 \leq a < b \leq k}
	\cals{C}^{(i_a,i_b)}\left(
	\frac{z_b}{z_a}
	; q, t
	\right)
	\bigg]
	\Biggr\}
	\notag 
\end{align}
\normalsize
Thus, we have proved \textbf{Lemma \ref{lemm42-1252-25jul}}.


\end{document}